\title{Balanced Allocations with Heterogeneous Bins: \\ The Power of Memory\footnote{Full version of a paper appearing in SODA 2023 \cite{LSS23}}}
\author[1]{Dimitrios Los}  
\author[1]{Thomas Sauerwald}
\author[2]{John Sylvester\thanks{J.S. was supported by EPSRC project EP/T004878/1: Multilayer Algorithmics to Leverage Graph Structure.}}
\affil[1]{Department of Computer Science \& Technology, University of Cambridge, UK\\ \texttt{firstname.lastname@cl.cam.ac.uk}} 
\affil[2]{Department of Computer Science, University of Liverpool, UK\\ \texttt{john.sylvester@liverpool.ac.uk}}
\date{\today}
\begin{document}

\maketitle

\begin{abstract}
We consider the allocation of $m$ balls (jobs) into $n$ bins (servers). In the standard \TwoChoice process, at each step $t=1,2,\ldots,m$ we first sample {\em two} bins uniformly at random and place a ball in the least loaded bin. It is well-known that for any $m \geq n$, this results in a gap (difference between the maximum and average load) of $\log_2 \log n + \Theta(1)$ (with high probability). In this work, we consider the \Memory process~\cite{SP02} where instead of two choices, we only sample one bin per step but we have access to a cache which can store the location of one bin. Mitzenmacher, Prabhakar and Shah~\cite{MPS02} showed that in the lightly loaded case ($m = n$), the \textsc{Memory} process achieves a gap of $\mathcal{O}(\log \log n)$. 

Extending the setting of Mitzenmacher et al.~in two ways, we first allow the number of balls $m$ to be arbitrary, which includes the challenging heavily loaded case where $m \geq n$. Secondly, we follow the heterogeneous bins model of Wieder~\cite{W07}, where the sampling distribution of bins can be biased up to some \emph{arbitrary} multiplicative constant. Somewhat surprisingly, we prove that even in this setting, the \textsc{Memory} process still achieves an $\mathcal{O}(\log \log n)$ gap bound. This is in stark contrast with the \TwoChoice (or any \DChoice with $d=\mathcal{O}(1)$) process, where it is known that the gap diverges as $m \rightarrow \infty$~\cite{W07}. 
Further, we show that for any sampling distribution independent of $m$ (but possibly dependent on $n$) the \Memory process has a gap that can be bounded independently of $m$. Finally, we prove a tight gap bound of $\mathcal{O}(\log n)$ for \Memory in another relaxed setting with heterogeneous (weighted) balls and a cache which can only be maintained for two steps.
\end{abstract}

\section{Introduction}

In this work we examine balls-and-bins processes where the goal is to allocate $m$ balls (jobs or tasks) sequentially into $n$ bins (processors or servers). The balls-and-bins framework a.k.a.~balanced allocations~\cite{ABKU99}~is a popular abstraction for various resource allocation and storage problems such as load balancing, scheduling or hashing (see surveys~\cite{MRS01,W17}). In order to allocate the balls in an efficient and decentralized way, randomized strategies are usually employed which are based on sampling a number of bins for each ball, and then allocating the ball into one of those bins. The far-reaching impact of this framework on both theory and practice was recognized by the ``\!\emph{ACM Paris Kanellakis Theory and Practice Award}'' 2020~\cite{award20}.
 
It is well-known that if each ball is placed in a random bin chosen independently and uniformly (called \OneChoice), then the maximum load is $\Theta( \log n / \log \log n)$ \Whp\footnote{In general, with high probability refers to probability of at least $1 - n^{-c}$ for some constant $c > 0$.} for $m=n$, and $m/n + \Theta( \sqrt{ (m/n) \log n})$ \Whp~for $m \geq n \log n$.  Azar, Broder, Karlin and Upfal~\cite{ABKU99} (and implicitly Karp, Luby and Meyer auf der Heide~\cite{KLM96}) proved the remarkable result that if each ball is placed in the lesser loaded of $d \geq 2$ randomly chosen bins, then the maximum load drops to $\log_d \log n + \Oh(1)$ \Whp, if $m=n$. This dramatic improvement from $d=1$ (\OneChoice) to $d=2$ (\TwoChoice) is known as ``power-of-two-choices''.

Later, Berenbrink, Czumaj, Steger and V\"ocking~\cite{BCSV06} extended the analysis of \DChoice to the so-called ``heavily loaded case'', where $m \geq n$ can be arbitrarily large. In particular, for \TwoChoice an upper bound on the gap (the difference between the maximum and average load) of $ \log_2 \log n + \Oh(1)$ \Whp~was shown. Compared to the lightly loaded case, the heavily loaded case is more challenging as for large enough $m$, arbitrarily bad configurations may be encountered\footnote{Technically, for any $f_1:=f_1(n), \ldots, f_{n-1}:=f_{n-1}(n) \in \Z$ (and $f_n = -f_1 -\ldots - f_{n-1}$), there exists $g := g(n, f_1, \ldots, f_n)$ such that \Whp~there exists a step $0 \leq t \leq g$, satisfying $x^t = \big( \frac{t}{n} + f_1, \ldots, \frac{t}{n} + f_{n-1}, \frac{t}{n} + f_{n}\big)$.}. A general technique based on the hyperbolic cosine potential has been used in~\cite{PTW15} to analyze a large family of processes in the heavily loaded case. Talwar and Wieder~\cite{TW14} extended this technique to recover a slightly weaker bound for \DChoice in the heavily loaded case.

\paragraph{Memory Setting.} In this paper, we revisit the \Memory process introduced by Shah and Prabhakar~\cite{SP02}. In $(d, D)$-\Memory, we have the ability to store $D$ of the bins in a cache. Further at each step, $d$ choices are sampled and a ball is allocated in the least loaded of the $d+D$ bins and then the cache is updated to contain the $D$ least loaded of the $d+D$ bins. Mitzenmacher, Prabhakar and Shah~\cite{MPS02} showed that in the lightly loaded case $(d, 1)$-\Memory achieves a gap of $\log_{f(d)} \log n + \Oh(1)$ for $f(d) \in (2d, 2d+1)$, showing that using one cache is roughly like doubling the number of choices you make. Luczak and Norris~\cite{LN13} proved the same bound for the similar queuing setting (referred to as ``supermarket model with memory''). In this paper we focus on the case of $(1,1)$-\Memory (and just refer to this as \Memory), as it turns out that this version already leads to interesting insights.

A related, classical heuristic is \emph{sticky routing}, which means that a route between nodes will be re-used as long as it is not too congested. This has been analyzed and applied to telephone networks~\cite{GKK88} and switch scheduling algorithms~\cite{GPS02}. Similar to balanced allocations, the performance gain of having a cache of capacity one in Cuckoo hashing was demonstrated by Kirsch, Mitzenmacher and Wieder \cite{KMW09}.

\paragraph{Heterogeneous Bins.}
In many applications including P2P systems, distributed hash tables and decentralized storage centers, bins may not always be identical~\cite{BCM04,W07,GS05}. Byers, Considine and Mitzenmacher~\cite{BCM04} introduced a model where the sampling distribution may deviate from uniform by a constant factor (in fact the largest sampling probability may exceed the uniform distribution by a logarithmic factor). They proved that the same asymptotic gap bound as in the uniform case still holds, as long as $m=n$. Later, \cite{W07} generalized the heavily loaded case to the setting where bins are sampled according to an  \textit{$(a,b)$-biased sampling distribution}, which means that for reals $a \geq 1, b \geq 1$ the sampling probabilities $s_i$ for each bin $i \in [n]$ satisfy $\frac{1}{an}\leq s_i \leq \frac{b}{n} $. A tight dichotomy result was proven in the sense that for any constants $a,b \geq 1$, there is a $d'=d'(a,b)$ such that \DChoice with $d \geq (1+\epsilon) d'$ achieves a gap of $\Oh(\log \log n)$, whereas for \protect\DChoice with $d \leq  (1-\epsilon) d'$ there exist sampling distributions where the gap diverges in $m$.

\subsection{Our Results}

In this work we extend the analysis of the \Memory process to the heavily loaded case and show that the process maintains \Whp~a small gap even with $(a,b)$-biased sampling distributions. We use $\Gap(m)$ to denote the difference between the maximum and average load after $m$ balls have been allocated.
\newcommand{\CachingLogLogN}{
Consider the \Memory process with an $(a,b)$-biased sampling distribution, for any constants $a,b \geq 1$. Then there exists a constant $\kappa:=\kappa(a,b) > 0$ such that for every step $m \geq 1$,  
 \[
\Pro{\Gap(m) \leq \kappa \cdot \log \log n} \geq 1 - n^{-3}.
\]
}
\begin{thm}\label{thm:caching_log_log_n}
\CachingLogLogN
\end{thm}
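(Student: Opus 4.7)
Plan. I would adapt the hyperbolic cosine potential technique of \cite{PTW15} to the \Memory process. Let $y^t_i := x^t_i - t/n$ denote the normalized loads, and consider $\Phi^t = \Gamma^t + \Psi^t$ where $\Gamma^t = \sum_i e^{\alpha y^t_i}$ and $\Psi^t = \sum_i e^{-\alpha y^t_i}$ for a sufficiently small constant $\alpha = \alpha(a,b) > 0$. The target is a drift inequality of the form $\mathbb{E}[\Phi^{t+1} \mid \mathcal{F}^t] \leq (1 - \gamma/n) \Phi^t + C$ for constants $\gamma, C > 0$, which via a standard supermartingale computation gives $\mathbb{E}[\Phi^t] = \Oh(n)$ uniformly in $t$, and Markov's inequality then yields $\Gap(m) = \Oh(\log n / \alpha)$ with probability at least $1 - n^{-3}$. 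A layered-induction step in the spirit of \cite{BCSV06} then refines this to the claimed $\Oh(\log \log n)$.

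The main structural feature of \Memory that enables the drift, even under heavy sampling bias, is the following: writing $\tau := x^t_{i^t}$ for the load of the cached bin, the probability that a ball is placed in bin $i \neq i^t$ is $p_i = s_i \cdot \mathbf{1}[x^t_i < \tau]$, while the cache itself absorbs a ball with probability $p_{i^t} = \sum_{j : x^t_j \geq \tau} s_j$. In particular, \emph{the only overloaded bin that ever receives a ball is the cache itself}: every other bin $i \neq i^t$ with $x^t_i \geq \tau$ has $p_i = 0$, regardless of how large $s_i$ is. This is precisely the gain over biased \TwoChoice, in which a heavily-weighted bin can still attract balls and make the gap diverge in $m$. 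Using the upper bound $s_i \leq b/n$, the upward driver of $\Gamma$ obeys $\sum_i p_i e^{\alpha y^t_i} \leq p_{i^t} e^{\alpha y^t_{i^t}} + (b/n) \Gamma^t$; dually, using $s_i \geq 1/(an)$, the downward driver of $\Psi$ is at least $(1/(an)) \sum_{i : x^t_i < \tau} e^{-\alpha y^t_i}$, which provides negative drift on $\Psi$ whenever a constant fraction of bins lie below the cache.

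The principal obstacle is the cache term $p_{i^t} e^{\alpha y^t_{i^t}}$ above, which can be as large as $\Theta(e^{\alpha y^t_{i^t}})$ when the cache happens to sit on an overloaded bin. I would handle this by augmenting the potential with a cache-penalty of the form $\mu \cdot e^{\alpha y^t_{i^t}}$ (or possibly a more elaborate functional of the cache's relative load), working with $\tilde\Phi^t := \Phi^t + \mu \cdot e^{\alpha y^t_{i^t}}$ for a suitably chosen $\mu > 0$. The penalty's own negative drift arises from the fact that, under $(a,b)$-biased sampling, the freshly sampled bin $j$ is strictly less loaded than $i^t$ with probability at least $(1/(an)) \cdot |\{j : x^t_j < \tau\}|$, triggering a cache jump to a lower-load bin and multiplicatively reducing the penalty; balancing $\mu$ so that this refreshment drift dominates the step-wise increase in $\Gamma$ whenever the cache absorbs a ball yields the joint drift on $\tilde\Phi^t$. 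The hardest parts of the argument will be (i) choosing the penalty functional precisely enough that the combined drift closes with explicit constants depending on $a,b$, and (ii) propagating these constants through the layered-induction refinement that upgrades the resulting $\Oh(\log n)$ bound to $\Oh(\log \log n)$ by rerunning the drift argument with a larger $\alpha$ restricted to the top $\Oh(\log \log n)$ levels of the load profile.
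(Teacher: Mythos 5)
Your high-level template (hyperbolic cosine drift, then a layered induction to sharpen $\mathcal{O}(\log n)$ to $\mathcal{O}(\log\log n)$) matches the paper's, but the mechanism you propose for closing the drift is genuinely different from the paper's—and it is not verified to work. The paper does \emph{not} track the cache inside the potential; it couples \Memory to \DWeakMemory, a process that resets the cache every $d=\mathcal{O}(1)$ steps, and analyzes the $d$-step aggregate allocation probabilities (showing the $(a,b)$-step distribution is the worst case via a majorization argument) so that the resulting proxy-allocation vector satisfies condition $\mathcal{C}_1$. Your cache-penalty term $\mu\, e^{\alpha y^t_{i^t}}$ is a plausible alternative, but the one-step drift does not obviously close. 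First, your structural claim "the only overloaded bin that ever receives a ball is the cache" is off: a sampled bin $i$ with $x_i^t$ equal to the cache load also receives the ball, and under heavy bias there can be many such tied bins. Second, and more seriously, in the regime where the cache sits on a light bin the absorption probability is close to $1$ and the refresh probability is small, so the penalty grows multiplicatively by $e^\alpha$ per step over a long stretch; making the compensating drift of $\Psi^t$ quantitatively dominate, for a fixed $\mu$ uniformly over all regimes of $y^t_{i^t}$, is precisely the part you leave open. Third, $\tilde\Phi^t$ violates the bounded-difference condition needed for any martingale concentration: when the cache jumps off or onto a bin with load $\Theta(\log n/\alpha)$, the penalty changes by $\poly(n)$ in a single step, so you cannot upgrade $\mathbf{E}[\tilde\Phi^t]=\mathcal{O}(n)$ to $\Gamma^t=\mathcal{O}(n)$ \Whp, which is what the layered induction actually needs as its base case (to bound the number of bins at each load level, not just the maximum).

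The second gap is the induction step itself, which is the technical heart of the paper and cannot be done by "rerunning the drift argument with a larger $\alpha$ on the top levels." For a steep potential the one-step drift for \Memory is simply false: the cache can occupy a heavy bin for a bounded number of consecutive steps, during which the steep potential increases deterministically. This is why the paper introduces the folded process, which groups steps into variable-length rounds (a new round starts only after a phase in which no light bin was sampled, or after $k_j$ phases), and uses super-exponential potentials $\Phi_j^t=\sum_i e^{\alpha_2 v^j (y_i^t - z_j)^+}$ in which \emph{both} the steepness $v^j$ and the offset $z_j$ grow with the layer $j$. The drop is then established over an entire round rather than a single step, and the bounded-difference issue is controlled by a stopping-time truncation (at $n/\log^2 n$ allocations) together with a second smoothing parameter $\alpha_1>\alpha_2$. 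None of these ingredients is present in your sketch, and a "restrict to the top levels" argument in the style of \cite{BCSV06} would have to contend with the cache's long-range dependence, which is exactly what makes this induction nonstandard.
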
 

To the best of our knowledge, even in the case of a uniform sampling distribution (that is, $a = b=1$), an upper bound of $\mathcal{O}(\log \log n)$ was not known in the heavily loaded case $ m \geq n$ (the best result is an $\mathcal{O}(\log n)$ gap bound from \cite{LSS22}).

The ability of \Memory to still attain a doubly-logarithmic gap for \emph{arbitrary} constants $a, b >1$ shows a significant advantage over \TwoChoice (or \DChoice for constant $d$), where Wieder~\cite{W07} showed that only for \emph{some} (sufficiently small) constants $(a, b)$ the maximum load is doubly logarithmic in $n$, otherwise the maximum load may be unbounded. That is, if the constants $(a,b)$ are large, the maximum load diverges as $m \rightarrow \infty$.

We also complement the upper bound with a lower bound, which proves tightness up to multiplicative constants:
\def\cachelower{
Consider the \Memory process with a uniform sampling distribution. Then there is a constant $\kappa > 0$ such that for every step $m \geq n$,
\[
 \Pro{ \Gap(m) \geq \kappa \cdot \log \log n } \geq 1- n^{-1}. 
\]}
\begin{thm}\label{thm:caching_lower}
\cachelower 
\end{thm}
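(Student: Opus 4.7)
The plan is to use a layered-induction (``witness-tree'') argument in the spirit of the classical $\Omega(\log \log n)$ lower bound for \TwoChoice of Azar, Broder, Karlin and Upfal~\cite{ABKU99}, adapted in two ways: (i) to accommodate the dependency structure introduced by the cache in \Memory, and (ii) to handle the heavily loaded regime by restricting attention to a recent window.

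Fix a constant $C > 0$ and a window $W = (m - T, m]$ of length $T = \min(m, Cn) = \Theta(n)$. Define the layers $L_k(t) := \{i \in [n] : x_i^t \geq t/n + k\}$, i.e., the set of bins whose load at step $t$ exceeds the current average by at least $k$. The plan is to show, by induction on $k$, that with high probability there exists some $t \in W$ for which $|L_k(t)| \geq n \cdot 2^{-c \cdot 2^k}$ for a suitable constant $c > 0$. The induction terminates when $|L_k(t)| \geq 1$, which happens at $k = \Theta(\log \log n)$; at that step the gap is at least $k$. Since the gap decreases by at most $1/n$ per subsequent step (the average increases by exactly $1/n$ and the maximum never decreases), it remains $\Omega(\log \log n)$ through step $m$.

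For the inductive step, observe that for a ball at step $t$ to enter $L_k(t)$, both the fresh sample $X_t$ and the cached bin $C_t$ must lie in $L_{k-1}(t-1)$; otherwise the less-loaded of the two would have load below $t/n + (k-1)$ before placement, hence below $t/n + k$ after. Since $X_t$ is drawn uniformly at random and independently of $\mathcal{F}_{t-1}$, we have $\Pr[X_t \in L_{k-1}(t-1) \mid \mathcal{F}_{t-1}] = |L_{k-1}(t-1)|/n$. By induction, the event $\{C_t \in L_{k-1}(t-1)\}$ also occurs with probability comparable to $|L_{k-1}(t-1)|/n$ (amortized over $W$), yielding essentially the same $|L_k| \gtrsim |L_{k-1}|^2 / n$ recursion as for \TwoChoice. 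Standard concentration (an Azuma--Hoeffding argument on the counting process for $|L_k|$) then upgrades the in-expectation bound to a whp statement.

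The main technical obstacle is handling the dependency between $X_t$ and $C_t$ in \Memory. In \TwoChoice both samples are fresh and independent, immediately yielding the $(|L_{k-1}|/n)^2$ factor. In \Memory, $C_t$ equals the destination of ball $t-1$, so $\{C_t \in L_{k-1}(t-1)\}$ is itself a recursive event tied to the previous ball. The cleanest route I foresee is an occupation-time estimate: over $W$, the number of steps $t$ for which $C_t \in L_{k-1}(t-1)$ is, whp, at least a constant fraction of $T \cdot |L_{k-1}|/n$. Such a bound decouples the two events in an averaged sense and is enough to close the induction and complete the proof.
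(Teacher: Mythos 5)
Your high-level plan---a layered witness-tree induction in the spirit of~\cite{ABKU99}, restricted to a recent window of $\Theta(n)$ steps---is indeed the route the paper takes (starting at $t_0 = m - n$ and proceeding phase by phase). You also correctly identify the key difficulty: one of the two options at each step is the cached bin, which is heavily correlated with the past. And your observation is right that, for a ball placed at step $t$ to push a bin across the $k$-th level, both the fresh sample $X_t$ and the cached bin $C_t$ must already lie at or above the $(k-1)$-th level.

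The gap is in the proposed resolution. You claim an occupation-time bound: over a window of length $T$, the number of steps $t$ with $C_t$ at or above level $k-1$ is, \Whp, at least a constant times $T \cdot |L_{k-1}|/n$. This is false. The cache is \emph{greedily kept light}: whenever the sampled bin is strictly lighter than the cache, the cache is replaced and its load drops to the sample's load plus one. Consequently the cache is systematically much lighter than a uniform random bin, and its occupation probability at level $\geq k-1$ behaves like the \emph{product} $\prod_{j < k} (|L_j|/n)$---one factor for each level the cache must climb through without resetting---rather than the single factor $|L_{k-1}|/n$. When $|L_j|/n$ decays doubly exponentially in $j$, as your target recursion $|L_k| \geq n \cdot 2^{-c \cdot 2^k}$ requires, this product is far smaller than $|L_{k-1}|/n$, so the decoupling step of your sketch cannot close the induction and you do not recover the usual $|L_k| = \Omega(|L_{k-1}|^2/n)$ recursion.

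The paper works around this by abandoning the \TwoChoice-style recursion entirely. Instead of an occupation-time bound, it bounds from below the probability that the cache \emph{climbs} to load at least $\frac{t_0}{n}+L+(j-1)$ within $\Theta(\log n)$ steps; this probability is of order $(\log n)^{-C^{j+O(1)}}$---precisely the product-over-levels quantity that your occupation-time estimate mistakenly replaces by a single level fraction. Once the cache is that high, a single fresh sample of a bin at the previous level pushes one bin up. This yields the deliberately weaker recursion $|B_j| \geq n/(\log n)^{C^j}$, which loses far more than a square per level yet still supports $\Theta(\log\log n)$ levels before the level sets vanish. A further ingredient your sketch leaves implicit is the base case: the paper obtains $|B_0| = \Theta(n)$ and the lower bound on the cache's starting load from the hyperbolic cosine potential estimate $\Gamma^{t_0} = \Oh(n)$ (Theorem~\ref{simp}), which is not a trivial step and cannot be replaced by a vague window truncation.
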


While \cref{thm:caching_log_log_n} and \cref{thm:caching_lower} settle the asymptotic behaviour of \Memory for $(a,b)$-biased sampling distributions with $a,b$ being constants, our next result explores a wider change of sampling distributions, as only a lower bound on $s_{\min}$ independent of $m$ (but possibly dependent on $n$) is needed\footnote{A related result in a queuing setting can be found in \cite[Theorem~1]{SP02}, however, the gap is not quantified.}. 
The cost of this generality is of course the tightness of the bound. However, this is enough to show yet another stark difference to \TwoChoice, whose gap diverges in $m \rightarrow \infty$ already for $(10,10)$-biased sampling distributions (see~\cite{W07}).  

\def\arbdist{
Consider the \Memory process with any sampling distribution $s=(s_i)_{i\in [n]}$ satisfying $s_{\min} := \min_{i \in [n]} s_i > 0$. Then for every step $m \geq 1$ we have \[
\Pro{\Gap(m) \leq \frac{n^8}{s_{\min}^{10}}} \geq 1 - n^{-1}.\]}

\begin{thm}\label{thm:arbdist}
\arbdist
\end{thm}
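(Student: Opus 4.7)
The plan is to work with an exponential potential function whose parameter $\alpha$ is tuned polynomially in $s_{\min}/n$, establish a one-step drift inequality, and then read off the gap bound from Markov's inequality. Concretely, I would use the symmetric exponential potential
\[ \Phi^t := \sum_{i \in [n]} \exp\bigl(\alpha \cdot (x_i^t - t/n)\bigr) + \sum_{i \in [n]} \exp\bigl(-\alpha \cdot (x_i^t - t/n)\bigr), \]
chosen so that any uniform-in-$t$ bound of the form $\E[\Phi^t] \leq \mathrm{poly}(n,1/s_{\min})$ translates, via $\Phi^t \geq \exp(\alpha \cdot \Gap(t))$ and Markov's inequality, into the claimed $n^8/s_{\min}^{10}$ bound on the gap. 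A scaling of the shape $\alpha = \Theta\bigl(s_{\min}^{10} \log(n/s_{\min})/n^{8}\bigr)$ is the kind of choice I would aim for.

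The heart of the proof is the drift inequality
\[ \E\bigl[\Phi^{t+1}\,\bigm|\,\mathfrak{F}_t\bigr] \leq (1-\epsilon)\,\Phi^t + c, \qquad \epsilon = \Omega(\alpha\, s_{\min}),\quad c = \Oh(\alpha^2 n). \]
I would prove this by decomposing the change in $\Phi$ over the choice of recipient bin, using the defining property of the $(1,1)$-\Memory process: the ball is placed in the less loaded of the sampled and cached bin, and the updated cache is exactly this recipient. Hence the cached bin always holds the smaller of two loads, supplying a mean-reverting force. Expanding $\exp(\alpha z)$ to second order, the first-order terms cancel using $\sum_i (x_i^t - t/n) = 0$, while the second-order terms aggregate to the additive $\Oh(\alpha^2 n)$.

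The principal obstacle is to make the drift argument work uniformly over \emph{arbitrary} sampling distributions, and in particular to handle bins whose probability $s_i$ is much smaller than $1/n$. A priori such a bin could accumulate load by sitting in the cache and being repeatedly chosen over heavier sampled bins. The key observation is that a low-$s_i$ bin enters the cache only by being sampled, and at the moment of sampling it typically has small load; so across the cache dynamics one still extracts a pull-to-mean of strength $\Omega(s_{\min})$, which is the ultimate source of the $s_{\min}^{10}$ factor in the bound. Quantifying this rigorously, while simultaneously controlling overloaded bins (which rarely receive balls since they are rarely the minimum of cache and sample) and the joint evolution of load and cache, is the most delicate part of the argument; I would likely analyze pairs of consecutive steps so that both the update to the cache and the next allocation can be handled together.

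Finally, iterating the drift inequality gives $\E[\Phi^t] = \Oh(c/\epsilon) = \mathrm{poly}(n,1/s_{\min})$ for every $t \geq 0$. A single application of Markov's inequality with slack $n$ then yields $\Phi^t \leq n\cdot\mathrm{poly}(n,1/s_{\min})$ with probability at least $1-n^{-1}$, and rearranging $\alpha \cdot \Gap(t) \leq \log \Phi^t$ with the chosen $\alpha$ produces the stated bound $\Gap(m) \leq n^8/s_{\min}^{10}$.
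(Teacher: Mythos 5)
Your plan rests on a one-step (or at most two-step) drift inequality for \Memory, namely $\Ex{\Phi^{t+1}\mid\mathfrak{F}^t} \leq (1-\eps)\Phi^t + c$ with $\eps = \Omega(\alpha s_{\min})$ and $c = \Oh(\alpha^2 n)$, but such an inequality cannot hold unconditionally. The conditional expectation depends on the cache state $b$ at time $t$, not just on the load vector, and the cache can hold an arbitrarily overloaded bin; in that case the next ball goes to $\min(b,i)$, which is the sampled bin $i$ whenever $y_i^t \leq y_b^t$, so the step is essentially \OneChoice, for which the first-order contribution to $\Ex{\Delta\Phi}$ is \emph{zero} and the potential grows by $\Theta(\alpha^2\Phi^t/n)$. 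This is also where your description of the mechanism is off: the cancellation ``using $\sum_i (x_i^t - t/n) = 0$'' occurs precisely for a uniform allocation vector and produces no negative drift; for a process biased toward light bins the first-order terms do \emph{not} cancel and are the sole source of the multiplicative decrease, so claiming both cancellation and $\eps=\Omega(\alpha s_{\min})$ is internally inconsistent. Indeed, the parameters you propose contradict the deterministic bound $\Phi^t \geq 2n$: iterating your recursion gives $\Ex{\Phi^t} \to c/\eps = \Oh(\alpha n/s_{\min})$, which with $\alpha = \Theta(s_{\min}^{10}\log(n/s_{\min})/n^8)$ is far below $2n$. The obstacle you flag as ``the most delicate part'' is the correct one to worry about, but the suggested resolution (a low-$s_i$ bin is typically light when sampled) does not engage the relevant bad case, which is the cache holding a \emph{heavy} bin.

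The paper's route is genuinely different and built to avoid exactly this. It couples \Memory to \DWeakMemory, where the cache is reset by a \OneChoice step every $d$ steps, so the $d$-step change of the hyperbolic cosine potential can be computed from a tractable proxy-allocation vector; a majorization/Schur-convexity argument shows the $(a,b)$-step distribution is worst case, and the resulting vector satisfies condition $\mathcal{C}_1$. Crucially the run length $d$ is \emph{not} constant here: with $a = 1/(s_{\min}n)$ and $b \leq n$ the proof takes $d = \Theta(b(ab-1)^2/(b-1)^2) = \poly(n, 1/s_{\min})$, and Lemma~\ref{lem;arbitrary_to_const_quantile} is then needed to trade the tiny quantile $\delta = (a-1)/(ab-1)$ for a constant one at the cost of $\eps' = 1/n$. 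Your idea of ``analyzing pairs of consecutive steps'' is morally \DWeakMemory\ with $d=2$, and the paper's own footnote shows this fails: a bin with $s_i < 1/(2d^2 n)$ receives fewer than $1/(2n)$ balls per run in expectation, so the gap diverges as $m\to\infty$. Finally, $\alpha = \Omega(1/(n^2 b d^4))$ is dictated by Theorem~\ref{thm:ls22_thm_3_1} rather than fitted to the desired answer, and the $n^8/s_{\min}^{10}$ bound emerges from $\Gap(t) \leq d + \log(cn^3)/\alpha$ after Markov's inequality.
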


The first part of the proof of \cref{thm:caching_log_log_n} is based on analyzing a ``leaky'' version of \Memory, which we call \DWeakMemory. In this version, every $d$ steps the cache is reset and only in those steps the load information is updated (see \cref{sec:processes} for a more formal description). For $d = 2$, this can be interpreted as a sample-efficient variant of the $(1+\beta)$-process\footnote{Recall that the $(1+\beta)$-process \cite{PTW15} at each step performs \TwoChoice with probability $\beta$, otherwise \OneChoice.} with $\beta = 1/2$, where the first ball is allocated using \OneChoice and the second one is placed in the least loaded of the last two choices. This process takes exactly one sample per ball, in comparison to the (expected) $1+\beta$ samples made by the original $(1+\beta)$-process in each step.

\begin{thm}[Corollary of \cref{lem:expectation_bound}]
 For any $a,b\geq 1$ there exist  constants $d:=d(a,b)\geq 2$, $\kappa:=\kappa(a,b) > 0$ such that for the \DWeakMemory process with an $(a,b)$-biased sampling distribution, and every step $m \geq 1$,  
  \[
  \Pro{ \Gap(m) \leq \kappa \cdot \log n} \geq 1-n^{-2}.
 \]
\end{thm}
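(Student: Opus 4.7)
The plan is to invoke the standard exponential-potential framework of Peres--Talwar--Wieder~\cite{PTW15} and Talwar--Wieder~\cite{TW14}, but applied over \emph{blocks} of $d$ consecutive steps, which matches the structure of \DWeakMemory (the cache is reset every $d$ steps, so each block is an i.i.d.\ repetition modulo the load vector). Work with the hyperbolic cosine potential
\[
\Phi^t := \sum_{i=1}^n \cosh\!\bigl(\alpha\, z_i^t\bigr), \qquad z_i^t := y_i^t - t/n,
\]
for a small constant $\alpha = \alpha(a,b,d)>0$ to be chosen. The content of \cref{lem:expectation_bound} should be the block drop estimate
\[
\E\!\bigl[\Phi^{t+d} \,\big|\, \mathfrak{F}^t\bigr] \;\leq\; (1-\delta)\,\Phi^t + C,
\]
valid for some constants $\delta=\delta(a,b)>0$ and $C=C(a,b)>0$ once $d$ is large enough. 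Iterating yields $\E[\Phi^t] \leq C/\delta + \Phi^0 = O(n)$ for every $t \geq 0$.

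To prove the block inequality I would analyze the block in two parts. The first step of the block is a pure \OneChoice step under the $(a,b)$-biased distribution $s$ (the cache is empty), and a direct expansion using $e^{\pm \alpha/n} = 1 \pm \alpha/n + O(\alpha^2/n^2)$ together with $s_i \leq b/n$ shows that this step can increase $\Phi^t$ by at most a multiplicative factor $1 + \kappa_1 b\alpha/n$, plus lower-order terms. The remaining $d-1$ steps are memory-aided: a bin $j$ is freshly sampled from $s$, the ball is placed in the lighter of the sampled bin and the cached bin $c$, and $c$ is updated. For these, the standard Talwar--Wieder-style estimate gives an expected per-step potential change of at most $-\kappa_2(a)\,\alpha/n \cdot (\Phi^t - n)$, because any bin strictly above the average is sampled with probability $\geq 1/(an)$ and is therefore \emph{avoided} in the min-comparison with probability at least $1/(an)$ as well (plus a symmetric statement for below-average bins).

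Multiplying the contributions and choosing $d$ large enough so that $(d-1)\kappa_2 > \kappa_1 b$ gives the desired block drop
\[
\E\!\bigl[\Phi^{t+d} \mid \mathfrak{F}^t\bigr] \leq \bigl(1 + \tfrac{\kappa_1 b \alpha}{n}\bigr)\bigl(1 - \tfrac{\kappa_2 \alpha}{n}\bigr)^{d-1}\Phi^t + O(1) \leq (1-\delta)\Phi^t + C,
\]
whenever $\Phi^t \geq 2n$ (in the opposite regime the inequality is trivial). Once $\E[\Phi^t] = O(n)$ is in hand, Markov's inequality yields $\Pr[\Phi^t \geq n^{3}] \leq O(n^{-2})$, and since $\Phi^t < n^{3}$ forces $\alpha |z_i^t| \leq 2\log n + O(1)$ uniformly in $i$, we obtain $\Gap(m) \leq \kappa \log n$ with probability $\geq 1-n^{-2}$, with $\kappa = 3/\alpha$.

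The main obstacle I expect is the correlation between the cached bin and the sample within a block: unlike in \DChoice, the cache is a deterministic function of past randomness and past loads, so the $d-1$ memory-aided steps are not independent. The cleanest workaround, which I would adopt, is to condition on the load vector and the cache identity at the start of each sub-step and to verify the per-step potential bound pointwise in the cache (so the bound survives any distribution of the cache). Changes in individual loads within a single block move each $z_i^t$ by at most $d = O_{a,b}(1)$, so the accumulated perturbation of $\Phi$ inside the block can be absorbed into the additive constant $C$ after a crude Taylor expansion of $\cosh$. No finer structural statement about the cache distribution is needed.
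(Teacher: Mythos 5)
Your high-level framework --- a hyperbolic-cosine potential, a drop inequality computed over blocks of $d$ consecutive steps, then Markov's inequality and a union bound --- is exactly what the paper does, and the derivation of the corollary from the block-drop lemma (\cref{lem:expectation_bound}) is routine. The gap is in your proposed proof of the block drop itself. You assert a per-step potential decrease of $-\kappa_2 \alpha/n \cdot (\Phi^t - n)$ for each of the $d-1$ memory-aided steps, and you claim this can be ``verified pointwise in the cache (so the bound survives any distribution of the cache).'' That pointwise claim is false. Condition on the cache $c$ being the \emph{heaviest} bin. Then the rule ``allocate to the sampled bin if it is lighter than the cache, otherwise allocate to the cache'' gives $i$ the ball with probability $s_i$ for every $i \neq c$, and $c$ the ball with probability $s_c$ --- i.e., the step coincides exactly with a \OneChoice step, and there is no potential drop at all. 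More generally, when the cache is the $k$-th heaviest bin, the improvement over \OneChoice scales with $k$ and vanishes as $k \to 1$. The memory-aided steps only produce a drop after you \emph{average over the distribution of the cache} within the block, and that distribution is precisely what is hard to pin down: at sub-step $s$ the cache is the lightest (in the frozen ordering $\sigma$) of the first $s$ samples, correlated with where balls have already landed. A pointwise-in-cache argument cannot absorb this.

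The paper sidesteps the per-step viewpoint entirely and works with the per-block allocation probabilities $\widehat{p}_{i,j}$ (probability of allocating exactly $j$ balls to the $i$-th heaviest bin in one run), which are computed explicitly for the $(a,b)$-step sampling distribution (\cref{lem:pijbdds}). This implicitly integrates out the cache trajectory. The proxy-allocation vector $\widehat{p}_i := \frac{1}{d}\sum_j j\,\widehat{p}_{i,j}$ is then shown to satisfy the bias condition $\mathcal{C}_1$, which plugs into the generic drop machinery (\cref{thm:ls22_thm_3_1}). There is also a second idea you do not address: the reduction from an arbitrary $(a,b)$-biased distribution to the extremal $(a,b)$-step distribution, which the paper does via a majorization coupling (\cref{lem:stepmaj}) combined with Schur-convexity of the potential. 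To repair your proposal you would need to replace the pointwise per-step bound with some such block-level accounting, and to handle the generality of $(a,b)$-biased sampling distributions.
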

Note that it is important to choose $d$ sufficiently large (depending on $a$ and $b$). For instance, otherwise one could choose $a>d$ and the gap diverges in $m \rightarrow \infty$\footnote{ Consider a sampling distribution $s$ with $s_i = 1/(2 d^2n)$ for some $i \in [n]$. In a run of $d$ steps, this bin $i$ is chosen at least once with probability less than $d \cdot 1/(2 d^2 n)$, and thus the expected number of balls allocated to $i$ in any run of $d$ steps is less than $d \cdot d \cdot 1/(2 d^2 n) = 1/(2n)$. This implies that the difference between the load of bin $i$ and the average diverges as $m \rightarrow \infty$; hence, the gap must diverge also.}.

A tight lower bound of $\Omega(\log n)$ follows quite easily (\cref{lem:d_weak_memory_lower_bound}) by relating \DWeakMemory to the $(1+\beta)$-process~\cite{PTW15}. These two results  demonstrate that a gap of $\Theta(\log n)$ is possible if the constant $d$ is sufficiently large.

Finally, it turns out that the analysis of this relaxed setting even generalizes to a setting where balls are weighted, as long as weights are drawn independently from a distribution with finite MGF (this setting was also studied for different processes in \cite{PTW15}).

\def\weighted{
Consider the $2$-\WeakMemory process with a uniform sampling distribution. Further, assume the weight of each ball is drawn independently from a distribution $W$ satisfying $\ex{W} = 1$ and $\ex{e^{\lambda W} } < \infty$ for a constant $\lambda> 0$.
Then there exists a constant $\kappa:=\kappa(W) > 0$ such that for every step $m \geq 1$, 
\[
\Pro{\Gap(m) \leq \kappa \cdot \log n} \geq 1 - n^{-2}.
\]}
\begin{thm}\label{thm:weighted}
\weighted
\end{thm}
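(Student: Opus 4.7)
The plan is to adapt the exponential (hyperbolic-cosine) potential method already employed in the unweighted analysis of \DWeakMemory (referenced in \cref{lem:expectation_bound}) to the weighted setting, crucially using the finite-MGF hypothesis on $W$. Writing $y_i^t := x_i^t - \bar{x}^t$, where $\bar{x}^t$ is the current average load, define $\Gamma^t := \sum_{i \in [n]} e^{\alpha\, y_i^t}$ for a sufficiently small constant $\alpha := \alpha(W, \lambda) > 0$. The key claim to establish is a two-step drop inequality of the form
\[
\ex{\Gamma^{t+2} \mid \mathcal{F}^t} \leq \Bigl(1 - \tfrac{c_1}{n}\Bigr) \Gamma^t + c_2,
\]
for positive constants $c_1 = c_1(W)$ and $c_2 = c_2(W)$. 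Iterating this yields $\sup_t \ex{\Gamma^t} = \Oh(n)$; Markov's inequality at step $m$ then gives $\Gamma^m \leq n^4$ with probability $1 - n^{-2}$, whence $\max_i y_i^m \leq (4/\alpha)\log n$ and the stated $\Oh(\log n)$ gap bound follows.

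To prove the drop inequality, group steps in pairs, exploiting the fact that the cache is reset every two steps. In a pair $(t, t+1)$ starting from a reset: step $t$ samples a bin $I_1$ uniformly and places a ball of weight $w_1 \sim W$ there (\OneChoice), updating the cache to $\{I_1\}$; step $t+1$ samples $I_2$ uniformly (independent of $I_1$ given $\mathcal{F}^t$) and places a ball of weight $w_2 \sim W$ into the lesser-loaded of $\{I_1, I_2\}$ (\TwoChoice). Placing weight $w$ into bin $i$ multiplies the $i$-th term of $\Gamma$ by $e^{\alpha w(1-1/n)}$ and every other term by $e^{-\alpha w/n}$. For $\alpha$ small enough that $\ex{e^{\alpha W}} < \infty$, a Taylor expansion combined with $\ex{W} = 1$ gives $\ex{e^{\alpha W(1-1/n)}} = 1 + \alpha - \alpha/n + \Oh(\alpha^2)$ and $\ex{e^{-\alpha W/n}} = 1 - \alpha/n + \Oh(\alpha^2/n^2)$. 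Averaging the \OneChoice step over $I_1$ uniform shows $\ex{\Gamma^{t+1} \mid \mathcal{F}^t} \leq \bigl(1 + C\alpha^2/n\bigr)\Gamma^t$. For the \TwoChoice step, the standard rank-ordering argument---the ball lands in the rank-$k$ bin (ordered by decreasing load) with probability $(2k-1)/n^2$---combined with the above weight expansion yields an expected reduction by a factor $(1 - c' \alpha/n)$ applied to $\Gamma^{t+1}$, plus a constant additive error absorbed when $\Gamma^{t+1}$ is $\Oh(n)$. Choosing $\alpha$ small enough that the quadratic error from the \OneChoice step is dominated by the linear drop from the \TwoChoice step completes the argument.

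The principal obstacle is that the \OneChoice step taken alone \emph{increases} $\Gamma$, so the analysis must be coupled over a two-step window rather than proceeding step-by-step, with all quadratic-in-$\alpha$ errors kept strictly below the linear-in-$\alpha$ drop from the \TwoChoice step. The finite-MGF hypothesis enters precisely here: it provides uniform control on every moment of $W$ that appears in the Taylor expansions and rules out rare very heavy balls that could otherwise make $\Gamma$ blow up. A minor subtlety is that in step $t+1$ the comparison is against the post-step-$t$ load $x_{I_1}^{t+1} = x_{I_1}^t + w_1$ rather than $x_{I_1}^t$; but since $I_1, I_2$ remain uniform and independent given $\mathcal{F}^t$, the standard \TwoChoice drop argument applies verbatim to the updated loads.
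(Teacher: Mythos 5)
Your sketch follows the same broad plan as the paper: pair up steps, use the finite-MGF bound to Taylor-expand $\ex{e^{\alpha\kappa W}}$, identify the effective allocation over a run as a \OneChoice ball plus a \TwoChoice ball, establish a two-step drop inequality, and conclude via Markov. However, there are two concrete issues in the way you execute this plan, the second of which reflects a misreading of the process.

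First, you define the potential $\Gamma^t := \sum_{i\in[n]} e^{\alpha y_i^t}$, which is only the one-sided overload potential $\Phi$, whereas the machinery you implicitly invoke -- the drift inequality in \cref{thm:ls22_thm_3_1}, and the PTW-style argument more generally -- is stated for the \emph{two-sided} hyperbolic cosine potential $\Gamma = \Phi + \Psi$. The quoted theorem bounds $\Delta\overline{\Phi} + \Delta\overline{\Psi}$, not each summand. A one-sided drop $\ex{\Phi^{t+2}\mid\mathfrak{F}^t} \leq (1 - c_1/n)\Phi^t + c_2$ may well be true, but it is not what the black-box lemma delivers, so you cannot simply cite the ``standard rank-ordering argument'' for it; you would need to carry out the Abel-summation/majorization estimate for $\Phi$ alone. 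The paper sidesteps this entirely by bounding both $\Phi$ and $\Psi$ and applying the theorem to their sum.

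Second, and more seriously, you state as a ``minor subtlety'' that in step $t+1$ the comparison is against the \emph{post}-step-$t$ load $x_{I_1}^{t+1} = x_{I_1}^t + w_1$, and then wave this away by appealing to the uniformity of $I_1, I_2$. This misdescribes the process: in $2$-\WeakMemory the ordering used for the second allocation in a run is fixed by $x^{2t}$, i.e.\ \emph{stale} loads from before either ball was placed (see the framed definition, ``where $i\in\{i_1,i_2\}$ is such that $x_i^{2t} = \min\{x_{i_1}^{2t}, x_{i_2}^{2t}\}$''). It is exactly this staleness that makes the probability of the second ball landing in the $i$-th heaviest bin (heaviness measured at time $2t$) equal to $\frac{2i-1}{n^2}$, and is what lets the paper write down the four clean run-level probabilities $q_{11}, q_{01}, q_{10}, q_{00}$ and read off the \TwoChoice vector directly. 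If the second-step comparison were against updated loads (as you assume), you would instead be analyzing $2$-\ResetMemory: the allocation of the second ball would depend on the realized weight $w_1$, the effective allocation vector would no longer be exactly $(2i-1)/n^2$, and the ``standard \TwoChoice drop argument'' would not apply verbatim -- a tie at time $2t$ that is broken by $w_1$ is already a counterexample. So the subtlety you wave away is precisely the point where the \WeakMemory/\ResetMemory distinction lives, and your argument as written does not address it. Finally, a small omission: you work at even steps, but the theorem is stated for every $m\geq 1$; the paper closes the gap by noting that the last ball's weight is $\Oh(\log n)$ whp since $W$ has finite MGF, so the gap changes by $\Oh(\log n / n)$ in one further step.
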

We believe that the result generalizes to \DWeakMemory for any constant $d \geq 2$, but for simplicity we focus on this special case $d=2$ here. 

\subsection{Challenges and Techniques}
One of the main challenges of analyzing the \Memory process are the strong long-term dependencies which are introduced through the cache. This makes \Memory quite different from \TwoChoice, despite their apparent similarity in terms of the gap bound. In fact, the ability of \Memory to store and fill up a light bin is crucial to achieve the $\mathcal{O}(\log \log n)$ gap under biased sampling distributions. In contrast, the decisions between different steps of \TwoChoice are essentially independent (if one disregards the change of the load vector). Formalizing this, \TwoChoice can be expressed by a time-invariant distribution vector $p$, which specifies the probability $p_i$ to allocate a ball to the $i$-th heaviest bin at each step. This type of analysis, paired with a two-sided exponential (a.k.a.~hyperbolic cosine) potential, was pioneered in \cite{PTW15} (and used thereafter in, e.g.,\cite{ABKLN18,TW14,LSS22}). This technique suffices to prove an $\mathcal{O}(\log n)$ gap bound, which is independent of $m$, and provides a useful starting point for tighter gap bounds, as demonstrated in \cite{TW14,LS22Queries}. 

        For \Memory, establishing this starting point, i.e., \emph{base case}, is already quite challenging. 
        To reduce dependencies, we reduce \Memory to  \DWeakMemory, a version of \Memory, which resets the cache every $d=\mathcal{O}(1)$ steps and uses outdated load information. This reduction is formalized by a coupling of \Memory to \DWeakMemory which bounds the hyperbolic potential of the former by the latter.  
        
        Then, we prove that the hyperbolic potential in \DWeakMemory is $\mathcal{O}(n)$ in expectation at any step. While this on its own suffices for an $\mathcal{O}(\log n)$ gap bound, it does not establish an exponentially concentrated load distribution with high probability. 
        To get this stronger conclusion, we study the interplay between two versions of the hyperbolic potential with two smoothing parameters. 
        Apart from the analysis of \Memory, this result for \DWeakMemory may be of independent interest as it gives some insight into the behavior of a version of \Memory where the cache is reset periodically. Finally, we remark that this concentration result is also useful in establishing a lower bound of $\Omega(\log \log n)$ for \Memory, which holds in all steps $m \geq n$ with high probability.
 
Once the base case is established, we begin in earnest to work on the actual \Memory process with all its long-term correlations. We employ a judicious partition of steps separated by rounds (see Figure~\ref{fig:folded_process}), where we define \emph{rounds} inductively through a ``folding'' procedure; a new round can only start if we did not sample a light bin, either at the beginning of the round, or for a sufficiently large number of consecutive steps.  In particular, at the beginning of a round in the folded process we start with a fresh cache, so we are allocating using \OneChoice. To prove an $\mathcal{O}(\log n)$ gap, it is sufficient to reset the cache every two steps (a process we call $2$-\ResetMemory). However, to prove an $o(\log n)$ gap, we need to look at intervals of length $\omega( 1 )$ to counteract this potentially bad allocation.  Essentially, we prove that the time between two rounds is large enough with sufficiently high probability, meaning that there is a strong drift away from the heavy bins so that their contribution to the potential is reduced. In the spirit of layered induction, we analyze increasingly steeper versions of the hyperbolic potentials that grow super-exponentially, and prove that each of them is linear with high probability. After completing $\mathcal{O}(\log \log n)$ steps of the induction, we finally obtain the desired gap bound of $\mathcal{O}(\log \log n)$. 

The main difference compared to the analyses in \cite{PTW15,LSS22,LS22Batched,LS22Noise} is that we need to consider the change of the potential over increasingly longer intervals of a non-constant number of steps, which also requires a slightly different form of super-exponential potentials. Another additional challenge is that the handling of $(a,b)$-biased sampling distributions requires a lot of caution in the base case, in particular, when computing the allocation probabilities in the coupling of \DWeakMemory with \Memory.

\subsection{Further Related Work}

In \cite{G08}, Godfrey introduced a \DChoice model with correlated choices which can be seen as balanced allocations on hypergraphs. Godfrey's results were later improved in \cite{BBFN12,GMP20}, and in~\cite{BBFN14} a related model with bin capacities was studied.

From a different perspective, V\"ocking~\cite{V99} showed that the performance of \DChoice can be improved through a carefully designed asymmetric protocol. In his protocol, $d$ samples are drawn uniformly, one from each of $d$ disjoint groups of bins and ties are broken asymmetrically.
 Multiple potential functions have also been used to analyse various random greedy processes in combinatorics. In particular \cite{BOHMAN2015379} used a hierarchical set of martingales where each martingale is used to control the martingale below it in the set.

\subsection{Road Map}
In \cref{sec:notation} we provide the necessary notation, and define all processes more formally. In \cref{sec:outline} we present a brief outline of the proof of our upper bound on the gap, which is further divided into the base case (\cref{sec:outline_base}) and the layered induction step (\cref{sec:outline_layered}). Correspondingly, the full proofs are deferred to \cref{sec:base_case} for the base case, and \cref{sec:layered_induction} for the layered induction.  In \cref{sec:relaxed_settings} we prove bounds for arbitrarily biased distributions and weights. The lower bounds are proven in \cref{sec:lower}. Finally, in \cref{sec:conclusions}, we summarize the main results and point to some open problems. 

\section{Notation}\label{sec:notation}

\subsection{Basic Definitions}

We consider the allocation of $m$ balls into $n$ bins, which are labeled $[n]:=\{1,2,\ldots,n\}$. For the moment, the $m$ balls are unweighted (or equivalently, all balls have weight $1$). For any step $t \geq 0$, $x^{t}$ is the $n$-dimensional \emph{load vector}, where $x_i^{t}$ is the number of balls allocated into bin $i$ in the first $t$ allocations. In particular, $x_i^{0}=0$ for every $i \in [n]$. The \emph{gap} at step $t$ is defined as
\[
 \Gap(t) = \max_{i \in [n]} x_i^{t} - \frac{t}{n}.
\]
It will be also convenient to keep the load vector $x$ sorted. To this end, 
relabel the $n$ bins such that $y^{t}$ is a permutation of $x^t-(\frac{t}{n}, \dots, \frac{t}{n})$ and $y_1^{t} \geq y_2^{t} \geq \cdots \geq y_n^{t}$. 
Note that $\sum_{i \in [n]} y_i^t=0$ and $\Gap(t)=y_1^t$. Further, we say that a vector $v=(v_1,v_2,\ldots,v_n)$ \emph{majorizes} $u=(u_1,u_2,\ldots,u_n)$ if for all $1 \leq k \leq n$, the prefix sums satisfy: $\sum_{i=1}^k v_i \geq \sum_{i=1}^k u_i$.

Following~\cite{PTW15}, many allocation processes can be described by a time-invariant \emph{probability allocation vector} $p = (p_i)_{i \in [n]}$, such that at each step $t \geq 1$, $p_i$ is the probability for allocating a ball into the $i$-th most heavily loaded bin.  

By $\mathfrak{F}^t$ we denote the filtration of the process until step $t$, which in particular reveals the samples and  allocations of the first $t$ balls including the load vector $x^t$. For random variables $Y,Z$ we say that $Y$ is stochastically smaller than $Z$ (or equivalently, $Y$ is stochastically dominated by $Z$), and write it as $Y \preceq Z$, if $ \Pro{ Y \geq x } \leq \Pro{ Z \geq x }$ for all real $x$.

\subsection{Processes}\label{sec:processes}
 
In this paper we consider balanced allocation processes where the sampling distribution (the distribution of the bin index supplied to the process when it requests a ``random'' bin) will not necessarily be uniform. We let $\mathcal{S}$ be a sampling distribution over the $n$ bins, this is a vector $(s_1, \dots, s_n)$ and use this to define our processes. 
We emphasize that, even though in the analysis we usually work with the load vector sorted decreasingly at each step by relabeling the bins, the sampling distribution assigns a fixed probability to each bin and is not affected by a relabeling. In particular, $(s_1,\ldots,s_n)$ should not be confused with $(p_1,\ldots,p_n)$.
 
We first give a formal description of the \DChoice process, for any integer $d \geq 1$.
\begin{samepage}
\begin{framed}
\vspace{-.45em} \noindent
\underline{\DChoice\!\!($\mathcal{S}$) Process:} \\
\textsf{Iteration:} For each step $t \geq 0$, sample $d$ bins $i_1, \dots, i_d$ independently according to $\mathcal{S}$. Let $i \in \{ i_1, \ldots, i_d \}$ be one bin with $x_{i}^{t} = \min\{ x_{i_1}^t,\dots, x_{i_d}^t\}$, breaking ties randomly. Then update:  
    \begin{equation*}
     x_{i}^{t+1} = x_{i}^{t} + 1.
 \end{equation*}\vspace{-1.5em}
\end{framed}
\end{samepage}

It is immediate that for the uniform sampling distribution the probability allocation vector of \TwoChoice is
\begin{equation*}
    p_{i} = \frac{2i-1}{n^2}, \qquad \mbox{ for all $i \in [n]$.}
\end{equation*}

Our next process is a generalization of the \Memory process to non-uniform bins, which was first introduced by Prabhakar and Shah \cite{SP02} for uniform sampling distributions. 
  
\begin{samepage}
\begin{framed}
\vspace{-.45em} \noindent
\underline{\Memory\!\!($\mathcal{S}$) Process:}\\  
\textsf{Iteration:} Initialize the process by setting the cached bin $b=\emptyset$. For each step $t \geq 0$, sample a bin $i$ according to $\mathcal{S}$, and update:
\begin{equation*}
    \begin{cases}
       x_{i}^{t+1} = x_{i}^{t} + 1  & \mbox{if $x_{i}^{t} <  x_{b}^{t} $ or $b=\emptyset$} \qquad \mbox{(also update cache $b=i$)}, \\
        x_{i}^{t+1} = x_{i}^{t} + 1  & \mbox{if $x_{i}^{t} =  x_{b}^{t} $}, \\
      x_{b}^{t+1} = x_{b}^{t} + 1 & \mbox{if $x_{i}^{t} >  x_{b}^{t}$}.
   \end{cases}
 \end{equation*}\vspace{-1.5em}
\end{framed}
\end{samepage}
 
Next we introduce our own variant of \textsc{Memory} where every $d$ steps the cache is reset and, for allocations at steps between such resets, the process can only make allocation decisions based on an ordering of the bins by load at the time of the last reset. 

\begin{samepage}
\begin{framed}
	\vspace{-.45em} \noindent
	\underline{$d$-\WeakMemory \!\!($\mathcal{S}$) Process:}\\
	\textsf{Iteration:} For each step $t= d\cdot k$ where $k \in \N$, fix any ordering $\sigma:[n]\rightarrow [n]$ of the bins such that for all $i,j\in [n]$,  $\sigma(i)\leq \sigma(j) $ iff $x_{\sigma(i)}^t \geq x_{\sigma(j)}^t $. Then sample a bin $i$ according to $\mathcal{S}$, and update:
	\begin{equation*}
		x_i^{t+1} = x_i^t + 1, \qquad \mbox{(also update cache $b=i$)}.
	\end{equation*}
	For each step $t=d\cdot k+j$, where $k\in \N$ and $1\leq j<d$, sample a bin $i$ according to $\mathcal{S}$, and update:
	\begin{equation*}
		\begin{cases}
			x_{i}^{t+1} = x_{i}^{t} + 1  & \mbox{if $\sigma(i) >  \sigma(b) $} \qquad \mbox{(also update cache $b=i$)}, \\
			x_{b}^{t+1} = x_{b}^{t} + 1 & \mbox{if $\sigma(i) \leq  \sigma(b)$}.
		\end{cases}
	\end{equation*}\vspace{-1em}
\end{framed}
\end{samepage}
We refer to the sequence of steps $d\cdot k, \dots , d \cdot (k+1) -1$, for any $k\geq 0$, as a \textit{run}. For $d=2$, this process can be seen as a sample efficient variant of the $(1+\beta)$ process for $\beta = 1/2$ (for the uniform sampling distribution $\mathcal{U}$).  

\begin{samepage}
\begin{framed}
	\vspace{-.45em} \noindent
	\underline{$2$-\WeakMemory \!\!($\mathcal{U}$) Process:}\\
	\textsf{Iteration:} For each step $2t \geq 0$, sample independently and uniformly two bins $i_1, i_2 \in [n]$, and update:
	\begin{equation*}
        \begin{cases}
		x_{i_1}^{2t+1} & = x_{i_1}^{2t} + 1,\\
        x_{i}^{2t+2}   & = x_{i}^{2t+1} + 1,
        \end{cases}
	\end{equation*}
    where $i \in \{i_1, i_2 \}$ is such that $x_i^{2t} = \min \big\{ x_{i_1}^{2t}, x_{i_2}^{2t} \big\}$.
\end{framed}
\end{samepage}

We emphasize that for $d>2$, the $\DWeakMemory(\mathcal{S})$ process might have $x_i^{t}=x_b^t$, and $\sigma(i) > \sigma(b)$ in some step $t$, which leads to an increment of the load in bin $i$ but also to the bin in the cache being $i$. This is different to \Memory, which more ``smartly'' increments the load of $i$ without updating the cache.

Another related process which may deserve further study, is the following version of the $\DWeakMemory(\mathcal{S})$, where the cache is also reset every $d$ steps, but the load information is updated in every step. In that sense, the process is a half-way house between $\DWeakMemory(\mathcal{S})$ and $\Memory(\mathcal{S})$.
\begin{samepage}
\begin{framed}
	\vspace{-.45em} \noindent
	\underline{$d$-\ResetMemory \!\!($\mathcal{S}$) Process:}\\
	\textsf{Iteration:} For $t= d\cdot k$ where $k\in \N$, sample a bin $i$ according to $\mathcal{S}$ and update:
	\begin{equation*}
		x_i^{t+1} = x_i^t + 1 \qquad \mbox{and update cache $b=i$}.
	\end{equation*}
	For $t=d\cdot k+j$, where $k\in \N$ and $1\leq j<d$, sample a bin $i$ according to $\mathcal{S}$, and update:
	\begin{equation*}
		\begin{cases}
			x_{i}^{t+1} = x_{i}^{t} + 1  & \mbox{if $x_{i}^{t} <  x_{b}^{t} $} \qquad \mbox{(also update cache $b=i$)}, \\
			x_{i}^{t+1} = x_{i}^{t} + 1  & \mbox{if $x_{i}^{t} =  x_{b}^{t} $}, \\
			x_{b}^{t+1} = x_{b}^{t} + 1 & \mbox{if $x_{i}^{t} >  x_{b}^{t}$}.
		\end{cases}
	\end{equation*}\vspace{-1.5em}
\end{framed}
\end{samepage}

 \section{Outline of the Proof of Theorem \ref{thm:caching_log_log_n}} 
 \label{sec:outline}
 
 In this section we outline from a high level the proof of \cref{thm:caching_log_log_n}, as this is by far the most substantial result in the paper and takes up the lion's share of the analysis. 
 
 The rough idea is that, through a layered induction, we prove a series of bounds for ever steeper potential functions. The $j$-th such potential function, where $j\leq j_{\max}$, roughly has the form $\sum_{i\in [n]}e^{f_j\cdot(y_i^m - g_j) }$ for functions $f_j:=f_j(n)$ and $g_j:=g_j(n)$. As $j$ increases the functions $f_j(n)$ and $g_j(n)$ tend to $\infty$ faster with $n$. For the last ``layer'' $j=j_{\max}$ we have $f_j(n)= \Theta(\log n)$ and $g_j(n)= \Theta(\log\log n)$ and so a bound of $\mathcal{O}(n)$ on the potential at level $j_{\max}$ implies that $\Gap(m) = \max_{i\in [n]} y_i^m =  \mathcal{O}(\log\log n)$. The reason for the layered induction is at each layer $j$ a bound on the potential at layer $j$ gives tail bounds on the number of bins with at least a given load; it is this bound which is used to prove a bound on the potential at layer $j+1$. 
 
 To start this induction we use a bound on the hyperbolic cosine potential function which in itself is strong enough to prove an $\mathcal{O}(\log n)$ gap bound on the process (recall that ultimately our target is $\mathcal{O}(\log \log n)$).
 There are many challenges to making this rigorous. We first outline the base case in more detail, followed by the layered induction.

 \subsection{Base Case}\label{sec:outline_base}
For the normalized load vector $(y_i^{t})_{i\in [n]}$ in step $t$,  the \textit{hyperbolic cosine potential}~\cite{PTW15} with smoothing parameter $\alpha > 0$, is defined as \begin{equation*} 
\Gamma^t   := \sum_{i = 1}^n e^{\alpha y_i^{t}} + \sum_{i = 1}^n e^{-\alpha y_i^{t}}.
\end{equation*} 
This potential penalizes over/underloaded bins with an exponential cost in their normalized load. For the base case of the layered induction, we will use that for some sufficiently small constant $\alpha = \Theta(1)$, \Whp~$\Gamma^t = \Oh(n)$ at an arbitrary step. This will allow us to show that the number of bins with normalized load at least $v$ is at most $\Oh(n \cdot e^{-\alpha v})$, which is essential for the layered induction (\cref{sec:outline_layered}).

 \newcommand{\simplebase}{
 
 Consider the \Memory process with any $(a,b)$-biased sampling distribution, for constants $a,b \geq 1$. Then, there exist constants $c:=c(a,b) \geq 1$ and $0<\alpha:=\alpha(a,b)<1$ such that for the potential $\Gamma := \Gamma(\alpha)$, and any step $t \geq 1$, we have  
\[
\Pro{ \bigcap_{u \in [t, t + n \log^8 n]} \left\{ \Gamma^u \leq 6cn \right\} } \geq 1 - n^{-4}.
\]}

{\renewcommand{\thethm}{\ref{simp}}
	\begin{thm}[Restated]
\simplebase
	\end{thm} }
	\addtocounter{thm}{-1}

 To prove this result, we first need to establish that the expectation of the potential has a drift downwards after a certain number of sufficiently many steps.

\newcommand{\ExpectationBoundStatement}{
Consider the \Memory process with any $(a,b)$-biased sampling distribution, for constants $a,b \geq 1$. Then, there exist constants $\alpha':=\alpha'(a,b)>0$,  $c:=c(a,b) \geq 1$ and $d:=d(a,b)\geq 2$, such that for the potential $\Gamma := \Gamma(\alpha)$ with any $\alpha\leq\alpha'$ and for any step $t \geq 1$,
\begin{align*}
\Ex{\left. \Gamma^{t+d} \,\,\right|\,\, \mathfrak{F}^{t}} \leq \Gamma^{t} \cdot \Big( 1 - \frac{\alpha}{c \cdot n}\Big) + c \cdot \alpha.
\end{align*} 
The same bound holds for the \DWeakMemory process.
}

{\renewcommand{\thethm}{\ref{lem:expectation_bound}}
	\begin{thm}[Restated]
\ExpectationBoundStatement
	\end{thm} }
	\addtocounter{thm}{-1}
	 
The difficulty in proving \cref{lem:expectation_bound} directly is that the cache in the \Memory process introduces very long range dependencies between the choices made in each step. We circumvent this problem by making the \Memory process ``forget'' what it has in the cache every $d$ steps and then do a single allocation of \OneChoice to obtain a new cached bin. We call this process $d$-\ResetMemory.

Although $d$-\ResetMemory has only \emph{bounded} range dependencies between allocations, the process is still somewhat opaque, for instance it is very hard to write down the probability allocation vector -  i.e., the probability it allocates a ball to the $i$-th heaviest bin at a given step. For this reason we further restrict the $d$-\ResetMemory so that all its comparisons happen with outdated information (from the last time it forgot the cache). We call this process \DWeakMemory and it has a more tractable probability allocation vector for a family of sampling distributions we call the $(a,b)$-step distributions. Fortunately, we can prove that these are the worst case sampling distributions for \DWeakMemory from the class of all $(a,b)$-biased distributions. Armed with this information we can show that the probability allocation vector of \DWeakMemory has a bias away from heavily loaded bins, which is enough to prove \cref{lem:expectation_bound} for the \DWeakMemory process. Then, by a coupling, we can relate this to a drop for the \Memory process.   

Having established the expectation bound and drift inequalities in \cref{lem:expectation_bound}, we are able to show that for sufficiently small constant smoothing parameter \Whp~the hyperbolic cosine potential is $\Oh(n)$. We achieve this by proving a general concentration result for the hyperbolic cosine potential \cref{thm:gamma_concentration} which is similar to that in \cite[Section 4]{LS22Queries}. The proof relies on an interplay between two instances $\Gamma_1$ and $\Gamma_2$ of the hyperbolic cosine potential with different smoothing parameters $\alpha_1$ and $\alpha_2$ (with $\alpha_1 $ a constant factor larger than $\alpha_2$) such that for any step $t$ with $\Gamma_1^t = \poly(n)$, then it also holds that $|\Gamma_2^{t+1} - \Gamma_2^t| \leq n^{1/3}$. By conditioning on the bad event that $\Gamma_1 = \poly(n)$ for a sufficiently long interval (which follows \Whp~by Markov's inequality and the union bound), we are able to show that $\Gamma_2$ stabilizes at $\Oh(n)$ using an inequality in the spirit of Method of Independent Bounded Differences.
 
A self-contained and full proof of the base case can be found in \cref{sec:base_case}.

 \subsection{Layered Induction}\label{sec:outline_layered}
 
 \paragraph{(Full) Potentials.} We will be using layered induction over super-exponential potential functions, similar to the ones used in \cite[Section 6]{LS22Queries} and \cite[Section 6]{LS22Noise}, but with some differences~(see discussion on page~\pageref{sec:difference_to_previous_approaches}). We will now define the potential functions (to avoid too many technicalities, we leave the exact definitions of constants and other versions of these potentials to \cref{sec:layered_induction}).
The super-exponential potential functions, for $1 \leq j < j_{\max}  =\Theta(\log \log n)$, are given by 
\[
\Phi_j^t := \sum_{i = 1}^n \Phi_{j, i}^t := \sum_{i = 1}^n e^{\alpha_2 \cdot v^j \cdot (y_i^t - z_j)^+},
\]
where $\alpha_2 > 0$ is a constant, $x^+=\max\{x,0\}$, $z_j := \frac{5v}{\alpha_2} \cdot j$, and $v$ is a sufficiently large constant. Our aim will be to prove that $\Phi_{j_{\max}-1}^t = \Oh(n)$, which will imply that $\max_{i \in [n]} y_i^t \leq z_{j_{\max}-1} + \frac{5v^2}{\alpha_2^2} = \Oh(\log \log n)$. 
In order to prove concentration, we also employ a second version of this potential, denoted by $\Psi_j^t$, which is defined in the same way as $\Phi_j^t$, but uses a larger smoothing parameter $\alpha_1 > \alpha_2$. This interplay is similar in spirit to that of $\Gamma_1$ and $\Gamma_2$ in the base case.
 
 \newcommand{\FoldedFigure}
{
\begin{center}
\definecolor{LightGreen}{HTML}{a8d48c}
\definecolor{LightRed}{HTML}{e35454}

\begin{tikzpicture}[scale=1]

\draw[-stealth] (-0.2,0) to node[pos=1.0,right]{steps} (12.4,0);

\draw[-stealth] (0,-0.2) to node[pos=1.0,above]{
\begin{minipage}{0.2\textwidth}
normalized load of allocated bin
\end{minipage}
} (0,7);

\draw[mark=*, mark size=1.5pt,blue] plot  coordinates{ (0,2) (0.2,2.25) (0.4,2.25) (0.6,2.5) (0.8,1.75) (1,2) (1.2,2) (1.4,2.25) (1.6,2.5) (1.8,2.75) (2,2.75) (2.2,3) (2.4,0.75) (2.6,1) (2.8,1.25) (3,1.5) (3.2,1.75) (3.4,2) (3.6,2.25) (3.8,2.25) (4,2.5) (4.2,2.75) (4.4,2.75) (4.6,2.75) (4.8,2.25) (5,1.5) (5.2,1.75) (5.4,2) (5.6,2) (5.8,2.25) (6,2.25) (6.2,2.5) (6.4,2.75) (6.6,3) (6.8,2.5) (7,2.75) (7.2,3) (7.4,3.25) (7.6,3.25) (7.8,3.5) (8,3.75) (8.2,4) (8.4,4) (8.6,4) (8.8,4.25) (9,3.75) (9.2,4) (9.4,4.25) (9.6,4.5) (9.8,4.5) (10,4.75) (10.2,5) (10.4,5) (10.6,5.25) (10.8,1.5) (11,1.75) (11.2,1.75) (11.4,2) (11.6,2.25) }; 

\draw[dashed] (0,5) to node[pos=0,left]{$z_{j-1}+\frac{5v}{\alpha_2}$} (12.1,5);
\draw[dashed] (0,4) to node[pos=0,left]{$z_{j-1}+\frac{4v}{\alpha_2}$} (12.1,4);
\draw[dashed] (0,2) to node[pos=0,left]{$z_{j-1}+\frac{2v}{\alpha_2}$} (12.1,2);

\draw[rectangle,fill=LightGreen,opacity=0.6] (0,0) rectangle (1,0.5);
\draw[rectangle,fill=LightGreen,opacity=0.6] (1,0) rectangle (2,0.5);
\draw[rectangle,fill=LightGreen,opacity=0.6] (2,0) rectangle (3,0.5);

\draw[rectangle,fill=orange,opacity=0.5] (3,0) rectangle (4,0.5);

\draw[rectangle,fill=LightGreen,opacity=0.6] (4,0) rectangle (5,0.5);
\draw[rectangle,fill=LightGreen,opacity=0.6] (5,0) rectangle (6,0.5);
\draw[rectangle,fill=orange,opacity=0.5] (6,0) rectangle (7,0.5);
\draw[rectangle,fill=LightRed,opacity=0.6] (7,0) rectangle (10.6,0.5);
\draw[rectangle,fill=LightGreen,opacity=0.6] (10.6,0) rectangle (11.6,0.5);

\draw [decorate, thick,
    decoration = {calligraphic brace,amplitude=2mm}] (1,-0.7) to node[pos=0.5,below=6pt]{phase}  (0,-0.7);
    \draw [decorate, thick,
    decoration = {calligraphic brace,amplitude=2mm}] (2,-0.7) to node[pos=0.5,below=6pt]{}  (1,-0.7);
        \draw [decorate, thick,
    decoration = {calligraphic brace,amplitude=2mm}] (3,-0.7) to node[pos=0.5,below=6pt]{}  (2,-0.7);
        \draw [decorate, thick,
    decoration = {calligraphic brace,amplitude=2mm}] (4,-0.7) to node[pos=0.5,below=6pt]{}  (3,-0.7);
\draw [decorate, thick,
    decoration = {calligraphic brace,amplitude=2mm}] (4,-1.7) to node[pos=0.5,below=6pt]{round}  (0,-1.7);

\draw[ultra thick,brown,dotted] (0.8,6) to (0.8,-0.5);
\draw[ultra thick,brown,dotted] (1.2,6) to (1.2,-0.5);
\draw[ultra thick,brown,dotted] (2.4,6) to (2.4,-0.5);
\draw[ultra thick,brown,dotted] (4.8,6) to (4.8,-0.5);
\draw[ultra thick,brown,dotted] (5.2,6) to (5.2,-0.5);
\draw[ultra thick,brown,dotted] (10.8,6) to (10.8,-0.5);

\draw[thick,dashed,color=black] (0,0) to node[pos=0.0,left=3pt,sloped]{$r$} (0,6);
\draw[thick,dashed,color=black] (4,0) to node[pos=0.0,left=3pt,sloped]{$r+1$} (4,6);
\draw[thick,dashed,color=black] (7,0) to node[pos=0.0,left=3pt,sloped]{$r+2$} (7,6);
\foreach \x in {7.2,7.4,...,10.4}
{
\draw[thick,dashed,color=black] (\x,0) to node[pos=0.0,left=3pt,sloped]{} (\x,6);
}
\draw[thick,dashed,color=black] (10.6,0) to node[pos=0.0,left=3pt,sloped]{$r+20$} (10.6,6);
\draw[thick,dashed,color=black] (11.6,0) to node[pos=0.0,left=3pt,sloped]{$r+21$} (11.6,6);
\foreach \x in {1,...,7,10.6,11.6}
{
 \draw[thick] (\x,-0.15) to (\x,0.15);
}
\foreach \x in {0,4,7,7.2,...,10.6,11.6}
{
 \draw[ultra thick] (\x,-0.2) to (\x,0.2);
}
\foreach \x in {0,0.2,0.4,...,11.8}
{
 \draw (\x,-0.1) to (\x,0.1);
}

\end{tikzpicture}
\end{center}
 \caption{Illustration of the phases and rounds of the folded process. Brown lines indicate the first substep within a phase in which a light bin was sampled (as can be seen in the second phase of round $r+1$, this does not necessarily mean that this bin is going to be used for the allocation or for the cache). As illustrated, it is only possible to allocate to a bin with normalized load above $z_{j-1}+\frac{5v}{\alpha_2}$ after a long sequence of red rounds.}
}

 Before outlining the proof of the $\Oh(\log \log n)$ bound further, we define the folded process (an illustration is shown in \cref{fig:folded_process}). On a high level, the folded process partitions steps into runs, separated by rounds. Roughly speaking, each run continues as long as we have recently sampled a light bin. Only if this fails, a new round starts. For technical reasons, we also allow some flexibility in where to allocate the ball, as long as the ball does not go into a more loaded bin. 
 
\newcommand{\Folded}{
\paragraph{The Folded Process.} In the $j$-th layer of the layered induction (for $1 \leq j \leq j_{\max} - 1$), we will be analyzing the following folded process of which \Memory is an instance, as we will verify shortly in \cref{lem:memoryfold}. For this, we group the steps into consecutive rounds (of varying lengths), and refer to the $s$-th step within the round as substep $s$. Further, we let $y_i^{r, s}$ be the normalized load of bin $i$ after substep $s$ of round $r$. Then, we define the \textit{folded process} as follows: 
\begin{itemize}
  \item For each round $r \geq 0$, sample bin $i:=i(r) \in [n]$ according to the sampling distribution $\mathcal{S}$:
  \begin{itemize}
      \item \textbf{Case A:} If $y_i^{r,0} \geq z_{j-1} + \frac{2v}{\alpha_2}$, then allocate one ball to an arbitrary bin $\ell$ with $y_{\ell}^{r, 0} \leq y_i^{r, 0}$, and proceed to the next round.
      \item \textbf{Case B:} Otherwise, start a sequence of  consecutive phases each consisting of $\frac{v}{\alpha_2}$ substeps (that is, each phase $k \geq 1$ consists of substeps $s \in [(k - 1) \cdot \frac{v}{\alpha_2}, k \cdot \frac{v}{\alpha_2})$ within the current round $r$.). In each substep $s$, we sample one bin $i=i(r,s)$ according to $\mathcal{S}$ and allocate one ball to an arbitrary bin $\ell$ with $y_{\ell}^{r, s} \leq z_{j-1} + \frac{4v}{\alpha_2}$. 
      At the end of each phase, we also complete the round if either of the following two conditions hold:
        \begin{itemize}
        \item \textbf{Condition 1:} In none of the substeps $s$ of the current phase did we sample a bin $\ell$ with $y_{\ell}^{r, s} < z_{j-1} + \frac{2v}{\alpha_2}$ at the corresponding substep $s$. 
            \item \textbf{Condition 2:} We have completed $k_j := e^{v^{j+1}} \cdot \log^3 n \leq n^{1/7}$ phases.
            
        \end{itemize}
    
  \end{itemize}
\end{itemize}
 
}

\Folded

 \begin{figure}[h]
\FoldedFigure
\label{fig:folded_process}
\end{figure}

Equipped with the definition of rounds, we then proceed to the analysis of the potentials $\Phi_j$ and $\Psi_j$. Our first task is to derive drop inequalities for these potential functions. One challenge compared to previous analyses is that we need to consider longer and longer time-intervals (as $j$ increases), which are the rounds formally defined via the folded process.

\paragraph{Recovery.} Through an interplay between different potentials, some defined for rounds and some defined for steps, we obtain the desired drop inequalities and conclude that, conditioning on the $\Phi_{j-1}$ potential being small, the potential $\Phi_j$ also has to be small in at least one step not too far into the future. 

\paragraph{Stabilization.} In order to prove that $\Phi_j$ remains small, we exploit the flexibility offered by choosing different smoothing parameters, and prove that a weak bound on the other potential $\Psi_j$ implies that $\Phi_j$ (which has a smaller smoothing parameter) can only change by a small sublinear amount. This allows us to apply a martingale concentration inequality with a bad event (\cref{thm:simplified_chung_lu_theorem_8_5}), in the spirit of the Method of Bounded Independent Differences.

\paragraph{Layered Induction.}
Having established the base case in \cref{sec:outline_base}, we can then put all pieces together by performing the induction step ($j \rightarrow j+1$, $j < j_{\max}=\mathcal{O}(\log \log n)$). For each iteration $j$, we consider increasingly steeper potential functions $\Phi_j$. 
Furthermore, at each layer of the induction we shift the time-interval slightly forward, and increase the offset of the potential gently by an additive constant. This leads to the following key lemma, which is formally stated as follows:
 
\newcommand{\InductionStep}{
Consider the \Memory process with any $(a,b)$-biased sampling distribution, for constants $a,b \geq 1$. Then, for any step $t \geq 1$ and $1 \leq j \leq j_{\max} - 1$, define $\beta_{j} := t + 2jn \log^6 n$, and let $C \geq 6$ be as defined in \eqref{eq:fixingconsts}. Then, assuming it holds that
\[
  \Pro{\bigcap_{s \in [\beta_{j-1},t+n \log^8 n]} \{ \Phi_{j-1}^{s} \leq 2Cn\} } \geq 1 - \frac{(\log n)^{11(j-1)}}{n^4},
\]
then the following also holds,
\[
  \Pro{ \bigcap_{s \in [\beta_j, t+n \log^8 n]} \{ \Phi_{j}^{s} \leq 2Cn \} } \geq 1 - \frac{(\log n)^{11j}}{n^4}.
\]
}

	{\renewcommand{\thethm}{\ref{lem:new_inductive_step}}
		\begin{thm}[Restated]
			\InductionStep
	\end{thm}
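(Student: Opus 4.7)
The plan is to split the window $[\beta_{j-1}, t+n\log^8 n]$ into a \emph{recovery} phase on $[\beta_{j-1}, \beta_j]$, during which I show that $\Phi_j^{s^*} \leq Cn$ at some step $s^*$, and a \emph{stabilization} phase on $[s^*, t+n\log^8 n]$, during which $\Phi_j$ stays below $2Cn$. Both are analyzed on the folded process from the outline, which is coupled to \Memory so that any bound on $\Phi_j$ for the folded process transfers to \Memory. The choice $\beta_j - \beta_{j-1} = 2n\log^6 n$ provides exactly the buffer needed for recovery before the window on which the conclusion is required begins.

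For recovery, the starting observation is that the induction hypothesis $\Phi_{j-1}^s \leq 2Cn$ yields, by layered Markov applied level by level, the tail estimate
\[
\#\bigl\{ i : y_i^s \geq z_{j-1} + \tfrac{kv}{\alpha_2} \bigr\} \,\leq\, 2Cn \cdot e^{-kv^j}
\]
for every integer $k \geq 0$. Under the $(a,b)$-biased sampling, this makes the probability of sampling a ``heavy'' bin (with $y \geq z_{j-1}+\tfrac{2v}{\alpha_2}$) at any substep at most $p_h \leq 2bC\cdot e^{-2v^j}$. The heart of the recovery is a round-level drop inequality
\[
\Ex{ \Phi_j^{r+1} \mid \mathfrak{F}^{r} } \,\leq\, \Phi_j^{r} \cdot \Bigl( 1 - \frac{\epsilon}{n} \Bigr) + c,
\]
with $r$ indexing rounds of the folded process and $\epsilon,c$ constants. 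This comes from two effects: (i) in Case~B every substep allocates to a bin with $y \leq z_{j-1}+\tfrac{4v}{\alpha_2} < z_j$, so the ``overloaded'' part of $\Phi_j$ strictly shrinks by the normalization factor $e^{-\alpha_2 v^j/n}$ per substep; (ii) early round-termination (Case~A or Condition~1) happens with probability at most $p_h$ respectively $p_h^{v/\alpha_2}$, both super-exponentially small in $v^j$, so their contribution to the next-round potential is negligible. Chaining this inequality across the $\Omega(n^{6/7}\log^6 n)$ rounds fitting into the recovery window (using that a round has length at most $k_j \cdot v/\alpha_2 \leq n^{1/7}$), and then applying Markov, yields the required $s^* \leq \beta_j$ with $\Phi_j^{s^*}\leq Cn$.

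For stabilization, the same round-level inequality already implies $\Ex{\Phi_j^s} \leq \tfrac{3}{2}Cn$ for every $s \geq s^*$, so only concentration is needed. To obtain bounded differences I pass to the steeper companion $\Psi_j$ (smoothing parameter $\alpha_1 > \alpha_2$): an analogous but weaker round-drop argument gives a polynomial bound $\Psi_j^s \leq n^{c_1}$ for all $s$ in the interval, with failure probability at most $n^{-5}$. Whenever this bound holds, every single bin satisfies $y_i^s = \Oh(\log n)$, forcing $|\Phi_j^{s+1}-\Phi_j^s| \leq n^{1/3}$. Applying the martingale concentration with a bad event (\cref{thm:simplified_chung_lu_theorem_8_5}) to $(\Phi_j^s)_{s\geq s^*}$, with the $\Psi_j$ bound serving as the good event, then gives $\Phi_j^s\leq 2Cn$ on the whole interval with the desired probability. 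A final union bound over the recovery failure, the stabilization failure, and the inherited hypothesis failure of $(\log n)^{11(j-1)}/n^4$ gives the target $(\log n)^{11j}/n^4$, the extra $(\log n)^{11}$ factor absorbing the polylog overhead from the sub-events and from the union bound across rounds.

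The main obstacle is the round-level drop inequality for $\Phi_j$. Unlike the step-level drops used in earlier work, the round length here is itself random and ranges from one substep (Case~A or a quick Condition~1 termination) up to $k_j \cdot v/\alpha_2$ substeps (Condition~2 termination); I must therefore balance the positive increments coming from allocations in Case~B --- which may land on a bin $\ell$ whose load is close to $z_{j-1}+\tfrac{4v}{\alpha_2}$, and so contributes non-trivially to $\Phi_j$ --- against the steady normalization-driven decay $e^{-\alpha_2 v^j/n}$ enjoyed by every untouched heavy bin. Tuning $v$ and $k_j = e^{v^{j+1}}\log^3 n$ so that these effects balance across the full range of possible round lengths, and then carrying the same analysis through simultaneously for the steeper $\Psi_j$ under $(a,b)$-biased sampling, is where most of the technical work lies.
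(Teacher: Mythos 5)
Your high-level architecture (recover $\Phi_j$ to $O(n)$ via a round-level drop, then stabilize via bounded-differences concentration using the steeper companion $\Psi_j$ to certify the bounded-difference condition, à la Chung--Lu with a bad event) is the same as the paper's. However, there are two genuine gaps, both of which are exactly the obstacles the paper flags in its ``Differences to Previous Applications'' discussion.

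First, the claimed drop inequality $\Ex{\Phi_j^{r+1}\mid\mathfrak{F}^r}\leq\Phi_j^r\cdot(1-\tfrac{\epsilon}{n})+c$ with $\epsilon,c$ constants is too slow to carry the recovery. Starting from $\Phi_j^{\beta_{j-1}}\leq e^{\frac12\log^3 n}$, contracting by $(1-\tfrac{\epsilon}{n})^R$ to reach $O(n)$ requires $R=\Omega(n\log^3 n)$ rounds, and since a round can contain up to $k_j\cdot v/\alpha_2=\Theta(n^{1/7})$ substeps this needs $\Omega(n^{8/7}\log^3 n)$ steps --- far exceeding the $2n\log^6 n$ budget $\beta_j-\beta_{j-1}$ gives you. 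Indeed by your own count you have only $\Omega(n^{6/7}\log^6 n)$ rounds, and $(1-\tfrac{\epsilon}{n})^{n^{6/7}\log^6 n}=\exp(-\Theta(n^{-1/7}\log^6 n))=1-o(1)$, which barely dents $e^{\frac12\log^3 n}$. The paper resolves this by introducing the \emph{partial} potential $\ddot{\Phi}_j=\sum_{i:\,y_i\geq z_j}e^{\alpha_2 v^j(y_i-z_j)}$, which (unlike $\Phi_j$) has no additive floor of $n$ and so can contract much faster: it drops at rate $1-e^{v^{j+1}}/n$ per round (\cref{lem:rec_inequality_phi_psi}), so only $n e^{-v^{j+1}}\log^3 n$ rounds $=O(n\log^6 n)$ steps are needed. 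You never introduce a partial potential, and no constant-rate per-round drop for the full potential can substitute for it.

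Second, in the stabilization phase you invoke \cref{thm:simplified_chung_lu_theorem_8_5} ``to $(\Phi_j^s)_{s\geq s^*}$'' and claim the extra $(\log n)^{11}$ budget absorbs the union-bound overhead. But that inequality controls the endpoint $X^N$, not the running maximum, so one must re-apply it (or union bound) at every step of the $\sim n\log^8 n$ window; since the bad-event term already carries the inherited $(\log n)^{11(j-1)}/n^4$ from the hypothesis, this costs a polynomial factor in $n$, which $(\log n)^{11}$ cannot absorb. The paper's fix is to partition the window into at most $\log^{10}n$ epochs of length $n/\log^2 n$, apply the concentration inequality once per epoch to a stopped/modified variable $X_i^r$ (needed because $\overline\Phi_j$ only drops in expectation when large), and then use the smoothness estimate \cref{clm:phi_j_does_not_drop_quickly} to control all in-between steps. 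Both the epoch partition and the smoothness argument are missing from your plan, and without them the error accounting does not close.
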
 }
	\addtocounter{thm}{-1}

Once this lemma has been established, the desired gap follows by using the base case (\cref{simp}) as a starting point, and applying the union bound over the iterations $1 \leq j \leq j_{\max}$ to conclude that $\Phi_{j_{\max}}^{m} = \mathcal{O}(n)$, which immediately gives the gap bound of $\mathcal{O}(\log \log n)$ with high probability. This concludes the proof outline; a complete and self-contained proof can be found in \cref{sec:layered_induction}.

\section{Proof of the Base Case} \label{sec:base_case}

For ease of reading and to keep the proof self-contained, several definitions and explanations from \cref{sec:outline_base} are repeated.

In \cite{LSS22}, the authors proved that for the \Memory process on a uniform sampling distribution the bound $\Gap(m) = \Oh(\log n)$ holds \Whp~at an arbitrary step $m\geq 1$. Here, we will apply a different analysis to strengthen that result significantly showing that \Whp~the hyperbolic cosine potential is $\Oh(n)$ for at least $n \log^8 n$ steps of the \Memory process on an $(a,b)$-biased sampling distribution from any step $m\geq1$. This will form the base case for the tighter analysis in \cref{sec:layered_induction} and it  also implies the stronger guarantee that \Whp~the underload satisfies $-\min_{i \in [n]} y_i^t = \Oh(\log n)$.

The analysis will be done using $d$-$\WeakMemory(\mathcal{S})$, a process which resets the cache every $d$ steps. So at step $t$ where $d|t $  the ball is allocated using \OneChoice and that bin is added to the cache. Then at steps $t+1, \dots, t+d$ the ball takes a bin sample and allocates to the least loaded bin of the sample and the cache. This process also makes comparisons based on outdated information and we can prove that, in a sense to be formalised, the process makes worse allocations than \Memory (\cref{lem:resettoweak}). By analyzing these $d$ steps together, we are still able to show that the potential drops over $d$ steps and deduce that \Whp~it is $\Oh(n)$. For $d=2$ this process can be interpreted as a variant of $(1+\beta)$ with $\beta = 1/2$, where the bin choices are correlated. That is, the sample at step $2t$ is being re-used at step $2t+1$. Thus, in effect we only need one sample per allocation; nonetheless, the process achieves that the gap is w.h.p.\ $\Oh(\log n)$. 

\subsection{Preliminaries}\label{sec:prelim}

We define the \textit{hyperbolic cosine potential} $\Gamma$ with smoothing parameter $\alpha > 0$, as\begin{equation*} 
\Gamma^t := \Gamma^t(\alpha) := \Phi^t(\alpha) + \Psi^t(\alpha),
\end{equation*}
where $\Phi^t$ is the overload exponential potential \[
\Phi^t := \Phi^t(\alpha) := \sum_{i = 1}^n \Phi_i^t := \sum_{i = 1}^n e^{\alpha y_i^{t}},
\]
and $\Psi^t$ is the underload exponential potential\[
\Psi^t := \Psi^t(\alpha) := \sum_{i = 1}^n \Psi_i^t := \sum_{i = 1}^n e^{-\alpha y_i^{t}}.
\] 
Following \cite{W07}, recall that a sampling distribution $(s_1,\dots , s_n)$ is called $(a,b)$-biased, if for all $i\in [n]$ it holds that $\frac{1}{an}\leq s_i\leq \frac{b}{n}$, where $s_i$ is the probability that bin $i$ is sampled, and $a,b > 1$. Observe that we can always assume that $b\leq n$ as $(s_1,\dots, s_n)$ is a probability distribution. For reals $a,b \geq 1$ such that $M:=\frac{n(a-1)}{ab-1} $ is an integer, we define the $(a,b)$-step distribution to be the vector $\big(\frac{b}{n}, \dots,\frac{b}{n},\frac{1}{an}, \dots,\frac{1}{an}\big)$ where the first $\frac{n(a-1)}{ab-1} $ indices have value $\frac{b}{n}$ and the rest have value $\frac{1}{an} $.

In \cref{sec:expectation_bound_gamma}, we will prove that the hyperbolic cosine potential is $\Oh(n)$ in expectation for the \Memory process with any $(a, b)$-biased sampling distribution (for $a, b \geq 1$ arbitrary constants):

\begin{thm}\label{lem:expectation_bound}
\ExpectationBoundStatement
\end{thm}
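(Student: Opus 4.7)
The plan is to first establish the inequality for $\DWeakMemory$ and then transfer it to $\Memory$ via a coupling argument. As a first reduction, I would argue that among all $(a,b)$-biased sampling distributions it suffices to analyze the $(a,b)$-step distribution $\mathcal{S}^\ast = (\frac{b}{n},\ldots,\frac{b}{n},\frac{1}{an},\ldots,\frac{1}{an})$. The key observation is that for $\DWeakMemory$, the cache is reset at the start of each run and all comparisons use the ordering $\sigma$ frozen at the reset. Hence, over a single run of length $d$, the probability allocation vector $p=p(\mathcal{S})$ admits a closed form purely in terms of the cumulative sampling mass $S_k := \sum_{j\leq k} s_{\sigma^{-1}(j)}$ over bin-prefixes: the ball allocated in the $\ell$-th substep of the run goes to the $i$-th heaviest bin (w.r.t.\ $\sigma$) with a probability that only depends on $s_i$ and the $S_k$'s. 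A straightforward majorization / rearrangement argument then shows that $p$ becomes most top-heavy, and thus the potential drop is minimized, precisely when the $s_i$'s are pushed to the extremes $\frac{b}{n}$ and $\frac{1}{an}$, that is, for the step distribution. So throughout the remainder of the proof I would work with $\mathcal{S}^\ast$.

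Next I would compute a one-run drift bound for $\DWeakMemory(\mathcal{S}^\ast)$. Over the $d$ consecutive substeps of a run, I would sum the incremental contributions $\Delta \Phi_i^{t+d} = e^{\alpha y_i^t}(e^{\alpha \Delta_i}-1)$ for the overload part (and the analogous contribution for $\Psi_i$), where $\Delta_i \in \{-1/n,1-1/n,\ldots,d-1/n\}$ counts the balls landing in bin $i$ across the run. For $\alpha$ sufficiently small, depending on $a,b,d$, I would apply the Taylor bound $e^{\alpha x} \leq 1 + \alpha x + \alpha^2 x^2$ on the $|\alpha\Delta_i| = O(\alpha d) = O(1)$ regime, so that
\[
\Ex{\Phi_i^{t+d} \mid \mathfrak{F}^t} \leq \Phi_i^t \cdot \bigl(1 + \alpha\, \Ex{\Delta_i\mid \mathfrak{F}^t} + O(\alpha^2 d^2)\bigr),
\]
and symmetrically for $\Psi_i$. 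The structural input from the previous paragraph is that $\Ex{\Delta_i\mid\mathfrak{F}^t} - \tfrac{d}{n}$ is a signed quantity that is negative (by at least some constant multiple of $\tfrac{1}{n}$) on the overloaded half and positive on the underloaded half; this is the ``bias away from heavy bins'' referred to in the outline. Summing over $i$ and choosing $d=d(a,b)$ large enough that this bias dominates the quadratic $O(\alpha^2 d^2)$ error (which is why $d$ must depend on $a,b$), I obtain
\[
\Ex{\Gamma^{t+d}\mid \mathfrak{F}^t} \leq \Gamma^t \cdot \Bigl(1 - \tfrac{\alpha}{c n}\Bigr) + c\alpha,
\]
where the additive $c\alpha$ absorbs the contributions from bins whose normalized load is close to zero (where the multiplicative drop is useless but the absolute change is $O(\alpha)$). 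This gives the claim for $\DWeakMemory$.

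To transfer the inequality to $\Memory$, I would construct a step-by-step coupling between $\Memory$ and $\DWeakMemory$ that both use the same sampling distribution $\mathcal{S}$. The coupling would feed both processes the same raw samples and argue that, conditioned on the history, the probability allocation vector of $\Memory$ is majorized (in the reverse sense of the sorted load vector) by that of $\DWeakMemory$; intuitively, $\Memory$ benefits from an always-fresh cache and from using up-to-date load information, whereas $\DWeakMemory$ forgets its cache every $d$ steps and compares against a stale ordering $\sigma$. Since the hyperbolic cosine potential is a symmetric Schur-convex functional of the load deviations, this majorization of allocation probabilities translates into an inequality of conditional expectations $\Ex{\Gamma^{t+d}_{\Memory}\mid \mathfrak{F}^t} \leq \Ex{\Gamma^{t+d}_{\DWeakMemory}\mid \mathfrak{F}^t}$, completing the proof.

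The step I expect to be the main obstacle is the coupling / majorization argument in the last paragraph: the cache in $\Memory$ carries information across runs, so one needs to be careful about what is held fixed at the start of the run and how to match up the cache states in the two processes. A clean way to do this may be to first pass through $d$-$\ResetMemory$ (which resets the cache but keeps up-to-date load comparisons) as an intermediate process, showing $\Memory \preceq d\text{-}\ResetMemory \preceq \DWeakMemory$ in the appropriate potential-expectation sense, which cleanly separates the cache-reset penalty from the stale-information penalty.
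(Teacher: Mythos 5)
Your overall plan matches the paper's: reduce to the $(a,b)$-step distribution via majorization and Schur-convexity (as in \cref{lem:stepmaj}), compute the exact per-run allocation probabilities of \DWeakMemory for that distribution (\cref{lem:pijbdds}), verify a bias away from heavy bins so as to invoke \cref{thm:ls22_thm_3_1}, and transfer the resulting drop back to \Memory by a coupling over a single run. Two things, however, need attention.

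First, and this is the substantive gap: your last paragraph asserts a clean majorization ``\Memory $\preceq$ \DWeakMemory'' on the same raw samples and then invokes Schur-convexity of $\Gamma$ to pass the expectation inequality. This dominance is \emph{not} true unconditionally. When the same bin is sampled twice within a run, or when a sampled bin coincides with the cache that \Memory carries into the run, \Memory's use of up-to-date load information can cause it to allocate to a \emph{heavier} bin than \DWeakMemory does (the paper's \cref{lem:resettoweak} quantifies: the discrepancy in normalized load is at most $d$). The paper resolves this by conditioning on the event $\mathcal{D}$ that all $d$ sampled bins are distinct and none equals the incoming cache; on $\mathcal{D}$ the dominance you want does hold, and on $\neg\mathcal{D}$ (which has probability $O(d^2 b/n)$) the per-bin error is controlled via $e^{\alpha d}\leq 1+2\alpha d$. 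This produces the extra additive term $\Phi^t\cdot\frac{2\alpha^2 d^3 b}{n}$ (and its $\Psi$ analogue), which must then be absorbed by the $-\alpha/(cn)\cdot\Gamma^t$ drop for $\alpha$ small. Without isolating and bounding this bad event, the transfer step does not close. Your suggestion to route through $d$-\ResetMemory is reasonable in spirit (it cleanly separates cache-reset from stale-information effects), but the paper's coupling is directly between \Memory and \DWeakMemory and carries the additive error explicitly; one way or another you need this error term.

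Second, a quantitative imprecision: in your displayed bound you write the per-bin quadratic error as $O(\alpha^2 d^2)$. As written (without a $1/n$ factor) that term, summed over $i$, would contribute $\Phi^t\cdot O(\alpha^2 d^2)$, which cannot be absorbed by a drop of order $\alpha/n$ and the argument would fail. The correct error term is $O(\alpha^2 bd^4/n)$ per bin, coming from $\sum_{j} \widehat{p}_{i,j}\, j^2 \leq bd^4/n$ (the crucial point is that bin $i$ receives any ball at all only with probability $O(bd/n)$), exactly as in \eqref{eq:error}–\eqref{eq:basedrop1}. If your $O(\alpha^2 d^2)$ was shorthand for $O(\alpha^2 d^2/n)$ this is closer, but you should carry the $1/n$ explicitly since it is what makes the argument work.
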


In \cref{sec:gamma_concentration} we prove the following general concentration inequality for the hyperbolic potential, a form of which was proved in \cite[Section 4]{LS22Queries}. 

\newcommand{\HyperbolicCosinePotConcentration}{
Consider any process $\mathcal{P}$ where in each step at most $d \in \N_+$ balls are allocated and consider an arbitrary constant $\kappa \geq 6$. Further, assume for this process that for the hyperbolic potential functions $\Gamma_1 := \Gamma_1(\alpha_1)$ and $\Gamma_2 := \Gamma_2(\alpha_2)$ with smoothing parameters $0 < \alpha_1 < 1/(2d)$ and $\alpha_2 \leq \frac{\alpha_1}{12\kappa}$ respectively, there exists an $\eps > 0$ (with $\alpha_2 \eps \geq n^{-1/6}$) and constants $c_1, c_2 > 0$ (with $c_1 \leq c_2$), such that for any step $t \geq 1$,
\begin{align*}
\Ex{\left. \Gamma_1^{t+1} \,\right|\, \mathfrak{F}^t} \leq \Gamma_1^t \cdot \left(1 - \frac{c_1\alpha_1\eps}{n}\right) + c_2\alpha_1\eps,
\end{align*}
and
\begin{align*}
\Ex{\left. \Gamma_2^{t+1} \,\right|\, \mathfrak{F}^t} \leq \Gamma_2^t \cdot \left(1 - \frac{c_1\alpha_2\eps}{n}\right) + c_2\alpha_2\eps.
\end{align*}
Then, for $c := 2 \cdot \frac{c_2}{c_1} \geq 2$, for any step $t \geq 1$,
\[
\Pro{\Gamma_2^t \leq 3cn} \geq 1 - n^{-\kappa}.
\]}

\begin{thm}[Hyperbolic Cosine Potential Concentration]  \label{thm:gamma_concentration}
\HyperbolicCosinePotConcentration
\end{thm}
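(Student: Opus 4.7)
The plan is to leverage the two hyperbolic potentials in tandem: $\Gamma_1$, with the larger smoothing parameter $\alpha_1$, will act as an auxiliary cage that confines each normalized load to $|y_i^s| = O((\log n)/\alpha_1)$ with high probability; this in turn forces bounded per-step differences for $\Gamma_2$, so that a martingale concentration inequality applied to $\Gamma_2$ delivers the desired tail bound of $n^{-\kappa}$. First, iterating each drift inequality in expectation and starting from $\Gamma_i^0 = 2n \leq cn$, a one-line induction yields $\Ex{\Gamma_i^s} \leq cn$ for every $s \geq 0$ and $i \in \{1,2\}$; this uses $c = 2c_2/c_1 \geq 2$, so that $cn(1-\mu_i) + \nu_i = cn - \nu_i \leq cn$, where $\mu_i := c_1 \alpha_i \eps / n$ and $\nu_i := c_2 \alpha_i \eps$.

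Fix the step $t$ of interest and set $T := \lceil (\kappa+4)\cdot n \log n / (c_1 \alpha_2 \eps) \rceil$ and $t^* := \max(0, t-T)$; the assumption $\alpha_2 \eps \geq n^{-1/6}$ guarantees $T \leq n^{7/6} \log n$. Two Markov-plus-union-bound estimates over $[t^*, t]$ yield $\Pro{\mathcal{E}_1} \geq 1 - n^{-\kappa-1}$ and $\Pro{\mathcal{E}_2} \geq 1 - n^{-\kappa-2}$, where
\[
\mathcal{E}_1 := \bigcap_{s \in [t^*,t]} \{\Gamma_1^s \leq n^{\kappa+4}\} \qquad \text{and} \qquad \mathcal{E}_2 := \{\Gamma_2^{t^*} \leq n^{\kappa+4}\}.
\]
On $\mathcal{E}_1$, the bound $e^{\alpha_1 |y_i^s|} \leq \Gamma_1^s \leq n^{\kappa+4}$ gives $|y_i^s| \leq (\kappa+4)\log n / \alpha_1$; combining this with $\alpha_2 \leq \alpha_1/(12\kappa)$ and $\kappa \geq 6$ produces $e^{\alpha_2 |y_i^s|} \leq n^{(\kappa+4)/(12\kappa)} \leq n^{1/10}$, and in particular $\Gamma_2^s \leq 2 n^{11/10}$ throughout the window. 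Since at most $d$ balls are allocated per step (so at most $d$ coordinates of $y^s$ change by $O(d)$, while every other coordinate shifts by only $-d/n$) and $\alpha_2 d < 1/2$, a direct term-by-term estimate then yields the bounded-difference bound $|\Gamma_2^{s+1} - \Gamma_2^s| \leq O(\alpha_2 d^2)\cdot n^{1/10} \leq n^{1/3}$.

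Rewriting the drift inequality for $\Gamma_2$ as $\Ex{\Gamma_2^{s+1} - c_2 n/c_1 \mid \mathfrak{F}^s} \leq (1 - \mu)\cdot(\Gamma_2^s - c_2 n / c_1)$ with $\mu := c_1 \alpha_2 \eps / n$, iterating over the window and using $(1-\mu)^T \leq e^{-\mu T} \leq n^{-\kappa-4}$ gives, on $\mathcal{E}_2$, the expectation bound $\Ex{\Gamma_2^t \mid \mathfrak{F}^{t^*}} \leq n^{\kappa+4}(1-\mu)^T + c_2 n/c_1 \leq cn/2 + 1$. I would then apply the martingale concentration result \cref{thm:simplified_chung_lu_theorem_8_5} to the sequence $(\Gamma_2^s)_{s \in [t^*, t]}$, treating $\mathcal{E}_1^c$ as the bad event and using the per-step bound $B = n^{1/3}$ over $T \leq n^{7/6} \log n$ steps. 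The resulting deviation tail is of order
\[
\exp\!\left(-\frac{(cn/2)^2}{2 T B^2}\right) = \exp\!\left(-\Omega\!\left(\frac{n^{1/6}}{\log n}\right)\right) \leq \tfrac{1}{3}\, n^{-\kappa}
\]
for $n$ large, and a final union bound over $\mathcal{E}_1^c$, $\mathcal{E}_2^c$ and the Azuma-type failure event produces $\Pro{\Gamma_2^t \leq 3cn} \geq 1 - n^{-\kappa}$.

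The main obstacle is making the bounded-difference estimate small enough for the martingale concentration to beat $n^{-\kappa}$: the factor $12\kappa$ in the hypothesis $\alpha_2 \leq \alpha_1/(12\kappa)$ is calibrated precisely so that, after inflating $|y_i^s| \leq (\kappa + O(1))\log n / \alpha_1$ by $\alpha_2$, the resulting exponent $\alpha_2 (\kappa+4)/\alpha_1 \leq 1/10$ is bounded away from $1$ with enough margin to survive the $d$-ball allocation slack. The window length $T$ must then be tuned so that simultaneously $\mu T = \Theta(\log n)$ (the drift exhausts the polynomial initial potential) and $T B^2 = o(n^2/\log n)$ (the Azuma exponent dominates $\kappa \log n$); both conditions are only jointly compatible thanks to the assumption $\alpha_2 \eps \geq n^{-1/6}$, which is what makes this quantitative two-potential approach go through.
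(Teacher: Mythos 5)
Your setup -- two potentials, $\Gamma_1$ as a cage to enforce $|y_i^s| = O(\log n / \alpha_1)$ and hence $|\Gamma_2^{s+1}-\Gamma_2^s| \le n^{1/3}$ on a good event, followed by a martingale-type argument over a window of length $T = \Theta(n\log n/(\alpha_2\eps))$ -- matches the paper's architecture in spirit. However, the final step is where the argument breaks.

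The gap is in how you invoke \cref{thm:simplified_chung_lu_theorem_8_5}. That result (Chung--Lu, restated in the appendix) gives a bound of the form $\Pro{X^N \ge X^0 + \lambda} \le \exp(-\lambda^2/(10ND^2)) + \Pro{B}$: it concentrates the terminal value \emph{around the initial value} $X^0$, not around the conditional mean $\Ex{X^N\mid\mathfrak{F}^0}$. If you set $X^s := \Gamma_2^{t^*+s}$, then on $\mathcal{E}_2$ you only know $X^0 = \Gamma_2^{t^*} \le n^{\kappa+4}$, so the strongest conclusion the inequality could give you with $\lambda = cn/2$ is $\Gamma_2^t \le n^{\kappa+4} + cn/2$ -- far weaker than $3cn$. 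The useful fact you computed, $\Ex{\Gamma_2^t\mid\mathfrak{F}^{t^*}} \le cn/2 + 1$, is an expectation statement reached by iterating the drift, but it does not plug into this Azuma-type bound, which has no mechanism to "see" the downward drift from a polynomial start; all it controls is the upward fluctuation relative to $X^0$. (If you instead tried the Doob martingale $M^s := \Ex{\Gamma_2^t\mid\mathfrak{F}^s}$, then $M^0$ is small, but its increments are not bounded by $|\Gamma_2^{s+1}-\Gamma_2^s|$, so the $n^{1/3}$ bounded-difference estimate no longer applies.) There is also a secondary issue: $\Gamma_2^s$ only satisfies the super-martingale admissibility condition $\Ex{\Gamma_2^{s+1}\mid\mathfrak{F}^s}\le\Gamma_2^s$ when $\Gamma_2^s \ge c_2n/c_1 = cn/2$; whenever $\Gamma_2^s$ dips below that, the condition fails, and such steps land in the Chung--Lu bad set $B$, which would then have non-negligible probability over a window this long.

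The paper resolves both points by decomposing the window into a \emph{recovery} step and a \emph{stabilization} step. The recovery argument (their Lemma 5.9) uses exactly the killed-potential / Markov trick you sketched to show that at \emph{some} step $r \in [t-T_r,t]$ one has $\Gamma_2^r \le cn$. The stabilization argument (their Lemma 5.10) then applies the Chung--Lu bound not to $\Gamma_2$ itself but to the \emph{stopped} process $X_r^s := \Gamma_2^{s\wedge\tau}$ with $\tau := \inf\{\tilde r > r : \Gamma_2^{\tilde r} \le cn\}$. This construction fixes both problems at once: the stopping guarantees $\Gamma_2^s \ge cn \ge cn/2$ on non-stopped paths (so the super-martingale condition holds without extra bad events), and the starting value $X_r^r = \Gamma_2^r \le 2cn$ is linear, so $\lambda = cn$ does deliver the $3cn$ endpoint. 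They then glue the red (above $cn$) phases to the green (below $cn$) phases via a union bound. Without some version of this restart-and-stop mechanism, the direct Azuma application over $[t^*,t]$ does not yield the claimed conclusion.

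(A minor arithmetic remark: with $\kappa=6$, $(\kappa+4)/(12\kappa) = 5/36 > 1/10$, so your claimed exponent $n^{1/10}$ should be $n^{5/36}$; this is still comfortably below $n^{1/3}$, so it doesn't affect the bounded-difference step, but the constant needs adjusting.)
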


Next, using \cref{thm:gamma_concentration}, we will deduce that for a sufficiently small smoothing parameter $\alpha > 0$, \Whp~the hyperbolic cosine potential $\Gamma$ is $\Oh(n)$:

\begin{thm} 
\label{simp}
\simplebase
\end{thm}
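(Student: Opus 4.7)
The strategy is to combine the drift bound of Theorem~\ref{lem:expectation_bound} (a contraction of $\Gamma$ in expectation over every $d$ steps) with the concentration Theorem~\ref{thm:gamma_concentration} (which converts two such drift inequalities at different smoothing parameters into a high-probability pointwise bound of $\mathcal{O}(n)$). Since Theorem~\ref{thm:gamma_concentration} expects a per-step drift while Theorem~\ref{lem:expectation_bound} only gives a drift over windows of $d$ steps, my first move is to view the \Memory process in \emph{super-steps} of length $d$: super-step $s$ corresponds to original steps $t+sd, \dots, t+(s+1)d-1$ and allocates at most $d$ balls. Under this reindexing, Theorem~\ref{lem:expectation_bound} supplies precisely the per-super-step drift inequality $\Ex{\Gamma^{s+1}\,|\,\mathfrak{F}^{s}} \leq \Gamma^{s}\cdot(1-\alpha/(cn))+c\alpha$ required by Theorem~\ref{thm:gamma_concentration}, with $c_{1}:=1/c$, $c_{2}:=c$ and $\eps:=1$.

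Next, I would instantiate Theorem~\ref{thm:gamma_concentration} with $\kappa:=6$ and smoothing parameters $\alpha_{1}:=\min\{\alpha',\,1/(3d)\}$ and $\alpha_{2}:=\alpha_{1}/(12\kappa)$, where $\alpha'$ is the constant from Theorem~\ref{lem:expectation_bound}. All hypotheses hold by construction: $\alpha_{1}<1/(2d)$, $\alpha_{2}\leq\alpha_{1}/(12\kappa)$, both $\alpha_{1},\alpha_{2}\leq\alpha'$ so Theorem~\ref{lem:expectation_bound} delivers the required drift for both $\Gamma_{1}$ and $\Gamma_{2}$, and $\alpha_{2}\eps=\alpha_{2}$ is a positive constant, hence trivially $\geq n^{-1/6}$ for $n$ large. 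Setting $\tilde c := 2c_{2}/c_{1} = 2c^{2}$, Theorem~\ref{thm:gamma_concentration} then yields $\Pro{\Gamma_{2}^{s}\leq 3\tilde c\,n}\geq 1-n^{-6}$ at every super-step $s$. Translating back to the original time scale and taking a union bound over the at most $n\log^{8}n$ super-step boundaries lying in $[t,t+n\log^{8}n]$ delivers the simultaneous bound $\Gamma_{2}^{u}\leq 3\tilde c\,n$ at every such $u$, with probability at least $1-n\log^{8}n\cdot n^{-6}\geq 1-n^{-4}$.

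Finally, to upgrade the bound from super-step boundaries to \emph{every} step $u\in[t,t+n\log^{8}n]$, I would use a crude deterministic worst-case bound on the one-step change of $\Gamma$: a single allocation multiplies $\Phi_{i}$ for the chosen bin by at most $e^{\alpha_{2}(1-1/n)}\leq e^{\alpha_{2}}$ and scales every other $\Phi_{j}$ by $e^{-\alpha_{2}/n}\leq 1$ (and symmetrically for $\Psi$), so $\Gamma^{u+1}\leq e^{\alpha_{2}}\Gamma^{u}$ and hence across any $d$ consecutive steps $\Gamma$ grows by a factor of at most $e^{d\alpha_{2}}<2$. On the high-probability event above this gives $\Gamma^{u}\leq 2\cdot 3\tilde c\,n = 6\tilde c\,n$ at every $u$ in the interval, so renaming $\alpha:=\alpha_{2}$ and $c:=\tilde c$ delivers the stated bound. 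The main obstacle is not conceptual but rather the careful bookkeeping of constants: $\alpha_{1}$ must simultaneously satisfy $\alpha_{1}\leq\alpha'$ (so Theorem~\ref{lem:expectation_bound} applies), $\alpha_{1}<1/(2d)$ (a hypothesis of Theorem~\ref{thm:gamma_concentration}), and keep the deterministic $d$-step blow-up factor strictly below $2$; all three are guaranteed by taking $\alpha_{1}\leq 1/(3d)$.
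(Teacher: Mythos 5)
Your argument is correct and follows essentially the same route as the paper's own proof: reinterpret the process over windows of length $d$ so that Theorem~\ref{lem:expectation_bound} supplies the per-step drift required by Theorem~\ref{thm:gamma_concentration} (with $\eps=1$, $c_1=1/c$, $c_2=c$, $\kappa=6$), take a union bound over the $\Oh(n\log^8 n)$ super-step boundaries to lose a factor $n^2$ in the probability, and then interpolate to intermediate steps via a deterministic bound on the one-step (or $d$-step) multiplicative change of $\Gamma_2$, losing only a constant factor. The paper sets $\alpha_1:=\alpha'$ directly and later invokes $\alpha_2\leq 1/(4d)$; your choice $\alpha_1:=\min\{\alpha',1/(3d)\}$ is a minor, slightly more defensive bookkeeping of the same constraint, and the paper's interpolation factor is stated as $e^{2\alpha_2 d}$ (covering both the $\Phi$ and $\Psi$ contributions) rather than your $e^{\alpha_2 d}$, but both give the constant-factor slack needed for the final $6cn$ bound.
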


   \subsection{Expectation Bound on the Hyperbolic Cosine Potential} \label{sec:expectation_bound_gamma}

 Our aim in this section is to prove \cref{lem:expectation_bound}, which gives an expected drop in the hyperbolic cosine potential $\Gamma$ when it is at least a suitably large constant times $n$. To achieve this we will make use of~\cite[Theorem 3.1]{LS22Batched}, which is a generalization of the main theorem in \cite{PTW15}. This will allow us to prove a drop inequality for $\Gamma$ and bound its expectation, when $\alpha > 0$ is a sufficiently small constant. Before we state it, we recall the following condition on a probability vector:
\begin{itemize}\itemsep0pt
  \item \textbf{Condition \hypertarget{c1}{$\mathcal{C}_1$}}: There exist a constant $\delta \in (0, 1)$ and (not necessarily constant) $\eps \in (0, 1)$, such that for any $1 \leq k \leq \delta \cdot n$,
    \[
    \sum_{i=1}^{k} p_{i} \leq (1 - \epsilon) \cdot \frac{k}{n},
    \]
    and similarly for any $\delta \cdot n +1 \leq k \leq n$,
    \[
     \sum_{i=k}^{n} p_i \geq \left(1 + \epsilon \cdot \frac{\delta}{1-\delta} \right) \cdot \frac{n-k+1}{n}.
    \]
 
\end{itemize}

One example of such a vector is the \TwoChoice probability allocation vector $p$, where $p_i = \frac{2i-1}{n^2}$, which satisfies condition $\mathcal{C}_1$ with $\delta=\frac{1}{4}$ and $\epsilon=\frac{1}{2}$ (e.g., \cite[Proposition 2.3]{LS22Batched}).

\begin{thm}[Theorem 3.1 in \cite{LS22Batched}] \label{thm:ls22_thm_3_1}
Consider any probability vector $p$ satisfying condition $\mathcal{C}_1$ for constant $\delta \in (0, 1)$ and $\eps > 0$, and any load vector $x$ with $\Phi :=\Phi(\alpha, x)$, $\Psi :=\Psi(\alpha, x)$ and $\Gamma :=\Gamma(\alpha, x)$. Further for some $K, \kappa > 0$ define, 
\begin{equation}\label{thm:ls22_thm_3_1a}
\Delta\overline{\Phi} := \sum_{i = 1}^n\Phi_i \cdot \Big(\Big(p_i - \frac{1}{n}\Big) \cdot \kappa \cdot \alpha + K \cdot \kappa \cdot \frac{\alpha^2}{n}\Big),
\end{equation}
and
\begin{equation}\label{thm:ls22_thm_3_1b}
\Delta\overline{\Psi} := \sum_{i = 1}^n \Psi_i \cdot \Big(\Big(\frac{1}{n} - p_i\Big) \cdot \kappa \cdot \alpha + K \cdot \kappa \cdot \frac{\alpha^2}{n}\Big).
\end{equation}
Then, there exists a constant $c := c(\delta) > 0$, such that for any $0 < \alpha < \min\{1, \frac{\eps\delta}{8K}\}$,
\[
\Delta\overline{\Gamma} := \Delta\overline{\Phi} + \Delta\overline{\Psi} \leq -\frac{\eps\delta}{8} \cdot \kappa \cdot \frac{\alpha}{n} \cdot \Gamma + c \cdot \kappa \cdot \eps \cdot \alpha.
\]
\end{thm}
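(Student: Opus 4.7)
The plan is to bound $\Delta\overline{\Phi}$ and $\Delta\overline{\Psi}$ separately using summation-by-parts, then combine them and absorb the $\alpha^2$ second-order terms. I split the right-hand sides of \eqref{thm:ls22_thm_3_1a}--\eqref{thm:ls22_thm_3_1b} as $\Delta\overline{\Phi} = \kappa\alpha \cdot A_\Phi + \kappa\alpha^2 K \Phi / n$ with $A_\Phi := \sum_i \Phi_i(p_i - 1/n)$, and analogously $A_\Psi := \sum_i \Psi_i(1/n - p_i)$. The goal reduces to showing $A_\Phi \leq -\frac{\eps\delta}{4n}\Phi + O(1)$ and $A_\Psi \leq -\frac{\eps\delta}{4n}\Psi + O(1)$.

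For the first-order drift, I define the prefix deficits $S_k := \sum_{i=1}^k (p_i - 1/n)$, with $S_0 = S_n = 0$. Abel's identity gives $A_\Phi = \sum_{k=1}^{n-1}(\Phi_k - \Phi_{k+1}) S_k$ and, using $\Psi_i$ sorted \emph{increasingly} after relabeling, $A_\Psi = \sum_{k=1}^{n-1}(\Psi_{k+1} - \Psi_k)S_k$. Both increments are nonnegative by monotonicity of the potentials, and Condition $\mathcal{C}_1$ shows $S_k \leq 0$ throughout: directly for $k \leq \delta n$, where $S_k \leq -\eps k/n$, and for $k > \delta n$ by rewriting the tail inequality as a complementary prefix statement, which yields $S_k \leq -\frac{\eps\delta}{1-\delta}\cdot \frac{n-k}{n}$. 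Hence both Abel sums are already $\leq 0$; what remains is to squeeze out a drift proportional to $\Phi/n$ (respectively $\Psi/n$).

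The technical core, and the step I expect to be the main obstacle, is turning these two piecewise-linear bounds on $|S_k|$ into a global drop. The cleanest way is to combine them into the uniform estimate $|S_k| \geq \eps' \cdot \min(k, n-k)/n$ with $\eps' := \min(\eps, \eps\delta/(1-\delta))$, and then evaluate $\sum_{k=1}^{n-1}(\Phi_k - \Phi_{k+1})\min(k,n-k)$ via two telescoping identities (one on each half), producing something of the form $\sum_{k \leq n/2}\Phi_k - \sum_{k > n/2}\Phi_k$. The delicate part is observing that when $\Phi \gg n$, the potential must be concentrated in the top bins (the tail $\sum_{k : y_k \leq 0}\Phi_k \leq n$ since each such summand is at most $1$), so the above telescoped difference is at least $\Phi/2 - O(n)$. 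This yields $A_\Phi \leq -\frac{\eps\delta}{4n}\Phi + O(1)$ after absorbing constants into $c(\delta)$, and an entirely symmetric argument (swapping the roles of heavy and light bins, using that $\Psi_i \leq 1$ on bins with $y_i \geq 0$) handles $A_\Psi$.

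Finally, the second-order error is straightforward: the hypothesis $\alpha < \eps\delta/(8K)$ forces $K\kappa\alpha^2 \Gamma/n \leq \frac{\eps\delta}{8n}\kappa\alpha \Gamma$, which is at most half of the first-order drop. Adding the two bounds for $A_\Phi$ and $A_\Psi$, multiplying by $\kappa\alpha$, and subtracting off half of the gain via the second-order term yields
\[
\Delta\overline{\Gamma} \leq -\frac{\eps\delta}{4n}\kappa\alpha \cdot \Gamma + \frac{\eps\delta}{8n}\kappa\alpha \cdot \Gamma + O(\kappa\alpha) \leq -\frac{\eps\delta}{8}\kappa\cdot \frac{\alpha}{n} \cdot \Gamma + c\kappa\eps\alpha,
\]
where the constant $c = c(\delta)$ absorbs the $O(1)$ slack from the two Abel sums (with a benign factor of $1/\eps$ hidden if desired, since that factor only enlarges the additive term). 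The interplay of conditions on the $p$ vector with monotonicity of the $\Phi_i$ (decreasing) and $\Psi_i$ (increasing) is what lets a single Abel identity do double duty for both halves of the hyperbolic cosine.
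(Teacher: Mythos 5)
The paper does not contain its own proof of this statement --- it is cited verbatim from \cite{LS22Batched} --- so there is no in-house argument to compare against; the assessment below is purely of your proposal's correctness.

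There is a genuine gap at precisely the step you flag as delicate. The claim that, once $\Phi \gg n$, the telescoped quantity $\sum_{k \le n/2}\Phi_k - \sum_{k > n/2}\Phi_k$ is at least $\Phi/2 - \mathcal{O}(n)$ is false, and with it the intermediate target $A_\Phi \le -\frac{\eps\delta}{4n}\Phi + \mathcal{O}(1)$. The inference ``most of the $\Phi$-mass lives in bins with $y_k > 0$'' does not give ``most of the $\Phi$-mass lives in the top $n/2$ indices,'' because the number of bins with $y_k > 0$ can be as large as $n-1$. Concretely, take $y = (y^*, \ldots, y^*, -(n-1)y^*)$ with $y^*$ large, and any $p$ obeying $\mathcal{C}_1$ (e.g.\ with $\delta=1/2$ the step vector $p_i = (1\mp\eps)/n$). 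Then $\Phi_k = e^{\alpha y^*}$ for all $k \le n-1$, so $\Phi_k - \Phi_{k+1} = 0$ for $k<n-1$ and the Abel sum collapses to the single term $A_\Phi \approx e^{\alpha y^*} S_{n-1} = -e^{\alpha y^*}(p_n - 1/n) = -\Theta\bigl(\eps\, e^{\alpha y^*}/n\bigr) = -\Theta(\eps\Phi/n^2)$, which is an order of magnitude weaker than the claimed $-\Theta(\eps\Phi/n)$. Equivalently, the telescoped difference here is only $e^{\alpha y^*} - e^{-\alpha(n-1)y^*} = \Theta(e^{\alpha y^*}) = \Theta(\Phi/n)$, nowhere near $\Phi/2$. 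The mirrored load vector defeats the symmetric argument for $A_\Psi$.

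The theorem survives only because it is stated for $\Gamma = \Phi + \Psi$ jointly, and in the configuration above $\Psi$ is dominated by the single deeply underloaded bin, on which the tail half of $\mathcal{C}_1$ delivers a genuine $-\Theta(\eps\Psi/n)$ drop; since $\Gamma \approx \Psi$ there, $\Delta\overline{\Gamma}$ is fine even though $\Delta\overline{\Phi}$ is not. This trade-off is exactly what the original argument (following the PTW15/TW14 template) manages through a case analysis on the signs of $y_{\lfloor\delta n\rfloor}$ and $y_{\lceil(1-\delta)n\rceil}$, letting whichever of $\Phi,\Psi$ has the favourable sign structure cover for the other. That case split is the missing ingredient: a single Abel-summation bound applied to $\Phi$ and $\Psi$ separately, with each required to drop on its own, cannot close the proof.
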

 
The main challenge in proving \cref{lem:expectation_bound} is due to the long range correlations present in the \Memory process, in addition, the non-uniform sampling distributions do not help the situation. Instead we analyze the weaker \DWeakMemory process, which is more amenable to an application of \cref{thm:ls22_thm_3_1}, however the non-uniform sampling distributions are still a fly in the ointment. The first step (\cref{lem:resettoweak}) is to show that, from the perspective of our potentials, on an $(a,b)$-biased sampling vector the \DWeakMemory process is indeed weaker than the \Memory process (up to a small additive error term), so that upper bounds for \DWeakMemory also hold for \Memory. We then show that among all the $(a,b)$-biased distributions, the $(a,b)$-step distribution is the worst case with regard to the potential function change of the \DWeakMemory process (\cref{lem:stepmaj}). Since the $(a,b)$-step distribution has a simple form, we can calculate the probability of allocating $j$ balls to the $i$-th most loaded bin (in a fixed ordering) during a run of $d$ allocations (\cref{lem:pijbdds}). Using these allocation probabilities we can then bound the potential drop over a run of \DWeakMemory and relate this back to the potential of \Memory, to prove \cref{lem:expectation_bound}.

\paragraph{Notation} In what follows let $\mathcal{S}$ be a sampling distribution, and $d\geq 1$, $t_0\geq 0$ be integers such that $d\mid t_0$. Recall that at the start of a run the \DWeakMemory process fixes an ordering $\sigma$ of the bins by load and then uses this static ordering $\sigma$ for all allocation decisions during the run. If the run starts at some time $t_0$ we can assume that the ordering $\sigma$ is $\mathfrak{F}^{t_0}$ measurable. 
p
Let $\widehat{p}_{i, j}:=\widehat{p}_{i, j}(\mathcal{S},t_0,d,\sigma)$ be the probability of allocating $j$ balls to bin $\sigma^{-1}(i)$ within a run of $\DWeakMemory(\mathcal{S})$ starting at $t_0$, where $\sigma^{-1}(i)$ is the $i$-th heaviest bin with respect to $\sigma$ (the ordering fixed at time $t_0$). Analogously, let $p_{i, j}:=p_{i, j}(\mathcal{S},t_0,d,\sigma)$ be the probability of \Memory\!\!($\mathcal{S}$) allocating $j$ balls to bin $\sigma^{-1}(i)$ during the steps $t_0, \dots, t_0+d-1$. Let $\widehat{\eta}_i^{\,t}:=\widehat{\eta}_i^{\,t}(\mathcal{S},t_0,d,\sigma)$ denote the number of balls added to bin $\sigma^{-1}(i)$ during the first $t\leq d$ steps of $\DWeakMemory(\mathcal{S})$, and let ${\eta}_i^{\,t}:={\eta}_i^{\,t}(\mathcal{S},t_0,d,\sigma)$ be the analogous quantity for $\Memory(\mathcal{S})$. If $t=d$ we suppress the superscript to give $\widehat{\eta}_i=\widehat{\eta}_i^{\,d}$ and $\eta_i=\eta_i^{d}$.

Note that $\widehat{p}_{i, j}$ and $\widehat{\eta}_i^{\,t}$ only depend on the ordering $\sigma$ whereas $p_{i,j}$ and ${\eta}_i^{\,t}$ also depend on the load configuration $x^t$. The quantities $p_{i,j}$ and ${\eta}_i^{\,t}$ might seem unnatural given that the \Memory process does not follow $\sigma$, however they will be useful to consider when relating the potential functions of the two processes. We often suppress the notational dependence on $\mathcal{S},t_0,d$ and $\sigma$ when this is clear from the context, in particular when conditioning on $\mathfrak{F}^{t_0}$.

Observe that  $\Ex{\widehat{\eta}_i \mid \mathfrak{F}^{t_0} }=\sum_{j=0}^d\widehat{p}_{i,j}\cdot j $ is the expected number of balls the $i$-th heaviest bin  w.r.t.\ $\sigma$ receives during a run of $\DWeakMemory(\mathcal{S})$ starting at time $t_0$. Since $d$ balls are allocated in a run it follows that the vector $(\widehat{p}_{i})_{i\in[n]}$, given by  $\widehat{p}_{i} =\frac{1}{d}\sum_{j=0}^d\widehat{p}_{i,j}\cdot j$, is a probability distribution. It is tempting to call this the allocation vector of \DWeakMemory however this is not quite right in the sense that if so it would allow the allocation of fractional balls, so we call it the \textit{proxy-allocation vector} of \DWeakMemory.

\subsubsection{Couplings Between Processes}
The \Memory process is quite tricky to work with due to long range dependencies between allocations introduced by the cache. The \DWeakMemory process ``forgets'' the cache every $d$ steps also and makes its decision for steps $t+1, \dots, t+d$ using the load vector $x^t$.  
For a positive integer $d$, we can then relate the results for the weaker \DWeakMemory processes back to the \Memory processes via the following coupling over one run of $d$ allocations.

\begin{lem}\label{lem:resettoweak}
Consider any fixed $d \geq 1$ and the potentials $\Phi := \Phi(\alpha)$ and $\Psi := \Psi(\alpha)$ for any $0 < \alpha < 1/d$. Fix a load vector $x^t$ at step $t$ and order the bins decreasingly by their loads. Let $\mu$ be any $(a, b)$-biased sampling vector, and fix $p_{i, j} := p_{i,j}(\mu,d)$ and $\widehat{p}_{i, j} := \widehat{p}_{i,j}(\mu,d)$. Then,
\begin{align*}
\alpha \cdot \sum_{i = 1}^n \Phi_i^t \cdot \sum_{j = 0}^d p_{i, j} \cdot j &\leq \alpha \cdot \sum_{i = 1}^n \Phi_i^t \cdot \sum_{j = 0}^d \widehat{p}_{i, j} \cdot j + \Phi^t \cdot \frac{\alpha^2}{n} \cdot (2d^3b),\intertext{and}
-\alpha \cdot \sum_{i = 1}^n \Psi_i^t \cdot \sum_{j = 0}^d p_{i, j} \cdot j &\leq -\alpha \cdot \sum_{i = 1}^n \Psi_i^t \cdot \sum_{j = 0}^d \widehat{p}_{i, j} \cdot j + \Psi^t \cdot \frac{\alpha^2}{n} \cdot (2d^3b).
\end{align*}
\end{lem}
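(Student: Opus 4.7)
The plan is to couple one run of $\Memory(\mu)$ with one run of $\DWeakMemory(\mu)$, both started from the common load vector $x^t$ at time $t_0 := t$, sharing the sample sequence $i_{t_0}, \ldots, i_{t_0+d-1}$ drawn i.i.d.\ from $\mu$ and the tie-breaking randomness; in particular, $\sigma$'s arbitrary tie-breaking is chosen to mirror \Memory's conventions. Under this coupling both processes perform the same \OneChoice allocation at step $t_0$, so initially $b = i_{t_0}$ in both caches. The only subsequent source of divergence is that \Memory compares the current loads of the sampled bin $i_s$ and the cached bin $b_s$, whereas $\DWeakMemory$ uses the static ordering $\sigma$ inherited from $t_0$; hence the two agree step-by-step as long as no bin involved has changed load within the run.

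\textbf{Step decomposition and the collision event.} I would write
\begin{equation*}
\sum_{i=1}^n \Phi_i^t \bigl(\Ex{\left.\eta_i \,\right|\, \mathfrak{F}^{t_0}} - \Ex{\left.\widehat{\eta}_i \,\right|\, \mathfrak{F}^{t_0}}\bigr) = \sum_{s=t_0}^{t_0+d-1} \sum_{i=1}^n \Phi_i^t \bigl(\Pro{A_s^{(M)}=i} - \Pro{A_s^{(W)}=i}\bigr),
\end{equation*}
where $A_s^{(\cdot)}$ is the allocation target at step $s$ under each process, and introduce the \emph{collision} event $B$ that some bin is sampled or cached more than once during the run. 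A union bound over the at most $2d$ sample/cache slots, each hitting a given bin with probability at most $b/n$ by the $(a,b)$-biased assumption, yields $\Pro{B} \leq 2d^2 b / n$. On $B^c$, every bin compared at every substep retains its time-$t_0$ load, so the current-load and $\sigma$-based orderings on the two candidate bins $\{i_s, b_s\}$ agree, and both processes make identical allocation and cache-update decisions throughout the run; thus each step's contribution on $B^c$ is zero.

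\textbf{Bounding the error and the $\Psi$ case.} On $B$, any allocation target lies among the at most $2d$ ``touched'' bins, whose loads at any substep differ from their time-$t_0$ loads by at most $d$; hence $\Phi_{A_s^{(\cdot)}}^s \leq e^{\alpha d} \Phi_{A_s^{(\cdot)}}^t \leq (1+2\alpha d)\Phi_{A_s^{(\cdot)}}^t$ since $\alpha d < 1$. Combining the Taylor factor $\alpha d$, the collision probability $\Pro{B} \leq 2d^2 b/n$, the bound $\sum_i \Phi_i^t = \Phi^t$, and summing over the $d$ steps of the run produces a discrepancy of at most $\alpha \cdot \Phi^t \cdot 2d^3 b /n$, which after multiplying through by $\alpha$ gives the claimed error $\Phi^t \cdot \alpha^2/n \cdot 2d^3 b$. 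The main obstacle is the tie branch of \Memory, where the ball is allocated to $i$ but the cache stays at $b$, in contrast to $\DWeakMemory$'s strict $\sigma$ update; this is resolved by aligning $\sigma$'s tie-breaking with \Memory's convention and observing that tied bins have identical initial loads, so any residual cache mismatch is irrelevant to $\Phi$-weighted allocations until a genuine collision occurs. The $\Psi$ inequality follows by the same coupling argument with $\Psi_i^t$ replacing $\Phi_i^t$: the advantage of \Memory (smartly directing balls to the lighter, high-$\Psi$ bin) appears on the opposite side relative to the $\Phi$ case, so after multiplying by $-\alpha$ the inequality direction is preserved and the error term remains a positive $\Psi^t \cdot \alpha^2/n \cdot 2d^3 b$.
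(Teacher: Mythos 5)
Your coupling begins from a mistaken picture of \Memory: you assume that both processes start the run with the first sample $i_{t_0}$ in the cache, so that ``initially $b = i_{t_0}$ in both caches.'' That is how \DWeakMemory works (it resets the cache at the start of a run), but \Memory does \emph{not} reset — it arrives at step $t$ carrying a cached bin $c$ inherited from before the run, and the first sample is compared against $c$, not unconditionally cached. Your collision event $B$ therefore has no role for $c$. The paper's proof names $c$ explicitly and defines its good event $\mathcal{D}$ so that $c$ is never sampled during the run; that extra requirement is essential (if $c$ is very light, \Memory may hold onto it and allocate to it throughout the run, which \DWeakMemory cannot mimic), and it cannot be recovered from a coupling in which the pre-existing cache does not appear at all.

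The second and more decisive gap is the claim that on $B^c$ ``every bin compared at every substep retains its time-$t_0$ load,'' so the two processes make identical decisions and the $B^c$ contribution is zero. This is false. Even with all sampled bins distinct, the cached bin accumulates balls during the run: whenever the fresh sample is (live-)heavier than the current cache, \Memory places the ball in the cache, which raises its live load while its $\sigma$-rank stays frozen. After a few such steps \Memory can encounter a \emph{live-load tie} between the cache and a fresh sample whose time-$t$ loads differ; its tie rule then allocates to the fresh (time-$t$-heavier) sample, while \DWeakMemory, comparing by $\sigma$, allocates to the cached (time-$t$-lighter) bin. So on your good event the per-step decisions genuinely diverge and the contribution is not zero. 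The remark about ``aligning $\sigma$'s tie-breaking with \Memory's convention'' cannot repair this: $\sigma$'s tie-breaking only affects bins with \emph{equal} time-$t$ loads, whereas the problematic ties arise in live loads between bins with \emph{different} time-$t$ loads. Correspondingly, the paper does not claim equality of decisions on $\mathcal{D}$; it makes a one-sided domination claim — that \Memory, holding the extra bin $c$, always stores (and therefore allocates to) a weakly lesser loaded bin than \DWeakMemory — giving a $\le 0$ bound, not $= 0$. The ``zero on $B^c$'' step is the load-bearing claim of your proof and it does not hold.
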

\begin{proof}
For the first statement, our goal is to upper bound 
\[  
\alpha \cdot \sum_{i = 1}^n \Phi_i^t \cdot \sum_{j = 0}^d p_{i, j} \cdot j = \alpha \cdot \Ex{\sum_{i = 1}^n \Phi_i^t  \cdot \eta_i \;\Bigg| \;\mathfrak{F}^t},
\]where $\eta_i := \eta_i(\mu,t, d,\sigma)$ and $\widehat{\eta}_i := \widehat{\eta}_i(\mu ,t, d,\sigma)$ for each $i \in [n]$ and any ordering $\sigma$ of the bins by load. We consider a coupling between the two processes \Memory and \DWeakMemory by sampling the same bins in steps $t+1,\ldots,t+d$, starting with the normalized load vector $y^t$. Let $c$ be the cache of the \Memory process at step $t$ and let $\mathcal{D}$ be the event that in steps $t+1, \ldots, t + d$, none of the bins is sampled twice and neither bin $c$ is sampled.

\textbf{Case 1 [$\mathcal{D}$ holds]:} 
 In this case, all sampled bins (and initial cache $c$) are different. Hence, \Memory always stores a lesser loaded bin than the cache of \DWeakMemory, and so it allocates to a lesser loaded bin than \DWeakMemory. Hence,
\begin{equation} \label{eq:exp_bound_A}
\alpha \cdot \Ex{\left. \sum_{i=1}^n \eta_i \cdot \Phi_i^t - \sum_{i=1}^n \widehat{\eta}_i \cdot \Phi_i^t \,\right|\, \mathfrak{F}^t,\, \mathcal{D}} \leq 0.
\end{equation}

\textbf{Case 2 [$\mathcal{D}$ does not hold]:} 
On the other hand, if there is a step $s \in [t+1, t + d]$ such that processes do sample a bin for a second time or sample bin $c$, then for any subsequent step $r \in [s, t+d]$, \Memory could allocate to a bin $i$ and \DWeakMemory to a bin $i'$ such that $y_i^{t} \leq y_{i'}^t + d$, since in $d$ allocations the load of a bin can change by at most $d$, and so the cache of \Memory can be at most $d$ balls larger than that of \DWeakMemory.

In order to upper bound the probability that the event $\neg \mathcal{D}$ occurs, we define $\neg \mathcal{D}^r$ for $1 \leq r \leq d$, the event that in step $t + r$, we sampled a bin for the second time or the cache
\begin{align} \label{eq:coupling_prob_bound}
  \Pro{\neg \mathcal{D}\mid \mathfrak{F}^t } 
    &\leq \Pro{\neg \mathcal{D}^1\mid \mathfrak{F}^t} + \ldots + \Pro{\neg \mathcal{D}^d\mid \mathfrak{F}^t}\notag  \\
    &\leq \frac{b}{n} + 2 \cdot \frac{b}{n} + 3 \cdot \frac{b}{n} + \ldots + d \cdot \frac{b}{n}\notag \\ &\leq \frac{d^2b}{n}.
\end{align}
Since $y_i^t \leq y_{i'}^t + d$, the term $\Phi_i^t$ can be upper bounded by the term $\Phi_{i'}^t$ as follows
\[
\Phi_i^t = e^{\alpha y_i^t} \leq e^{\alpha \cdot (y_{i'}^t + d)} \leq \Phi_{i'}^t \cdot e^{\alpha d} \leq \Phi_{i'}^t \cdot (1 + 2\alpha d) = \Phi_{i'}^t + \Phi_{i'}^t \cdot 2\alpha d.
\]
using that $\alpha \leq \frac{1}{d}$ and that $e^z \leq 1 + 2z$ for any $|z| \leq 1$. Since there are at most $d$ different bins $j_1, \ldots, j_d$ allocated in steps $t+1, \ldots, t+d$, we have that 
\begin{equation} \label{eq:exp_bound_B}
\alpha \cdot \Ex{\left. \sum_{i=1}^n \eta_i \cdot \Phi_i^t - \sum_{i=1}^n \widehat{\eta}_i \cdot \Phi_i^t \,\right|\,  \mathfrak{F}^t ,\,  \neg \mathcal{D}} \leq \sum_{k = 1}^d \Phi_{j_k}^t \cdot 2\alpha d \leq \Phi^t \cdot 2\alpha d.
\end{equation}
Hence, combining \cref{eq:exp_bound_A} and \cref{eq:exp_bound_B} we can bound the expectation
\begin{align*}
\alpha \cdot \Ex{\left.\sum_{i=1}^n \eta_i \cdot \Phi_i^t - \sum_{i=1}^n \widehat{\eta}_i \cdot \Phi_i^t \,\right|\,  \mathfrak{F}^t }
 & = \alpha \cdot \Ex{\left. \sum_{i=1}^n \eta_i \cdot \Phi_i^t - \sum_{i=1}^n \widehat{\eta}_i \cdot \Phi_i^t \,\right|\,\mathfrak{F}^t,  \mathcal{D}} \cdot \Pro{\mathcal{D}\mid \mathfrak{F}^t} \\
 & \quad \quad + \alpha \cdot \Ex{\left. \sum_{i=1}^n \eta_i \cdot \Phi_i^t - \sum_{i=1}^n \widehat{\eta}_i \cdot \Phi_i^t \,\right|\, \mathfrak{F}^t, \, \neg \mathcal{D}} \cdot \Pro{ \neg \mathcal{D}\mid \mathfrak{F}^t} \\
 & \leq 0 + \Phi^t \cdot 2 \alpha^2 d  \cdot \frac{d^2 b}{n} = \Phi^t \cdot \frac{2\alpha^2 d^3 b}{n}.
\end{align*}

Now we proceed similarly for $\Psi$, by upper-bounding the term $-\Psi_i^t$ by $-\Psi_{i'}^t$ \[
-\Psi_{i}^t = -e^{-\alpha y_i^t} \leq -e^{-\alpha \cdot (y_{i'}^t + d)} \leq -\Psi_{i'}^t \cdot e^{-\alpha d} \leq -\Psi_{i'}^t + \Psi_{i'}^t \cdot 2\alpha d,
\]

using that $\alpha \leq \frac{1}{d}$ and that $e^z \leq 1 + 2z$ for any $|z| \leq 1$. Hence,
\[
-\alpha \cdot \Ex{\left. \sum_{i=1}^n \eta_i \cdot \Psi_i^t - \sum_{i=1}^n \widehat{\eta}_i \cdot \Psi_i^t\,\right|\,  \mathfrak{F}^t } \leq \Psi^t \cdot \frac{2\alpha^2 d^3 b}{n}.\qedhere
\]\end{proof}
 
 \subsubsection{Properties of the Allocation Probabilities in \texorpdfstring{$d$}{d}-\WeakMemory}
 
Our result bounding the expected potential drop for the \Memory process (\cref{lem:expectation_bound}) follows from a bound on the expected potential drop for the \DWeakMemory process. This is proved by applying \cref{thm:ls22_thm_3_1}, which requires showing that the probability allocation vector of the process (or a proxy for it) induces a bias away from heavily loaded bins. 

 In this sub-section we gather several results on the proxy-allocation vector of  \DWeakMemory that will allow us to apply \cref{thm:ls22_thm_3_1} to prove a drop in the expected potential over one run. The first two results allow us to analyze the proxy-allocation vector of  \DWeakMemory on a $(a,b)$-step distribution rather than an arbitrary $(a,b)$-biased distribution.

Recall that we say that a vector $v=(v_1,v_2,\ldots,v_n)$ \emph{majorizes} $u=(u_1,u_2,\ldots,u_n)$ if for all $1 \leq k \leq n$, the prefix sums satisfy: $\sum_{i=1}^k v_i \geq \sum_{i=1}^k u_i$.

\begin{lem}\label{lem:stepmaj}Let $d\geq 1$, $t_0\geq 0$, and $a,b> 1$ be such that $M= n\cdot \frac{a-1}{ab-1}$ is an integer. Let $\nu$ be the $(a,b)$-step distribution and $\mu$ be an any $(a,b)$-biased distribution.  Then, $\sum_{k=1}^i\widehat{\eta}_{k}(\mu)\preceq \sum_{k=1}^i \widehat{\eta}_{k}(\nu)$ holds for any $i \in [n]$. Thus, $\left(\Ex{\widehat{\eta}_{j}(\nu)\mid \mathfrak{F}^{t_0}} \right)_{j\in [n]} $ majorizes $\left(\Ex{\widehat{\eta}_{j}(\mu)\mid \mathfrak{F}^{t_0}} \right)_{j\in [n]}$.   
\end{lem}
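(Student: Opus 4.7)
The plan is to construct an explicit coupling of two runs of $\DWeakMemory$ starting at step $t_0$ with the same fixed ordering $\sigma$, one driven by the arbitrary $(a,b)$-biased distribution $\mu$ and one driven by the $(a,b)$-step distribution $\nu$, such that at every step of the run the ball allocated under $\nu$ has $\sigma$-rank at most the $\sigma$-rank of the ball allocated under $\mu$. This pointwise (in the coupling) domination immediately yields $\sum_{k=1}^{i}\widehat{\eta}_{k}(\mu) \leq \sum_{k=1}^{i}\widehat{\eta}_{k}(\nu)$ surely, which gives the claimed stochastic ordering; taking expectations then yields the majorization of $\big(\Ex{\widehat{\eta}_{j}(\cdot)\mid\mathfrak{F}^{t_0}}\big)_{j\in[n]}$.

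The first ingredient is a coupling of the samples themselves. I would verify the $\sigma$-stochastic dominance
\[
\sum_{j=1}^{k}\nu_{\sigma^{-1}(j)} \;\geq\; \sum_{j=1}^{k}\mu_{\sigma^{-1}(j)} \qquad \text{for every } k\in [n].
\]
For $k\leq M$ this holds because $\nu$ assigns the maximum admissible weight $b/n$ to the $M$ heaviest bins while each $\mu_i \leq b/n$; for $k>M$ it is equivalent to $\sum_{j>k}\mu_{\sigma^{-1}(j)} \geq (n-k)/(an)$, which follows from $\mu_i\geq 1/(an)$. Strassen's theorem (equivalently, an inverse-CDF construction with bins listed in $\sigma$-order) then produces, at every step $s$ of the run, a joint sample $(I^{s}_{\mu}, I^{s}_{\nu})$ with marginals $\mu$ and $\nu$ satisfying $\sigma(I^{s}_{\nu})\leq \sigma(I^{s}_{\mu})$ almost surely.

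Given this sample coupling, I would prove by induction on $s=0,1,\ldots,d-1$ the joint invariant: (i)~the cached bin under $\nu$ has $\sigma$-rank at most the cached bin under $\mu$, and (ii)~the ball allocated at step $s$ under $\nu$ has $\sigma$-rank at most the ball under $\mu$. The base case $s=0$ is immediate, since each process installs its sample as the cache and allocates the ball there. For $s\geq 1$ one performs a four-way case split according to whether each sample beats its respective cache in the $\sigma$-ordering. In every case the $\sigma$-rank of the $\nu$-allocated bin is bounded by $\sigma(I^{s}_{\nu})$ or $\sigma(b^{s}_{\nu})$, each of which the inductive hypothesis plus the sample coupling bounds by the corresponding quantity under $\mu$; the cache update is treated analogously.

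The main obstacle I anticipate is verifying the invariant in the subcase where the $\mu$-sample loses to its cache but the $\nu$-sample beats its cache, since there the $\nu$-cache is updated (to a strictly heavier bin than before) while the $\mu$-cache is not; chasing $\sigma$-ranks carefully shows the invariant still holds because $\sigma(I^{s}_{\nu})\leq \sigma(I^{s}_{\mu}) \leq \sigma(b^{s}_{\mu})$. Ties (where $\sigma(I)=\sigma(b)$ sends the ball to $b$ without updating the cache) require a small additional check but do not disturb the invariant. Once it is propagated through all $d$ steps, the indicator that the $\mu$-ball lies in $\{\sigma^{-1}(1),\ldots,\sigma^{-1}(i)\}$ is dominated by the corresponding $\nu$-indicator at every step, and summing over the run yields the pointwise inequality $\sum_{k=1}^{i}\widehat{\eta}_{k}(\mu) \leq \sum_{k=1}^{i}\widehat{\eta}_{k}(\nu)$, proving both statements of the lemma.
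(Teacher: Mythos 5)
Your proposal is correct and follows essentially the same strategy as the paper: an inverse-CDF/Strassen coupling of the two sampling distributions yielding sample-rank dominance $\sigma(I^s_\nu)\le\sigma(I^s_\mu)$ at every substep, propagated through the run to give per-step allocation dominance and hence the sure prefix-sum inequality. One small slip worth fixing: in the subcase you flag as the main obstacle, when the $\nu$-cache updates it moves to a strictly \emph{lighter} bin (higher $\sigma$-rank), not heavier — though the inequality chain $\sigma(I^s_\nu)\le\sigma(I^s_\mu)\le\sigma(b^s_\mu)$ that you give afterward is correct and closes the case.
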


\begin{proof} We couple $\mathcal{N}$, an instance of \DWeakMemory on a $(a,b)$-step distribution $\nu=(\nu_1, \dots, \nu_n)$, to $\mathcal{M}$, an instance of \DWeakMemory on the $(a,b)$-biased distribution $\mu=(\mu_1,\cdots, \mu_n)$. Both processes will start from the same initial load configuration $x^{t_0}$ and ordering $\sigma$ at time $t_0$. 
	
We give a coupling $(s_\nu,s_\mu)$ of the sampling distributions $s_\nu\sim \nu $ and $s_\mu\sim \mu $ as follows:
	
	\medskip 
	
	For each $k\in [n]$ set $q_k=\min\{\nu_k,\mu_k\}$ and  $q= \sum_{k=1}^nq_k$. Let $X\sim \operatorname{Ber}(q)$. 
	\begin{itemize}
		\item If $X=\mathsf{heads}$  sample a bin $k\in [n]$ with probability $q_k/q$ and set $(s_\nu, s_\mu)=(k,k)$. \item If $X=\mathsf{tails}$ sample $U$ uniformly from $[0,1]$. Set $s_\nu\in [n]$ and $s_\mu \in [n]$ to be the smallest integers satisfying $\sum_{k=1}^{s_\nu}(\nu_k-q_k)/(1-q)\geq U$ and $\sum_{k=1}^{s_\mu}(\mu_k-q_k)/(1-q)\geq U$ respectively.   
	\end{itemize}
	To see that this is a valid coupling of $\nu$ and $\mu$ observe that for any $j\in [n]$,
	\[\Pro{s_\nu = j} = \Pro{X=\mathsf{heads} }\cdot \frac{q_j}{q} + \Pro{X=\mathsf{tails}}\cdot  \frac{\nu_j-q_j}{1-q} = \nu_j,  \] and similarly $\Pro{s_\mu = j} = \mu_j $. We then couple $\mathcal{N}$ and $\mathcal{M}$ ball by ball by, in each step, sampling $(s_\nu,s_\mu)$ according to the coupling and giving bin sample $s_\nu$ to $\mathcal{N}$ and $s_\mu$ to $\mathcal{M}$. We claim that, under the coupling above, at each step $t\leq d$ of the run the following invariant holds: 
\[(\mathcal{J}):\text{ $\mathcal{N}$ samples a bin that is higher in the ordering than the bin sampled by $\mathcal{M}$, i.e., $s_\nu \leq s_\mu$.}\]  
It is clear that invariant $\mathcal{J}$ holds at every step where $X=\mathsf{heads}$ since then $s_\nu = s_\mu$, i.e., the same bin sample is given to both processes under the coupling. 

Otherwise, if $X=\mathsf{tails}$, then for any $j\leq M$, \begin{equation}\label{eq:domsmallj} \sum_{k=1}^{j}\frac{\nu_k-q_k}{1-q} = \sum_{k=1}^{j}\frac{b/n-q_k}{1-q} \geq \sum_{k=1}^{j}\frac{\mu_k-q_k}{1-q}.\end{equation} Likewise for any $j> M$, as in this case $q_j=\min\{\nu_j,\mu_j\} =1/(an)$, we have \[ \sum_{k=j}^{n}\frac{\nu_k-q_k}{1-q} = \sum_{k=j}^{n}\frac{1/(an)-q_k}{1-q}=0 \leq \sum_{k=j}^{n}\frac{\mu_k-q_k}{1-q},  \] and so since $\sum_{k=j}^{n}\frac{\mu_k-q_k}{1-q}=1$ and $\sum_{k=j}^{n}\frac{\mu_k-q_k}{1-q}=1$, for any $j>M$ we obtain \begin{equation}\label{eq:dombigj} \sum_{k=1}^{j}\frac{\nu_k-q_k}{1-q} = 1-\sum_{k=j+1}^{n}\frac{\nu_k-q_k}{1-q} \geq  1-  \sum_{k=j+1}^{n}\frac{\mu_k-q_k}{1-q} = \sum_{k=1}^{j}\frac{\mu_k-q_k}{1-q}, \end{equation}where we take empty sums to be zero. Thus, by \eqref{eq:domsmallj} and \eqref{eq:dombigj}, if $X=\mathsf{tails}$ then $s_\nu \leq s_\mu$.

Observe that, for any $i\in [n]$, if either process is given a bin sample $j\geq i$ at some point then no further balls are allocated to bins $k<i$ as there will be a bin in the cache that was lighter in the initial ordering. Thus  $\sum_{k=1}^i\widehat{\eta}_{k}(\mu)\preceq \sum_{k=1}^i \widehat{\eta}_{k}(\nu) $ holds for any $i\in [n]$, as $\mathcal{J}$ holds at every step of the run. The second claim in the statement then follows from linearity of expectation and the definition of majorization.
\end{proof}

We now show that the proxy-allocation vector for \DWeakMemory on a step distribution is piece-wise non-decreasing. 

 \begin{lem}\label{lem:nondec}Let $d\geq 1$, $t_0\geq 0$, and $a,b> 1$ be such that $M= n\cdot \frac{a-1}{ab-1}$ is an integer. Let $\nu$ be the $(a,b)$-step distribution and $\widehat{\eta}_{i}^{\,t}:=\widehat{\eta}_{i}^{\,t}(\nu)$, where $t\in [d]$. Then, for any $i,i'\in [n]$ satisfying either $i<i' \leq M$ or $M<i<i'$ we have $\widehat{\eta}_{i}^{\,t} \preceq \widehat{\eta}_{i'}^{\,t} $ for any $t\in[d]$.
\end{lem}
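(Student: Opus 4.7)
Plan: The strategy I would adopt is to recast the single-run dynamics of $d$-WeakMemory in a transparent closed form and then apply a swap coupling. After relabelling bins so that the $\sigma$-ordering at time $t_0$ coincides with $[n]$ (bin $j$ has rank $j$), a straightforward induction on the substep index shows that the cache $B_\tau$ at the end of substep $\tau$ of the run is the bin of highest $\sigma$-rank among the samples $\{s_1,\dots,s_\tau\}$, and the ball allocated at substep $\tau$ always ends up in $B_\tau$. Consequently, bin $j$ receives a ball at substep $\tau$ iff $j\in\{s_1,\dots,s_\tau\}$ and no sample so far has rank exceeding $j$. Introducing
\[
F_j(s):=\min\{\tau\in[d]:s_\tau=j\},\qquad R_j(s):=\min\{\tau\in[d]:\sigma(s_\tau)>j\}
\]
(with the convention $+\infty$ when no such substep exists), this yields the closed form
\[
\widehat{\eta}_j^{\,t}(s)=\bigl(\min\{R_j(s),\,t+1\}-F_j(s)\bigr)^+\qquad\text{for every }t\in[d].
\]

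With this formula in hand I would fix $i<i'$ in the same block (so that $\nu_i=\nu_{i'}$) and let $\pi$ be the transposition swapping $i$ and $i'$; then $\nu$ is $\pi$-invariant. Hence if $S\sim\nu^{\otimes d}$, the sequence $\pi(S)=(\pi(s_1),\dots,\pi(s_d))$ has the same distribution as $S$. The map $s\mapsto\pi(s)$ satisfies $F_{i'}(\pi(s))=F_i(s)$, since the first occurrence of $i'$ in $\pi(s)$ is exactly the first occurrence of $i$ in $s$, and $R_{i'}(\pi(s))=R_{i'}(s)$, because the event ``a sample has rank strictly greater than $i'$'' concerns only bins in $\{i'+1,\dots,n\}$, which is disjoint from $\{i,i'\}$ and thus unaffected by $\pi$. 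Combined with the obvious monotonicity $R_i(s)\leq R_{i'}(s)$, the closed form delivers the pointwise inequality $\widehat{\eta}_{i'}^{\,t}(\pi(s))\geq\widehat{\eta}_i^{\,t}(s)$.

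Stochastic dominance then follows by averaging with a change of variable: for every integer $k\geq 0$,
\[
\Pro{\widehat{\eta}_i^{\,t}\geq k}=\sum_{s}\nu^{\otimes d}(s)\,\mathbf{1}\{\widehat{\eta}_i^{\,t}(s)\geq k\}\leq\sum_{s}\nu^{\otimes d}(s)\,\mathbf{1}\{\widehat{\eta}_{i'}^{\,t}(\pi(s))\geq k\}=\Pro{\widehat{\eta}_{i'}^{\,t}\geq k},
\]
where the final equality substitutes $s'=\pi(s)$ and uses $\nu^{\otimes d}(\pi(s))=\nu^{\otimes d}(s)$. The only mildly delicate point is verifying the cache-as-running-maximum description underlying the closed form; the swap coupling itself is entirely mechanical, and the argument runs identically in the regimes $i<i'\leq M$ and $M<i<i'$, since $\pi$-invariance of $\nu$ is the only property of the block used.
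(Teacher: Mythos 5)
Your proof is correct and uses the same swap coupling as the paper: pair the sample sequence $s$ with $\pi(s)$ obtained by transposing $i$ and $i'$, using $\nu_i=\nu_{i'}$ (same block of the step distribution) to guarantee that the transposition is measure-preserving for $\nu^{\otimes d}$. The paper establishes the pointwise inequality $\widehat{\eta}_i^{\,t}(s)\leq\widehat{\eta}_{i'}^{\,t}(\pi(s))$ by a case analysis on the first substep at which $i$ or $i'$ is sampled, whereas you first observe that within a run the cache is the running $\sigma$-maximum of the samples and the ball always lands in the cache, which yields the closed form $\widehat{\eta}_j^{\,t}(s)=\bigl(\min\{R_j(s),t+1\}-F_j(s)\bigr)^+$ and reduces the pointwise inequality to the one-line monotonicity $R_i\leq R_{i'}$, $R_{i'}\circ\pi=R_{i'}$, $F_{i'}\circ\pi=F_i$ --- a cleaner presentation of the same underlying argument.
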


\begin{proof}We shall couple two instances $\mathcal{P}$ and $\mathcal{P}'$ of \DWeakMemory with the same $(a,b)$-step sampling distribution, initial load configuration, and ordering $\sigma$, for one run as follows: 
	\begin{quote}
		Given a sequence $(b_j)_{j\in[d]}$ of bins, let $(b_j')_{j\in[d]}$ be such that if $b_j=i$ then $b_j'=i'$, if $b_j=i'$ then $b_j'=i$, and otherwise $b_j=b_j'$.  Then if $\mathcal{P}$ is given $(b_j)_{j\in[d]}$, we couple by giving the sequence $(b_j')_{j\in[d]}$ as input to $\mathcal{P}'$. 
	\end{quote} 
Observe that since either $i<i' \leq M$ or $M<i<i'$, both bin $i$ and $i'$ are sampled with the same probability. Thus the function taking $(b_j)_{j\in[d]}$ to $(b_j')_{j\in[d]}$ is a measure preserving bijection from the set of inputs to a single run in \DWeakMemory to itself, giving a coupling of $\mathcal{P}$ to $\mathcal{P}'$.

 Recall that \DWeakMemory does not update the loads within one run, so the ordering $\sigma$ is fixed. Observe that if the cache contains the $\ell$-th bin, then no ball can be allocated to any bin $b<\ell$ in any subsequent step in the same run. Consider the first time $\tau$ at which $b_\tau\in \{i,i'\}$. If $b_t>i'$ for any $t<\tau$ then no balls are allocated to $i$ or $i'$ during the whole run so the claim in the statement holds. If $i<\min_{t<\tau} b_t<i'$ then a ball is added to bin $i'$ in $\mathcal{P}'$ at time $\tau$ but not to bin $i$ in $\mathcal{P}$ (as it has a lower bin in the cache) so $\widehat{\eta}_{i}^{\,\tau} < \widehat{\eta}_{i'}^{\,\tau} $ and the claim in the statement holds as no further balls can be added to $i$. Otherwise, a single ball is added to bin $i$ in $\mathcal{P}$ and $i'$ in $\mathcal{P}'$ at time $\tau$, thus $\widehat{\eta}_{i}^{\,\tau} = \widehat{\eta}_{i'}^{\,\tau} $, and these bins occupy their respective caches. Going forward these bins are displaced from the cache if and only if a bin $b\notin \{i,i'\}$ strictly higher in the ordering is selected. Since $i'>i$ it follows that any time a ball is allocated to $i$ it is also allocated to $i'$, proving the result.   	\end{proof} 
 	
The final result in this section determines the allocation probabilities of \DWeakMemory on a step distribution exactly. 

 \begin{lem}\label{lem:pijbdds}Let $t_0\geq 0$, $d\geq 1$ and $a,b> 1$ be such that $M= n\cdot \frac{a-1}{ab-1}$ is an integer. Let $\nu$ be the $(a,b)$-step distribution,  $\sigma$ be any ordering of the bins by load at time $t_0$, and $\widehat{p}_{i, j}:=\widehat{p}_{i, j}(\nu,t_0,d,\sigma)$. Then, for any $1\leq j\leq d$, we have
\[\widehat{p}_{i,j} =    \begin{cases}  \frac{b}{n}\cdot \left(\frac{ib}{n}\right)^{j-1}\cdot\left[1- \frac{b }{n-b(i-1)  }\cdot  \left(1 -\left(\frac{b(i-1)}{n}\right)^{d-j}\right) \right] &\text{ if } i\leq M,\\
 \frac{a}{n} \cdot \left(1-\frac{a(n-i)}{n} \right)^{j-1}\cdot\left[1- \frac{a}{a(n-i+1)}\cdot  \left(1 -\left(1- \frac{a(n-i+1)}{n}\right)^{d-j}\right) \right]&\text{ if } M<i\leq n. \end{cases}\]  
\end{lem}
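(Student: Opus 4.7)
The plan is to exploit that, within a single run, all comparisons in $d$-\WeakMemory use only the fixed ordering $\sigma$ chosen at the start of the run. Writing $X_1,\dots,X_d$ for the (independent) samples drawn during the run and setting $M_t := \max_{s \le t} \sigma(X_s)$, a short induction on $t$ shows that after substep $t$ the cache contains $\sigma^{-1}(M_t)$ and the ball at substep $t$ is placed in $\sigma^{-1}(M_t)$. Since $(M_t)_{t=1}^d$ is non-decreasing, the set of substeps on which bin $\sigma^{-1}(i)$ receives a ball is a contiguous (possibly empty) block of length $\widehat{\eta}_i$. In particular, $\{\widehat{\eta}_i = j\}$ decomposes as a disjoint union over the first substep $t_1 \in \{1,\dots,d-j+1\}$ at which $M_{t_1} = i$.

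For each such $t_1$ the defining event is a conjunction of four independent constraints on disjoint blocks of samples: $\sigma(X_s) < i$ for $s < t_1$; $X_{t_1} = \sigma^{-1}(i)$; $\sigma(X_s) \le i$ for $t_1 < s < t_1 + j$; and, only when $t_1 + j \le d$, $\sigma(X_{t_1+j}) > i$ (which is what forces the block to end at substep $t_1+j-1$). Writing $A_i = \Pro{\sigma(X_s) < i}$, $B_i = \Pro{\sigma(X_s) = i}$, and $C_i = \Pro{\sigma(X_s) > i}$, independence of the samples across substeps yields
\[
\widehat{p}_{i,j} \;=\; B_i\,(A_i + B_i)^{\,j-1}\left[\,C_i \sum_{t_1=1}^{d-j} A_i^{\,t_1-1} \;+\; A_i^{\,d-j}\right],
\]
where the isolated $A_i^{d-j}$ term is the contribution of the boundary case $t_1 = d-j+1$. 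Summing the geometric series and using $1 - A_i = B_i + C_i$, the bracket collapses to $1 - \tfrac{B_i}{1-A_i}\bigl(1 - A_i^{d-j}\bigr)$.

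To recover the two cases in the statement, I will substitute the sampling probabilities of the $(a,b)$-step distribution $\nu$ under the alignment, implicit from the application via \cref{lem:stepmaj}, that the $M$ heaviest bins under $\sigma$ are precisely the bins of sampling probability $b/n$. For $i \le M$ this gives $A_i = (i-1)b/n$, $A_i + B_i = ib/n$, and $B_i/(1-A_i) = b/(n - b(i-1))$, after which the first branch is immediate. For $i > M$, the relation $M b/n + (n-M)/(an) = 1$ (which encodes $M = n(a-1)/(ab-1)$) gives $A_i + B_i = 1 - (n-i)/(an)$, so that $1 - A_i = (n-i+1)/(an)$ and $B_i/(1-A_i) = 1/(n-i+1)$, and the second branch then follows by elementary rearrangement. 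The only non-routine step is the bookkeeping around $t_1 = d-j+1$: treating this boundary together with the interior cases would drop the trailing $A_i^{d-j}$ term, so it must be extracted separately before the geometric-series collapse. Every other step is a direct substitution.
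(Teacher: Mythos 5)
Your approach is the same as the paper's: you decompose $\{\widehat{\eta}_i = j\}$ over the first substep $t_1$ at which the running maximum $M_t$ reaches position $i$ (the paper indexes by $\ell = t_1-1$), treat the boundary case $t_1 = d-j+1$ as a separate term, and collapse the geometric series via $1 - A_i = B_i + C_i$ (the paper's $h_i = 1-f_i-e_i$). Your closed form $\widehat{p}_{i,j} = B_i (A_i+B_i)^{j-1}\bigl[1 - \tfrac{B_i}{1-A_i}(1 - A_i^{d-j})\bigr]$ is exactly \eqref{eq:pijbddsimp}, and the $i \le M$ substitution reproduces the first branch verbatim. One thing you glossed over: for $i > M$ your (correct) values $B_i = \tfrac{1}{an}$, $A_i+B_i = 1-\tfrac{n-i}{an}$, $1-A_i = \tfrac{n-i+1}{an}$ do \emph{not} rearrange to the printed second branch, which has $\tfrac{a}{n}$, $1-\tfrac{a(n-i)}{n}$ and $1-\tfrac{a(n-i+1)}{n}$ in those positions --- off by a factor $a^{2}$ in the leading coefficient and in the bases of both powers (only the ratio $\tfrac{a}{a(n-i+1)} = \tfrac{1}{n-i+1}$ coincidentally survives). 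The statement as printed, and the corresponding row of the table in \eqref{eq:stepbiastable}, carry a typo: for constant $a>1$ and $i$ slightly above $M$ the printed $g_i = 1-\tfrac{a(n-i)}{n}$ is negative, which a probability cannot be. Your derived expression is the correct one. The slip is harmless downstream, since only the $i \le M$ branch is used quantitatively in \cref{clm:c1satisfied} while $i>M$ is controlled via the monotonicity of \cref{lem:nondec}, but it does mean that the ``elementary rearrangement'' you invoke at the end would not close if attempted literally; that substitution deserved to be carried out rather than asserted.
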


	\begin{proof}Recall that under $\sigma$ the bins are ordered by load so that a bin with heaviest load is at position $1$ and lightest load is at position $n$. The \DWeakMemory process can only compare the loads at time $t$, in particular it makes all comparisons based on an ordering $\sigma$ that is fixed throughout the run. Thus, all references to `most loaded' etc in what follows are w.r.t.\ $\sigma$. 
	
	Observe that if the $i'$-th most loaded bin, where $i'>i$, is sampled then we can never place another ball in the $i$-th most loaded bin within the current run. Also note that if the $i$-th most loaded bin is in the cache then it will remain in the cache until a bin which is higher in the ordering is sampled. Let $e_i$, $f_i$, $g_i$ and $h_i=1-g_i$ be the probabilities that an element equal to, strictly lower than, lower than or equal to, or strictly greater than $i$, respectively, in the ordering is chosen. Then, we claim that for $1\leq j\leq d$ we have
\begin{equation}\label{eq:pijbddraw}\widehat{p}_{i,j} = \sum_{\ell =0 }^{d-j-1} f_i^{\ell}\cdot e_i\cdot g_i^{j-1}\cdot h_i + f_i^{d-j}\cdot e_i\cdot g_i^{j-1}.\end{equation}
		 To see this holds we first consider the second term; this is the case where only the last $j$ balls are allocated to the $i$-th most loaded bin and none before this. Thus, the first $d-j$ balls must be allocated to the $i-1$ most loaded bins, which happens with probability $f_i^{d-j}$. Then we allocate to $i$ followed by $j-1$ samples of bins that are at least as high as $i$ in the ordering, this occurs with probability $e_i\cdot g_i^{j-1}$.   Turning to the sum; each summand corresponds the case where the block of $j$ balls allocated to the $i$-th bin starts $\ell$ places into the run. The first three terms in these  probabilities are similar to before and then the $ h_i$ term is the probability we sample a bin higher than $i$ in the ordering in the $(\ell+j+1)$-th step. 
		 
		  We can simplify \eqref{eq:pijbddraw} using a geometric series, 
		 \begin{equation*}\widehat{p}_{i,j} = e_i\cdot g_i^{j-1} \cdot \left[h_i\cdot  \sum_{\ell =0 }^{d-j-1} f_i^{\ell}  + f_i^{d-j}\right] = e_i\cdot g_i^{j-1}\cdot\left[h_i\cdot   \frac{1 -f_i^{d-j}}{1-f_i }  + f_i^{d-j}\right] . \end{equation*} 
		 Now, observing that $h_i = 1-g_i =1-f_i - e_i  $, we have
		   \begin{equation}\label{eq:pijbddsimp} \widehat{p}_{i,j} 
		 = e_i\cdot g_i^{j-1}\cdot\left[(1-f_i -e_i)\cdot   \frac{1 -f_i^{d-j}}{1-f_i }  + f_i^{d-j}\right] = e_i\cdot g_i^{j-1}\cdot\left[1- \frac{e_i}{1-f_i }\cdot  (1 -f_i^{d-j})\right]. \end{equation}

		 Recall $M:=\frac{n(a-1)}{ab-1} $ from the   definition of the $(a,b)$-step vector and observe that
		  \begin{equation}\label{eq:stepbiastable}    
		 \begin{array}{|l||c|c|c|} 
		 \hline  
		 i & e_i & f_i & g_i\\	\hline  \hline 
		 i\leq M & b/n  &b(i-1)/n  & bi/n\\ 
		 \hline 
		 M<i<n& a/n	& 1- a(n-i+1)/n& 1- a(n-i)/n\\\hline 
		\end{array} 
		 \end{equation}

		 For $i\leq M$, by \eqref{eq:stepbiastable} and \eqref{eq:pijbddsimp}, we have 
		   \begin{equation*}  \widehat{p}_{i,j}  
		 = \frac{b}{n} \cdot \left(\frac{bi}{n} \right)^{j-1}\cdot\left[1- \frac{b/n}{1-b(i-1)/n }\cdot  \left(1 -\left(\frac{b(i-1)}{n}\right)^{d-j}\right) \right]. \end{equation*}Thus simplifying gives the claimed result. Similarly for the case $M<i\leq n$,  
	    \begin{equation*}  \widehat{p}_{i,j}  
	 	= \frac{a}{n} \cdot \left(1-\frac{a(n-i)}{n} \right)^{j-1}\cdot\left[1- \frac{a}{a(n-i+1)}\cdot  \left(1 -\left(1- \frac{a(n-i+1)}{n}\right)^{d-j}\right) \right], \end{equation*} as claimed. 
	\end{proof}
 
\subsubsection{Proof of the Expected Potential Drop Inequality}

We are now ready to prove the first theorem of this section. 

{\renewcommand{\thethm}{\ref{lem:expectation_bound}}
	\begin{thm}[Restated]
\ExpectationBoundStatement
	\end{thm} }
	\addtocounter{thm}{-1}

\begin{proof}[Proof of \cref{lem:expectation_bound}]Let $\mu$ be our given $(a,b)$-biased distribution and observe that we can assume $a,b>1$ since the only $(1,1)$-biased distribution is the uniform distribution, which is also an $(a,b)$-biased distribution for any $a,b>1$. Our aim is to bound the expected change of the $\Gamma$ potential for the \Memory\!\!($\mu$) process over $d$ steps,  where $\Gamma^t=\Phi^t+\Psi^t$. We will begin by bounding the expected change of the $\widehat{\Gamma}$ potential for the $\DWeakMemory(\mu)$ process over $d$ steps (or one run). We then relate the drops in potential for $\DWeakMemory(\mu)$ process to that of $\Memory(\mu)$. For the coupling we start both processes from the same configuration at time $t$, thus, we have $\Phi_i^t= \widehat{\Phi}_i^t$ and $\Psi_i^t= \widehat{\Psi}_i^t$ for all $i\in [n]$.

To begin, let $\widehat{p}_{i,j}:=\widehat{p}_{i,j}(\mu)$. Then, the expected change of the overload potential $\widehat{\Phi}$ for the $i$-th most loaded bin over one run of \DWeakMemory\!\!($\mu$), is given by
\begin{align}\label{eq:basedrop-1}
	\Ex{\left. \widehat{\Phi}_i^{t+d} \,\right|\, \mathfrak{F}^{t}}  & =  \widehat{\Phi}_i^{t} \cdot e^{-d\alpha/n} \cdot  \sum_{j=0}^d\widehat{p}_{i,j}\cdot e^{j\cdot \alpha }    =     \Phi_i^{t} \cdot e^{-d\alpha/n} \Bigg(  1 +  \sum_{j=1}^d\widehat{p}_{i,j}\cdot (e^{j\cdot \alpha }-1)\Bigg) \notag.
	\intertext{Using the inequality $e^{z} \leq 1 + z + z^2$ for $|z| < 1$, since $\alpha < 1/2$, we have}	\Ex{\left. \widehat{\Phi}_i^{t+d} \,\right|\, \mathfrak{F}^{t}}  &\leq  \Phi_i^{t} \cdot \Big( 1 - \frac{d\alpha}{n} + \frac{(d\alpha)^2}{n^2}\Big) \cdot\Bigg(  1 +  \sum_{j=1}^d\widehat{p}_{i,j}\cdot \Big(   j\alpha + (j\alpha)^2\Big) \Bigg)\end{align}
Now, observe that for any $i\in [n]$ bin $i$ must be chosen for it to receive a ball, thus $\widehat{p}_{i,j}\leq 1-(1-b/n)^d \leq bd/n  $ for any $1\leq j\leq d$ by Bernoulli's inequality. Thus, for any $i\in [n]$,  
\begin{equation}\label{eq:error}\sum_{j=1}^d\widehat{p}_{i,j}\cdot j^2 \leq \frac{bd}{n}\cdot \frac{d(d+1)(2d+1)}{6} \leq \frac{bd^4}{n} .\end{equation}  Thus, as $\sum_{j=1}^d\widehat{p}_{i,j}\cdot j \leq \sum_{j=1}^d\widehat{p}_{i,j}\cdot j^2 \leq \frac{bd^4}{n}$, by \eqref{eq:basedrop-1} and \eqref{eq:error} we have 
 \begin{align}\label{eq:basedrop1}\Ex{\left. \widehat{\Phi}_i^{t+d} \,\right|\, \mathfrak{F}^{t}}	 &\leq  \Phi_i^{t} \cdot \Bigg[ 1 + \alpha d\Bigg(\frac{1}{d} \sum_{j=1}^d \widehat{p}_{i,j}\cdot j - \frac{1}{n} \Bigg)+ \alpha^2\Bigg( \sum_{j=1}^d \widehat{p}_{i,j}\cdot  j^2 + \mathcal{O}\Big(\frac{bd^5}{n^2} \Big)  \Bigg)   \Bigg].
	\end{align}
Similarly, for the underload potential $\Ex{\left. \widehat{\Psi}_i^{t+d} \,\right|\, \mathfrak{F}^{t}}  =  \Psi_i^{t} \cdot e^{d\alpha/n} \cdot  \sum_{j=0}^d\widehat{p}_{i,j}\cdot e^{-j\cdot \alpha }$, thus 
\begin{align}\label{eq:basedrop2}
	\Ex{\left. \widehat{\Psi}_i^{t+d} \,\right|\, \mathfrak{F}^{t}}  & \leq \Psi_i^{t} \cdot \Big( 1 +\frac{d\alpha}{n} + \frac{(d\alpha)^2}{n^2}\Big) \cdot\Bigg(  1 +  \sum_{j=1}^d\widehat{p}_{i,j}\cdot \Big(   - j\alpha + (j\alpha)^2\Big) \Bigg)\notag \notag\\
	&=\Psi_i^{t} \cdot \Bigg[ 1 + \alpha d\Bigg(\frac{1}{n}-\frac{1}{d} \sum_{j=1}^d \widehat{p}_{i,j}\cdot j  \Bigg)+ \alpha^2\Bigg( \sum_{j=1}^d \widehat{p}_{i,j}\cdot  j^2 + \mathcal{O}\Big(\frac{bd^5}{n^2} \Big)  \Bigg)  \Bigg].
	\end{align}
We aim to apply \cref{thm:ls22_thm_3_1} to bound $\mathbf{E}[\widehat{\Gamma}^{t+d} \mid \mathfrak{F}^{t}]$. It follows from \eqref{eq:error} that the bounds \eqref{eq:basedrop1} and \eqref{eq:basedrop2} on the overload and underload potentials are in the correct form for comparison with the terms $\Delta\overline{\Phi} $ and $\Delta\overline{\Psi} $ from \cref{thm:ls22_thm_3_1}. However, we must check that the probability allocation vector $(\widehat{p}_i)_{i\in[n]}$, given by $\widehat{p}_i = \frac{1}{d} \sum_{j=1}^d \widehat{p}_{i,j}\cdot j$ for $i\in [n]$, satisfies Condition \COne. We first bound $\widehat{p}_i$ over the heaviest bins. 

We can assume $a,b> 1 $ are such that $M=n\cdot \frac{a-1}{ab-1}$ is an integer, as otherwise we could prove the theorem for any pair $a',b'> 1$ of constants such that $ M=n\cdot \frac{a'-1}{a'b'-1}$ is an integer and $a'\leq a$ and $b'\geq b$. The result holds since the class of $(a',b')$-biased distributions contains the $(a,b)$-biased distributions.   
Secondly, let $\nu$ be the $(a,b)$-step distribution. Then, by \cref{lem:stepmaj}, the vector $\widehat{p}_{i}(\nu)$ induced by $\DWeakMemory(\nu)$ majorizes the corresponding vector $\widehat{p}_{i}(\mu)$ induced by any other $\DWeakMemory(\mu)$ process, where $\mu$ is any $(a,b)$-biased distribution. Thus, by applying \cref{lem:sum_of_convex_is_schur_convex},  $\Phi$ and $\Psi$ are Schur-convex functions, so \[\mathbf{E}\big[\widehat{\Phi}^{t+d}(\mu) \,\big|\, \mathfrak{F}^{t}\big]\leq \mathbf{E}\big[\widehat{\Phi}^{t+d}(\nu) \,\big|\, \mathfrak{F}^{t}\big] \quad \text{and}\quad  \mathbf{E}\big[\widehat{\Psi}^{t+d}(\mu) \,\big|\, \mathfrak{F}^{t}\big]\leq \mathbf{E}\big[\widehat{\Psi}^{t+d}(\nu) \,\big|\, \mathfrak{F}^{t}\big].\] Hence, it suffices to bound $\widehat{p}_{i}(\nu)$ for the $\DWeakMemory(\nu)$ process. Thus let $\widehat{p}_{i,j}:= \widehat{p}_{i,j}(\nu)$ and suppose that $\widehat{p}_{i,j} \leq  \kappa\cdot \gamma^{j-1}$, for some $\kappa:=\kappa(i)$ and $\gamma:=\gamma(i)<1$. Then, by \cref{lem:arthgeo} 
\begin{equation}\label{eq:nonexplicit}
	 \sum_{j=1}^d\widehat{p}_{i,j}\cdot   j \leq  \kappa\cdot  \sum_{j=0}^{d-1} \gamma^{j} \cdot (j+1) \leq  \kappa \left(\frac{1}{1-\gamma}+ \frac{\gamma}{(1-\gamma)^2}\right)  .
\end{equation} 
By \cref{lem:pijbdds}, for any $i\leq M=\frac{n(a-1)}{ab-1} $ and $0\leq j \leq d$,  we have 
\begin{equation}\label{eq:hatpij}\widehat{p}_{i,j} =  \left(\frac{ib}{n}\right)^{j-1}\cdot\frac{b}{n} \left[1- \frac{b }{n-b(i-1)  }\cdot  \left(1 -\left(\frac{b(i-1)}{n}\right)^{d-j}\right)\right].  \end{equation} Observe that $b\cdot M  = n \cdot \frac{ab-b}{ab-1} = n - n\cdot \frac{b-1}{ab-1} <n$. It follows that, for all $0\leq i\leq M$, we have $\frac{b }{n-b(i-1)  }>0$ and $\frac{b(i-1)}{n}<1$. Consequently, for any $i\leq M$ and $0\leq j\leq d$,    \[  \frac{b }{n-b(i-1)  }\cdot  \left(1 -\left(\frac{b(i-1)}{n}\right)^{d-j}\right) \geq 0. \] Thus $\widehat{p}_{i,j} \leq \left(\frac{ib}{n}\right)^{j-1}\cdot\frac{b}{n} $ by \eqref{eq:hatpij} and we can take $ \kappa(i) = \frac{b}{n}  $ and  $\gamma(i) = \frac{bi}{n}$. Hence, by \eqref{eq:nonexplicit}, for $i\leq M$ 
\begin{equation}\label{eq:lowvalp_i}
 \widehat{p}_i = \frac{1}{d}\cdot \sum_{j=1}^d\widehat{p}_{i,j}\cdot   j  \leq \frac{1}{d}\cdot  \frac{b}{n} \left(\frac{n}{n-bi}+ \frac{n\cdot bi}{(n-bi)^2}\right)  =    \frac{bn}{d(n-bi)^2} .
	\end{equation}

	We now prove that the conditions of \cref{thm:ls22_thm_3_1} are met by $(\widehat{p}_i)_{i\in[n]}$.   \begin{clm}\label{clm:c1satisfied} For any $\eps<1$ there exists some $d:=d(a,b)\geq 2$ such that the probability allocation vector $(\widehat{p}_i)_{i\in[n]}$ of the $\DWeakMemory(\nu)$ process satisfies Condition $\mathcal{C}_1$ for $\delta=\frac{a-1}{ab-1}$ and  $\eps<1$. 
\end{clm}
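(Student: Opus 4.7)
The plan is to verify the two halves of Condition $\mathcal{C}_1$ at the split point $\delta n = M$ by combining the heavy-side pointwise estimate \eqref{eq:lowvalp_i}, the partial monotonicity \cref{lem:nondec}, and the normalization $\sum_i \widehat{p}_i = 1$.

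I first use \eqref{eq:lowvalp_i} together with the elementary bound $n - bi \geq n - bM = n(b-1)/(ab-1)$ (valid for $i \leq M$, where $M = \delta n$) to obtain the uniform heavy-side estimate
\[
\widehat{p}_i \;\leq\; \frac{bn}{d(n-bM)^2} \;=\; \frac{b(ab-1)^2}{dn(b-1)^2} \;=:\; \frac{C(a,b)}{dn}\qquad\text{for every }i \leq M.
\]
Choosing $d = d(a,b) \geq C(a,b)/(1-\eps)$ gives $\widehat{p}_i \leq (1-\eps)/n$ for every $i \leq M$; summing over $i \leq k$ immediately produces the first inequality of $\mathcal{C}_1$:
\[
\sum_{i=1}^{k}\widehat{p}_i \;\leq\; (1-\eps)\cdot\frac{k}{n}\qquad\text{for all } 1 \leq k \leq \delta n.
\]

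For the second inequality of $\mathcal{C}_1$, I specialize the above to $k = M$ to get $\sum_{i=1}^{M}\widehat{p}_i \leq \delta(1-\eps)$, and invoke conservation:
\[
\sum_{i=M+1}^{n}\widehat{p}_i \;\geq\; 1 - \delta(1-\eps) \;=\; (1-\delta)\Big(1 + \tfrac{\eps\delta}{1-\delta}\Big).
\]
Then \cref{lem:nondec}, applied on the light side, says $\widehat{\eta}_i \preceq \widehat{\eta}_{i'}$ whenever $M < i < i'$; taking expectations gives $\widehat{p}_{M+1} \leq \widehat{p}_{M+2} \leq \cdots \leq \widehat{p}_n$. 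For any $M+1 \leq k \leq n$, the average of a suffix of a non-decreasing sequence is at least the average of the full sequence, so
\[
\sum_{i=k}^{n}\widehat{p}_i \;\geq\; \frac{n-k+1}{n-M}\sum_{i=M+1}^{n}\widehat{p}_i \;\geq\; \frac{n-k+1}{n}\Big(1+\tfrac{\eps\delta}{1-\delta}\Big),
\]
which is exactly the second inequality of $\mathcal{C}_1$.

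The argument is essentially bookkeeping once \eqref{eq:lowvalp_i} and \cref{lem:nondec} are in hand, so there is no serious obstacle. The only point requiring any care is recognizing that the choice $\delta = (a-1)/(ab-1)$ is forced: the $(a,b)$-step distribution has its discontinuity at position $M = \delta n$, and it is precisely the light-side monotonicity of $\widehat{p}$ beyond this cutoff that lets the global lower bound on $\sum_{i=M+1}^{n}\widehat{p}_i$ (obtained via normalization) be transported to every truncated tail sum with $k > M$.
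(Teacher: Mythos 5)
Your proof is correct and follows the same overall strategy as the paper: bound the heavy-side prefix sums using \eqref{eq:lowvalp_i}, then get the light-side tail bound by normalization combined with the monotonicity from \cref{lem:nondec}. The one place you deviate is in how you control $\sum_{i=1}^{k}\widehat{p}_i$ for $k \leq M$: you derive the uniform pointwise bound $\widehat{p}_i \leq b(ab-1)^2/(dn(b-1)^2)$ for every $i \leq M$ by substituting $n-bi \geq n-bM = n(b-1)/(ab-1)$ into \eqref{eq:lowvalp_i} and then sum it trivially, whereas the paper bounds the partial sum via an integral estimate $\sum_{i=1}^{k}\widehat{p}_i \leq \tfrac{bn}{d}\int_1^{k+1}(n-bx)^{-2}\,dx$ and handles the $i=M$ term separately. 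Your route is cleaner bookkeeping: it avoids the integral and the side condition $n \geq 2b$, gets a slightly better constant in $d$ (no factor of $2$), and makes it transparent that the only input from \eqref{eq:lowvalp_i} you need is the value at $i=M$. The paper's integral approach gives a tighter bound on each truncated sum $\sum_{i=1}^{k}\widehat{p}_i$ for $k<M$ (growing like $k/(n-bk)$ rather than linearly in $k$), but that extra precision is never used in this claim, so nothing is lost by your simplification.
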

\begin{pocd}{clm:c1satisfied} As $\widehat{p}_i = \frac{1}{d}\sum_{j=0}^dj\cdot \widehat{p}_{i,j}$ is proportional to the expected number of balls allocated to $i$ in one run, we see that $\widehat{p}_i$ is non-decreasing in $i$ for any $i\leq M$ by \cref{lem:nondec}. 
Additionally, since $\widehat{p}_i \leq \frac{bn}{d(n-bi)^2}$ for $i\leq M=\frac{n(a-1)}{ab-1}$ by \eqref{eq:lowvalp_i},  we have 
\begin{align*}\sum_{i=1}^k\widehat{p}_i &\leq \frac{bn}{d}\int_1^{k+1}\frac{1}{(n-bx)^2}\,\mathrm{d}x\\ 
&= \frac{bn}{d}\left[ \frac{1}{b(n-bx)} \right]_1^{k+1} \\
&= \frac{n}{d}\left[  \frac{ 1}{n-b(k+1)} -  \frac{1}{n-b}  \right] \\
&=   \frac{ bnk }{d(n-b)(n-b(k+1))} .				
\end{align*}

		Thus for any $k\leq M-1=\frac{n(a-1)}{ab-1} -1$, if we ensure that $n \geq  2b$, then  
	\begin{equation}\label{eq:upperbddpartsum}\sum_{i=1}^k\widehat{p}_i\leq  \frac{ nb}{dn} \cdot    \frac{ k }{(n/2)\cdot \left( 1-b\cdot \frac{a-1}{ab-1}\right) } =  \frac{2b(ab-1) }{d(b-1)}\cdot \frac{k}{n} . \end{equation} Observe that \[\widehat{p}_M = \frac{bn}{d(n-bM)^2} = \frac{b}{dn}\cdot\frac{(ab-1)^2}{(b-1)^2 }  \] Hence, adding on the value of $\widehat{p}_M$ to the bound on the sum upto $M-1$ gives
	\begin{equation}\label{eq:upperbddsum}\sum_{i=1}^M\widehat{p}_i \leq \frac{2b(ab-1) }{d(b-1)}\cdot \frac{M-1}{n} +\frac{b}{dn}\cdot\frac{(ab-1)^2}{(b-1)^2 }   \leq \frac{2b(ab-1)^2 }{d(b-1)^2}\cdot \frac{M}{n}.  \end{equation}   Thus if we take $\delta=M/n=\frac{a-1}{ab-1}$ then the first part of condition \COne is satisfied as, for any $\eps<1$, if we choose an integer $d \geq \frac{2b(ab-1)^2 }{(b-1)^2(1-\eps)}\geq \frac{2b(ab-1) }{(b-1)(1-\eps)}$ then by \eqref{eq:upperbddpartsum} and \eqref{eq:upperbddsum} we have     \[\sum_{i=1}^{k} \widehat{p}_{i} \leq (1 - \epsilon) \cdot \frac{k}{n}, \]for any $k\leq M$. Thus, once we have fixed an $\eps<1$, we have $\sum_{i=1}^{M} \widehat{p}_{i} \leq (1 - \epsilon) \cdot \delta$. Since the $\widehat{p}_i$'s form a probability vector we must have 
	\[\sum_{i=M+1}^{n}\widehat{p}_i = 1-\sum_{i=1}^{M}\widehat{p}_i \geq 1-  (1 - \epsilon) \cdot \delta  .  \]  Now, since $\widehat{p}_i$ is non-decreasing in $i>M$ by \cref{lem:nondec}, for any $M+1\leq k\leq n$ we have 
	\[\sum_{i=k}^{n}\widehat{p}_i \geq \frac{n-k+1}{(1-\delta)n}\cdot (1-  (1 - \epsilon) \cdot \delta ) = \frac{n-k+1}{n}\cdot \left(1+  \epsilon\cdot\frac{ \delta }{1-\delta} \right), \]  
	and so the $\widehat{p}_i$'s also satisfy the second part of Condition \COne.	\end{pocd}
Now, by \cref{clm:c1satisfied}, for any $a,b>1$ there exists a constant integer $d:=d(a,b)\geq 2$ such that $(\widehat{p}_i)_{i\in[n]}$ satisfies Condition \COne for $\delta=\frac{a-1}{ab-1}$ and $\eps=1/2$. By  \eqref{eq:basedrop1}, \eqref{eq:basedrop2} and \eqref{eq:error} there exists some constant $K:=K(a,b,d)=bd^4 + \mathcal{O}(bd^5/n) \leq 2bd^4$, since $n$ is large, such that $\mathbf{E}[ \widehat{\Phi}_i^{t+d} \,\big|\, \mathfrak{F}^{t}] $,  $\mathbf{E}[\widehat{\Psi}_i^{t+d} \,\big|\, \mathfrak{F}^{t}] $,  $(\widehat{p}_i)_{i\in[n]}$ and $K$ satisfy Conditions \eqref{thm:ls22_thm_3_1a} and \eqref{thm:ls22_thm_3_1b}. Thus, if we fix \begin{equation}\label{eq:ParametersforExpDrop} \delta=\frac{a-1}{ab-1}, \qquad \eps =\frac{1}{2}, \qquad \kappa =d, \qquad \text{and} \qquad K =2bd^4, 
\end{equation} then by \cref{thm:ls22_thm_3_1}, there are constants $c_1,c_2 \geq 1$, such that for any $0<\alpha \leq  \min\{\frac{1}{d}, \frac{\eps \delta }{d\cdot 8K}\}$,
\begin{equation}\label{eq:hatdrop}
\Ex{\left. \widehat{\Gamma}^{t+d} \,\right|\, \mathfrak{F}^{t}}=\Ex{\left. \widehat{\Phi}^{t+d} \,\right|\, \mathfrak{F}^{t}} +\Ex{\left. \widehat{\Psi}^{t+d} \,\right|\, \mathfrak{F}^{t}}  \leq \Gamma^{t} \cdot \Big(1 - \frac{\alpha}{c_1\cdot n}\Big) + c_2 \cdot \alpha.
\end{equation}

Having established a drop in potential for the $\DWeakMemory(\mu)$ process it remains to relate this to the potential of the $\Memory(\mu)$ process, our original goal. We will compare the change in overload potential  of the \Memory process $\Phi$ with that of the $\DWeakMemory(\mu)$ process $\widehat{\Phi}$. Let $p_{i,j}$ be the probability that under the $\Memory(\mu)$ process, bin $i$ receives $j$ balls. If we condition on the value of the cache (known to $\mathfrak{F}^t$ these probabilities are fixed. Observe that, by the same steps, one can derive analogous bound to \eqref{eq:basedrop1} but for the $\Memory(\mu)$ process. Hence, by applying the bound on parts of the expression from \cref{lem:resettoweak}, we have 
\begin{align*} 
	\Ex{\left. {\Phi}^{t+d} \,\right|\, \mathfrak{F}^{t}}  &\leq \sum_{i=1}^n\Phi_i^{t} \cdot \Bigg[ 1 + \alpha d\Bigg(\frac{1}{d} \sum_{j=1}^d {p}_{i,j}\cdot j - \frac{1}{n} \Bigg)+ \alpha^2\Bigg( d\cdot \sum_{j=1}^d {p}_{i,j}\cdot  j + \mathcal{O}\Big(\frac{bd^5}{n^2} \Big) \Bigg)  \Bigg]\\
	&\leq \sum_{i=1}^n\Phi_i^{t} \cdot \Bigg[ 1 + \alpha d\Bigg(\frac{1}{d} \sum_{j=1}^d \widehat{p}_{i,j}\cdot j - \frac{1}{n} \Bigg)+ \frac{2d^3\alpha^2}{n} +\alpha^2\Bigg( d\cdot \sum_{j=1}^d \widehat{p}_{i,j}\cdot  j + \mathcal{O}\Big(\frac{bd^4}{n} \Big)  \Bigg)  \Bigg]\\
	&\leq 	\Ex{\left. \widehat{\Phi}^{t+d} \,\right|\, \mathfrak{F}^{t}}  + \Phi^t\cdot \frac{C\alpha^2}{n} , 
\end{align*}
for some $C:=C(a,b,d)$, where in the last inequality we have used that $ \sum_{j=1}^d \widehat{p}_{i,j}\cdot  j\leq  \sum_{j=1}^d \widehat{p}_{i,j}\cdot  j^2 \leq bd^4/n$ by \eqref{eq:error}. Similarly we have $\mathbf{E}\big[  {\Psi}^{t+d} \,\big|\, \mathfrak{F}^{t}\big] \leq 	\mathbf{E}\big[  \widehat{\Psi}^{t+d} \,\big|\, \mathfrak{F}^{t}\big]  + \Psi^t\cdot \frac{C\alpha^2}{n}$ and thus, \begin{equation}\label{eq:tilhat}	\Ex{\left. {\Gamma}^{t+d} \,\right|\, \mathfrak{F}^{t}} \leq 	\Ex{\left. \widehat{\Gamma}^{t+d} \,\right|\, \mathfrak{F}^{t}}  + \Gamma^t\cdot \frac{2C\alpha^2}{n}.\end{equation}
 Thus, by \eqref{eq:tilhat} and \eqref{eq:hatdrop}, there exists a constant $0<\alpha:= \alpha(a,b)$ such that \[
\Ex{\left. \Gamma^{t+d} \,\right|\, x^{t}} \leq \Gamma^{t} \cdot \Big(1 - \frac{ \alpha}{c_1\cdot n}+\frac{\alpha^2\cdot 2C}{n}\Big) + c_2 \cdot \alpha \leq \Gamma^{t} \cdot \Big(1 - \frac{ \alpha}{c\cdot n}\Big) + c \cdot \alpha , 
\] for some $c:=c(a,b)\geq 1$, giving the claim. 
 \end{proof}

\subsection{Hyperbolic Cosine Potential Concentration} \label{sec:gamma_concentration}

In this section, we will prove a quite general theorem for the concentration of the hyperbolic cosine potential. A version of this theorem appeared in~\cite{LS22Queries}.

{\renewcommand{\thethm}{\ref{thm:gamma_concentration}}
	\begin{thm}[Restated]
\HyperbolicCosinePotConcentration
	\end{thm}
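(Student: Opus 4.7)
The plan is to exploit the interplay between the two potentials: $\Gamma_1$ (with the larger smoothing parameter $\alpha_1$) is used only to control the tail of the load vector in the sense that its polynomial boundedness certifies a good pointwise bound on the loads, while $\Gamma_2$ (with the smaller $\alpha_2$) is the potential we actually need to concentrate. The argument proceeds in three phases: (i) obtain an unconditional expectation bound on $\Gamma_1$; (ii) condition on a high-probability good event on which $\Gamma_1$ is polynomially bounded across a long window of steps; (iii) deduce that on this event the one-step change of $\Gamma_2$ is sub-polynomially small, and combine this with the drop inequality for $\Gamma_2$ via a martingale concentration inequality with a bad event to conclude that $\Gamma_2^t \leq 3cn$ with probability $1 - n^{-\kappa}$.

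For (i), I would iterate the hypothesis $\Ex{\Gamma_1^{s+1}\mid\mathfrak{F}^s} \leq \Gamma_1^s\cdot(1 - c_1\alpha_1\eps/n) + c_2\alpha_1\eps$ unconditionally to get $\Ex{\Gamma_1^s} \leq \Gamma_1^0\cdot(1 - c_1\alpha_1\eps/n)^s + c_2n/c_1 \leq 2n + cn/2 \leq cn$ for every $s\geq 0$, since $\Gamma_1^0 = 2n$ and $c = 2c_2/c_1$. For (ii), fix a window length $T = \mathrm{poly}(n)$ (chosen in (iii) to balance contraction and fluctuations) and define $\mathcal{G} := \bigcap_{s\in[\max(0,t-T),t]}\{\Gamma_1^s \leq n^K\}$ for a suitable constant $K = 2\kappa + O(1)$. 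Markov's inequality together with a union bound then gives $\Pro{\neg\mathcal{G}} \leq T\cdot cn\cdot n^{-K} \leq n^{-\kappa-1}$.

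For (iii), on the event $\mathcal{G}$ the bound $\Gamma_1^s \leq n^K$ forces $\alpha_1|y_i^s| \leq K\log n$ for every $i$; since $\alpha_2 \leq \alpha_1/(12\kappa)$ and $\kappa \geq 6$, this yields $e^{\alpha_2|y_i^s|} \leq n^{K/(12\kappa)} \leq n^{1/3}$. Because at most $d$ balls are placed per step, one bin's normalized load grows by at most $d(1-1/n)$ and the others drop by $d/n$; a first-order Taylor expansion term-by-term (using $\alpha_2 d < 1$) then gives the deterministic bound $|\Gamma_2^{s+1} - \Gamma_2^s| \leq n^{1/3}$ on $\mathcal{G}$. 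With this in hand I would apply \cref{thm:simplified_chung_lu_theorem_8_5} to a super-martingale derived from $\Gamma_2^s$ by compensating for its drift: the drop inequality contracts deviations at rate $1 - c_1\alpha_2\eps/n$, and the hypothesis $\alpha_2\eps \geq n^{-1/6}$ ensures that a window of length $T = \Theta(n^{7/6}\log n)$ is simultaneously long enough for the contraction to drive $\Ex{\Gamma_2^s}$ down to $cn/2$ and short enough that the accumulated martingale fluctuations $O(\sqrt{T}\cdot n^{1/3}) = o(n)$ cannot push $\Gamma_2^t$ past $3cn$. Combining the concentration bound on $\mathcal{G}$ with $\Pro{\neg\mathcal{G}} \leq n^{-\kappa-1}$ via a union bound finishes the proof.

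The main obstacle is the last step: balancing the deterministic step-bound $n^{1/3}$ on $\mathcal{G}$ against the drift rate $c_1\alpha_2\eps/n$ so that there exists a window length $T$ for which contraction dominates variance. This is exactly the role played by the quantitative hypotheses $\alpha_1 < 1/(2d)$, $\alpha_2 \leq \alpha_1/(12\kappa)$ and $\alpha_2\eps \geq n^{-1/6}$: the first two give the deterministic $n^{1/3}$ step-bound with the chosen constants in the exponent, while the third ensures the drift is strong enough that $T$ can be polynomial but well below $n^{4/3}/\log n$, so that the variance-type term remains sub-linear.
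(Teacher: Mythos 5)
Your overall strategy matches the paper's: iterate the drop inequality to get $\Ex{\Gamma_1^s}\leq cn$ unconditionally, conclude via Markov and a union bound that $\Gamma_1$ stays polynomially bounded over a window of length $\mathrm{poly}(n)$, use the separation $\alpha_2\leq\alpha_1/(12\kappa)$ to turn the polynomial bound on $\Gamma_1$ into a sub-polynomial bounded-difference estimate $|\Gamma_2^{s+1}-\Gamma_2^s|\leq n^{1/3}$, and then concentrate $\Gamma_2$ via Chung--Lu with a bad event. The paper does exactly these steps (\cref{cor:gamma_1_2_expected_change}, \cref{lem:gamma_1_poly_implies}), and your window length $T=\Theta(n\log n/(\alpha_2\eps))=O(n^{7/6}\log n)$ matches the paper's $T_r$. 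So the scaffolding is correct.

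The gap is in how you get from the bounded-difference estimate to concentration. You propose to apply \cref{thm:simplified_chung_lu_theorem_8_5} ``to a super-martingale derived from $\Gamma_2^s$ by compensating for its drift,'' and you then estimate the fluctuation as $O(\sqrt{T}\cdot n^{1/3})$. But $\Gamma_2^s$ itself is not a super-martingale (the additive $c_2\alpha_2\eps$ term pushes it up whenever $\Gamma_2^s < cn/2$), so some compensation is required, and the two natural compensations both break the argument as stated. Subtracting the conditional drift step-by-step gives a martingale, but you then lose control of the compensator, whose magnitude can be as large as $\Theta(T\alpha_2\eps\cdot n^{1/3})=\omega(n)$. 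Multiplying by $(1-c_1\alpha_2\eps/n)^{-s}$ gives a genuine super-martingale, but now the step bounds grow geometrically, so the uniform-$D$ hypothesis of \cref{thm:simplified_chung_lu_theorem_8_5} fails; you would need the step-dependent version \cref{thm:orig_chung_lu_theorem_8_5}, and your back-of-the-envelope $O(\sqrt{T}\cdot n^{1/3})$ is not how the bound is actually obtained (the Azuma sum is dominated by the last few terms, and the factor $a^{-T}$ only cancels after you de-rescale at time $t$). The paper sidesteps all of this by splitting into \emph{recovery} (\cref{lem:gamma_recovery}: show \Whp~$\Gamma_2^r\leq cn$ at \emph{some} $r\in[t-T_r,t]$, via a multiplicative ``killed'' potential that needs only Markov) and \emph{stabilization} (\cref{lem:gamma_1_2_stabilization}: from such an $r$, look at $X^s:=\Gamma_2^{s\wedge\tau}$ where $\tau$ is the first return below $cn$; on $\{\Gamma_2^s>cn\}$ the drop inequality \emph{does} make this a super-martingale with the uniform step bound $n^{1/3}$, so \cref{lem:azuma}/\cref{thm:simplified_chung_lu_theorem_8_5} applies directly), then assembles the steps via a red/green phase decomposition. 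A second, related omission: conditioning on your event $\mathcal{G}$ (which involves the \emph{future} of the window) is not \emph{a priori} compatible with the martingale structure; the paper handles this by coupling $\mathcal{P}$ to an auxiliary process $\tilde{\mathcal{P}}$ that switches to least-loaded allocation when $\Gamma_1$ first crosses $\tfrac12 cn^{2\kappa+1}$, which preserves both drop inequalities and makes the bounded-difference property hold unconditionally from the crossing time onward, while agreeing with $\mathcal{P}$ w.h.p. Your proposal should either adopt this stopped/auxiliary-process machinery or carefully justify why the conditioning on $\mathcal{G}$ is admissible in the Chung--Lu framework.
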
 }
	\addtocounter{thm}{-1}

We will first show how to obtain \cref{simp} as an application of this theorem. 

{\renewcommand{\thethm}{\ref{simp}}
	\begin{thm}[Restated]
\simplebase
	\end{thm} }
	\addtocounter{thm}{-1}

\begin{proof}
Let $\alpha' := \alpha'(a, b)$, $c := c(a, b)$ and $d := d(a, b)$ be the constants in \cref{lem:expectation_bound}. Let $\alpha_1 := \alpha'$ and let $\alpha_2 := \frac{\alpha'}{12\cdot 6}$, then by \cref{lem:expectation_bound} we have that the potentials $\Gamma_1 := \Gamma_1(\alpha_1), \Gamma_2 := \Gamma_2(\alpha_2)$ satisfy the drop inequalities for any step $t \geq 1$
\[
\Ex{\left. \Gamma_1^{t+d} \,\,\right|\,\, \mathfrak{F}^{t}} \leq \Gamma_1^{t} \cdot \Big( 1 - \frac{\alpha_1}{c \cdot n}\Big) + c \cdot \alpha_1,
\]
and 
\[
\Ex{\left. \Gamma_2^{t+d} \,\,\right|\,\, \mathfrak{F}^{t}} \leq \Gamma_2^{t} \cdot \Big( 1 - \frac{\alpha_2}{c \cdot n}\Big) + c \cdot \alpha_2.
\]
Hence, by \cref{thm:gamma_concentration} with $\kappa := 6$, $c_1 := 1/c$ and $c_2 := c$, we obtain that for $c := 2 \cdot \frac{c_2}{c_1} \geq 2$ and for any $k \in \N$,
\[
\Pro{\Gamma_2^{k \cdot d} \leq 3cn} \geq 1 - n^{-6}.
\]
Let $k_0 := \lfloor t/d \rfloor$ and $k_1 := \lceil (t + n \log^8 n)/d \rceil$, and note that $k_1 - k_0 + 1\leq 2 n \log^8 n$. Hence, by the union bound over $k_1 - k_0 + 1$ steps, we have that 
\[
\Pro{\bigcap_{k \in [k_0, k_1]} \left\{ \Gamma_2^{k \cdot d} \leq 3cn \right\}} \geq 1 - n^{-6} \cdot (2n \log^8 n) \geq 1 - n^{-4}.
\]
In any $d$ steps, the contribution of a single bin $i \in [n]$ to $\Gamma_2$ can change by at most a factor of $e^{2\alpha d}$, i.e., for any step $s$, $\Gamma_{2,i}^{s+d} \leq e^{2\alpha d} \cdot \Gamma_{2,i}^{s}$ and by aggregating $\Gamma_2^{s+d} \leq e^{2\alpha d} \cdot \Gamma_2^{s}$. Hence, for any $k \in \N$ and $0 \leq j < d$, for the in-between step $k \cdot d + j$, we have that $\Gamma_2^{k\cdot d + j} \leq e^{2\alpha_2 d} \cdot \Gamma_2^{k\cdot d} \leq 2 \Gamma_2^{k\cdot d}$, using that $\alpha_2 \leq 1/(4d)$. Hence, we conclude that
\[
\Pro{\bigcap_{u \in [t, t + n \log^8 n]} \left\{ \Gamma_2^{u} \leq 6cn \right\}} \geq 1 - n^{-4}. \qedhere
\]
\end{proof}

\subsubsection{Proof outline} \label{sec:concentration_proof_outline}

In this section, we will outline the proof of \cref{thm:gamma_concentration}, giving some intuition for the requirement/choice of the two potential functions $\Gamma_1$ and $\Gamma_2$.

Our goal is to show that \Whp~$\Gamma_2^t \leq 3cn$, for any given $t \geq 0$. We will do this by analyzing the steps in the interval $[t - T_r, t]$, where $T_r := \big\lceil 2 \cdot \frac{4/3 + 2\kappa}{c_1 \alpha_2 \eps} \cdot n \log n \big\rceil$. In particular, in this interval, which we call the \textit{recovery interval}, we will show that \Whp~$\Gamma_2^r \leq cn$ for at least \textit{one} step $r \in [t - T_r, t]$ and then we will show that it \textit{stabilises}, i.e., remains small, for \textit{all} steps in $[r, t]$.

Now, we will give a few more details for the steps in the proof (see \cref{fig:hyperbolic_cosine_potential_proof_outline}). By the expectation bound, we have that for any step $t \geq 0$, $\Ex{\Gamma_1^t} \leq cn$. So, by Markov's inequality \Whp~$\Gamma_1^s \leq cn^{2\kappa+1}$ for all $s \in [t - T_r, t]$.

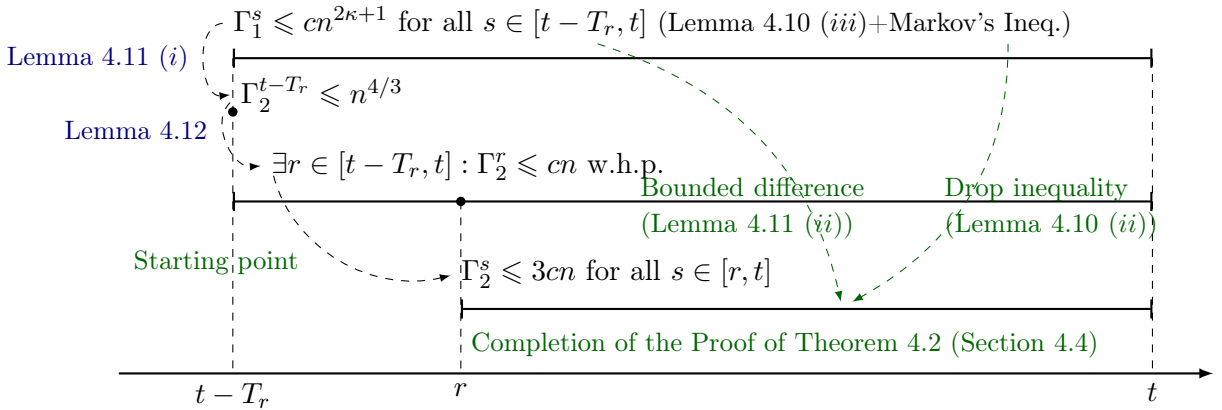
\begin{figure}[H]
\begin{tikzpicture}[
  txt/.style={anchor=west,inner sep=0pt},
  Dot/.style={circle,fill,inner sep=1.25pt},
  implies/.style={-latex,dashed},yscale=0.95]

\def\betaO{0}
\def\betaA{1}
\def\betaB{4}
\def\betaC{7}
\def\End{13.1}
\def\yO{-4.4}

\draw[dashed] (\betaA, 0) -- (\betaA, \yO);
\node[anchor=north] at (\betaA, \yO) {$t - T_r$};
\draw[dashed] (\betaB, -2) -- (\betaB, \yO);
\node[anchor=north] at (\betaB, \yO) {$r$};
\draw[dashed] (\End, 0) -- (\End, \yO);
\node[anchor=north] at (\End, \yO) {$t$};

\node (indstep) at (\betaA,0.5) {};
\node[txt] at (indstep) {$\Gamma_1^{s} \leq cn^{2\kappa + 1}$ for all $s \in [t - T_r, t]$ {\small (\cref{cor:gamma_1_2_expected_change}~$(iii)$+Markov's Ineq.)}};
\draw[|-|, thick] (\betaA, 0) -- (\End, 0) ;

\node (PsiSmall) at (\betaA+0.1,-0.5) {};
\node[txt] at (PsiSmall) {$\Gamma_2^{t - T_r} \leq n^{4/3}$};
\node[Dot] at (\betaA, -0.75){};

\node (ExistsPsiLinear) at (\betaA+0.5,-1.5) {};
\node[txt] at (ExistsPsiLinear) {$\exists r \in [t - T_r, t] : \Gamma_2^{r} \leq cn$ \Whp};
\node[Dot] at (\betaB, -2){};
\draw[|-|, thick] (\betaA, -2) -- (\End, -2);

\node (ExPsiLinear) at (\betaB, -3) {};
\node[txt] at (\betaB, -3) {$\Gamma_2^{s} \leq 3cn$ for all $s \in [r, t]$};
\draw[|-|, thick] (\betaB, -3.5) -- (\End, -3.5);

\draw[-latex, thick] (\betaO - 0.5, \yO) -- (\End + .8, \yO);

\draw[implies] (indstep) edge[bend right=90] (PsiSmall);
\node[anchor=east, black!50!blue] at (\betaA - 0.45, -0) {\small \cref{lem:gamma_1_poly_implies}~$(i)$};

\draw[implies] (PsiSmall) edge[bend right=70] (ExistsPsiLinear);
\node[anchor=east, black!50!blue] at (\betaA - 0.2, -1) {\small \cref{lem:gamma_recovery}};

\draw[implies] (ExistsPsiLinear) edge[bend right=45] (ExPsiLinear);
\node[anchor=east, black!60!green] at (\betaA + 1, -2.85) {\small Starting point};

\draw[implies,black!60!green] (\betaC - 1.2, 0.2) to[bend left=30] (\betaC + 2.0, -3.4);
\node[text width=4cm, anchor=east,black!60!green] at (\betaC + 3.5, -2.1) {\small Bounded difference (\cref{lem:gamma_1_poly_implies}~$(ii)$)};

\draw[implies,black!60!green] (\betaC + 4.2, 0.2) to[bend left=30] (\betaC + 2.15, -3.4);
\node[anchor=east,black!60!green, text width=3cm] at (\betaC + 6.5, -2.1) {\small Drop inequality \\ (\cref{cor:gamma_1_2_expected_change}~$(ii)$)};

\node[black!60!green,anchor=west] at (\betaB, -4) {\small Completion of the Proof of \cref{thm:gamma_concentration} (\cref{sec:completing_high_prob_beta_potential_proof})};

\end{tikzpicture}
\caption{Outline for the proof of \cref{thm:gamma_concentration}. Results in green are used in the application of Azuma's concentration inequality for super-martingales (\cref{lem:azuma})~in \cref{thm:gamma_concentration}.} 
\label{fig:hyperbolic_cosine_potential_proof_outline}
\end{figure}

By the choice of $\alpha_2 \leq \frac{\alpha_1}{12\kappa}$, we will show that when $\Gamma_1^s \leq cn^{2\kappa+1}$, then we also have $(i)$ that $\Gamma_2^s \leq n^{4/3}$ and $(ii)$ that $|\Gamma_2^{s+1} - \Gamma_2^s| \leq n^{1/3}$ (\cref{lem:gamma_1_poly_implies}). The first condition will be useful for proving the \textit{recovery}, i.e., that $\Gamma_2^r \leq cn$ for at least one step $r \in [t - T_r, t]$ (\cref{lem:gamma_recovery}). Then, starting from this step $r$ and using the second condition allows us to use a concentration inequality to deduce that $\Gamma_2$ \textit{stabilises}, i.e., that $\Gamma_2^s \leq 3cn$ for all steps $s \in [r, t]$ (\cref{lem:gamma_1_2_stabilization}).

\subsubsection{Auxiliary lemmas}

In this section, we will prove some auxiliary lemmas for the potentials $\Gamma_1$ and $\Gamma_2$ as defined in \cref{thm:gamma_concentration}.

\begin{lem} \label{cor:gamma_1_2_expected_change}
Consider any process $\mathcal{P}$ satisfying the preconditions of \cref{thm:gamma_concentration}. Then for any step $t \geq 0$,
\begin{align*}
(i) & \quad \Ex{\left. \Gamma_1^{t+1} \,\right|\, \mathfrak{F}^t, \Gamma_1^t > cn} \leq \Gamma_1^t \cdot \left( 1 - \frac{c_1 \alpha_1 \eps}{2n}\right), \\
(ii) & \quad \Ex{\left. \Gamma_2^{t+1} \,\right|\, \mathfrak{F}^t, \Gamma_2^t > cn} \leq \Gamma_2^t \cdot \left( 1 - \frac{c_1 \alpha_2 \eps}{2n}\right), \\
(iii) & \quad \Ex{\Gamma_1^t} \leq cn.
\end{align*}
\end{lem}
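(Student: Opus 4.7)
}

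The preconditions of \cref{thm:gamma_concentration} give, for all $t \geq 1$,
\[
\Ex{\left.\Gamma_1^{t+1}\,\right|\,\mathfrak{F}^t} \leq \Gamma_1^t \cdot \Big(1 - \tfrac{c_1\alpha_1\eps}{n}\Big) + c_2\alpha_1\eps, \qquad \Ex{\left.\Gamma_2^{t+1}\,\right|\,\mathfrak{F}^t} \leq \Gamma_2^t \cdot \Big(1 - \tfrac{c_1\alpha_2\eps}{n}\Big) + c_2\alpha_2\eps,
\]
and recall $c := 2c_2/c_1 \geq 2$, so in particular $cn/2 = c_2 n / c_1$. The plan is to obtain $(i)$ and $(ii)$ by absorbing the additive term $c_2\alpha\eps$ into the multiplicative decrease whenever the potential is already above $cn$, and to obtain $(iii)$ by iterating the unconditional recurrence from the deterministic initial value $\Gamma_1^0 = 2n$.

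For $(i)$, conditioning on $\Gamma_1^t > cn$, we have $c_2\alpha_1\eps = \tfrac{c_1\alpha_1\eps}{2n} \cdot cn \leq \tfrac{c_1\alpha_1\eps}{2n} \cdot \Gamma_1^t$. Substituting into the drift inequality yields
\[
\Ex{\left.\Gamma_1^{t+1}\,\right|\,\mathfrak{F}^t, \Gamma_1^t > cn} \leq \Gamma_1^t \cdot \Big(1 - \tfrac{c_1\alpha_1\eps}{n}\Big) + \Gamma_1^t \cdot \tfrac{c_1\alpha_1\eps}{2n} = \Gamma_1^t \cdot \Big(1 - \tfrac{c_1\alpha_1\eps}{2n}\Big),
\]
as claimed. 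Part $(ii)$ follows by the identical argument with $\alpha_1$ replaced by $\alpha_2$.

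For $(iii)$, taking unconditional expectations and rearranging the first drift inequality around the fixed point $c_2 n/c_1 = cn/2$ gives
\[
\Ex{\Gamma_1^{t+1}} - \tfrac{cn}{2} \leq \Big(1 - \tfrac{c_1\alpha_1\eps}{n}\Big)\cdot\Big(\Ex{\Gamma_1^t} - \tfrac{cn}{2}\Big).
\]
Since $\Gamma_1^0 = \sum_{i=1}^n (e^{\alpha_1 \cdot 0} + e^{-\alpha_1 \cdot 0}) = 2n$, a straightforward induction on $t$ (or a telescoping geometric sum) yields
\[
\Ex{\Gamma_1^t} \leq \tfrac{cn}{2} + \Big(1 - \tfrac{c_1\alpha_1\eps}{n}\Big)^{t} \cdot \Big(2n - \tfrac{cn}{2}\Big)^{+} \leq \max\Big\{2n,\,\tfrac{cn}{2}\Big\} \leq cn,
\]
where the last inequality uses $c \geq 2$. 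This establishes $(iii)$.

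No step is particularly technical; the only mild subtlety is making sure that the constant $c = 2c_2/c_1$ is chosen precisely so that it simultaneously dominates both the fixed point $c_2 n /c_1$ of the recurrence (for $(iii)$) and the threshold at which the additive slack $c_2\alpha\eps$ can be absorbed into half the multiplicative drift (for $(i)$, $(ii)$). Once $c$ is pinned down as $2c_2/c_1$, each part is a one-line manipulation of the given drift inequalities.
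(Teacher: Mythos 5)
Your proof of $(i)$ and $(ii)$ is essentially identical to the paper's: both absorb the additive term $c_2\alpha\eps$ into half of the multiplicative drift using $\Gamma^t > cn = 2c_2n/c_1$. For $(iii)$, you take a slightly different and arguably more careful route. The paper invokes its \cref{lem:recurrence_inequality_expectation} with $a = c_1\alpha_1\eps/n$ and $b = c_2\alpha_1\eps$, whose fixed point is $b/a = c_2 n/c_1 = cn/2$; that lemma's hypothesis requires $Z_0 \leq b/a = cn/2$, which only holds when $c \geq 4$, whereas the theorem guarantees only $c \geq 2$. Your explicit telescoping around the fixed point, with the $(\cdot)^+$ and the final $\Ex{\Gamma_1^t} \leq \max\{2n, cn/2\} \leq cn$, handles both regimes $c \in [2,4)$ and $c \geq 4$ uniformly, and so closes a small gap in the paper's citation of its own recurrence lemma. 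The content is the same — a linear recurrence attracted to $cn/2$ with starting value $2n$ — but your version makes the argument self-contained and correct for all $c \geq 2$.
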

\begin{proof}
\textit{First Statement.} Recall that $c = 2 \cdot \frac{c_2}{c_1} \geq 2$. For the first statement, by the assumptions
\begin{align*}
\Ex{\left. \Gamma_1^{t+1} \,\right|\, \mathfrak{F}^t, \Gamma_1^t > cn} 
 & \leq \Gamma_1^t \cdot \left( 1 - \frac{c_1 \alpha_1\eps}{n}\right) + c_2 \alpha_1 \eps\\
 & = \Gamma_1^t \cdot \left( 1 - \frac{c_1 \alpha_1\eps}{2n}\right) - \Gamma_1^t \cdot \frac{c_1 \alpha_1 \eps}{2n} + c_2 \alpha_1 \eps \\
 & \leq \Gamma_1^t \cdot \left( 1 - \frac{c_1 \alpha_1\eps}{2n}\right) - 2 \cdot \frac{c_2}{c_1} \cdot n \cdot \frac{c_1 \alpha_1 \eps}{2n} + c_2 \alpha_1 \eps
 = \Gamma_1^t \cdot \left( 1 - \frac{c_1 \alpha_1\eps}{2n}\right).
\end{align*}

\textit{Second Statement.} Similarly, we obtain the second statement for $\Gamma_2$. 

\textit{Third statement.} By \cref{lem:recurrence_inequality_expectation} for $a = \frac{c_1\alpha_1\eps}{n}$ and $b = c_2\alpha_1\eps$, since $\Gamma_1^0 = 2n \leq 2 \cdot \frac{c_2}{c_1} \cdot n = cn$, it follows that $\ex{\Gamma^t} \leq cn$, for any step $t \geq 0$.
\end{proof}

\begin{lem} \label{lem:gamma_1_poly_implies}
Consider any process $\mathcal{P}$ satisfying the preconditions of \cref{thm:gamma_concentration}. For any step $t \geq 0$ where $\Gamma_1^{t} \leq cn^{2\kappa + 1}$, we have that
\begin{align*}
(i) & \qquad \Gamma_2^{t} \leq n^{4/3}, \\
(ii) & \qquad | \Gamma_2^{t+1} - \Gamma_2^{t} | \leq n^{1/3}.
\end{align*}
\end{lem}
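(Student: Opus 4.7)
The plan is to prove both parts by a direct computation that exploits the large gap $\alpha_1/\alpha_2 \geq 12\kappa$ between the two smoothing parameters, together with $\alpha_1 < 1/(2d)$ and $\kappa \geq 6$. The common starting point is a pointwise bound on normalized loads: since $\Gamma_1^t = \sum_{i=1}^n(e^{\alpha_1 y_i^t} + e^{-\alpha_1 y_i^t})$ and each of the $2n$ summands is at most $\Gamma_1^t \leq cn^{2\kappa+1}$, we get $\alpha_1 |y_i^t| \leq \log(cn^{2\kappa+1})$ for every $i \in [n]$. Rescaling, $e^{\alpha_2 |y_i^t|} \leq (cn^{2\kappa+1})^{\alpha_2/\alpha_1} \leq (cn^{2\kappa+1})^{1/(12\kappa)}$.

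For part $(i)$, I would substitute the pointwise bound directly into the definition of $\Gamma_2^t$. Since $\kappa \geq 6$, the exponent satisfies $(2\kappa+1)/(12\kappa) = 1/6 + 1/(12\kappa) \leq 13/72$. Hence each term $e^{\pm\alpha_2 y_i^t}$ is at most $c \cdot n^{13/72}$, and summing over $2n$ terms yields $\Gamma_2^t \leq 2c \cdot n^{1 + 13/72} \leq n^{4/3}$ for $n$ sufficiently large, as desired.

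For part $(ii)$, I would decompose the change in $\Gamma_2$ bin by bin. Set $\Gamma_{2,i}^t := e^{\alpha_2 y_i^t} + e^{-\alpha_2 y_i^t}$. In a single step, at most $d$ balls are allocated and the total count grows by $d$, so a non-allocated bin has $y_i^{t+1} - y_i^t = -d/n$, while each of the at most $d$ allocated bins has $|y_i^{t+1} - y_i^t| \leq d$. Because $\alpha_1 < 1/(2d)$ and $\alpha_2 \leq \alpha_1/(12\kappa)$ give $\alpha_2 d < 1/(24\kappa) < 1$, the elementary estimate $|e^z - 1| \leq 2|z|$ for $|z| \leq 1$ yields $|\Gamma_{2,i}^{t+1} - \Gamma_{2,i}^t| \leq 2\alpha_2 d \cdot \Gamma_{2,i}^t / n$ for non-allocated $i$ and $|\Gamma_{2,i}^{t+1} - \Gamma_{2,i}^t| \leq 2\alpha_2 d \cdot \Gamma_{2,i}^t$ for allocated $i$. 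Summing gives
\[
|\Gamma_2^{t+1} - \Gamma_2^t| \leq \frac{2\alpha_2 d}{n} \cdot \Gamma_2^t + 2\alpha_2 d \sum_{i \in A} \Gamma_{2,i}^t,
\]
where $A$ is the allocated set, $|A| \leq d$. By part $(i)$ the first summand is at most $2\alpha_2 d \cdot n^{1/3}$, and by the pointwise bound from the first paragraph each $\Gamma_{2,i}^t \leq 2c \cdot n^{13/72}$, so the second summand is $O(\alpha_2 d^2 \cdot n^{13/72}) = o(n^{1/3})$. Using $\alpha_2 d \leq 1/(24\kappa) \leq 1/144$, the total sits below $n^{1/3}$ for $n$ large.

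I do not anticipate a real obstacle: the target exponents $4/3$ in part $(i)$ and $1/3$ in part $(ii)$ were chosen with deliberate slack against the $n^{13/72}$ pointwise bound, and the smallness of $\alpha_2 d$ absorbs the constant factors. The only points requiring mild care are verifying $\alpha_2 d < 1$ (so that the linearized exponential inequality applies) and combining the aggregate bound $\Gamma_2^t \leq n^{4/3}$ with the per-bin bound $\Gamma_{2,i}^t = O(n^{13/72})$ in the right way---the aggregate version for the $\Theta(n)$ non-allocated bins, and the per-bin version for the $O(d)$ allocated ones.
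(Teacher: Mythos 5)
Your part $(i)$ follows the paper's argument essentially verbatim: bound each $|y_i^t|$ pointwise from $\Gamma_1^t \leq cn^{2\kappa+1}$, rescale by $\alpha_2/\alpha_1 \leq 1/(12\kappa)$, and sum over $2n$ terms. The bookkeeping is slightly different (you carry $n^{13/72}$ per term, the paper carries $n^{1/4}$), but the mechanism is identical and the slack margins absorb the difference.

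Part $(ii)$ takes a genuinely different route. The paper treats $\Gamma_2^{t+1}$ as obtained from $\Gamma_2^t$ by perturbing the $2n$ exponents with total $\ell_1$-change at most $4d$, and then uses convexity of $\exp$ to bound the change by the single worst exponent raised by the full budget: $\exp\big(\alpha_2(4d + \max_i|y_i^t|)\big)$. You instead decompose bin by bin and linearize $e^z - 1 \leq 2|z|$. Both are legitimate, but your bookkeeping leaves a gap. For the allocated bins you bound each individual change by $2\alpha_2 d \cdot \Gamma_{2,i}^t$ and multiply by $|A| \leq d$, obtaining $O(\alpha_2 d^2 n^{13/72})$. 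The theorem's preconditions allow $d$ as large as roughly $n^{1/6}/(24\kappa)$ (from $\alpha_2 \geq n^{-1/6}$ and $\alpha_2 < 1/(24\kappa d)$), and with $\alpha_2 d < 1/(24\kappa)$ this second summand is only bounded by $O(d \cdot n^{13/72}/\kappa) = O(n^{1/6+13/72}/\kappa) = O(n^{25/72}/\kappa)$, which exceeds $n^{1/3} = n^{24/72}$ since $\kappa$ is a constant. Your claim ``$O(\alpha_2 d^2 n^{13/72}) = o(n^{1/3})$'' therefore needs $d = O(1)$, which is true in all of the paper's applications but is not guaranteed by the stated preconditions.

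The fix is a per-ball rather than per-bin charge: if bin $i \in A$ receives $k_i \geq 1$ balls then its load shift is $k_i - d/n$, so $|\Gamma_{2,i}^{t+1} - \Gamma_{2,i}^t| \leq 2\alpha_2 k_i \cdot \Gamma_{2,i}^t$, and since $\sum_{i \in A} k_i \leq d$ the allocated contribution is at most $2\alpha_2 d \cdot \max_i \Gamma_{2,i}^t = O(\alpha_2 d \cdot n^{13/72}) = O(n^{13/72})$, comfortably $o(n^{1/3})$. Equivalently, the paper's $\ell_1$-change argument charges the total budget $4d$ once rather than the per-bin maximum $d$ times, which is why it never sees the $d^2$.
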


\begin{proof}
Consider an arbitrary step $t$ where $\Gamma_1^t \leq cn^{2\kappa + 1}$. We start by proving the following bound on the normalised load $y_i^t$ for any bin $i \in [n]$,
\[
\Gamma_1^{t} \leq cn^{2\kappa + 1} \Rightarrow e^{\alpha_1\cdot y_i^t} + e^{-\alpha_1\cdot y_i^t} \leq cn^{2\kappa + 1} \Rightarrow 
y_i^t \leq \frac{3\kappa}{\alpha_1} \cdot \log n \, \wedge \,
 -y_i^t \leq \frac{3\kappa}{\alpha_1} \cdot \log n,
\]
where in the second implication we used $\log c + \frac{2 \kappa + 1}{\alpha_1} \log n \leq \frac{3\kappa}{\alpha_1} \log n$, for sufficiently large $n$ as $c$ is a constant and $\kappa \geq 6 \geq 1$.

\emph{First Statement.} Recall that $\alpha_2 \leq \alpha_1/12\kappa$. By the definition of $\Gamma_2^{t}$ and the bound on each normalised bin load, we get that
\[
 \Gamma_2^{t} \leq 2 \cdot \sum_{i=1}^n \exp\Big( \alpha_2 \cdot \frac{3\kappa}{\alpha_1} \cdot \log n \Big)
 = 2n \cdot n^{1/4} \leq n^{4/3}.
\]

\emph{Second Statement.} Consider $\Gamma_2^{t+1}$ as a sum over $2n$ exponentials, which is obtained from $\Gamma_2^{t}$ by slightly changing the values of the $2n$ exponents. The total $\ell_1$-change in the exponents is upper bounded by $4d$, as we will increment $d$ entries in the load vector $x^{t}$ (and each of these entries appear twice), and we will also increment the average load by $\frac{d}{n}$ in all $2n$ exponents. Since $\exp(\cdot)$ is convex, the largest change is upper bounded by the (hypothetical) scenario in which the largest exponent increases by $4d$ and all others remain the same,
\begin{align*}
 \left| \Gamma_2^{t+d} - \Gamma_2^{t} \right| 
 & \leq \exp \Big( \alpha_2 \cdot \big( 4d + \max_{i \in [n]} |y_i^t| \big) \Big) 
   \leq e^{4d \alpha_2} \cdot \exp\Big( \alpha_2 \cdot \frac{3\kappa}{\alpha_1} \cdot \log n \Big) = e^{4d \alpha_2} \cdot n^{1/4} \leq n^{1/3},
\end{align*}
using that $\alpha_2 \leq \frac{\alpha_1}{12\kappa}$ and that $\alpha_2 \leq \alpha_1 \leq 1/(2d)$.
\end{proof}
 
\subsubsection{Recovery and stabilization}

Using the second and third statements in \cref{cor:gamma_1_2_expected_change}, we will now prove a weaker statement of \cref{thm:gamma_concentration}, showing that $\Gamma_2^r \leq cn$ for \textit{at least one} step $r \in [t - T_r, t]$, where $T_r$ is the length of the recovery interval\begin{align}
\label{eq:gamma_concentration_recovery_interval}
T_r := \left\lceil 2 \cdot \frac{4/3 + 2\kappa}{c_1 \alpha_2 \eps} \cdot n \log n \right\rceil.
\end{align}
Before we do this, we proceed by defining an auxiliary process.

\paragraph{Auxiliary process.} Let $\mathcal{P}$ be the process satisfying the preconditions of \cref{thm:gamma_concentration}. We want to condition that $\mathcal{P}$ has $\Gamma^s \leq cn^{2\kappa + 1}$ for every step $s$ in an interval of $\poly(n)$ length, so that we can deduce it satisfies the bounded difference condition (\cref{cor:gamma_1_2_expected_change}) and then apply Azuma's inequality (\cref{lem:azuma}).

To this end, we will define an auxiliary process $\tilde{\mathcal{P}}_{t_0} := \tilde{\mathcal{P}}_{t_0}(\mathcal{P})$ for some arbitrary step $t_0 \geq 0$. Let $\sigma := \inf\left\{ s \geq t_0 : \Gamma^s > \frac{1}{2} cn^{2\kappa + 1} \right\}$. Then, we define $\tilde{\mathcal{P}}$ so that
\begin{itemize}
  \item in steps $[0, \sigma)$ it makes the same allocations as $\tilde{\mathcal{P}}$, and
  \item in steps $[\sigma, \infty)$ it allocates to the currently least loaded bin, i.e., it uses the probability allocation vector $q^s = (0, \ldots, 0, 1)$.
\end{itemize}

\label{sec:auxiliary_process_p_and_tilde_p}
Let $y_{\tilde{\mathcal{P}}}^s$ be the normalised load vector of $\tilde{\mathcal{P}}_{t_0}$ at step $s \geq 0$. By \cref{cor:gamma_1_2_expected_change}~$(iii)$, Markov's inequality and the union bound, it follows that for any interval $[t_0, m]$ with $m - t_0 \leq T_r$, with high probability the two processes agree\begin{align} \label{eq:modified_process_agrees_with_p_gamma_concentration}
\Pro{\bigcap_{s \in [t_0, m]} \left\{ y_{\tilde{\mathcal{P}}}^s = y^s \right\} } 
 & \geq \Pro{\bigcap_{s \in [t_0, m]} \left\{ \Gamma_1^s \leq cn^{2\kappa + 1} \right\} } \geq 1 - n^{-2\kappa} \cdot T_r.
\end{align}
The process $\tilde{\mathcal{P}}_{t_0}$ is defined in a way to satisfy the following property:
\begin{itemize}
  \item (\textbf{Property 1}) The $\tilde{\mathcal{P}}_{t_0}$ process satisfies the drop inequalities for the potential functions $\Gamma_{1, \tilde{\mathcal{P}}}$ and $\Gamma_{2, \tilde{\mathcal{P}}}$ (first and second preconditions) for any step $s \geq 0$.  This holds because for any step $s < \sigma$, the process follows $\mathcal{P}$. For any step $s \geq \sigma$, the process allocates to the currently least loaded bin and therefore minimises the potential $\Gamma_{1, \tilde{\mathcal{P}}}^{s+1}$ given any $\mathfrak{F}^s$, which means that $\Gamma_{1, \tilde{\mathcal{P}}}^{s+1} \leq \Ex{\Gamma_1^{s+1} \mid \mathfrak{F}^s}$ and so it trivially satisfies any drop inequality (and similarly for $\Gamma_{2, \tilde{\mathcal{P}}}$).
\end{itemize}
Further, we define the event that the potential $\Gamma_1$ is small at step $t_0$, as\begin{align} \label{eq:gamma_concentration_z_event_def}
\mathcal{Z}^{t_0} := \left\{  \Gamma_{1, \tilde{\mathcal{P}}}^{t_0} \leq \frac{1}{2}cn^{2\kappa + 1} \right\},
\end{align}
where $c \geq 1$ is the constant defined in \cref{thm:gamma_concentration}. 
When the event $\mathcal{Z}^{t_0}$ holds, then the process $\tilde{\mathcal{P}}_{t_0}$ also satisfies the following property (which ``implements'' the conditioning that $\Gamma_{1, \tilde{\mathcal{P}}}^{s} \leq cn^{2\kappa + 1}$):
\begin{itemize}
  \item (\textbf{Property 2}) For any step $s \geq t_0$, it follows that \[
    \Gamma_{1, \tilde{\mathcal{P}}}^{t_0} \leq cn^{2\kappa + 1},
  \]
  At any step $s \in [t_0, \sigma)$, this holds by the definition of $\sigma$. For any step $s \geq \sigma$, a ball will never be allocated to a bin with $y_i^s > 0$ and in every $n$ steps the at most $n$ bins with load equal to the minimum load (at step $s$) will receive at least one ball each. Hence, over any $n$ steps the maximum absolute normalised load does not increase and in the steps in between this can be larger by at most $1$ and hence, 
  \[
  \Gamma_{1, \tilde{\mathcal{P}}}^{t_0} \leq e^{\alpha_1} \cdot \Gamma_{1, \tilde{\mathcal{P}}}^{\sigma} \leq e^{\alpha_1} \cdot \frac{1}{2}cn^{2\kappa + 1} \leq cn^{2\kappa + 1}.
  \]
 
\end{itemize}

\begin{lem}[Recovery] \label{lem:gamma_recovery}
Consider any step $t \geq 0$ and the auxiliary process $\tilde{\mathcal{P}}_{t - T_r} := \tilde{\mathcal{P}}_{t - T_r}(\mathcal{P})$ for any $\mathcal{P}$ satisfying the preconditions of \cref{thm:gamma_concentration} and with $\mathcal{Z}^{t - T_r}$ being the event defined in \cref{eq:gamma_concentration_z_event_def}. For any step $t \geq 0$, we have that
\[
\Pro{ \left. \bigcup_{r \in [t - T_r, t]} \{ \Gamma_2^{r} \leq cn \} \,\right|\, \mathfrak{F}^{t - T_r}, \mathcal{Z}^{t - T_r} } \geq 1 - 2n^{-2\kappa-1}.
\]
\end{lem}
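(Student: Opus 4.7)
I would work entirely with the auxiliary process $\tilde{\mathcal{P}} := \tilde{\mathcal{P}}_{t-T_r}$ introduced on page~\pageref{sec:auxiliary_process_p_and_tilde_p} and only at the very end transfer the conclusion back to $\mathcal{P}$. The point is that Property~2 guarantees, on the event $\mathcal{Z}^{t-T_r}$, that $\Gamma_{1,\tilde{\mathcal{P}}}^s \leq cn^{2\kappa+1}$ for every step $s \geq t-T_r$, so \cref{lem:gamma_1_poly_implies}~(i) gives the deterministic bound $\Gamma_{2,\tilde{\mathcal{P}}}^s \leq n^{4/3}$ throughout the whole recovery interval. This clean upper bound will be what drives the Markov estimate at the end.

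Next, define the stopping time $\tau := \inf\{ s \geq t-T_r : \Gamma_{2,\tilde{\mathcal{P}}}^s \leq cn \}$. I want to show $\Pro{\tau > t \mid \mathfrak{F}^{t-T_r}, \mathcal{Z}^{t-T_r}}$ is tiny. On the event $\{\tau > s\}$ we have $\Gamma_{2,\tilde{\mathcal{P}}}^s > cn$, so by \cref{cor:gamma_1_2_expected_change}~(ii) (which applies to $\tilde{\mathcal{P}}$ by Property~1),
\[
\Ex{\Gamma_{2,\tilde{\mathcal{P}}}^{s+1} \cdot \mathbf{1}_{\{\tau>s+1\}} \,\Big|\, \mathfrak{F}^s,\, \tau>s} \leq \Gamma_{2,\tilde{\mathcal{P}}}^s \cdot \Bigl(1 - \tfrac{c_1\alpha_2\eps}{2n}\Bigr).
\]
Iterating this from $t-T_r$ up to $t$, taking expectations, and using $\Gamma_{2,\tilde{\mathcal{P}}}^{t-T_r} \leq n^{4/3}$ on $\mathcal{Z}^{t-T_r}$, I obtain
\[
\Ex{\Gamma_{2,\tilde{\mathcal{P}}}^{t} \cdot \mathbf{1}_{\{\tau>t\}} \,\Big|\, \mathfrak{F}^{t-T_r}, \mathcal{Z}^{t-T_r}} \leq n^{4/3} \cdot \Bigl(1 - \tfrac{c_1\alpha_2\eps}{2n}\Bigr)^{T_r} \leq n^{4/3} \cdot \exp\!\Bigl(-\tfrac{c_1\alpha_2\eps}{2n} \cdot T_r\Bigr) \leq n^{-2\kappa},
\]
by the choice of $T_r$ in \eqref{eq:gamma_concentration_recovery_interval}. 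Since on $\{\tau > t\}$ we deterministically have $\Gamma_{2,\tilde{\mathcal{P}}}^t > cn$, Markov's inequality yields
\[
\Pro{\tau > t \,\Big|\, \mathfrak{F}^{t-T_r}, \mathcal{Z}^{t-T_r}} \leq \frac{n^{-2\kappa}}{cn} \leq n^{-2\kappa-1}.
\]

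Finally, to transfer to $\mathcal{P}$, note $\Gamma_{2,\tilde{\mathcal{P}}}^r = \Gamma_2^r$ whenever the two processes agree, which by \eqref{eq:modified_process_agrees_with_p_gamma_concentration} and the choice of constants (taking $\kappa\geq 6$ is more than enough to absorb the $T_r = O(n\log n)$ union-bound loss into the $n^{-2\kappa-1}$ budget) happens on an event of probability at least $1 - n^{-2\kappa-1}$ on the whole interval. Combining the two bad-event probabilities by a union bound gives the claimed $1 - 2n^{-2\kappa-1}$.

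The only step that requires care is verifying that \cref{cor:gamma_1_2_expected_change}~(ii) can really be applied iteratively along the trajectory of $\tilde{\mathcal{P}}$ --- this is exactly where Property~1 matters, and where the stopping-time formulation (as opposed to naively multiplying conditional expectations) is needed to make the algebra rigorous. The rest is a routine Markov computation with the constants tuned so that the exponential decay over $T_r$ steps beats the $n^{4/3}$ starting value by a factor of $n^{2\kappa+1+4/3}$.
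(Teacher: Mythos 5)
Your core argument — define the stopping time $\tau$, observe the killed quantity $\Gamma_{2,\tilde{\mathcal{P}}}^{s}\cdot\mathbf{1}_{\{\tau>s\}}$ satisfies an unconditional contraction by \cref{cor:gamma_1_2_expected_change}~(ii), iterate over the $T_r$ steps starting from the bound $\Gamma_{2,\tilde{\mathcal{P}}}^{t-T_r}\leq n^{4/3}$ (from $\mathcal{Z}^{t-T_r}$ via \cref{lem:gamma_1_poly_implies}~(i)), then conclude by a first-moment/Markov bound — is exactly what the paper does; the paper's ``killed'' potential $\widehat{\Gamma}_2^s = \Gamma_2^s\cdot\mathbf{1}_{\cap_{r\in[t-T_r,s]}\{\Gamma_2^r>cn\}}$ is the same object as your $\Gamma_{2,\tilde{\mathcal{P}}}^s\cdot\mathbf{1}_{\{\tau>s\}}$. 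Up to this point your proof is correct and matches the paper's, in fact with a slightly cleaner phrasing.

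Your final paragraph, however, is both unnecessary and quantitatively wrong. It is unnecessary because \cref{lem:gamma_recovery} (like the rest of the chain through \cref{lem:auxiliary_process_concentration}) is a statement about the auxiliary process $\tilde{\mathcal{P}}_{t-T_r}$; the transfer back to $\mathcal{P}$ via \eqref{eq:modified_process_agrees_with_p_gamma_concentration} is performed once, at the very end, inside the proof of \cref{thm:gamma_concentration}, not here. And it is quantitatively wrong because the union-bound loss from \eqref{eq:modified_process_agrees_with_p_gamma_concentration} is $n^{-2\kappa}\cdot T_r$, where $T_r = \big\lceil 2\cdot\tfrac{4/3+2\kappa}{c_1\alpha_2\eps}n\log n\big\rceil = \Oh(n^{7/6}\log n)$ under the precondition $\alpha_2\eps \geq n^{-1/6}$. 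That loss is $\Omega(n^{1-2\kappa})$, far larger than the $n^{-2\kappa-1}$ ``budget'' you claim it fits into; choosing $\kappa\geq 6$ does not help because $\kappa$ appears on both sides of the comparison. You should simply delete the final paragraph: the Markov estimate $\Pro{\tau>t\mid\mathfrak{F}^{t-T_r},\mathcal{Z}^{t-T_r}}\leq n^{-2\kappa-1}$ already proves the lemma (with room to spare against the stated $2n^{-2\kappa-1}$).
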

\begin{proof}
If $t < T_r$, then the statement holds trivially since for $r = 0$, deterministically $\Gamma_2^r = 2n \leq cn$. Otherwise, by the condition $\mathcal{Z}^{t - T_r}$, we have that $\{ \Gamma_1^{t - T_r} \leq cn^{2\kappa + 1} \}$ holds. By \cref{lem:gamma_1_poly_implies}~$(i)$, this implies that $\{ \Gamma_2^{t - T_r} \leq n^{4/3} \}$ also holds. 

By \cref{cor:gamma_1_2_expected_change}~$(ii)$, for any step $s \geq 0$, \begin{equation} \label{eq:conditional_drop_inequality}
	\Ex{\left. \Gamma_2^{s+1} \,\right|\, \mathfrak{F}^{s}, \Gamma_2^s > cn} \leq \Gamma_2^{s} \cdot \left( 1 - \frac{c_1\alpha_2\eps}{2n}\right).
\end{equation}
Next, we define the ``killed'' potential function at steps $s \geq t - T_r$ as
\[
\widehat{\Gamma}_2^{s} := \Gamma_2^{s} \cdot \mathbf{1}_{\bigcap_{ r \in [t - T_r, s]} \left\{ \Gamma_2^{r} > cn \right\} }.
\]
Note that when $\left\{ \Gamma_1^{s} \leq cn \right\}$ then also $\left\{ \widehat{\Gamma}_1^{s} = 0 \right\}$ and $\left\{ \widehat{\Gamma}_1^{s+1} = 0 \right\}$. Therefore, the $\widehat{\Gamma}$ potential unconditionally satisfies the inequality of \cref{eq:conditional_drop_inequality}, that is for any $s \geq t-T_r$
\[
\Ex{\left. \widehat{\Gamma}_2^{s+1} \,\right|\, \mathfrak{F}^{s}} \leq \widehat{\Gamma}_2^{s} \cdot \left(1 - \frac{c_1\alpha_2\eps}{2n}\right).
\]
Inductively applying this for $T_r$ steps, starting with $\widehat{\Gamma}_2^{t - T_r} \leq \Gamma_2^{t - T_r} \leq n^{4/3}$, we get
\begin{align*}
\Ex{\left. \widehat{\Gamma}_2^{t} \,\,\right|\,\, \mathfrak{F}^{t - T_r}, \mathcal{Z}^{t - T_r}} 
 & \leq \Ex{\left. \widehat{\Gamma}_2^{t} \,\,\right|\,\, \mathfrak{F}^{t - T_r}, \Gamma_2^{t - T_r} \leq n^{4/3}} 
 \leq \Big(1 - \frac{c_1\alpha_2\eps}{2n}\Big)^{T_r} \cdot n^{4/3} \\
 & \stackrel{(a)}{\leq} e^{- \frac{1}{2} c_1\alpha_2\eps \cdot \frac{T_r}{n}} \cdot n^{4/3} 
 \stackrel{(b)}{\leq} n^{-2\kappa},
\end{align*}
using in $(a)$ that $1 + u \leq e^u$ (for any $u$) and in $(b)$ that $T_r = \big\lceil 2 \cdot \frac{4/3 + 2\kappa}{c_1 \alpha_2 \eps} \cdot n \log n \big\rceil$. So, by Markov's inequality, \[
 \Pro{\left. \widehat{\Gamma}_2^{t} \leq n  \, \right| \, \mathfrak{F}^{t - T_r}, \mathcal{Z}^{t - T_r}} \geq 1 - n^{-2\kappa-1}.
\]
Since at any step $s$, we have deterministically that $\Gamma_2^s \geq 2n$, we conclude that when $\left\{ \widehat{\Gamma}_2^{t} \leq n \right\}$, then also $\left\{ \widehat{\Gamma}_2^{t} = 0 \right\}$ and so 
\begin{center}
$\mathbf{1}_{\bigcap_{ r \in [t - T_r, t]} \left\{ \Gamma_2^{r} > cn \right\} } = 0$, implying that $\neg \bigcap_{ r \in [t - T_r, t]} \left\{ \Gamma_2^{r} > cn \right\}$ holds
\end{center}
with probability at least $1 - 2n^{-2\kappa}$, concluding the claim.\end{proof}

We will now show that whenever $\Gamma_2^r \in [cn, 2cn]$ holds in for some step $r \in [t - T_r, t]$, then with high probability it $(i)$ remains \textit{small} until step $t$, i.e., $\Gamma_2^s \leq 3cn$ for \textit{all} $s \in [r, t]$ or $(ii)$ it remains small until some step $s \leq t$ where it becomes \textit{very small}, i.e., $\Gamma_2^s \leq cn$.
 
\begin{lem}[Stabilization] \label{lem:gamma_1_2_stabilization}
Consider any step $t \geq 0$ and the auxiliary process $\tilde{\mathcal{P}}_{t - T_r} := \tilde{\mathcal{P}}_{t - T_r}(\mathcal{P})$ for any $\mathcal{P}$ satisfying the preconditions of \cref{thm:gamma_concentration} and with $\mathcal{Z}^{t - T_r}$ being the event defined in \cref{eq:gamma_concentration_z_event_def}. Then, for any step $r \in [t - T_r, t]$ for $T_r$ as defined in \cref{eq:gamma_concentration_recovery_interval}, we have that 
\begin{align*}
& \Pro{\left. \bigcap_{s \in [r, t]} \left\{ \Gamma_2^s \leq 3cn \right\} \cup \bigcup_{s \in (r, t]} \left( \bigcap_{u \in [r, s]} \left\{ \Gamma_2^u \leq 3cn\right\} \cap \left\{ \Gamma_2^s \leq cn \right\} \right) \,\,\right|\,\, \mathcal{Z}^{t - T_r}, \mathfrak{F}^r, \Gamma_2^r \in [cn, 2cn] }
\\ 
& \qquad \geq 1 - n^{-\frac{4}{3} \kappa}.
\end{align*}
\end{lem}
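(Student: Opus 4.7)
The plan is to define a stopping time inside $[r,t]$ and apply a supermartingale concentration inequality (Azuma with a bad event, in the spirit of the Method of Bounded Differences) to the sequence $(\Gamma_2^s)_{s \in [r,t]}$. The two ingredients are already in hand: the conditional negative drift of $\Gamma_2$ from \cref{cor:gamma_1_2_expected_change}~$(ii)$, and a deterministic one-step Lipschitz bound from \cref{lem:gamma_1_poly_implies}~$(ii)$. The latter is available \emph{uniformly} in $s$ thanks to Property~2 of the auxiliary process $\tilde{\mathcal{P}}_{t-T_r}$ under the conditioning on $\mathcal{Z}^{t-T_r}$.

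First I would set $\tau := \min\{\, s \in (r,t] : \Gamma_2^s \leq cn \text{ or } \Gamma_2^s > 3cn\,\}$, with $\tau := t+1$ if no such $s$ exists, and verify that the complement of the event in the statement equals $\{\,\tau \leq t \text{ and } \Gamma_2^\tau > 3cn\,\}$. Indeed, if the first event $A=\bigcap_{s \in [r,t]}\{\Gamma_2^s \leq 3cn\}$ fails, let $s^\ast \in (r,t]$ be the first step at which $\Gamma_2^{s^\ast} > 3cn$ (necessarily $s^\ast > r$ since $\Gamma_2^r \leq 2cn$). If additionally the second event $B$ fails, then at every $s \in (r, s^\ast-1]$ one cannot have $\Gamma_2^s \leq cn$ (as this, together with $\Gamma_2^u \leq 3cn$ for $u \leq s$, would certify $B$), so $\Gamma_2^s > cn$ there; hence $s^\ast = \tau$.

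Next, consider the stopped sequence $Z_s := \Gamma_2^{s \wedge \tau}$ for $s \in [r, t]$. On $\{s < \tau\}$ we have $\Gamma_2^s > cn$, so \cref{cor:gamma_1_2_expected_change}~$(ii)$ applied to $\tilde{\mathcal{P}}_{t-T_r}$ (which satisfies the drop inequality by Property~1) yields $\Ex{Z_{s+1} \mid \mathfrak{F}^s} \leq Z_s$; on $\{s \geq \tau\}$ the increment is zero. Hence $(Z_s)_{s \in [r,t]}$ is a supermartingale with respect to $(\mathfrak{F}^s)$. Moreover, Property~2 combined with \cref{lem:gamma_1_poly_implies}~$(ii)$ gives $|Z_{s+1} - Z_s| \leq n^{1/3}$ deterministically for every $s$. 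On the bad event, $Z_t = \Gamma_2^\tau > 3cn$ while $Z_r = \Gamma_2^r \leq 2cn$, so it suffices to upper bound $\Pro{Z_t - Z_r \geq cn}$.

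An application of Azuma's inequality for supermartingales (\cref{lem:azuma}) over the at most $T_r = \Oh(n \log n)$ increments with bound $n^{1/3}$ yields
\[ \Pro{Z_t - Z_r \geq cn} \leq \exp\!\left(-\frac{(cn)^2}{2\, T_r \cdot n^{2/3}}\right) = \exp\!\left(-\Omega\!\left(\frac{n^{1/3}}{\log n}\right)\right), \]
which sits comfortably below $n^{-(4/3)\kappa}$ for all sufficiently large $n$. The main obstacle is ensuring that the Lipschitz bound $|Z_{s+1} - Z_s| \leq n^{1/3}$ holds uniformly on $[r,t]$, not merely at steps where $\Gamma_1^s$ is already polynomially bounded. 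This is precisely what $\tilde{\mathcal{P}}_{t-T_r}$ and the conditioning on $\mathcal{Z}^{t-T_r}$ were engineered to provide via Properties~1 and~2, and threading them cleanly through the stopping-time argument is the most delicate bookkeeping step.
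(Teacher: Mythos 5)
Your proposal is correct, and it takes a genuinely different route from the paper's. The paper uses a one-sided stopping time, $\tau := \inf\{\tilde r > r : \Gamma_2^{\tilde r} \le cn\}$, so the stopped process $X_r^s := \Gamma_2^{s\wedge\tau}$ is a supermartingale that is frozen only once the potential becomes small; to rule out an upward excursion past $3cn$ the paper must then apply Azuma's inequality to each $s \in [r,t]$ separately and union-bound over $T_r$ steps. You instead use the two-sided stopping time that also freezes the process as soon as it exceeds $3cn$, and you correctly verify that the complement of the event in the statement is exactly $\{\tau \le t \text{ and } \Gamma_2^\tau > 3cn\}$, which on the stopped process is precisely $\{Z_t \ge Z_r + cn\}$ (up to the strict inequality). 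This reduces the whole argument to a single endpoint application of Azuma and dispenses with the union bound over intermediate steps, at the cost of a slightly more delicate identification of the failure event. Both approaches rely on the same ingredients — the conditional drift from \cref{cor:gamma_1_2_expected_change}~$(ii)$ and the uniform Lipschitz bound $|\Gamma_2^{s+1}-\Gamma_2^s| \le n^{1/3}$ guaranteed deterministically by Property~2 of $\tilde{\mathcal{P}}_{t-T_r}$ together with \cref{lem:gamma_1_poly_implies}~$(ii)$ — and both failure probabilities come out super-polynomially small. One small imprecision in your accounting: $T_r$ is $\Oh(n^{7/6}\log n)$, not $\Oh(n\log n)$, since $\alpha_2\eps$ may be as small as $n^{-1/6}$; the correct exponent in the Azuma bound is therefore $\Omega(n^{1/6}/\log n)$ rather than $\Omega(n^{1/3}/\log n)$, which still dominates $n^{-(4/3)\kappa}$ comfortably.
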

\begin{proof}
Consider an arbitrary step $r \in [t - T_r, t]$ such that $\Gamma_2^r \in [cn, 2cn]$. We define the stopping time \[
\tau := \inf \{ \tilde{r} > r \colon \Gamma_2^{\tilde{r}} \leq cn \},
\]
and for any step $s \in [r, t]$,
\[
  X_{r}^{s} := \Gamma_2^{s \wedge \tau}.
\]

The idea behind this definition is that $X_{r}^{s}$ forms a super-martingale. To see this, note that by \cref{cor:gamma_1_2_expected_change}~$(ii)$, for any step $s < \tau$ we have that \begin{align}
\Ex{\left. X_r^{s+1} \, \right| \, \mathfrak{F}^s, s < \tau} \leq \Ex{\left. \Gamma_2^{s+1} \, \right| \, \mathfrak{F}^s, \Gamma_2^{s} \geq cn} \leq \Gamma_2^{s}, \label{eq:gamma_2_supermartingale_A}
\end{align}
and for any step $s \geq \tau$,
\begin{align}
\Ex{\left. X_r^{s+1} \, \right| \, \mathfrak{F}^s, s \geq \tau} = X_r^{s}. \label{eq:gamma_2_supermartingale_B}
\end{align}

Recall that when $\mathcal{Z}^{t - T_r}$ holds, then by Property 2 (see \cref{sec:auxiliary_process_p_and_tilde_p}), it follows that for every step $s \geq t - T_r$ it holds that $\Gamma_1^s \leq cn^{2\kappa + 1}$. Hence, by \cref{lem:gamma_1_poly_implies}~$(ii)$ it also holds that $|\Gamma_2^{s+1} - \Gamma_2^{s}| \leq n^{1/3}$. Thus, applying \cref{lem:azuma} for any $s \in [r,t]$ gives
\[
\Pro{X_{r}^{s} \geq X_{r}^{r} + cn \,\left|\, \mathcal{Z}^{t - T_r}, \mathfrak{F}^r, \Gamma_2^r \in [cn, 2cn] \right.} \leq \exp\left( - \frac{c^2n^2}{10 \cdot T_r \cdot (n^{1/3})^2 } \right) \leq 2 \cdot T_r \cdot n^{-2\kappa},
\]
using that $T_r = \Oh(n \cdot n^{1/6} \cdot \log n)$. Also recall that for the starting point $r$, $X_{r}^{r} = \Gamma_2^r \leq 2cn$. Therefore, we can conclude that 
\[
\Pro{X_r^s > 3cn \,\left|\, \mathcal{Z}^{t - T_r}, \mathfrak{F}^r, \Gamma_2^r \in [cn, 2cn] \right.} \leq 2 \cdot T_r \cdot n^{-2\kappa}.
\]
By taking the union bound over all $s \in [r, t]$, we get 
\[
\Pro{\left. \bigcap_{s = r}^{t} \left\{ X_r^s \leq 3cn \right\} \,\right|\, \mathcal{Z}^{t - T_r}, \mathfrak{F}^r, \Gamma_2^r \in [cn, 2cn] } \geq 1 - 3 \cdot T_r^2 \cdot n^{-2\kappa} \geq 1 - n^{-\kappa},
\]
using that $\kappa \geq 6$. Now, assuming that $\bigcap_{s = r}^{t} \left\{ X_r^s \leq 3cn \right\}$ holds, we consider the following cases based on the stopping time $\tau$:
\begin{itemize}
  \item \textbf{Case 1 [$\tau > t$]:} Then for all steps $u \in [r, t]$, we have that $\Gamma_2^u = X_r^u \leq 3cn$.
  \item \textbf{Case 2 [$\tau \leq t$]:} Then for all steps $u \in [r, \tau]$, we have that $\Gamma_2^u = X_r^u \leq 3cn$ and $\Gamma_2^{\tau} \leq cn$. So the following event holds for $s = \tau > r$,
  \[
  \bigcup_{s \in (r, t]} \left( \bigcap_{u \in [r, s]} \left\{ \Gamma_2^u \leq 3cn\right\} \cap \left\{ \Gamma_2^s \leq cn \right\} \right).
  \]
\end{itemize}
Hence, this concludes the claim.\end{proof}

\subsection{Completing the proof of Theorem~\ref{thm:gamma_concentration}} \label{sec:completing_high_prob_beta_potential_proof}

To complete the proof of \cref{thm:gamma_concentration}, we will first prove the equivalent statement for the auxiliary process $\tilde{\mathcal{P}}_{t - T_r}$.
 
\begin{lem} \label{lem:auxiliary_process_concentration}
Consider any step $t \geq 0$ and the auxiliary process $\tilde{\mathcal{P}}_{t - T_r} := \tilde{\mathcal{P}}_{t - T_r}(\mathcal{P})$ for any $\mathcal{P}$ satisfying the preconditions of \cref{thm:gamma_concentration} and with $\mathcal{Z}^{t - T_r}$ being the event defined in \cref{eq:gamma_concentration_z_event_def}. Then, for $c := 2 \cdot \frac{c_2}{c_1} \geq 2$, for any step $t \geq 0$,
\[
\Pro{\left. \Gamma_2^t \leq 3cn \,\right|\, \mathcal{Z}^{t - T_r}} \geq 1 - n^{-\kappa}.
\]
\end{lem}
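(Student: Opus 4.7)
The plan is to combine the recovery lemma (\cref{lem:gamma_recovery}) and the stabilization lemma (\cref{lem:gamma_1_2_stabilization}) to conclude the high-probability bound $\Gamma_2^t \leq 3cn$. The rough picture is: recovery guarantees at least one ``good'' step in the interval $[t - T_r, t]$ where the potential is very small, and stabilization guarantees that from such a good step onwards, the potential cannot drift too far up. The main obstacle is matching the conclusion of recovery (some step with $\Gamma_2 \leq cn$) to the hypothesis of stabilization ($\Gamma_2 \in [cn, 2cn]$), and then chaining the two in a way that handles the randomness of the ``good'' step.

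The plan is as follows. First, apply \cref{lem:gamma_recovery} to obtain, with probability at least $1 - 2n^{-2\kappa-1}$, a step $r^{\star} \in [t - T_r, t]$ with $\Gamma_2^{r^{\star}} \leq cn$. Assuming this event holds, let $r^{\star}$ be the \emph{largest} such step. If $r^{\star} = t$ we are already done, so assume $r^{\star} < t$ and set $r := r^{\star} + 1$. By maximality, $\Gamma_2^{r} > cn$. To upper bound $\Gamma_2^{r}$, I invoke Property 2 of the auxiliary process $\tilde{\mathcal{P}}_{t - T_r}$ which, under $\mathcal{Z}^{t - T_r}$, gives $\Gamma_1^{s} \leq cn^{2\kappa+1}$ for every $s \geq t - T_r$, and hence (by \cref{lem:gamma_1_poly_implies}~$(ii)$) the single-step change satisfies $|\Gamma_2^{s+1} - \Gamma_2^{s}| \leq n^{1/3}$. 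Therefore $\Gamma_2^{r} \leq \Gamma_2^{r^{\star}} + n^{1/3} \leq cn + n^{1/3} \leq 2cn$ for $n$ large, so $\Gamma_2^{r} \in (cn, 2cn]$, which suffices to apply stabilization at $r$ (the proof of \cref{lem:gamma_1_2_stabilization} only uses $X_r^r \leq 2cn$).

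Applying \cref{lem:gamma_1_2_stabilization} at this $r$ yields, with probability at least $1 - n^{-4\kappa/3}$, one of two disjuncts: either $\Gamma_2^s \leq 3cn$ for all $s \in [r, t]$, in which case $\Gamma_2^t \leq 3cn$ and we are done, or there exists $s \in (r, t]$ with $\Gamma_2^s \leq cn$, which contradicts the maximality of $r^{\star}$ (since $s > r > r^{\star}$). Because $r^{\star}$ is a random step in $[t - T_r, t]$, I will take a union bound of the stabilization failure over the $T_r + 1 = O(n \log n)$ possible values, contributing $O(n \log n) \cdot n^{-4\kappa/3}$ to the failure probability. Together with the recovery failure $2n^{-2\kappa-1}$, and using $\kappa \geq 6$ so that $n^{-4\kappa/3} \leq n^{-8}$, the total failure probability is comfortably at most $n^{-\kappa}$. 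The only subtle step is the ``bridging'' of the two lemmas via the bounded-difference property; everything else is bookkeeping with union bounds.
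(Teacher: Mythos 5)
Your proposal is correct and uses the same two key ingredients (Lemma~\ref{lem:gamma_recovery} for recovery and Lemma~\ref{lem:gamma_1_2_stabilization} for stabilization), but it organizes the argument differently from the paper. The paper selects the \emph{first} step $\rho \in [t - T_r, t]$ with $\Gamma_2^{\rho} \leq cn$ via a stopping time, decomposes $[\rho, t]$ into ``red/green'' phases according to whether $\Gamma_2 > cn$, applies stabilization at the first step of each red phase, and union-bounds over all steps before summing over the possible values of $\rho$. You instead take the \emph{last} step $r^{\star}$ with $\Gamma_2^{r^{\star}} \leq cn$, apply stabilization a single time at $r = r^{\star}+1$, and observe that the second disjunct in its conclusion (some later $s$ with $\Gamma_2^s \leq cn$) is impossible by maximality, so the first disjunct must fire and gives $\Gamma_2^t \leq 3cn$. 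This collapses the paper's phase-by-phase bookkeeping into one clean contradiction. The measurability concern is handled correctly: you do not condition on the $\mathfrak{F}^t$-measurable event $\{r^\star = r-1\}$ but on the $\mathfrak{F}^r$-measurable event $\{\Gamma_2^r \in [cn,2cn]\}$, and simply union-bound over $r$, which is exactly the point where both arguments pay the same $T_r$ factor. Your ``bridge'' from $\Gamma_2^{r^\star} \leq cn$ to $\Gamma_2^{r} \leq 2cn$ uses the additive bounded-difference bound of Lemma~\ref{lem:gamma_1_poly_implies}~$(ii)$ (valid on $\mathcal{Z}^{t-T_r}$ via Property~2); the paper uses the cheaper multiplicative estimate $\Gamma_2^s \leq e^{d\alpha_2}\Gamma_2^{s-1} \leq 2\Gamma_2^{s-1}$, but both give $\Gamma_2^r \leq 2cn$, so either is fine.

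One minor numerical slip: you write $T_r + 1 = O(n \log n)$, but from the definition $T_r = \lceil 2 \cdot \tfrac{4/3 + 2\kappa}{c_1\alpha_2\eps} n\log n\rceil$ with $\alpha_2\eps$ only lower-bounded by $n^{-1/6}$, we have $T_r = O(n^{7/6}\log n)$. This does not affect your conclusion, since $(T_r + 1)\cdot n^{-4\kappa/3} \leq n^{7/6+o(1) - 4\kappa/3} \ll n^{-\kappa}$ holds comfortably for $\kappa \geq 6$, but it is worth correcting the order of $T_r$.
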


\begin{proof}
The proof will be concerned with steps $\in [t - T_r, t]$. First, by applying
 \cref{lem:gamma_recovery}, 
 \begin{equation} \label{eq:base_case_starting_point}
\Pro{\left. \bigcup_{r_0 \in [t - T_r, t]} \left\{ \Gamma_2^{r_0} \leq cn \right\} \,\right|\, \mathfrak{F}^{t - T_r}, \mathcal{Z}^{t - T_r} } \geq 1 - 2n^{-2\kappa-1}.
 \end{equation}
Consider now an arbitrary step $r_0 \in [t - T_r,t]$ and assume that $\Gamma_2^{r_0} \leq cn$. We partition the time-steps $s \in [r_0, t]$ into red and green phases (see \cref{fig:gamma_2_small_large_regions}):
\begin{enumerate}\itemsep-4pt
    \item \textbf{Red Phase:} The step $s$ is in a red phase if $\Gamma_2^{s} > cn$.
    \item \textbf{Green Phase:} Otherwise, the process is in a green phase.
\end{enumerate}

Note that by the choice of $r_0$, the process is at a green phase at time $r_0$. Then each green phase may be preceded by a red phase. Obviously, for each step $s$ in a green phase, we have $\Gamma_2^{s} \leq cn$. When $s$ is the first step of a red phase after a green phase, it follows that $\Gamma_2^{s} \leq e^{d\alpha_2} \cdot \Gamma_2^{s-1} \leq 2 \cdot \Gamma_2^{s-1} \leq 2cn$, since $0<\alpha_2 <1/(2d)$.

Let $\mathcal{R}^s$ be the event that step $s$ is the first step of a red phase and let $\mathcal{A}^s$ be the event that all steps $u \geq s$ in the same phase as $s$, satisfy $\Gamma_2^u \leq 3cn$. By \cref{lem:gamma_1_2_stabilization}, we have that
\[
\Pro{\mathcal{A}^s \,\left|\, \mathcal{Z}^{t - T_r}, \mathfrak{F}^s, \mathcal{R}^s \right.} \geq 1 - n^{-\frac{4}{3}\kappa}.
\]
For any events $\mathcal{E}_1 \neq \emptyset$ and $\mathcal{E}_2$, we have that $
\Pro{\mathcal{E}_2 \cup \neg \mathcal{E}_1} \geq 1 - \Pro{\neg \mathcal{E}_2 \mid \mathcal{E}_1}
$
and hence\[
\Pro{\mathcal{A}^s \cup \neg \mathcal{R}^s \,\left|\, \mathcal{Z}^{t - T_r}, \mathfrak{F}^s \right.} \geq 1 - n^{-\frac{4}{3}\kappa}.
\]
By taking the union-bound over all steps $s$ in $[r_0, t]$, we have that\[
\Pro{\left. \bigcap_{s \in [r_0, t]} \left(\mathcal{A}^s \cup \neg \mathcal{R}^s \right)\,\,\right|\,\, \mathcal{Z}^{t - T_r}, \mathfrak{F}^{r_0}, \Gamma_2^{r_0} \leq cn} \geq 1 - n^{-\frac{4}{3}\kappa} \cdot T_r \geq 1 - \frac{1}{2}n^{-\kappa}.
\]
When $\cap_{s \in [r_0, t]} \left( \mathcal{A}^s \cup \neg \mathcal{R}^s \right)$ holds, all steps $u$ in all red phases satisfy $\Gamma_2^u \leq 3cn$. Thus, since steps in green phases are good by definition, we have that
\[
\Pro{\left. \bigcap_{s \in [r_0, t]} \left\{ \Gamma_2^s \leq 3cn \right\} \,\,\right|\,\, \mathcal{Z}^{t - T_r}, \mathfrak{F}^{r_0}, \Gamma_2^{r_0} \leq cn} \geq 1 - n^{-\frac{4}{3} \kappa} \cdot T_r \geq 1 - \frac{1}{2} n^{-\kappa}.
\]

Hence, by defining the stopping time $\rho := \inf\{ r_0 \geq t - T_r  : \Gamma_2^{r_0} \leq cn \}$, we have
\begin{align*}
\Pro{\Gamma_2^{t} \leq 3cn}
 & \geq \sum_{r_0 = t - T_r}^t \Pro{\left.\Gamma_2^{t} \leq 3cn \, \right| \, \mathcal{Z}^{t - T_r}, \rho = r_0} \cdot \Pro{\rho = r_0 \,\left|\, \mathcal{Z}^{t - T_r} \right.} \\
 & \geq \sum_{r_0 = t - T_r}^t \Pro{\left.\bigcap_{s \in [r_0, t]} \left\{ \Gamma_2^s \leq 3cn \right\} \, \right| \, \mathcal{Z}^{t - T_r}, \mathfrak{F}^{r_0}, \Gamma_2^{r_0} \leq cn} \cdot \Pro{\rho = r_0 \,\left|\, \mathcal{Z}^{t - T_r} \right.} \\
 & \geq \left( 1 - \frac{1}{2}n^{-\kappa}\right) \cdot \sum_{r_0 = t - T_r}^t \Pro{\rho = r_0 \,\left|\, \mathcal{Z}^{t - T_r} \right.} \\
 & = \left( 1 - \frac{1}{2}n^{-\kappa}\right) \cdot \Pro{\rho \leq t} \\
 & \stackrel{(a)}{\geq} \left( 1 - \frac{1}{2}n^{-\kappa}\right) \cdot \left(1 - 2n^{-2\kappa}\right) \geq 1 - n^{-\kappa},
\end{align*}
using \cref{eq:base_case_starting_point} in $(a)$ and so, the claim follows.\end{proof}

\begin{figure}[t]
    \centering
\begin{tikzpicture}[
  IntersectionPoint/.style={circle, draw=black, very thick, fill=black!35!white, inner sep=0.05cm}
]

\definecolor{MyBlue}{HTML}{9DC3E6}
\definecolor{MyYellow}{HTML}{FFE699}
\definecolor{MyGreen}{HTML}{E2F0D9}
\definecolor{MyRed}{HTML}{FF9F9F}
\definecolor{MyDarkRed}{HTML}{C00000}

\node[anchor=south west] (plt) at (-0.1, 0) {\includegraphics[width=15.40cm]{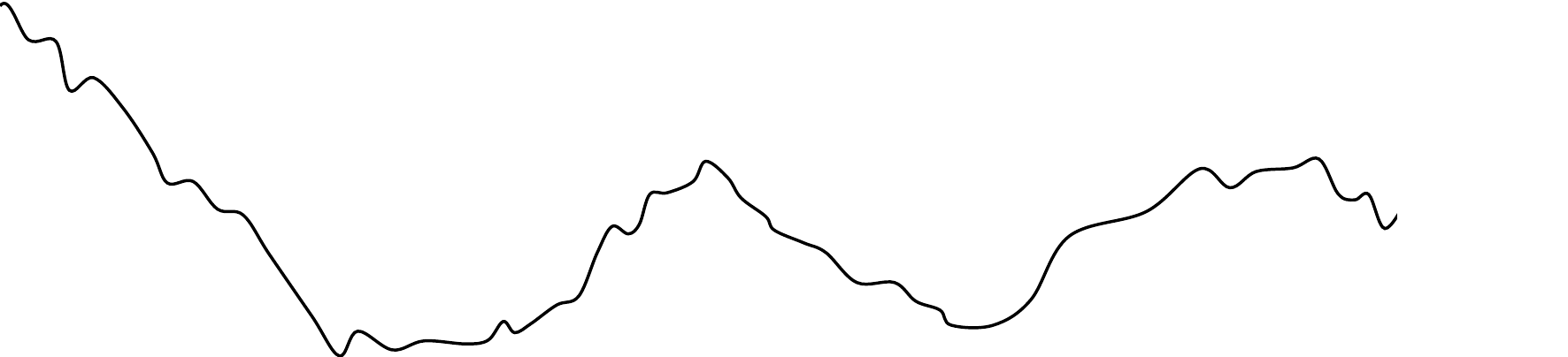}};

\def\xEnd{14}
\def\xLast{13.82}
\def\yLast{4.3}
\def\cn{1}
\def\yBottom{-0.8}

\draw[dashed, thick] (0,1) -- (\xLast, 1);
\draw[dashed, thick] (0, 1.8) -- (\xLast, 1.8);

\node[anchor=east] at (0, \cn) {$cn$};
\node[anchor=east] at (0, 1.8) {$2cn$};
\node[anchor=east] at (0, 3.7) {$\Gamma_2^r$};

\node[anchor=west] at (\xEnd, \yBottom) {$r$};

\def\tA{2.75}
\def\tB{5.82}
\def\tC{8.3}
\def\tD{10.3}
\def\tM{13.51}
\def\tE{14.36}

\newcommand{\drawLine}[3]{
\draw[dashed, very thick, #3] (#1, \yBottom) -- (#1, \yLast);
\draw[very thick] (#1, \yBottom) -- (#1, \yBottom -0.2);
\node[anchor=north] at (#1, \yBottom -0.3) {#2};}

\newcommand{\drawPoint}[3]{
\drawLine{#1}{#2}{#3}
\node[IntersectionPoint] at (#1, \cn) {};}

\draw[very thick] (0, \yBottom) -- (0, \yBottom -0.2);
\node[anchor=north] at (0, \yBottom -0.3) {$m - T_r$};

\drawPoint{\tA}{$s_0$}{black!30!white};
\drawPoint{\tB}{$\tau_1$}{black!30!white};
\drawPoint{\tC}{$s_1$}{black!30!white};
\drawPoint{\tD}{$\tau_2$}{black!30!white}; 
\drawLine{\tM}{\textcolor{MyDarkRed}{$m$}}{MyDarkRed};

\newcommand{\drawRegionRect}[3]{
\node[anchor=south west, rectangle, fill=#3, minimum width=#2 cm- #1 cm, minimum height=0.3cm] at (#1, \yBottom) {};}

\drawRegionRect{\tA}{\tB}{MyGreen}
\drawRegionRect{\tB}{\tC}{MyRed}
\drawRegionRect{\tC}{\tD}{MyGreen}
\drawRegionRect{\tD}{13.70}{MyRed} 

\newcommand{\drawBrace}[4]{
\draw [
    thick,
    decoration={
        brace,
        raise=0.5cm,
        amplitude=0.2cm
    },
    decorate
] (#2, \yBottom - 0.3) -- (#1, \yBottom - 0.3) 
node [anchor=north,yshift=-0.7cm,#4] {#3}; }

\drawBrace{0}{\tA}{Recovery by Lemma 5.9}{pos=0.5};

\draw[->, ultra thick] (0,\yBottom) -- (0, 4.5);
\draw[->, ultra thick] (0,\yBottom) -- (\xEnd, \yBottom);

\end{tikzpicture}
    \caption{Green phases indicate steps where $\Gamma_2^{r}$ is small and red phases indicate steps for which the potential is large and drops (in expectation). In \cref{lem:gamma_1_2_stabilization}, we show that $\Gamma_2 \leq 3cn$  at every point within a red phase using a concentration inequality.}
    \label{fig:gamma_2_small_large_regions}
\end{figure}

Now, we complete the proof of \cref{thm:gamma_concentration}.

\begin{proof}[Proof of \cref{thm:gamma_concentration}]
Consider the auxiliary process $\tilde{\mathcal{P}}_{t - T_r}$ and let $\Gamma_{2, \tilde{\mathcal{P}}}$ be its $\Gamma_2$ potential. Then, by \cref{lem:auxiliary_process_concentration} we have that\begin{align} \label{eq:aux_process_gamma_2_concentration}
\Pro{\left. \Gamma_{2, \tilde{\mathcal{P}}}^t \leq 3cn \,\right|\, \mathcal{Z}^{t - T_r}} \geq 1 - n^{-\kappa}.
\end{align}
By \cref{cor:gamma_1_2_expected_change} and Markov's inequality, since $\tilde{\mathcal{P}}_{t - T_r}$ and $\mathcal{P}$ agree for every step $s \leq t - T_r$, we have that \begin{align}\label{eq:aux_process_z_event_holds}
\Pro{\mathcal{Z}^{t - T_r}} = \Pro{\Gamma_1^{t - T_r} \leq \frac{1}{2}cn^{2\kappa + 1}} \geq 1 - 2n^{-2\kappa}.
\end{align}
Hence, by combining \cref{eq:aux_process_gamma_2_concentration} and \cref{eq:aux_process_z_event_holds}, we have that\[
\Pro{\Gamma_{2, \tilde{\mathcal{P}}}^t \leq 3cn} \geq 1 - n^{-\kappa}.
\]
As mentioned in \cref{eq:modified_process_agrees_with_p_gamma_concentration}, \Whp~the process $\tilde{\mathcal{P}}$ agrees with $\mathcal{P}$ in $[t - T_r, t]$, and hence\[
\Pro{\Gamma_2^t \leq 3cn} 
  \geq \Pro{ \left\{ \Gamma_{2, \tilde{\mathcal{P}}}^t \leq 3cn \right\} \cap \bigcap_{s \in [t - T_r, t]} \left\{ y^s = y_{\tilde{\mathcal{P}}}^s\right\} }
  \geq 1 - n^{-2\kappa} - \frac{1}{2} n^{-2\kappa} \cdot T_r \geq 1 - n^{-\kappa},
\]
using that $T_r = \Oh(n \cdot n^{1/6} \cdot \log n)$.
\end{proof}

\section{Proof of the Layered Induction Step} \label{sec:layered_induction}

To keep the proof self-contained, some definitions and explanations from \cref{sec:outline_layered} are repeated. Our goal is to prove the following theorem.

{\renewcommand{\thethm}{\ref{thm:caching_log_log_n}}
	\begin{thm}[Restated]
\CachingLogLogN
	\end{thm} }
	\addtocounter{thm}{-1}
 
In what follows all potentials are defined in relation to the \Memory process with an $(a, b)$-biased sampling distribution, for constants $a, b \geq 1$. Furthermore all results in this section hold under this assumption.
\paragraph{Full Potentials:} We will be using layered induction over super-exponential potential functions, similar to the one used in \cite[Section 6]{LS22Queries} and \cite[Section 6]{LS22Noise}, but with some differences (see discussion on page~\pageref{sec:difference_to_previous_approaches}). We now define the super-exponential potential functions for $1 \leq j \leq j_{\max} - 1$,
\[
\Psi_j^t := \sum_{i = 1}^n \Psi_{j, i}^t := \sum_{i = 1}^n e^{\alpha_1 \cdot v^j \cdot (y_i^t - z_j)^+},
\]
and
\[
\Phi_j^t := \sum_{i = 1}^n \Phi_{j, i}^t := \sum_{i = 1}^n e^{\alpha_2 \cdot v^j \cdot (y_i^t - z_j)^+},
\]
where $x^+=\max\{x,0\}$ and throughout the remainder of this paper we set \begin{equation}\label{eq:fixingconsts} z_j := \frac{5v}{\alpha_2} \cdot j,\quad  v := \max\{\log (2Cb), 36b\}, \quad  C := \max\{ 6c, 6 \}, \quad j_{\max} = \log_v (\frac{\alpha_2}{2v}\log n),\end{equation} and $\alpha_1, \alpha_2, c > 0$, where $\alpha_1 = 6 \cdot 14 \cdot \alpha_2$, are as defined in Theorems \ref{thm:gamma_concentration} and \ref{simp} from \cref{sec:base_case}. Our aim will be to prove that $\Phi_{j_{\max}-1}^t = \Oh(n)$, which will imply that \[\max_{i \in [n]} y_i^t \leq z_{j_{\max}-1} + \frac{4v^2}{\alpha_2^2} = \Oh(\log \log n).\]

\Folded
\label{def:k_j}
 
The folded process has a structure consisting of (semi) independent blocks of a fixed length, this helps with the analysis when proving the upper bound in the the layered induction. The flexibility in choosing which bin is allocated to will allow us to show that the \Memory process is an instance of the folded process for some series of allocation choices. See \cref{fig:folded_process_2} for an illustration of the folded process. 
\begin{lem}\label{lem:memoryfold}
\Memory is an instance of the folded process defined above.
\end{lem}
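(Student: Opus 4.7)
The plan is to show that \Memory, executed on its own random sample sequence, is always one admissible instantiation of the nondeterministic folded process. Concretely, I would define the folded process on the same probability space as \Memory by (i) using \Memory's next sample whenever the folded process requests a fresh sample (at a round start, or at a substep inside Case~B), and (ii) resolving each ``allocate one ball to an arbitrary bin $\ell$'' by setting $\ell$ equal to the actual bin that \Memory increments. Under this coupling the folded process's randomness agrees with \Memory's, so the lemma reduces to checking that at every step, the bin \Memory increments satisfies the admissibility constraint imposed by the branch (Case~A or Case~B) that the folded process is in.

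For Case~A the check is immediate: here the sampled bin $i$ has $y_i^{r,0}\ge z_{j-1}+\tfrac{2v}{\alpha_2}$, and by definition \Memory increments the lesser loaded of $i$ and its cache $b$, so $y_\ell^{r,0}\le y_i^{r,0}$, as Case~A requires. The real work lies in Case~B, where I must establish the cache invariant $y_b^{r,s}\le z_{j-1}+\tfrac{4v}{\alpha_2}$ for every substep $s$ of the round; since \Memory's chosen $\ell$ is no heavier than the cache, this immediately yields $y_\ell^{r,s}\le z_{j-1}+\tfrac{4v}{\alpha_2}$. I would prove the cache invariant by induction along the substeps, using two facts: first, the round-starting substep $0$ of Case~B draws a sample with $y_i^{r,0}< z_{j-1}+\tfrac{2v}{\alpha_2}$, which via \Memory's cache-update rule forces the cache to be light (normalized load at most $z_{j-1}+\tfrac{2v}{\alpha_2}$, up to the $1-1/n$ increment) immediately after substep $0$; second, whenever the round survives the end of a phase, Condition~1 guarantees that at least one substep of that phase sampled a bin below $z_{j-1}+\tfrac{2v}{\alpha_2}$, which likewise pins the cache back down to $\le z_{j-1}+\tfrac{2v}{\alpha_2}$ at that substep.

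The cache can grow by at most one unit per substep between two such ``resets'', and the largest possible gap between consecutive resets is bounded by two phases, i.e.\ $2v/\alpha_2$ substeps (the worst case being a reset at the first substep of one phase and at the last substep of the next). Adding this cumulative growth to the starting value yields the target bound $z_{j-1}+\tfrac{4v}{\alpha_2}$ on the cache, and hence on $y_\ell^{r,s}$, so \Memory's behavior is always admissible in Case~B. The main obstacle I anticipate is the phase-boundary bookkeeping: the light sample within a phase can appear anywhere, so one must cleanly combine the worst-case placement across two adjacent phases, which is precisely why the folded process carries an extra $v/\alpha_2$ of slack. A secondary point is the cache inherited at a round start from the previous round, which may be heavy after a Case~A round; this is handled by observing that the very first substep of a Case~B round overwrites or undercuts the cache regardless of its previous contents.
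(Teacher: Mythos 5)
Your proof is correct and takes essentially the same approach as the paper. Both handle Case~A by noting that \Memory allocates to the lesser loaded of the sample and the cache, and Case~B by locating a light sample within the last $2v/\alpha_2$ substeps (from the Case~B entry condition at the start of the round, or from the negation of Condition~1 in the previous phase) and bounding how much the cache can grow since then; your cache-invariant formulation with ``resets'' is just a slightly more explicit packaging of the same bound $y_{\ell}^{r,s} \leq y_i^{r,u} + (s-u) \leq z_{j-1} + \frac{4v}{\alpha_2}$ that the paper derives directly.
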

\begin{proof}
Recall that any step $t \geq 1$ of the original \Memory step belongs to a unique round $r \geq 0$ and substep $s \geq 1$.

In Case $A$ (i.e., $s=1$), \Memory samples a bin $i=i(r)$ and allocates either to the bin $b$ or $i=i(r)$, i.e., $\ell \in \{ i, b \}$ and thus $y_{\ell}^{r,0} \leq y_{i}^{r,0}$, as needed.

For Case $B$, we have $s \geq 1$ and substep $s$ can only be part of a phase in round $r$, 
if in the previous (or current) phase we sampled a light bin, i.e,
there is a substep $u \in [s-2 \cdot \frac{v}{\alpha_2},s]$ and bin sample $i=i(r,u)$  such that
\[
 y_{i}^{r,u} < z_{j-1} + \frac{2v}{\alpha_2}.
\]
Hence, as we store the least loaded of the sampled bin and cached bin in each round, we have access to a bin with load at most that of $i$ (plus however many balls have been places since sampling $i$) in each substep of the phase. Thus,
\[
 y_{\ell}^{r,s} \leq y_{i}^{r,s} \leq y_{i}^{r,u} + s-u \leq z_{j-1} + \frac{2v}{\alpha_2} + 2 \cdot \frac{v}{\alpha_2} = z_{j-1} + \frac{4v}{\alpha_2}. \qedhere
\] 
\end{proof}

\begin{figure}
\FoldedFigure
 \label{fig:folded_process_2}
\end{figure}

\paragraph{Partial Potentials:} For the \textit{recovery phase}, i.e., showing that \Whp~at some \textit{step} $s$ in an interval of $n \cdot \polylog(n)$ length we have $\Phi_j^s \leq Cn$ (\cref{lem:exists_s_st_phi_linear_whp}), we will need faster drop rates for the potentials, so we will be using the following ``partial'' potentials defined only over the heavy bins
\[
\dot{\Psi}_j^t := \sum_{i = 1}^n \dot{\Psi}_{j, i}^t := \sum_{i : y_i^t \geq z_j} e^{\alpha_1 \cdot v^j \cdot (y_i^t - z_j)},
\quad \text{ and } \quad
\dot{\Phi}_j^t := \sum_{i = 1}^n \dot{\Phi}_{j, i}^t := \sum_{i : y_i^t \geq z_j} e^{\alpha_2 \cdot v^j \cdot (y_i^t - z_j)},
\]
with parameters $\alpha_1, \alpha_2, z_j > 0$ as defined above. In contrast to $\Phi_j$ (and $\Psi_j$) which always have value at least $n$ (since each bin contributes at least $1$), $\dot{\Phi}_j$ (and $\dot{\Psi}_j$) could be as small as $0$. Also, $\dot{\Phi}_j^t \leq \Phi_j^t \leq \dot{\Phi}_j^t + n$ (and $\dot{\Psi}_j^t \leq \Psi_j^t \leq \dot{\Psi}_j^t + n$).

\paragraph{Potentials over Rounds:} We also define versions of the $\Phi_j$ and $\Psi_j$ potentials indexed by a round $r \geq 0$ (note the starting step of the first round may not be equal to $0$):
\[
\overline{\Psi}_j^r := \sum_{i = 1}^n e^{\alpha_1 \cdot v^j \cdot (y_i^{r, 0} - z_j)^+},
\quad \text{ and } \quad
\overline{\Phi}_j^r := \sum_{i = 1}^n e^{\alpha_2 \cdot v^j \cdot (y_i^{r, 0} - z_j)^+}.
\]
Similarly, we define the partial potential functions over rounds
\[
\ddot{\Psi}_j^r := \sum_{i : y_i^t \geq z_j} e^{\alpha_1 \cdot v^j \cdot (y_i^{r, 0} - z_j)},
\quad \text{ and } \quad
\ddot{\Phi}_j^r := \sum_{i : y_i^t \geq z_j} e^{\alpha_2 \cdot v^j \cdot (y_i^{r, 0} - z_j)}.
\]

\paragraph{Differences to Previous Applications:}\label{sec:difference_to_previous_approaches} These potentials are similar in form to the ones used in \cite[Section 6]{LS22Queries} for $\Theta(\log \log n)$ quantiles and in \cite[Section 6]{LS22Noise} for $\Theta(1)$ additive noise. However, the analysis is different as the potentials drop in expectation only when considering a sufficiently large number of steps (e.g., the folded version of the process). For example, starting from a state where the cache has load at least $z_j + 1$, the potential $\Phi_j$ will increase in expectation over one step. Considering rounds consisting of several balls, introduces several challenges:
\begin{itemize}
  \item \textbf{Issue 1:} In each round we could allocate as many as $k_j$ balls, which could be $\Omega(n^{\eps})$. This would mean, that starting from a round $r_0$ with $\Phi_j^{r_0} = \Oh(e^{\frac{1}{2} \log^3 n})$ and having a drop inequality similar to that in \cite{LS22Queries,LS22Noise}, e.g., \[
  \Ex{\left. \overline{\Phi}_j^{r+1} \,\right|\, \mathfrak{F}^r, \overline{\Phi}_{j-1}^r \leq Cn} \leq \overline{\Phi}_j^r \cdot \Big(1 - \frac{1}{n} \Big) + 2,
  \]
  we may need $\Omega(n \log^3 n)$ rounds to prove that the potential becomes $\Oh(n)$ in expectation. In these rounds, we could allocate $\Omega(n^{1 + \eps} \cdot \log^3 n)$ balls ($k_j \cdot \frac{v}{\alpha}$ in each round) and so the length of the interval of the entire analysis would need to be $\omega(n \cdot \polylog(n))$. However, it would not be possible to tolerate a $\poly(n)$ probability decrease in each layer of the layered induction (\cref{lem:new_inductive_step}), as we have $j_{\max} = \Theta(\log \log n)$ layers.
  
  \textbf{Solution:} Define the potential function $\ddot{\Phi}_j$ over just the bins with normalized load at least $z_j$. For this potential function, we can show that:
  \[
  \Ex{\left. \ddot{\Phi}_j^{r+1} \,\right|\, \mathfrak{F}^r, \overline{\Phi}_{j-1}^r \leq 2Cn} \leq \ddot{\Phi}_j^r \cdot \Big(1 - \frac{e^{v^{j+1}}}{n} \Big) + e^{-v^j}.
  \]
  This means that starting from a round $r_0$ with $\Phi_j^{r_0} = \Oh(e^{\frac{1}{2} \log^3 n})$, we need to wait only for $n \cdot e^{-v^{j+1}} \cdot \log^3 n$ rounds, so at most $n \cdot \log^6 n$ steps (which is at most $\mathcal{O}(n \cdot \polylog(n))$), for the potential to become $\Oh(n)$.
  
  \item \textbf{Issue 2:} Unfortunately, for stabilization, i.e., showing that \Whp~$\Phi_j^s = \Oh(n)$ for $n \cdot \polylog(n)$ steps, we cannot use just the partial potential function $\ddot{\Phi}_j$. The reason is that in a single round, the potential could change by $\Omega(n)$, so the bounded difference inequality cannot be applied. 
  
  Consider the case where there are $n \cdot e^{-v^j}$ bins (for $j = 1$), whose load is $z_j + \frac{k_j}{n}$. Then in a single round, we could allocate $k_j$ balls only in light bins, so that the potential becomes $0$. Hence, the potential decreases by $n \cdot e^{-v^j} \cdot e^{\alpha_2 k_j / n} = \Omega( n )$, for $j = 1$. This means that we can no longer apply the concentration inequality, as the bounded difference condition is not strong enough.
  
  \textbf{Solution:} For this part of the analysis, we use the full potential $\overline{\Phi}_j$ and a stopping time to guarantee that the number of balls allocated at every application of the concentration inequality is at most $n/\log^2 n$. This allows us to apply the smoothness argument (\cref{clm:phi_j_does_not_drop_quickly}) to argue that the potential is $\Oh(n)$ in every step of the interval.
\end{itemize}

\subsection{Proof Outline of Lemma~\ref{lem:new_inductive_step}.}

We will now give a summary of the main technical steps in the proof of \cref{lem:new_inductive_step} (an illustration of the key steps is shown in \cref{fig:proof_outline}). 

First, fix any $1 \leq j \leq j_{\max} - 1$. Then the induction hypothesis ensures that $\Phi_{j-1}^{u} = \Oh(n)$ for all steps $u \in [\beta_{j-1}, t + n \log^8 n]$, where $\beta_{j-1} := t + 2(j-1)n \log^6 n$. 

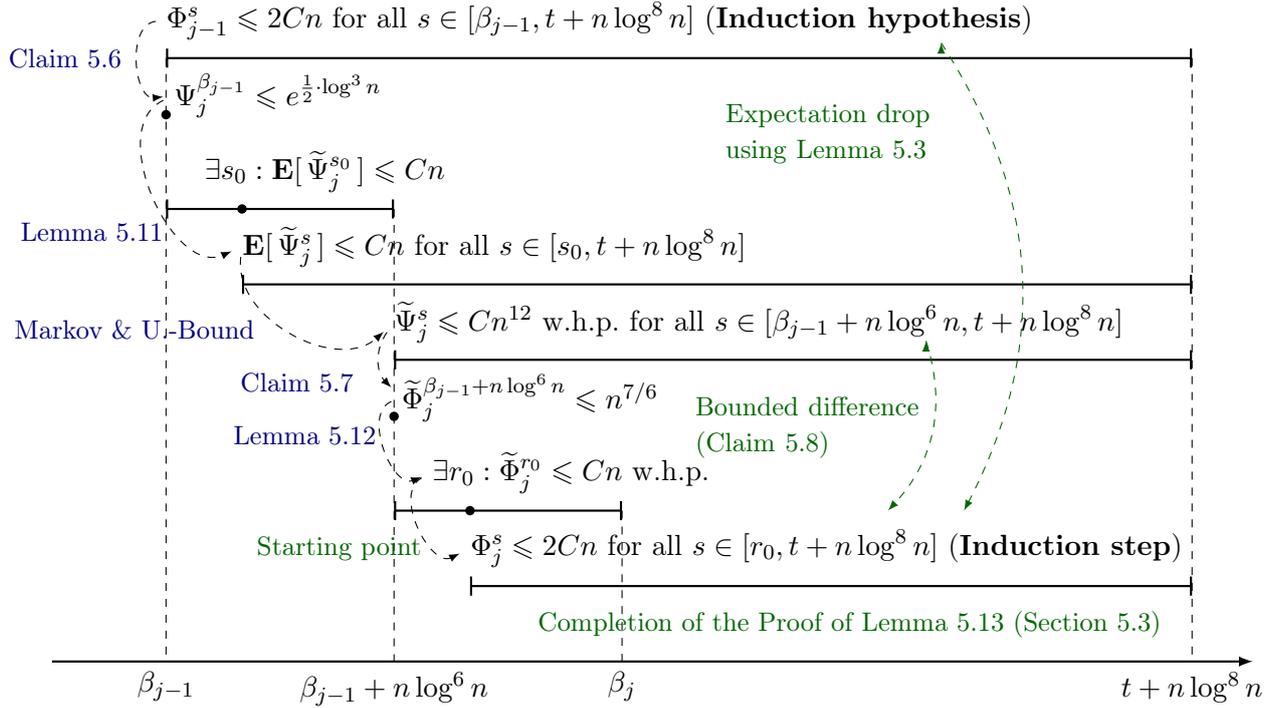
\begin{figure}[H]
\makebox[\textwidth][c]{
\begin{tikzpicture}[
  txt/.style={anchor=west,inner sep=0pt},
  Dot/.style={circle,fill,inner sep=1.25pt},
  implies/.style={-latex,dashed}]

\def\betaO{0}
\def\betaA{1}
\def\betaB{4}
\def\betaC{7}
\def\End{14.5}
\def\yO{-8}

\draw[dashed] (\betaA, 0) -- (\betaA, \yO);
\node[anchor=north] at (\betaA, \yO) {$\beta_{j-1}$};
\draw[dashed] (\betaB, -2) -- (\betaB, \yO);
\node[anchor=north] at (\betaB, \yO) {$\beta_{j-1} + n \log^6 n$};
\draw[dashed] (\betaC, -6) -- (\betaC, \yO);
\node[anchor=north] at (\betaC, \yO) {$\beta_j$};
\draw[dashed] (\End, 0) -- (\End, \yO);
\node[anchor=north] at (\End, \yO) {$t + n \log^8 n$};

\node (indstep) at (\betaA,0.5) {};
\node[txt] at (indstep) {$\Phi_{j-1}^{s} \leq 2Cn$ for all $s \in [\beta_{j-1}, t + n \log^8 n]$ (\textbf{Induction hypothesis})};
\draw[|-|, thick] (\betaA, 0) -- (\End, 0) ;

\node (PsiSmall) at (\betaA+0.1,-0.5) {};
\node[txt] at (PsiSmall) {$\Psi_{j}^{\beta_{j-1}} \leq e^{\frac{1}{2} \cdot \log^3 n}$};
\node[Dot] at (\betaA, -0.75){};

\node (ExistsPsiLinear) at (\betaA+0.5,-1.5) {};
\node[txt] at (ExistsPsiLinear) {$\exists s_0 : \ex{\tilde{\Psi}_{j}^{s_0}} \leq Cn$};
\node[Dot] at (\betaA+1, -2){};
\draw[|-|, thick] (\betaA, -2) -- (\betaB, -2);

\node (ExPsiLinear) at (\betaA + 1, -2.5) {};
\node[txt] at (\betaA + 1, -2.5) {$\ex{\tilde{\Psi}_j^{s}} \leq Cn$ for all $s \in [s_0, t + n \log^8 n]$};
\draw[|-|, thick] (\betaA+1, -3) -- (\End, -3);

\node (PsiPoly) at (\betaB, -3.5) {};
\node[txt] at (PsiPoly) {$\tilde{\Psi}_j^{s} \leq Cn^{12}$ \Whp~for all $s \in [\beta_{j-1}+n \log^6 n, t + n \log^8 n]$};
\draw[|-|, thick] (\betaB, -4) -- (\End, -4);

\node (PhiPoly) at (\betaB+0.1,-4.5) {};
\node[txt] at (\betaB+0.1,-4.5) {$\tilde{\Phi}_{j}^{\beta_{j-1} + n \log^6 n} \leq n^{7/6}$};
\node[Dot] at (\betaB, -4.75){};

\node (ExistsPhiLinear) at (\betaB+0.5,-5.5) {};
\node[txt] at (ExistsPhiLinear) {$\exists r_0 : \tilde{\Phi}_{j}^{r_0} \leq Cn$ \Whp};
\node[Dot] at (\betaB+1, -6){};
\draw[|-|, thick] (\betaB, -6) -- (\betaC, -6);

\node (PhiLinear) at (\betaB+1,-6.5) {};
\node[txt] at (\betaB+1,-6.5) {$\Phi_{j}^{s} \leq 2Cn$ for all $s \in [r_0, t + n \log^8 n]$ (\textbf{Induction step})};
\draw[|-|, thick] (\betaB+1, -7) -- (\End, -7);

\draw[-latex, thick] (\betaO - 0.5, \yO) -- (\End + .8, \yO);

\draw[implies] (indstep) edge[bend right=90] (PsiSmall);
\node[anchor=east, black!50!blue] at (\betaA - 0.45, -0) {\small \cref{clm:phi_small_implies_psi_plus_one_small}};
 
\draw[implies] (PsiSmall) edge[bend right=90] (ExPsiLinear);
\node[anchor=east, black!50!blue] at (\betaA + 0.1, -2.3) {\small \cref{lem:newexists_s_st_ex_psi_linear}};

\draw[implies] (ExPsiLinear) edge[bend right=70] (PsiPoly);
\node[anchor=east, black!50!blue] at (\betaA + 1.3, -3.6) {\small Markov \& U.-Bound};

\draw[implies] (PsiPoly) edge[bend right=40] (PhiPoly);
\node[anchor=east, black!50!blue] at (\betaB - 0.4, -4.3) {\small \cref{clm:psi_potential_poly_implies_phi_small}};

\draw[implies] (PhiPoly) edge[bend right=90] (ExistsPhiLinear);
\node[anchor=east, black!50!blue] at (\betaB - 0.1, -5) {\small \cref{lem:exists_s_st_phi_linear_whp}};

\draw[implies] (ExistsPhiLinear) edge[bend right=90] (PhiLinear);
\node[anchor=east, black!60!green] at (\betaB +0.5, -6.5) {\small Starting point};

\draw[implies,black!60!green] (\betaC + 4, -3.75) edge[bend left=30] (\betaC + 3.5, -6);
\node[text width=4cm, anchor=east,black!60!green] at (\betaC + 5.1, -4.9) {\small Bounded difference (\cref{clm:psi_potential_poly_implies_bounded_diff})};

\draw[implies,black!60!green] (\betaC + 4.2, 0.2) edge[bend left=30] (\betaC + 4.5, -6);
\node[anchor=east,black!60!green, text width=3cm] at (\betaC + 4.5, -1) {\small Expectation drop \\ using \cref{lem:rec_inequality_phi_psi}};

\node[black!60!green] at (\betaC + 3, -7.5) {\small  
Completion of the Proof of \cref{lem:new_inductive_step} (\cref{sec:proof_of_key})};

\end{tikzpicture}}
\caption{Outline for the proof of \cref{lem:new_inductive_step}. Results in blue are given in \cref{sec:auxiliary}, while results in green are used in the application of the concentration inequality (\cref{thm:simplified_chung_lu_theorem_8_5})~in \cref{sec:proof_of_key}. }
\label{fig:proof_outline}
\end{figure}

\paragraph{Recovery.} When $\Phi_{j-1}^{\beta_{j-1}} = \Oh(n)$, it follows by a simple estimate that  $\Psi_j^{\beta_{j-1}} \leq e^{\frac{1}{2} \cdot \log^3 n}$ (\cref{clm:phi_small_implies_psi_plus_one_small}). Using a drop inequality for the partial potential function $\ddot{\Psi}_j$ (\cref{lem:rec_inequality_phi_psi}), it follows that $\ex{\Psi_j^{u}} \leq Cn$, for any step $u \in [\beta_{j-1} + n \log^6 n, t + n \log^8 n]$ (\cref{lem:newexists_s_st_ex_psi_linear}).
By simply using Markov's inequality and a union bound, we can deduce that $\Psi_j^{u} \leq Cn^{12}$ for all steps $u \in [\beta_{j-1} + n \log^6 n, t + n \log^8 n]$. By a simple relation between two potentials, this implies $\ddot{\Phi}_j^u \leq \Phi_j^{u} \leq n^{7/6}$ (\cref{clm:psi_potential_poly_implies_phi_small}). Now using a drop inequality for the partial potential function $\ddot{\Phi}_j$ (\cref{lem:rec_inequality_phi_psi}), guarantees that \Whp~$\Phi_j^{u} \leq Cn$ for \emph{some single} step $u \in [\beta_{j-1}, \beta_j]$ (\cref{lem:exists_s_st_phi_linear_whp}). 

\paragraph{Stabilization.} To obtain the stronger statement which holds \emph{for all} steps $s \in [\beta_{i-1},\beta_j]$, we will use a concentration inequality. The key point is that for any round $r$ where $\overline{\Psi}_j^{r} \leq Cn^{12}$ the absolute difference $|\overline{\Phi}_j^{r+1} - \overline{\Phi}_j^{r}|$ is at most $n^{1/3}$, because $14 \frac{\alpha_2}{\alpha_1} < 1/3$ (\cref{clm:psi_potential_poly_implies_bounded_diff}). This is crucial so that applying the supermartingale concentration bound~\cref{thm:simplified_chung_lu_theorem_8_5}~from~\cite{CL06} to $\overline{\Phi}_j$ yields, that an $\Phi_j^s = \Oh(n)$ guarantee for all steps $s \in [\beta_j, t + n \log^8 n]$ using a smoothing argument (\cref{clm:phi_j_does_not_drop_quickly}).

As the proof of \cref{lem:new_inductive_step} requires several claims and lemmas, the remainder of this section is divided further in:
\begin{enumerate}
\item Analysis of the (expected) drop of partial and full potentials. (\cref{sec:expected_drop})
\item (Deterministic) inequalities that involve one or two potentials. (\cref{sec:deterministic_claims})
\item Auxiliary (probabilistic) lemmas based on these drop results. (\cref{sec:probabilistic_drop})
\end{enumerate}
After that, we proceed to complete the proof of \cref{lem:new_inductive_step} in \cref{sec:proof_of_key}. Finally, in \cref{sec:proof_of_caching_log_logn} we complete the proof of the main theorem \cref{thm:caching_log_log_n}.

\subsection{Preparation for the Proof of Lemma~\ref{lem:new_inductive_step}}\label{sec:auxiliary}
We define the following key event $\mathcal{K}_{j-1}^s$ for any step $s$ and $1 \leq j \leq j_{\max} - 1$, which will be used frequently in the analysis, as
\[
\mathcal{K}_{j-1}^s := \{ \Phi_{j-1}^s \leq Cn \}.
\]

In the following arguments we will be working frequently with the ``killed'' versions of the potentials, i.e., we condition on $\mathcal{K}_{j-1}^{\rho}$ holding in all steps $\rho \in [\beta_{j-1}, s]$: \[
\tilde{\Phi}_{j}^{s} := \Phi_{j}^{s} \cdot \mathbf{1}_{ \cap_{\rho \in [\beta_{j-1}, s]} \mathcal{K}_{j-1}^{\rho} } \text{ and } \tilde{\Psi}_{j}^{s} := \Psi_{j}^{s} \cdot \mathbf{1}_{ \cap_{\rho \in [\beta_{j-1}, s]} \mathcal{K}_{j-1}^{\rho}}. 
\]

Also, let $T_s(r)$ denote the step corresponding to round $r$ starting from step $s$.

\subsubsection{Analysis of the Drop of the Partial and Full Potentials} \label{sec:expected_drop}
 
Now we will show that when $\overline{\Phi}_{j-1}^r \leq Cn$ holds (implied by $\mathcal{K}_{j-1}^{T(r)}$), we obtain a bound on the number of heavy bins under the \Memory process on an $(a,b)$-biased sampling distribution:

\begin{lem} \label{lem:quantile_bound}
For any $1 \leq j \leq j_{\max} - 1$ and any round $r \geq 0$, if $\overline{\Phi}_{j-1}^r \leq Cn$, then we also have that the number of bins $i$ with $y_i^{r, 0} \geq z_{j-1} + \frac{2v}{\alpha_2}$ is at most
\[
\frac{1}{2b} n \cdot e^{-v^j},
\]
and in any substep $s$ of round $r$, the number of bins $i$  with $y_i^{r, s} \geq z_{j-1} + \frac{2v}{\alpha_2}$ is at most
\[
\frac{1}{b} n \cdot e^{-v^j}.
\]
\end{lem}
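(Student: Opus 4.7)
The plan is to derive both bounds by a Markov-style argument applied directly to $\overline{\Phi}_{j-1}^r$, together with a simple observation limiting how many bins can change status across a single round.

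First I would prove the bound at substep $0$. The key observation is that any bin $i$ with $y_i^{r,0} \geq z_{j-1} + \tfrac{2v}{\alpha_2}$ contributes at least
\[
e^{\alpha_2 v^{j-1}(y_i^{r,0}-z_{j-1})} \;\geq\; e^{\alpha_2 v^{j-1}\cdot\frac{2v}{\alpha_2}} \;=\; e^{2v^j}
\]
to $\overline{\Phi}_{j-1}^r$. Hence the number of such bins is at most $\overline{\Phi}_{j-1}^r/e^{2v^j}\leq Cn\cdot e^{-2v^j}$, and we want this to be at most $\tfrac{1}{2b}n\cdot e^{-v^j}$. Rearranging, this amounts to verifying that $2Cb\leq e^{v^j}$, which holds by the choice $v\geq \log(2Cb)$ in \eqref{eq:fixingconsts} together with $j\geq 1$.

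For the substep bound, I would decompose $\{i:y_i^{r,s}\geq z_{j-1}+\tfrac{2v}{\alpha_2}\}$ into bins that were already heavy at the start of the round and bins that became heavy during it. A bin that received no ball during substeps $1,\ldots,s$ has $y_i^{r,s}=y_i^{r,0}-s/n\leq y_i^{r,0}$, and so a newly heavy bin must have received at least one ball during the round. The number of balls allocated in a single round is bounded by the maximal length of the round, namely $k_j\cdot\tfrac{v}{\alpha_2}\leq n^{1/7}\cdot\tfrac{v}{\alpha_2}$, and this is a bound on the number of bins receiving any ball. Combined with the first part of the lemma, the total number of heavy bins at substep $s$ is therefore at most $\tfrac{1}{2b}n e^{-v^j}+k_j\cdot\tfrac{v}{\alpha_2}$.

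The main obstacle is purely one of bookkeeping: verifying that $k_j\cdot \tfrac{v}{\alpha_2}\leq \tfrac{1}{2b}ne^{-v^j}$ so that the total is at most $\tfrac{1}{b}ne^{-v^j}$. Using the constraint $j\leq j_{\max}-1$ and $v^{j_{\max}}=\tfrac{\alpha_2}{2v}\log n$, one has $v^j\leq\tfrac{\alpha_2}{2v^2}\log n$, which gives $e^{v^j}\leq n^{\alpha_2/(2v^2)}$. Since $\alpha_2/v^2$ is a small constant and $k_j\leq n^{1/7}$, the product $k_j\cdot e^{v^j}\cdot 2bv/\alpha_2$ is $o(n)$, establishing the inequality for $n$ sufficiently large and completing the proof.
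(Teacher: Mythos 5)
Your proof is correct and takes essentially the same approach as the paper: lower-bounding each heavy bin's contribution to $\overline{\Phi}_{j-1}^r$ by $e^{2v^j}$ and then using $v\geq\log(2Cb)$ for the first bound, and decomposing heavy bins at substep $s$ into initially-heavy ones plus at most $k_j\cdot v/\alpha_2$ new ones for the second. The only cosmetic difference is in the bookkeeping at the end, where you use $j\leq j_{\max}-1$ to bound $e^{v^j}\leq n^{\alpha_2/(2v^2)}$ while the paper uses $e^{-v^j}\geq n^{-1/2}$ together with $k_j\leq n^{1/7}$; both verify the same $o(n)$ estimate.
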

\begin{proof}
Recall that $z_j = \frac{5v}{\alpha_2} \cdot j$. The contribution of any bin $i \in [n]$ with load $y_i^{r, 0} \geq z_{j-1} + \frac{2v}{\alpha_2}$ to $\overline{\Phi}_{j-1, i}^r$ is upper bounded by
\begin{align*}
\overline{\Phi}_{j-1, i}^r
 & \geq e^{\alpha_2 \cdot v^{j-1} \cdot (z_{j-1} + \frac{2v}{\alpha_2} - z_{j-1})^+} \\
 & = e^{\alpha_2 \cdot v^{j-1} \cdot \frac{2v}{\alpha_2}} \\
 & = e^{2v^j}.
\end{align*}
Hence, if $\big\{ \overline{\Phi}_{j-1}^{r} \leq Cn \big\}$ holds, then the number of bins $i \in [n]$ with load $y_i^{r, 0} \geq z_{j-1} + \frac{2v}{\alpha_2}$  is at most
\[
 Cn \cdot  e^{-2v^j} \leq \frac{1}{2b} n \cdot e^{-v^j},
\]
using that $e^{-v^j} \geq e^{-v}$ for $j \geq 1$ and $v \geq \log(2Cb)$.

Note that in a single round we allocate to at most $k_j \cdot \frac{v}{\alpha_2}$ bins. So, the number of bins with $y_i^{r, s} \geq z_{j-1} + \frac{2v}{\alpha_2}$ in any substep $s$ in round $r$ can grow to at most
\[
\frac{1}{2b} n \cdot e^{-v^j} + k_j \cdot \frac{2v}{\alpha_2} \leq \frac{1}{b} n \cdot e^{-v^j},
\]
for sufficiently large $n$, using that $e^{-v^j} \geq e^{-v^{j_{\max}}} \geq e^{-\frac{\alpha_2}{2v} \log n} \geq n^{-1/2}$ (since $\alpha_2 \leq 1$ and $v \geq 1$) and $k_j = e^{v^{j+1}} \cdot \log^3 n \leq e^{v^{j_{\max}}} \cdot \log^3 n \leq e^{\frac{\alpha_2}{2v} \log n} \cdot \log^3 n \leq n^{1/7}$ (since $\alpha_2 \leq 4$ and $v \geq 36$).
\end{proof}
	These bounds from \cref{lem:quantile_bound} on the number of heavy bins under allow us to prove a potential drop across layers in the induction.  
\begin{lem} \label{lem:rec_inequality_phi_psi} For any $1 \leq j \leq j_{\max} - 1$, and any round $r \geq 0$, we have that
\[
\Ex{\left. \ddot{\Phi}_j^{r+1} \,\right|\, \mathfrak{F}^r, \mathcal{K}_{j-1}^{r}  } \leq \ddot{\Phi}_j^{r} \cdot \Big(1 - \frac{e^{v^{j+1}}}{n} \Big) + e^{-v^j/2},
\]
and
\[
\Ex{\left. \ddot{\Psi}_j^{r+1} \,\right|\, \mathfrak{F}^r, \mathcal{K}_{j-1}^{r}  } \leq \ddot{\Psi}_j^{r} \cdot \Big(1 - \frac{e^{v^{j+1}}}{n} \Big) + e^{-v^j/2}.
\]
\end{lem}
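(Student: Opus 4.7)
The plan is to decompose the conditional expectation at round $r$ according to whether round $r$ follows Case~A (a single allocation) or Case~B (a sequence of phases) of the folded process, and then use \cref{lem:quantile_bound} under the conditioning $\mathcal{K}_{j-1}^r$ to bound each contribution. In \emph{Case~A}, the initial sample at round $r$ hits a bin with $y_i^{r,0}\geq z_{j-1}+2v/\alpha_2$; by \cref{lem:quantile_bound} the number of such bins is at most $\tfrac{1}{2b}ne^{-v^j}$, so $\Pro{\text{Case A} \mid \mathfrak{F}^r, \mathcal{K}_{j-1}^r} \leq \tfrac{1}{2} e^{-v^j}$, and a direct accounting gives the deterministic bound $\ddot{\Phi}_j^{r+1}\leq \ddot{\Phi}_j^r\cdot e^{\alpha_2 v^j} + e^{\alpha_2 v^j}$ (the receiving bin's contribution may grow by at most a factor of $e^{\alpha_2 v^j}$, all other heavy bins only shrink from the average-load shift, and at most one new bin may enter $\ddot{\Phi}_j$ with contribution $\leq e^{\alpha_2 v^j}$). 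Using $\alpha_2\leq 1/2$, this shows that Case~A contributes at most $\tfrac12 e^{-v^j/2}(\ddot{\Phi}_j^r + 1)$ to $\Ex{\ddot{\Phi}_j^{r+1}\mid \mathfrak{F}^r,\mathcal{K}_{j-1}^r}$.

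In \emph{Case~B}, the key structural observation is that every allocation in the round is placed in some bin $\ell$ with $y_\ell^{r,s}\leq z_{j-1}+4v/\alpha_2$. Tracking $\ell$ from its last substep of allocation shows $y_\ell^{r+1,0}\leq z_{j-1}+4v/\alpha_2+1-1/n<z_j$, using $z_j-z_{j-1}=5v/\alpha_2$ and the fact that $v/\alpha_2 > 1$ from the constants fixed in \eqref{eq:fixingconsts}; thus no bin touched during Case~B can appear in $\ddot{\Phi}_j^{r+1}$, and the surviving heavy bins only lose $L/n$ from their normalised load, giving the deterministic bound $\ddot{\Phi}_j^{r+1}\leq \ddot{\Phi}_j^r \cdot e^{-\alpha_2 v^j L/n}$ where $L$ is the round length. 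Writing $L = K\cdot v/\alpha_2$ for $K$ the number of phases, \cref{lem:quantile_bound} bounds the probability of sampling a bin with $y_i\geq z_{j-1}+2v/\alpha_2$ in any single substep by $e^{-v^j}$, so the probability $q_1$ that an entire phase samples only such heavy bins (triggering Condition~1) is at most $e^{-v^{j+1}/\alpha_2}$. Since $\alpha_2\leq 1/2$ and $v$ is large, $k_jq_1 = e^{v^{j+1}}\log^3 n\cdot e^{-v^{j+1}/\alpha_2} = o(1)$, so a union bound yields $\Pro{K=k_j\mid \text{Case B}}\geq 1/2$. On this event $\alpha_2 v^j L/n = v^{j+1}e^{v^{j+1}}\log^3 n/n$, and the elementary inequality $1-e^{-x}\geq \min(x/2,1/2)$ gives
\[
\Ex{1-e^{-\alpha_2 v^j L/n}\mid \text{Case B}}\;\geq\; \min\!\left(\tfrac{1}{4},\;\tfrac{v^{j+1}\log^3 n\cdot e^{v^{j+1}}}{8n}\right),
\]
which exceeds the target drop $e^{v^{j+1}}/n$ by a slack factor of order $v^{j+1}\log^3 n$ in the small-$x$ regime and by a much larger factor in the large-$x$ regime.

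Combining the two cases, the multiplicative factor on $\ddot{\Phi}_j^r$ becomes $\leq 1 - e^{v^{j+1}}/n$ once the Case~A multiplicative growth is absorbed into the slack of the Case~B drop, and the residual additive term is $\leq e^{-v^j/2}$, giving $\Ex{\ddot{\Phi}_j^{r+1}\mid \mathfrak{F}^r,\mathcal{K}_{j-1}^r}\leq \ddot{\Phi}_j^r(1-e^{v^{j+1}}/n)+e^{-v^j/2}$ as claimed. The inequality for $\ddot{\Psi}_j$ is proved by the same argument with $\alpha_1 = 84\alpha_2$ in place of $\alpha_2$ throughout: the Case~B drop factor becomes $e^{-\alpha_1 v^j L/n}$, which has a strictly larger exponent and therefore still yields the required $e^{v^{j+1}}/n$ drop, while the Case~A growth factor $e^{\alpha_1 v^j}$ is only larger by constants and is absorbed in the same way. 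The main obstacle is the delicate interplay between the two cases: Case~A alone produces multiplicative growth of order $e^{-v^j(1-\alpha_2)}$ while a single Case~B phase only drops the potential by a factor $e^{-v^{j+1}/n}$, and one must use both the constant lower bound $\Pro{K=k_j\mid \text{Case B}}\geq 1/2$ and the $\log^3 n$ slack inside the Case~B drop factor to produce enough drop to swallow Case~A; this is precisely why the constants in \eqref{eq:fixingconsts} are chosen with $v\geq 36b$ and $\alpha_2$ small.
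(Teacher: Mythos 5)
There is a genuine gap in your Case~A analysis, and it is fatal to the combination step.

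You bound the Case~A event deterministically by
$\ddot{\Phi}_j^{r+1}\leq \ddot{\Phi}_j^r\cdot e^{\alpha_2 v^j}+e^{\alpha_2 v^j}$,
which you obtain by replacing the receiving bin's contribution $\ddot{\Phi}_{j,\ell}^r$ by the full sum $\ddot{\Phi}_j^r$. After multiplying by $\Pro{A}\leq\tfrac12 e^{-v^j}$ you get an expected multiplicative growth of $\tfrac12 e^{-v^j/2}\ddot{\Phi}_j^r$. For $1\leq j\leq j_{\max}-1$ the coefficient $\tfrac12 e^{-v^j/2}$ is at least $n^{-\alpha_2/(4v^2)}$, whereas the Case~B expected drop is at most $\alpha_2 v^j L/n\leq v^{j+1}e^{v^{j+1}}\log^3 n/n\leq \log^4 n\cdot n^{\alpha_2/(2v)-1}$. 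Since $\alpha_2<1$ and $v\geq 36$ imply $1-\alpha_2/(2v)>\alpha_2/(4v^2)$, the Case~A growth strictly dominates the Case~B drop for all large $n$, so the claimed absorption cannot take place. The inequality you need at the combination step,
$\tfrac12 e^{-v^j/2}+e^{v^{j+1}}/n\leq \Pro{B}\cdot\Ex{1-e^{-\alpha_2 v^j L/n}\mid B}$,
is false for every $j$ in the relevant range.

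The fix, which is where the paper's proof departs from yours, is to \emph{not} collapse $\ddot{\Phi}_{j,\ell}^r$ to $\ddot{\Phi}_j^r$ before taking the expectation over the sample. Since the folded process only allows the ball to go to a bin $\ell$ with $y_\ell^{r,0}\leq y_{i(r)}^{r,0}$, monotonicity of the potential gives $\ddot{\Phi}_{j,\ell}^r\leq\ddot{\Phi}_{j,i(r)}^r$, and then
$\Ex{\ddot{\Phi}_{j,\ell}^r\mathbf{1}_A\mid\mathfrak{F}^r}\leq\sum_{i}s_i\,\ddot{\Phi}_{j,i}^r\leq\tfrac{b}{n}\ddot{\Phi}_j^r$.
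This yields an expected Case~A increment of at most $\tfrac{b}{n}e^{\alpha_2 v^j}\ddot{\Phi}_j^r+\tfrac12 e^{-v^j/2}$, and since $e^{\alpha_2 v^j}=(e^{v^{j+1}})^{\alpha_2/v}$ with $\alpha_2/v<1$, the multiplicative coefficient $\tfrac{b}{n}e^{\alpha_2 v^j}$ is $o(e^{v^{j+1}}/n)$ and is swallowed by the Case~B drop; the additive piece matches the claimed $e^{-v^j/2}$. The paper realizes exactly this by working bin-by-bin: for each heavy bin the probability of receiving a ball is controlled by the $\mathcal{O}(b/n)$ sampling probability rather than by the $\mathcal{O}(e^{-v^j})$ probability of Case~A occurring at all, and the aggregate drop inequality follows by summing. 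Your high-level decomposition (Case~A vs.\ Case~B of the folded process, plus the $1-e^{-x}$ estimate on the Case~B drop via $\Pro{K=k_j\mid B}\geq 1/2$) is sound; the missing ingredient is tracking \emph{which} bin can receive the ball in Case~A and charging its potential to the sampling distribution rather than to the whole sum.
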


\begin{proof}
Using \cref{lem:quantile_bound}, we have that the number of bins $\ell \in [n]$ with load $y_{\ell}^{r, s} \geq z_{j-1} + \frac{2v}{\alpha_2}$ for all substeps $s$ of round $r$ is at most $\frac{1}{b} n \cdot e^{-v^j}$.

We will now analyze the expected contribution of a bin $i \in [n]$ to the potential $\ddot{\Phi}_j^{r+1}$.

\noindent\textbf{Case 1 [$y_i^{r, 0} > z_j $]:} When we sample a \textit{heavy} bin $\ell \in [n]$ with $y_{\ell}^{r, s} \geq z_{j-1} + \frac{2v}{\alpha_2}$ (happens with probability at most $\frac{b}{n} \cdot \frac{n}{b} \cdot e^{-v^j} = e^{-v^j}$) then the average increases by $1/n$. Otherwise, we sample a \textit{light} bin, and a new phase begins. So, since bin $i$ is heavy we have that
\begin{align} \label{eq:drop_case_1}
&\Ex{\left. \ddot{\Phi}_{j, i}^{r+1} \, \right| \, \mathfrak{F}^r, \mathcal{K}_{j-1}^r} \notag \\
 & \quad\leq \ddot{\Phi}_{j, i}^r \cdot \Big( \underbrace{\frac{b}{n} \cdot e^{\alpha_2 v^j \cdot (1 - 1/n)}}_{\text{Bin $i$ is chosen}} + \underbrace{\Big(e^{-v^j} - \frac{b}{n}\Big) \cdot e^{-\alpha_2 v^j / n}}_{\text{Another heavy bin is chosen}} + \underbrace{(1 - e^{-v^j}) \cdot \sum_{k = 1}^{k_j} A_k}_{\text{A light bin is chosen}} \Big),
\end{align}
where $A_k$ is the expected contribution to the potential if exactly $k$ phases are executed. Each subsequent phase starts with probability at least $1 - (e^{-v^j})^{v/\alpha_2}$. Hence, we have for $p = e^{-v^{j+1}/\alpha_2} \leq e^{-v^{j+1}}$ (since $\alpha_2 \leq 1$), by domination
\[
\sum_{k = 1}^{k_j} A_k \leq \sum_{k = 1}^{k_j} \underbrace{ \frac{(1 - p)^{k-1} \cdot p}{1 - (1 - p)^{k_j}}}_{\text{Probability of exactly $k$ phases}} \cdot \underbrace{e^{-kv^{j+1}/n}}_{\substack{\text{Change of $
\ddot{\Phi}_{j, i}^r$ due}\\\text{to change in average}}},
\] 
using that after $k$ phases the average load changes by $\frac{kv}{n\alpha_2}$ and that during a phase we don't allocate to heavy bins, i.e., we won't allocate to bin $i$ (as it remains heavy).  Plugging in $p = e^{-v^{j+1}}$, we obtain
\begin{equation}\label{eq:prebddOnA_k}
\sum_{k = 1}^{k_j} A_k 
  \leq e^{-v^{j+1}/n} \cdot e^{-v^{j+1}} \cdot  \frac{1}{1 - (1 - e^{-v^{j+1}})^{k_j}}  \cdot \sum_{k = 0}^{k_j - 1} \Big( (1 - e^{-v^{j+1}}) \cdot e^{-v^{j+1}/n}\Big)^k. \end{equation}
We first bound the sum on the right hand side of \eqref{eq:prebddOnA_k}, as
  \begin{align}\label{eq:sumredux}\sum_{k = 0}^{k_j - 1} \Big( (1 - e^{-v^{j+1}}) \cdot e^{-v^{j+1}/n}\Big)^k& \leq \frac{1}{1 - (1 - e^{-v^{j+1}}) \cdot e^{-v^{j+1}/n}}\notag \\
  	& \stackrel{(a)}{\leq}  \frac{1}{1 - (1 - e^{-v^{j+1}}) \cdot (1 -\frac{1}{2} \cdot \frac{v^{j+1}}{n})} \notag\\
  	&=\frac{1}{e^{-v^{j+1}}} \cdot \frac{1}{1 -\frac{1}{2} \cdot \frac{v^{j+1}}{n} +\frac{1}{2} \cdot \frac{v^{j+1}}{n}\cdot e^{v^{j+1}}}\notag \\
  	 & \stackrel{(b)}{\leq}  \frac{1}{e^{-v^{j+1}}} \cdot \frac{1}{1 + \frac{1}{4} \cdot \frac{v^{j+1}}{n} \cdot e^{v^{j+1}}}\notag \\
  	& \stackrel{(c)}{\leq}  \frac{1}{e^{-v^{j+1}}} \cdot \Big(1 - \frac{1}{8} \cdot \frac{v^{j+1}}{n} \cdot e^{v^{j+1}}\Big),
  	\end{align}using in $(a)$ that $e^z \geq 1 + \frac{1}{2} z$ for $|z| < 1$ and that $v^{j+1} \leq v^{j_{\max}} \leq \log n = o(n)$ (since $j + 1 \leq j_{\max} = \log_v (\frac{\alpha_2}{2v}\log n)$), in $(b)$ that $e^{v^{j+1}} \geq 2$ since $v^{j+1} \geq v \geq 36$, in $(c)$ that $\frac{1}{1 + z} \leq 1 - \frac{z}{2}$ for $0 < z \leq 1$ (\cref{lem:small_z_frac_ineq}) and $\frac{1}{4} \cdot \frac{v^{j+1}}{n} \cdot e^{v^{j+1}} \leq \frac{1}{4} \cdot \log n \cdot n^{-1/2}$. 
  
  Applying the bound from \eqref{eq:sumredux} to  \eqref{eq:prebddOnA_k} gives 
 \begin{align}\label{eq:bddOnA_k}
 \sum_{k = 1}^{k_j} A_k & \leq e^{-v^{j+1}/n} \cdot \frac{1}{1 - (1 - e^{-v^{j+1}})^{k_j}} \cdot \Big(1 - \frac{1}{8} \cdot \frac{v^{j+1}}{n} \cdot e^{v^{j+1}}\Big)\notag \\
 & \stackrel{(a)}{\leq} e^{-\alpha_2 v^{j}/n} \cdot \frac{1}{1 - (1 - e^{-v^{j+1}})^{k_j}} \cdot \Big(1 - \frac{1}{8} \cdot \frac{v^{j+1}}{n} \cdot e^{v^{j+1}}\Big) \notag \\
 & \leq e^{-\alpha_2 v^{j}/n} \cdot \frac{1}{1 - \exp\left(-e^{-v^{j+1}} \cdot k_j\right)} \cdot \Big(1 - \frac{1}{8} \cdot \frac{v^{j+1}}{n} \cdot e^{v^{j+1}}\Big) \notag \\ 
 & \stackrel{(b)}{\leq} e^{-\alpha_2 v^{j}/n} \cdot \left( 1 +  2\exp\left(-e^{-v^{j+1}} \cdot k_j\right)\right) \cdot \Big(1 - \frac{1}{8} \cdot \frac{v^{j+1}}{n} \cdot e^{v^{j+1}}\Big) \notag \\ 
 & = e^{-\alpha_2 v^{j}/n} \cdot \left( 1 +  2\exp\left(-\log^3 n \right)\right) \cdot \Big(1 - \frac{1}{8} \cdot \frac{v^{j+1}}{n} \cdot e^{v^{j+1}}\Big) \notag \\
 & \stackrel{(c)}{\leq} e^{-\alpha_2 v^{j}/n} \cdot \Big(1 - \frac{1}{9} \cdot \frac{v^{j+1}}{n} \cdot e^{v^{j+1}}\Big),
\end{align}
using in $(a)$ that $\alpha_2 \leq 1$ and $v \geq 1$, in $(b)$ that $\frac{1}{1-\eps} \leq 1 + 2\eps$ for $0 < \eps \leq 1/2$ by \cref{lem:small_z_frac_ineq_2} and in $(c)$ that $2\exp\left(-\log^3 n \right) = o(n^{-2})$, while $\frac{v^{j+1}}{n} \cdot e^{v^{j+1}} \geq \frac{1}{n}$.

Returning to \cref{eq:drop_case_1}, we have 
\begin{align*}
\Ex{\left. \ddot{\Phi}_{j, i}^{r+1} \, \right| \, \mathfrak{F}^r}
 & \leq \ddot{\Phi}_{j, i}^r \cdot e^{-\alpha_2 v^j / n} \cdot \Big( \frac{b}{n} \cdot e^{\alpha_2 v^j} + \Big(e^{-v^j} - \frac{b}{n}\Big) + (1 - e^{-v^j}) \cdot e^{\alpha_2 v^j / n} \cdot \sum_{k = 1}^{\infty} A_k \Big) \\
 & = \ddot{\Phi}_{j, i}^r \cdot e^{-\alpha_2 v^j / n} \cdot \Big( 1 + \frac{b}{n} \cdot (e^{\alpha_2 v^j} - 1) + (1 - e^{-v^j}) \cdot \Big(e^{\alpha_2 v^j / n} \cdot \sum_{k = 1}^{\infty} A_k - 1\Big) \Big) \\& \stackrel{\eqref{eq:bddOnA_k}}{\leq}  \ddot{\Phi}_{j, i}^r \cdot e^{-\alpha_2 v^j / n} \cdot \Big( 1 + \frac{b}{n} \cdot e^{\alpha_2 v^j} - (1 - e^{-v^j}) \cdot \Big(\frac{1}{9} \cdot \frac{v^{j+1}}{n} \cdot e^{v^{j+1}}\Big) \Big) \\
 & \stackrel{(a)}{\leq} \ddot{\Phi}_{j, i}^r \cdot e^{-\alpha_2 v^j / n} \cdot \Big( 1 + \frac{b}{n} \cdot e^{\alpha_2 v^j} - \frac{2b}{n} \cdot e^{v^{j+1}} \Big) \\
 & \stackrel{(b)}{\leq} \ddot{\Phi}_{j, i}^r \cdot \Big( 1 -   \frac{e^{v^{j+1}}}{n} \Big), 
\end{align*}
using in $(a)$ that $1 - e^{-v^j} \geq 1 - e^{-v} \geq 1/2$ and $v^j\geq v \geq 36b$, and in $(b)$  that $\alpha_2 \leq 1$ and $b \geq 1$. 

\noindent\textbf{Case 2 [$y_i^{r, 0} \in (z_j - 1 + \frac{1}{n}, z_j]$]:} For this case, we will argue about the aggregate contribution of all such bins $i$. In round $r$, we can allocate at most one ball to $i$, since $y_i^{r, 0} > z_{j-1} + \frac{2v}{\alpha_2}$ (which means that we can only allocate to this bin in the first substep). In this case the potential $\ddot{\Phi}_{j, i}^{r}$ of said bin raises from $0$ to at most $e^{\alpha_2 v^j}$. This can only occur if we pick a heavy bin $\ell \in [n]$ with load $y_{\ell}^r \geq z_{j-1} + \frac{2v}{\alpha_2}$, thus with probability at most $e^{-v^j}$. So, on aggregate we have that
\begin{align*}
\sum_{\ell : y_{\ell}^{r, 0} \in (z_j - 1 + \frac{1}{n}, z_j]} \Ex{\left. \ddot{\Phi}_{j, \ell}^{r+1} \, \right| \, \mathfrak{F}^r} 
& \leq e^{\alpha_2 v^j} \cdot   e^{-v^{j}} \\
&  \stackrel{(a)}{\leq} e^{-v^j/2}  \\
&  \stackrel{(b)}{=} \sum_{\ell : y_{\ell}^{r, 0} \in (z_j - 1 + \frac{1}{n}, z_j]} \ddot{\Phi}_{j, \ell}^{r} \cdot \Big( 1 - \frac{e^{v^{j+1}}}{n} \Big) + e^{-v^j/2},
\end{align*}
where inequality $(a)$ holds since $\alpha_2 \leq 1/2$ and $(b)$ as $\ddot{\Phi}_{j, \ell}^{r} = 0$ for each $\ell$ with $y_{\ell}^{r, 0} \in (z_j - 1 + \frac{1}{n}, z_j]$.

\noindent\textbf{Case 3(a) [$y_i^{r, 0} \leq z_{j-1} + \frac{2v}{\alpha_2}$]:} In round $r$, we could allocate multiple balls to bin $i$, however its load will always remain at most $z_{j-1}+ \frac{4v}{\alpha_2} < z_j $ and so $\ddot{\Phi}_{j, i}^{r+1} = \ddot{\Phi}_{j, i}^r = 0$.

\noindent\textbf{Case 3(b) [$y_i^{r, 0} \in (z_{j-1} + \frac{2v}{\alpha_2}, z_j - 1 + \frac{1}{n}]$]:} In round $t$, we can allocate at most one ball to $i$, since $y_i^{r, 0} > z_{j-1} + \frac{2v}{\alpha_2}$ (which means that we can only allocate to this bin in the first step); so as in Case 3(a), $\ddot{\Phi}_{j, i}^{r+1} = \ddot{\Phi}_{j, i}^r = 0$. 

Hence, by aggregating over the three cases, we get the conclusion for $\ddot{\Phi}_j$. The same analysis also works for $\ddot{\Psi}_j$, as we only used that $\alpha_2 \leq 1/2$ (which also holds for $\alpha_1$) and $\alpha_1 \geq \alpha_2$.
\end{proof}

This implies that when the potentials are large, we get a multiplicative expected drop.

\begin{cor} \label{cor:large_rec_inequality_phi_psi} For any $1 \leq j \leq j_{\max} - 1$, and any round $r \geq 0$, we have that
\[
\Ex{\left. \ddot{\Psi}_j^{r+1} \,\right|\, \mathfrak{F}^r, \ddot{\Psi}_j^{r} \geq 2n } \leq \ddot{\Psi}_j^{r} \cdot \Big(1 - \frac{e^{v^{j+1}}}{2n} \Big),
\]
and
\[
\Ex{\left. \ddot{\Phi}_j^{r+1} \,\right|\, \mathfrak{F}^r, \ddot{\Phi}_j^r \geq 2n } \leq \ddot{\Phi}_j^{r} \cdot \Big(1 - \frac{e^{v^{j+1}}}{2n} \Big).
\]
\end{cor}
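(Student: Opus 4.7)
The plan is to derive this corollary as a direct consequence of \cref{lem:rec_inequality_phi_psi}, by absorbing the additive error term $e^{-v^j/2}$ into half of the multiplicative drop whenever the potential is at least $2n$. I'll assume we are implicitly under the event $\mathcal{K}_{j-1}^r$ (as otherwise the hypothesis of \cref{lem:rec_inequality_phi_psi} would fail), consistent with how the corollary gets used in the layered induction where this event is maintained.

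First, I would start from the drop inequality supplied by \cref{lem:rec_inequality_phi_psi}, namely
\[
\Ex{\left. \ddot{\Phi}_j^{r+1} \,\right|\, \mathfrak{F}^r, \mathcal{K}_{j-1}^{r}} \leq \ddot{\Phi}_j^{r} \cdot \Big(1 - \frac{e^{v^{j+1}}}{n} \Big) + e^{-v^j/2}.
\]
Rewriting the right-hand side by splitting the multiplicative drop into two equal halves gives
\[
\ddot{\Phi}_j^{r} \cdot \Big(1 - \frac{e^{v^{j+1}}}{2n} \Big) - \ddot{\Phi}_j^{r} \cdot \frac{e^{v^{j+1}}}{2n} + e^{-v^j/2}.
\]
So the corollary reduces to showing that, under the additional hypothesis $\ddot{\Phi}_j^{r} \geq 2n$, the last two terms are non-positive, i.e.
\[
e^{-v^j/2} \leq \ddot{\Phi}_j^{r} \cdot \frac{e^{v^{j+1}}}{2n}.
\]

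The key step is to bound the right-hand side from below using $\ddot{\Phi}_j^{r} \geq 2n$, which yields $\ddot{\Phi}_j^{r} \cdot \frac{e^{v^{j+1}}}{2n} \geq e^{v^{j+1}}$. Since $v \geq 1$ and $j \geq 1$ (so $v^{j+1} \geq 0 \geq -v^j/2$), we have $e^{v^{j+1}} \geq e^{-v^j/2}$, and the required inequality follows trivially. The same argument, word-for-word, works for $\ddot{\Psi}_j$ since \cref{lem:rec_inequality_phi_psi} gives the analogous bound with the same additive term $e^{-v^j/2}$ and the same multiplicative factor $1 - e^{v^{j+1}}/n$.

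There is no real obstacle here; the corollary is essentially a bookkeeping statement converting the affine recursion of \cref{lem:rec_inequality_phi_psi} into a purely multiplicative contraction in the regime where the potential dominates the additive noise. The only minor subtlety is to verify that the conditioning on $\{\ddot{\Phi}_j^r \geq 2n\}$ (respectively $\{\ddot{\Psi}_j^r \geq 2n\}$) is compatible with conditioning on $\mathfrak{F}^r$ (it is, since $\ddot{\Phi}_j^r$ and $\ddot{\Psi}_j^r$ are $\mathfrak{F}^r$-measurable), so the inequality is preserved pointwise on the conditioning event.
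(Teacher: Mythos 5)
Your proposal is correct and follows essentially the same argument as the paper: rewrite $1 - e^{v^{j+1}}/n$ as $(1 - e^{v^{j+1}}/(2n)) - e^{v^{j+1}}/(2n)$, use the hypothesis $\ddot{\Phi}_j^r \geq 2n$ to bound $\ddot{\Phi}_j^r \cdot e^{v^{j+1}}/(2n) \geq e^{v^{j+1}} \geq e^{-v^j/2}$, and conclude. Your remark that the conditioning on $\mathcal{K}_{j-1}^r$ is implicit in the corollary's statement (as it is needed to invoke \cref{lem:rec_inequality_phi_psi}, and is supplied in all uses of the corollary) is a fair observation that the paper leaves unstated.
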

\begin{proof}
Using \cref{lem:rec_inequality_phi_psi}, 
\begin{align*}
\Ex{\left. \ddot{\Psi}_j^{r+1} \,\right|\, \mathfrak{F}^r, \ddot{\Psi}_j^{r} \geq 2n } 
 & \leq \ddot{\Psi}_j^{r} \cdot \Big(1 - \frac{e^{v^{j+1}}}{n} \Big) + e^{-v^j/2} \\
 & = \ddot{\Psi}_j^{r} \cdot \Big(1 - \frac{e^{v^{j+1}}}{2n} \Big) - \ddot{\Psi}_j^{r} \cdot \frac{e^{v^{j+1}}}{2n} + e^{-v^j/2} \\
 & \leq \ddot{\Psi}_j^{r} \cdot \Big(1 - \frac{e^{v^{j+1}}}{2n} \Big) - e^{v^{j+1}} + e^{-v^j/2} \\
 & \leq \ddot{\Psi}_j^{r} \cdot \Big(1 - \frac{e^{v^{j+1}}}{2n} \Big).
\end{align*}
Working in exactly the same way for $\ddot{\Phi}_j$, we obtain the second claim.
\end{proof}

For the stabilization phase in the layered induction step (\cref{lem:new_inductive_step}), we make use of the following drop inequalities for the full $\overline{\Phi}_j$ potential:

\begin{lem} \label{lem:rec_inequality_partial_phi_psi} For any $1 \leq j \leq j_{\max} - 1$, and any round $r \geq 0$, we have that
\[
\Ex{\left. \overline{\Phi}_j^{r+1} \,\right|\, \mathfrak{F}^r, \mathcal{K}_{j-1}^{r}} \leq \overline{\Phi}_j^{r} \cdot \Big(1 - \frac{1}{n} \Big) + 2,
\]
and
\[
\Ex{\left. \overline{\Phi}_j^{r+1} \,\right|\, \mathfrak{F}^r, \mathcal{K}_{j-1}^{r}, \overline{\Phi}_j^{r} \geq 4n} \leq \overline{\Phi}_j^{r} \cdot \Big(1 - \frac{1}{2n} \Big).
\]
\end{lem}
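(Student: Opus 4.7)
The two inequalities reduce to a single drop bound. The second inequality is an algebraic consequence of the first, since when $\overline{\Phi}_j^r \geq 4n$ we have
\[
\overline{\Phi}_j^r\Bigl(1 - \tfrac{1}{n}\Bigr) + 2 \;=\; \overline{\Phi}_j^r\Bigl(1 - \tfrac{1}{2n}\Bigr) - \frac{\overline{\Phi}_j^r}{2n} + 2 \;\leq\; \overline{\Phi}_j^r\Bigl(1 - \tfrac{1}{2n}\Bigr),
\]
using that $\overline{\Phi}_j^r/(2n) \geq 2$. It therefore suffices to establish the first inequality.

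The plan is to mimic the bin-by-bin case analysis from the proof of \cref{lem:rec_inequality_phi_psi}, but to track $\overline{\Phi}_{j,i}^{r+1}$ rather than $\ddot{\Phi}_{j,i}^{r+1}$. The key structural observation is that $\overline{\Phi}_{j,i}$ is continuous in $y_i$ at $y_i = z_j$, whereas $\ddot{\Phi}_{j,i}$ has a jump-down of $1$ there; formally, $\overline{\Phi}_{j,i}^{s} = \ddot{\Phi}_{j,i}^{s} + \mathbf{1}\{y_i^{s} < z_j\}$ for every step $s$. Consequently, the computation at each bin decomposes as
\[
\Ex{\overline{\Phi}_{j,i}^{r+1} \,\Big|\, \mathfrak{F}^r, \mathcal{K}_{j-1}^r}
\;\leq\;
\Ex{\ddot{\Phi}_{j,i}^{r+1} \,\Big|\, \mathfrak{F}^r, \mathcal{K}_{j-1}^r}
\;+\; \Pro{y_i^{r+1,0} < z_j \,\Big|\, \mathfrak{F}^r, \mathcal{K}_{j-1}^r},
\]
and I can plug in the per-bin estimates already derived inside the proof of \cref{lem:rec_inequality_phi_psi}.

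I would then sum over the three cases used there (Case 1: $y_i^{r,0} > z_j$; Case 2: $y_i^{r,0} \in (z_j - 1 + 1/n,\, z_j]$; Case 3: $y_i^{r,0} \leq z_j - 1 + 1/n$), using that $\overline{\Phi}_{j,i}^r = \ddot{\Phi}_{j,i}^r$ in Case 1 and $\overline{\Phi}_{j,i}^r = 1$ in Cases 2 and 3. For Case 3 bins the load cannot cross $z_j$ in one round (at most one ball reaches any given bin in a round and $y_i^{r,0} + 1 - 1/n \leq z_j$), so $\overline{\Phi}_{j,i}^{r+1} = 1$ deterministically. For Case 2 bins, the aggregate Case~2 estimate from \cref{lem:rec_inequality_phi_psi} gives $\sum_{i \in \text{Case 2}} \Ex{\ddot{\Phi}_{j,i}^{r+1}} \leq e^{-v^j/2}$, and the added indicator terms contribute exactly $|\text{Case 2}|$. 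For Case 1 bins, the Case 1 estimate gives $\Ex{\ddot{\Phi}_{j,i}^{r+1}} \leq \ddot{\Phi}_{j,i}^r(1 - e^{v^{j+1}}/n)$, and the added indicators sum to $\Ex{N_{\mathrm{trans}} \mid \mathfrak{F}^r, \mathcal{K}_{j-1}^r}$, where $N_{\mathrm{trans}}$ counts Case 1 bins that become light in round $r$. Collecting everything yields
\[
\Ex{\overline{\Phi}_j^{r+1} \,\Big|\, \mathfrak{F}^r, \mathcal{K}_{j-1}^r}
\;\leq\;
\overline{\Phi}_j^r\Bigl(1 - \tfrac{1}{n}\Bigr) \;-\; \frac{\ddot{\Phi}_j^r\,(e^{v^{j+1}} - 1)}{n}
\;+\; e^{-v^j/2} \;+\; \Ex{N_{\mathrm{trans}} \,\Big|\, \mathfrak{F}^r, \mathcal{K}_{j-1}^r}.
\]

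The main obstacle is thus showing that the residual terms $-\ddot{\Phi}_j^r(e^{v^{j+1}} - 1)/n + e^{-v^j/2} + \Ex{N_{\mathrm{trans}}}$ do not exceed $2$. Here the conditioning on $\mathcal{K}_{j-1}^r$ is essential: \cref{lem:quantile_bound} bounds the number of bins with $y_i^{r,0} \geq z_{j-1} + 2v/\alpha_2$ by $\tfrac{n}{2b} e^{-v^j}$, which in particular bounds $N_{\mathrm{trans}} \leq |H^r|$. Because each bin contributing to $H^r$ contributes at least $1$ to $\ddot{\Phi}_j^r$, we have $\ddot{\Phi}_j^r \geq |H^r| \geq N_{\mathrm{trans}}$, so the negative slack $\ddot{\Phi}_j^r(e^{v^{j+1}} - 1)/n$ from weakening the Case 1 drop can be paired against $\Ex{N_{\mathrm{trans}}}$ to absorb it within the $+2$ budget. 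The constants $v \geq 36b$ and $z_j - z_{j-1} = 5v/\alpha_2$ from \eqref{eq:fixingconsts} are tuned precisely so that this absorption closes for every $1 \leq j \leq j_{\max} - 1$; I expect the trickiest part of the write-up to be the uniform handling of the small and large regimes of $\ddot{\Phi}_j^r$ and the bookkeeping of the boundary case $y_i^{r,0} = z_j$.
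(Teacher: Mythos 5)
Your reduction of the second inequality to the first is exactly the paper's, and your observation that $\overline{\Phi}_{j,i}^s = \ddot{\Phi}_{j,i}^s + \mathbf{1}\{y_i^s < z_j\}$ is correct. However, the route you take via this decomposition diverges from the paper's proof and runs into a genuine difficulty that the paper's more direct approach avoids. The paper does not pass through $\ddot{\Phi}$ at all: it case-analyzes $\overline{\Phi}_{j,i}^{r+1}$ directly. For Case 3 bins, $\overline{\Phi}_{j,i}^{r+1} = \overline{\Phi}_{j,i}^r = 1 = \overline{\Phi}_{j,i}^r(1-\tfrac1n) + \tfrac1n$; for Case 2 bins, the aggregate can rise by at most $e^{\alpha_2 v^j} \cdot e^{-v^j} \leq 1$ (a Case 2 bin becomes heavy only in a Case A round, which occurs with probability $\leq e^{-v^j}$), and each term is again rewritten as $1\cdot(1-\tfrac1n) + \tfrac1n$; for Case 1 it invokes the Case 1 bound of \cref{lem:rec_inequality_phi_psi}. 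Summing, the $\tfrac1n$ terms over Cases 2 and 3 give at most $+1$ and the Case 2 aggregate gives at most $+1$, yielding exactly $+2$. No quantity like $N_{\mathrm{trans}}$ ever appears.

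The gap in your version is precisely the absorption step you flag as the trickiest part. You write $N_{\mathrm{trans}} \leq |H^r| \leq \ddot{\Phi}_j^r$ and hope to pair this against the slack $\ddot{\Phi}_j^r (e^{v^{j+1}}-1)/n$, but this does not close: since $v^{j+1} \leq v^{j_{\max}} \leq \frac{\alpha_2}{2v}\log n$, we have $e^{v^{j+1}} \leq n^{\alpha_2/(2v)} \leq \sqrt{n}$, so $(e^{v^{j+1}}-1)/n < n^{-1/2}$, whereas $N_{\mathrm{trans}} \leq \ddot{\Phi}_j^r$ only gives a ratio of $1$, not $n^{-1/2}$. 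Under $\mathcal{K}_{j-1}^r$ and \cref{lem:quantile_bound}, $N_{\mathrm{trans}}$ can indeed be as large as $\Theta(n e^{-v^j})$, which overwhelms the $+2$ budget unless one exploits the specific structure of transitioning bins (e.g., that their potential values are $1 + o(1)$ and that the overshoot $1 - e^{\alpha_2 v^j (y_i^{r+1,0}-z_j)}$ is at most $O(v^{j+1} k_j / n)$ per bin, and then balances carefully against the surplus of order $v^{j+1}e^{v^{j+1}}/n$ from the Case 1 drop of \cref{lem:rec_inequality_phi_psi}). You would have to carry out that bookkeeping rather than wave at it, and it is more delicate than the paper's $+2$ bookkeeping. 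A secondary, smaller gap: your Case 3 justification states "at most one ball reaches any given bin in a round", which is false for Case 3(a) bins ($y_i^{r,0} \leq z_{j-1} + \tfrac{2v}{\alpha_2}$); these can receive multiple balls. The conclusion $\overline{\Phi}_{j,i}^{r+1} = 1$ is still correct for them, but the reason is that their load cannot exceed $z_{j-1} + \tfrac{4v}{\alpha_2} < z_j$ during the round, as the paper's Case 3(a) argues.
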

\begin{proof}
For the first statement, the analysis proceeds similarly to \cref{lem:rec_inequality_phi_psi}, with Case 1 being unchanged:

\noindent\textbf{Case 2 [$y_i^{r, 0} \in (z_j - 1 + \frac{1}{n}, z_j]$]:} For this case, we will argue again about the aggregate contribution of all such bins $i$. In round $r$, we can allocate at most one ball to $i$, since $y_i^{r, 0} > z_{j-1} + \frac{2v}{\alpha_2}$ (which means that we can only allocate to this bin in the first step). In this case the potential $\overline{\Phi}_{j, i}^{r}$ of said bin raises from $0$ to at most $e^{\alpha_2 v^j}$. This can only occur if we pick a bin $\ell \in [n]$ with load $y_{\ell}^r \geq z_{j-1} + \frac{2v}{\alpha_2}$, thus with probability at most $e^{- v^j}$. Thus we have  \begin{align*}
\sum_{i : y_i^{r, 0} \in (z_j - 1 + \frac{1}{n}, z_j]} \Ex{\left. \overline{\Phi}_{j, i}^{r+1} \, \right| \, \mathfrak{F}^r} 
& \leq \sum_{i : y_i^{r, 0} \in (z_j - 1 + \frac{1}{n}, z_j]} \overline{\Phi}_{j, i}^{r} + e^{\alpha_2 v^j} \cdot   e^{-v^{j}} \\
&  \stackrel{(a)}{\leq}  \sum_{i : y_i^{r, 0} \in (z_j - 1 + \frac{1}{n}, z_j]} \overline{\Phi}_{j, i}^{r} + 1  \\
&  \stackrel{(b)}{=} \sum_{i : y_i^{r, 0} \in (z_j - 1 + \frac{1}{n}, z_j]}\left( \overline{\Phi}_{j, i}^{r} \cdot \Big( 1 - \frac{1}{n} \Big) + \frac{1}{n} \right)+ 1,
\end{align*}
where inequality $(a)$ holds since $\alpha_2 \leq 1/2$ and $(b)$ as $\overline{\Phi}_{j, i}^{r} = 1$ for each $i$ with $y_i^{r, 0} \in (z_j - 1 + \frac{1}{n}, z_j]$.

\noindent\textbf{Case 3(a) [$y_i^{r, 0} \leq z_{j-1} + \frac{2v}{\alpha_2}$]:} In round $r$, we could allocate multiple balls to bin $i$, however its load will always remain at most $z_{j-1}+ \frac{4v}{\alpha_2} < z_j $ and so $\overline{\Phi}_{j, i}^{r+1} = \overline{\Phi}_{j, i}^r = 1$. Thus,
\[
\Ex{\left. \overline{\Phi}_{j, i}^{r+1} \, \right| \, \mathfrak{F}^r} 
= \overline{\Phi}_{j, i}^r \cdot \Big(1 - \frac{1}{n}\Big) + \frac{1}{n}.
\]

\noindent\textbf{Case 3(b) [$y_i^{r, 0} \in (z_{j-1} + \frac{2v}{\alpha_2}, z_j - 1 + \frac{1}{n}]$]:} In round $r$, we can allocate at most one ball to $i$, since $y_i^{r, 0} > z_{j-1} + \frac{2v}{\alpha_2}$ (which means that we can only allocate to this bin in the first step); so as in Case 3(a), $\overline{\Phi}_{j, i}^{r+1} = \overline{\Phi}_{j, i}^r = 1$. 
Thus,
\[
\Ex{\left. \overline{\Phi}_{j, i}^{r+1} \, \right| \, \mathfrak{F}^r} 
= \overline{\Phi}_{j, i}^r \cdot \Big(1 - \frac{1}{n}\Big) + \frac{1}{n}.
\]
Hence, by aggregating over the three cases, we get the conclusion for $\overline{\Phi}_j$. 

For the second statement,\[
\Ex{\left. \overline{\Phi}_j^{r+1} \,\right|\, \mathfrak{F}^r, \mathcal{K}_{j-1}^{r}, \overline{\Phi}_j^{r} \geq 4n} \leq \overline{\Phi}_j^{r} \cdot \Big(1 - \frac{1}{2n} \Big) - \overline{\Phi}_j^{r} \cdot \frac{1}{2n} + 2 \leq \overline{\Phi}_j^{r} \cdot \Big(1 - \frac{1}{2n} \Big). \qedhere
\]
\end{proof}

\subsubsection{Deterministic Relations between the Potential Functions}\label{sec:deterministic_claims}

We collect several basic facts about the potential functions $\Phi_j^{s}$ and $\Psi_j^{s}$ related to the \Memory process on an $(a,b)$-biased sampling distribution with parameters fixed by \eqref{eq:fixingconsts}.

\begin{clm} \label{clm:phi_small_implies_psi_plus_one_small} For any $1 \leq j \leq j_{\max} - 1$ and for any step $s \geq 0$,
\[
\Phi_{j-1}^{s} \leq 2Cn ~~ \Rightarrow ~~ \Psi_j^{s} \leq \exp\Big(\frac{1}{2} \cdot \log^3 n\Big).
\]
\end{clm}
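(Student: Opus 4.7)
The plan is that this is a purely deterministic inequality, so no probabilistic machinery is needed; it amounts to converting an exponential bound on $\Phi_{j-1}^s$ into a pointwise bound on each normalized load and then substituting. The only mild subtlety is that the ratio $\alpha_1/\alpha_2 = 84$ and the increment $v$ in the super-exponential base mean the $\Psi_j$ potential can be vastly larger than $\Phi_{j-1}$; however, it is still only polynomial in $n$, which is easily absorbed by the super-polynomial target $\exp(\tfrac{1}{2}\log^3 n)$.

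Concretely, the first step will be to observe that every term in $\Phi_{j-1}^s$ is at least $1$, so the hypothesis $\Phi_{j-1}^s \leq 2Cn$ yields the per-bin bound
\[
e^{\alpha_2 v^{j-1}(y_i^s - z_{j-1})^+} \leq 2Cn
\quad\Longrightarrow\quad
(y_i^s - z_{j-1})^+ \leq \frac{\log(2Cn)}{\alpha_2 v^{j-1}}.
\]
Since $z_j = z_{j-1} + \tfrac{5v}{\alpha_2} \geq z_{j-1}$, the elementary fact $(y_i^s - z_j)^+ \leq (y_i^s - z_{j-1})^+$ immediately gives the same uniform bound with $z_j$ in place of $z_{j-1}$.

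Then I will plug this into the definition of $\Psi_j^s$ and use $\alpha_1 = 84\,\alpha_2$ (from \eqref{eq:fixingconsts}) to compute, for each $i$,
\[
\alpha_1 v^j (y_i^s - z_j)^+ \leq \alpha_1 v^j \cdot \frac{\log(2Cn)}{\alpha_2 v^{j-1}} = \frac{\alpha_1 v}{\alpha_2}\log(2Cn) = 84\,v\,\log(2Cn),
\]
so that summing over the $n$ bins yields $\Psi_j^s \leq n \cdot (2Cn)^{84v}$.

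Finally, since $v$, $C$ and $b$ are absolute constants fixed in \eqref{eq:fixingconsts}, the right-hand side is polynomial in $n$, i.e.\ $\Psi_j^s \leq e^{O(\log n)}$, which for $n$ sufficiently large is at most $\exp(\tfrac{1}{2}\log^3 n)$, completing the proof. The only ``obstacle'' is purely bookkeeping: making sure the constant $84v$ really is independent of $j$ (which it is, since it involves only the ratio $\alpha_1/\alpha_2$ and the base $v$, not the layer index) and confirming that the slack between polynomial and super-polynomial growth holds for all $n$ in the relevant range.
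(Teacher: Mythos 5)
Your proof is correct and follows essentially the same strategy as the paper's: convert the bound $\Phi_{j-1}^s \leq 2Cn$ into a per-bin pointwise load bound, substitute into the definition of $\Psi_j^s$, and sum over the $n$ bins. The one refinement you make — observing that $(y_i^s - z_j)^+ \leq (y_i^s - z_{j-1})^+$ so that the $v^{j}$ factor in the exponent of $\Psi_{j,i}^s$ cancels the $v^{-(j-1)}$ from inverting the $\Phi_{j-1,i}^s$ bound — yields a $j$-independent exponent $84\,v\log(2Cn)$ and hence the tighter conclusion $\Psi_j^s \leq n\,(2Cn)^{84v} = e^{O(\log n)}$. The paper instead bounds the normalized load crudely by $y_i^s \leq \log^{1.5} n$ (absorbing the $z_{j-1}$ shift using $j \leq j_{\max} = O(\log\log n)$) and then applies $v^j \leq \log n$, obtaining the weaker intermediate estimate $\Psi_j^s \leq n\cdot\exp(\alpha_1 \log^{2.5} n)$. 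Both comfortably beat the $\exp(\tfrac12\log^3 n)$ target, so either works; your version is a bit sharper but not structurally different.
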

\begin{proof}
Assuming $\Phi_{j-1}^{s} \leq 2Cn$ implies that for any bin $i \in [n]$, 
\begin{align*}
\Phi_{j-1, i}^s = \exp \Big(\alpha_2\cdot v^{j-1} \cdot \Big(y_i^s - \frac{5v}{\alpha_2} \cdot (j-1) \Big)^+ \Big) \leq 2 Cn.
\end{align*}
By re-arranging we get 
\begin{align*}
y_i^{s} 
 & \leq \frac{\log(2Cn)}{\alpha_2} \cdot v^{-(j-1)} + \frac{5v}{\alpha_2} \cdot (j-1) \leq \log^{1.5} n,
\end{align*}
for sufficiently large $n$, since $j \leq j_{\max} = \Oh(\log \log n)$. 

Hence, we can upper bound $\Psi_j^s$ by
\begin{align*}
\Psi_{j}^{s} 
 & \leq n \cdot \exp\Big(\alpha_1 \cdot v^{j} \cdot \log^{1.5} n \Big) \leq \exp\Big( \frac{1}{2} \cdot \log^3 n\Big),
\end{align*}
for sufficiently large $n$, using that $v^j \leq v^{j_{\max}} \leq \log n$. 
\end{proof}

The next claim bounds $\Phi_j$ given that $\Psi_j = \poly(n)$.

\begin{clm} \label{clm:psi_potential_poly_implies_phi_small}
For any $1 \leq j \leq j_{\max} - 1$ and for any step $s \geq 0$, if $\Psi_j^{s} \leq Cn^{12}$, then for any bin $i \in [n]$, we have that $\Phi_{j, i}^{s} \leq n^{1/6}$ and so by aggregating over all bins $\Phi_j^{s} \leq n^{7/6}$.
\end{clm}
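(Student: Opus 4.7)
The plan is to exploit the relation $\alpha_1 = 84 \alpha_2$ (so that $\alpha_2/\alpha_1 = 1/84$) to show that $\Phi_{j,i}^s$ is essentially a fractional power of $\Psi_{j,i}^s$. Indeed, observe that for every bin $i \in [n]$,
\[
\Phi_{j,i}^s = e^{\alpha_2 v^j (y_i^s - z_j)^+} = \left( e^{\alpha_1 v^j (y_i^s - z_j)^+} \right)^{\alpha_2/\alpha_1} = \left( \Psi_{j,i}^s \right)^{1/84}.
\]

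Now I would argue as follows. From the hypothesis $\Psi_j^s \leq Cn^{12}$, since every term $\Psi_{j,i}^s \geq 1$ is nonnegative, each individual term satisfies $\Psi_{j,i}^s \leq Cn^{12}$. Applying the identity above gives
\[
\Phi_{j,i}^s \leq (Cn^{12})^{1/84} = C^{1/84} \cdot n^{12/84} = C^{1/84} \cdot n^{1/7},
\]
and since $C$ is a constant and $1/7 < 1/6$, we obtain $\Phi_{j,i}^s \leq n^{1/6}$ for sufficiently large $n$. Summing over all $n$ bins then yields $\Phi_j^s \leq n \cdot n^{1/6} = n^{7/6}$, as claimed.

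There is no real obstacle here; the statement is essentially a deterministic, pointwise consequence of the fact that $\Phi_j$ uses a smoothing parameter that is a constant factor smaller than that of $\Psi_j$. The only care needed is to verify that $12 \alpha_2/\alpha_1 = 12/84 = 1/7 < 1/6$, which gives the required slack to absorb the constant $C^{1/84}$ into the $n^{1/6}$ bound. This claim is the key bridge used elsewhere in the induction step (\cref{lem:new_inductive_step}): a weak polynomial bound on the steeper potential $\Psi_j$ automatically implies an $n^{7/6}$ bound on the shallower potential $\Phi_j$, which in turn is the starting point for proving that $\Phi_j = \mathcal{O}(n)$ holds in some round of the recovery window.
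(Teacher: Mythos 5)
Your proof is correct and is essentially the same as the paper's, just phrased more directly: the paper first extracts a pointwise bound $y_i^s \leq z_j + \tfrac{14}{\alpha_1}(\log n) v^{-j}$ by contradiction against $n^{14}$ and then plugs it into $\Phi_{j,i}^s$, whereas you shortcut this by noting $\Phi_{j,i}^s = (\Psi_{j,i}^s)^{\alpha_2/\alpha_1}$ directly. Your route even gives the slightly sharper bound $C^{1/84} n^{1/7}$, and the observation that each $\Psi_{j,i}^s \geq 1$ (so a single term is bounded by the sum) is exactly the implicit step the paper also relies on.
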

\begin{proof}
We will begin by showing that when $\Psi_j^{s} \leq Cn^{12}$ we have that for any bin $i \in [n]$,
\[
y_i^{s} \leq z_j + \frac{14}{\alpha_1} \cdot (\log n) \cdot v^{-j}.
\]
Assuming the contrary, i.e., that for some bin $i$, $y_i^{s} > z_j + \frac{14}{\alpha_1} \cdot (\log n) \cdot v^{-j}$, then we get $\Psi_j^{s} > \exp(\alpha_1 \cdot \frac{14}{\alpha_1} \cdot \log n ) = n^{14}$, which is a contradiction.

Next, we turn to upper bounding the contribution of any bin $i \in [n]$,
\begin{align}
\Phi_{j,i}^{s}  
 & = \exp \bigl( \alpha_2 \cdot v^{j} \cdot \bigl( y_i^s - z_j  \bigr)^{+} \bigr) \notag \\
 & \leq \exp\Big( \alpha_2 \cdot v^j \cdot \frac{14}{\alpha_1} \cdot (\log n) \cdot v^{-j} \Big) \notag \\
 &= \exp\Big( \frac{14 \cdot \alpha_2}{\alpha_1} \cdot \log n  \Big) = n^{1/6}, \label{eq:n6_upper_bound}
\end{align}
since $\alpha_1 = 6 \cdot 14 \cdot \alpha_2$. Hence, by aggregating over all bins, 
\[
\Phi_j^s \leq n \cdot n^{1/6} = n^{7/6}. \qedhere
\]
\end{proof}

The next claim is crucial for applying the concentration inequality in \cref{lem:new_inductive_step}, since the third statement bounds the maximum additive change of $\overline{\Phi}^{r}$ (assuming $\overline{\Psi}^{r} = \poly(n)$):
\begin{clm} \label{clm:psi_potential_poly_implies_bounded_diff}
For any $1 \leq j \leq j_{\max} - 1$ and for any round $r \geq 0$, if  $\overline{\Psi}_j^{r} \leq Cn^{12}$, then $| \overline{\Phi}_j^{r+1} - \overline{\Phi}_j^{r} | \leq n^{1/3}$.
\end{clm}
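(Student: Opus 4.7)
The plan is to decompose a round into its two possible types (Case A with a single allocation, or Case B with at most $B \leq k_j \cdot v/\alpha_2 \leq n^{1/7} \cdot v/\alpha_2$ allocations), bound the aggregate change $|\overline{\Phi}_j^{r+1} - \overline{\Phi}_j^r|$ by summing per-bin contributions, and exploit the hypothesis via the preceding \cref{clm:psi_potential_poly_implies_phi_small}, which converts $\overline{\Psi}_j^r \leq Cn^{12}$ into the pointwise bound $\overline{\Phi}_{j,i}^r \leq n^{1/6}$ (and the aggregate $\overline{\Phi}_j^r \leq n^{7/6}$).

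The first key step is a structural observation about Case B: any bin $\ell$ that receives a ball at substep $s$ must satisfy $y_\ell^{r,s-1} \leq z_{j-1}+ 4v/\alpha_2$ at the moment of allocation, so after the ball (plus the at-most $-B/n$ normalization over the remainder of the round) we have $y_\ell^{r+1,0} \leq z_{j-1}+4v/\alpha_2 + 1 < z_j$, using $z_j - z_{j-1} = 5v/\alpha_2 > 4v/\alpha_2 + 1$ (which holds because $v \geq 36$ and $\alpha_2 < 1$). A bin that does not receive any ball only loses $B/n$ of normalized load, so in every sub-case of Case B one has $\overline{\Phi}_{j,i}^{r+1} \leq \overline{\Phi}_{j,i}^r$. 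In particular, bins with $y_i^{r,0} > z_j$ do not receive balls, and aggregating the decreases yields
\[
|\overline{\Phi}_j^{r+1} - \overline{\Phi}_j^r| \leq \sum_{i : y_i^{r,0} > z_j} \overline{\Phi}_{j,i}^r \cdot \bigl(1 - e^{-\alpha_2 v^j B/n}\bigr) \leq \overline{\Phi}_j^r \cdot \alpha_2 v^j \cdot B/n \leq n^{13/42} \cdot O(\alpha_2 v^j).
\]

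In Case A, only the single allocated bin $\ell$ sees its load rise (by $1 - 1/n$) while every other bin loses $1/n$ in normalized load. Using the pointwise bound $\overline{\Phi}_{j,\ell}^r \leq n^{1/6}$, the increase of $\overline{\Phi}_{j,\ell}$ is at most $n^{1/6}(e^{\alpha_2 v^j} - 1) + (e^{\alpha_2 v^j} - 1) \leq 2 n^{1/6} e^{\alpha_2 v^j}$, while the aggregate decrease from the other bins is at most $\overline{\Phi}_j^r \cdot \alpha_2 v^j / n \leq n^{1/6} \alpha_2 v^j$, giving $|\overline{\Phi}_j^{r+1} - \overline{\Phi}_j^r| \leq 3 n^{1/6} e^{\alpha_2 v^j}$.

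Both bounds are closed by the constraint $j \leq j_{\max} - 1$: combining with $v^{j_{\max}} = \frac{\alpha_2}{2v} \log n$ we get $\alpha_2 v^j \leq \frac{\alpha_2^2}{2v^2} \log n$, a tiny constant times $\log n$ (using $\alpha_2 < 1/84$ and $v \geq 36$), so $e^{\alpha_2 v^j} \leq n^{o(1)}$; since $1/6 + o(1) < 1/3$ and $13/42 < 14/42 = 1/3$, both bounds are at most $n^{1/3}$ for large $n$. The main obstacle is the Case B structural step showing that no bin's normalized load can cross $z_j$ upwards during a round of many balls; this rests on the carefully-tuned constants in \eqref{eq:fixingconsts} (the gap $5v/\alpha_2$ between consecutive layer thresholds, the Case B allocation cut-off $z_{j-1}+4v/\alpha_2$, and $v \geq 36$), without which the crucial monotonicity $\overline{\Phi}_{j,i}^{r+1} \leq \overline{\Phi}_{j,i}^r$ would fail and the change could be polynomially larger.
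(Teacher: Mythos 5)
Your proof is correct and follows essentially the same strategy as the paper: convert the hypothesis $\overline{\Psi}_j^r \leq Cn^{12}$ into pointwise/aggregate bounds $\overline{\Phi}_{j,i}^r \leq n^{1/6}$, $\overline{\Phi}_j^r \leq n^{7/6}$ via \cref{clm:psi_potential_poly_implies_phi_small}, observe that within a round at most one ball can go to a bin contributing more than $1$ to $\overline{\Phi}_j$ (bounding the possible increase), and bound the decrease by the $k_j\cdot v/\alpha_2$ normalization shifts. You spell out the case analysis (Case A vs.\ Case B, and the fact that no bin crosses $z_j$ upward during Case B because the $z_{j-1}+4v/\alpha_2$ cutoff sits below $z_j - 1$) more explicitly than the paper, which is a useful sanity check on the paper's terse claim that ``we can allocate to at most one heavy bin''; otherwise the arithmetic is the same up to a small notational slip in your Case B line (the quantity is $O(v^{j+1})$ rather than $O(\alpha_2 v^j)$, which does not affect the conclusion).
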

\begin{proof} 
We will start by obtaining lower and upper bounds for $\overline{\Phi}_j^{r+1}$. By \cref{clm:psi_potential_poly_implies_phi_small}, in any round $r$ with $\Psi_j^r \leq Cn^{12}$, we have that $\Phi_{j, i}^r \leq n^{1/6}$ for each bin. Now, for the upper bound, note that in any round $r+1$ we can allocate to at most one heavy bin. Let $i \in [n]$ be that heavy bin, then
\[
\overline{\Phi}_j^{r+1} \leq \overline{\Phi}_j^{r} + \overline{\Phi}_{j, i}^r \cdot \exp\bigl( \alpha_2 \cdot v^j \bigr) \leq \overline{\Phi}_j^{r} + n^{1/6} \cdot n^{1/6} = \overline{\Phi}_j^{r} + n^{1/3},
\]
using that $v^j \leq v^{j_{\max}} \leq \log n$, $\alpha_2 \leq 1/6$ by \eqref{eq:fixingconsts} and applying \cref{eq:n6_upper_bound}. For the lower bound, we pessimistically assume that all bin loads decrease by $1/n$ in each step of the round $r+1$. So, since there are at most $k_j \cdot \frac{v}{\alpha_2}$ steps
\[
\overline{\Phi}_j^{r+1} \geq \overline{\Phi}_j^{r} \cdot \exp\Big( - \frac{v^{j+1}}{n} \cdot k_j \Big) \geq \overline{\Phi}_j^{r} \cdot \Big(1 - \frac{v^{j+1}}{n} \cdot k_j \Big) \geq \Phi_j^{r} - \frac{n \cdot n^{1/6}}{n} \cdot (\log n) \cdot n^{1/7} \geq \overline{\Phi}_j^{r} - n^{1/3},
\]
using that $e^x \geq 1 + x$ (for any $x$), $v^{j+1} \leq v^{j_{\max}} \leq \log n$, $\overline{\Phi}_j^r \leq n \cdot n^{1/6}$ by \cref{clm:psi_potential_poly_implies_phi_small} for $s = T(r)$ and $k_j \leq n^{1/7}$. Combining the two bounds we get the statement.
\end{proof}

The next claim is a simple ``smoothness'' argument showing that the potential cannot decrease quickly within $n/\log^2 n$ steps. The derivation is elementary and relies on the fact that the average load does not change by more than $1/\log^2 n$ within these steps.
\begin{clm} \label{clm:phi_j_does_not_drop_quickly}
For any $1 \leq j \leq j_{\max}-1$, any step $s \geq 0$ and any step $u \in [s, s + n/\log^2 n]$, we have $\Phi_j^{u} \geq 0.99 \cdot \Phi_j^{s}$.
\end{clm}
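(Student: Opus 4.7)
The plan is to argue that each individual contribution $\Phi_{j,i}^u$ cannot shrink by more than a factor of roughly $1-1/\log n$ relative to $\Phi_{j,i}^s$ over any interval of length at most $n/\log^2 n$. Summing over bins then gives the claim. The only mechanism by which $\Phi_{j,i}$ can decrease is through the change of the normalization: the unnormalized load $x_i^u$ is monotonically non-decreasing in time, while the subtracted average $u/n$ grows at rate $1/n$.

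First I would note that for any $u \in [s, s + n/\log^2 n]$ and any bin $i \in [n]$,
\[
y_i^u \;=\; x_i^u - \tfrac{u}{n} \;\geq\; x_i^s - \tfrac{u}{n} \;=\; y_i^s - \tfrac{u-s}{n} \;\geq\; y_i^s - \tfrac{1}{\log^2 n},
\]
and consequently $(y_i^u - z_j)^+ \geq (y_i^s - z_j)^+ - 1/\log^2 n$.

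Plugging this into the definition of $\Phi_{j,i}$ yields the multiplicative bound
\[
\Phi_{j,i}^u \;\geq\; \Phi_{j,i}^s \cdot \exp\!\Bigl(-\alpha_2 \cdot v^j \cdot \tfrac{1}{\log^2 n}\Bigr).
\]
Since $j \leq j_{\max}-1$ and $v^{j_{\max}} = \frac{\alpha_2}{2v}\log n \leq \log n$ by the definition in \eqref{eq:fixingconsts}, the exponent in absolute value is at most $\alpha_2/\log n \leq 1/\log n$ (using $\alpha_2 \leq 1$). Therefore
\[
\Phi_{j,i}^u \;\geq\; \Phi_{j,i}^s \cdot e^{-1/\log n} \;\geq\; \Phi_{j,i}^s \cdot \bigl(1 - \tfrac{1}{\log n}\bigr) \;\geq\; 0.99 \cdot \Phi_{j,i}^s
\]
for all sufficiently large $n$. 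Summing this pointwise bound over all $i \in [n]$ gives $\Phi_j^u \geq 0.99 \cdot \Phi_j^s$, as required.

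There is no real obstacle: this is a purely deterministic smoothness statement. The only thing to double-check is the bound $v^j \leq \log n$, which follows directly from the chosen value of $j_{\max}$, together with ensuring $1/\log n$ is small enough to absorb into the constant $0.99$.
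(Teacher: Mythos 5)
Your proof is correct and takes essentially the same approach as the paper: bound the decrease of the normalized load by $(u-s)/n \leq 1/\log^2 n$, pull this through the exponential using $(y_i^u - z_j)^+ \geq (y_i^s - z_j)^+ - 1/\log^2 n$, and observe that $\alpha_2 v^j / \log^2 n = o(1)$ since $v^j \leq v^{j_{\max}} \leq \log n$ and $\alpha_2 \leq 1$. The only cosmetic difference is that you make the bound $e^{-1/\log n}$ explicit before absorbing it into $0.99$, whereas the paper writes $e^{-o(1)}$ directly.
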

\begin{proof}
The normalized load after $u - s$ steps can decrease by at most $\frac{u - s}{n} \leq \frac{1}{\log^2 n}$. Hence, for any bin $i \in [n]$,
\begin{align*}
\Phi_{j, i}^{u} 
 & = e^{\alpha_2 \cdot v^j \cdot (y_i^{u} - z_j )^{+}}
 \geq e^{\alpha_2 \cdot v^j \cdot (y_i^{s} - \frac{u-s}{n} - z_j )^{+}}
 \geq e^{\alpha_2 \cdot v^j \cdot (y_i^{s} - z_j )^+ - \alpha_2 \cdot v^j \cdot \frac{1}{\log^2 n} }
 = \Phi_{j, i}^{s} \cdot e^{-\frac{\alpha_2 \cdot v^j}{\log^2 n}} \\
 & \geq \Phi_{j, i}^{s} \cdot e^{-o(1)} \geq 0.99 \cdot \Phi_{j, i}^{s},
\end{align*}
for sufficiently large $n$, using that $v^j \leq v^{j_{\max}} \leq \log n$ and $\alpha_2 \leq 1$. By aggregating over all bins, we get the claim.
\end{proof} 

The next claim relates the full and partial potentials defined at the start of Section \ref{sec:layered_induction}.

\begin{clm} \label{lem:potential_within_rounds}
For any round $r \geq 0$ with one or more phases we have that for any substep $s \geq 0$ of round $r$, $(i)$ $\Phi_j^{r, s} \leq \overline{\Phi}_j^r$, $(ii)$ $\Psi_j^{r, s} \leq \overline{\Psi}_j^r$, $(iii)$ $\dot{\Phi}_j^{T(r) + s} \leq \ddot{\Phi}_j^r$ and $(iv)$ $\dot{\Psi}_j^{T(r) + s} \leq \ddot{\Psi}_j^r$.
\end{clm}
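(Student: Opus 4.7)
The plan is to exploit the Case~B allocation constraint from the folded process: in any round $r$ containing at least one phase, every ball allocated during the round goes to a bin $\ell$ with $y_\ell^{r,s} \leq z_{j-1} + \tfrac{4v}{\alpha_2}$. The pivotal arithmetic check I would do first is that the gap between this threshold and $z_j$ satisfies $z_j - (z_{j-1} + \tfrac{4v}{\alpha_2}) = \tfrac{v}{\alpha_2} > 1$, which follows from $v \geq 36$ and $\alpha_2 \leq 1$ as fixed in \eqref{eq:fixingconsts}. Since a single allocation bumps a bin's normalised load by at most $1 - \tfrac{1}{n} < 1$, this guarantees that no bin's normalised load ever crosses $z_j$ from below during the round.

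Armed with this ``no upward crossing'' property, I would prove $(i)$ and $(ii)$ by splitting the bins into two groups. Any bin $i$ with $y_i^{r,0} \geq z_j$ fails the Case~B allocation predicate (which requires $y_\ell^{r,s'} \leq z_{j-1} + \tfrac{4v}{\alpha_2} < z_j$) at every substep, so its raw load is frozen and $y_i^{r,s} = y_i^{r,0} - s/n \leq y_i^{r,0}$. Any bin $i$ with $y_i^{r,0} < z_j$ stays strictly below $z_j$ throughout the round by no upward crossing, so $(y_i^{r,s} - z_j)^+ = 0 = (y_i^{r,0} - z_j)^+$. Summing $e^{\alpha_2 v^j (\cdot)^+}$ bin-by-bin then yields $(i)$, and the identical argument with $\alpha_1$ replacing $\alpha_2$ gives $(ii)$.

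For the partial potentials $(iii)$ and $(iv)$, the same observation gives the set containment $\{i : y_i^{T(r)+s} \geq z_j\} \subseteq \{i : y_i^{r,0} \geq z_j\}$ since no light bin becomes heavy during the round; on the surviving heavy bins the raw loads are frozen, so each surviving term $e^{\alpha_2 v^j (y_i^{T(r)+s} - z_j)}$ is dominated by $e^{\alpha_2 v^j (y_i^{r,0} - z_j)}$, and summing over the smaller set yields $\dot{\Phi}_j^{T(r)+s} \leq \ddot{\Phi}_j^r$. Inequality $(iv)$ is identical with $\alpha_1$ in place of $\alpha_2$. The only substantive step is the one-line arithmetic check at the outset; the remainder is monotonicity of the exponential together with the fact that restricting a sum of non-negative terms to a sub-index-set can only shrink it, so I do not anticipate any real obstacle.
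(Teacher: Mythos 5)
Your proposal is correct and takes essentially the same approach as the paper, which observes that in a round with one or more phases (Case B) balls are only allocated to bins whose normalised load is below $z_{j-1}+\tfrac{4v}{\alpha_2}$ and hence contribute trivially to the potential, so the potential can only decrease via the shift in the average load. You make explicit the one step the paper leaves implicit---that $z_j - (z_{j-1}+\tfrac{4v}{\alpha_2}) = \tfrac{v}{\alpha_2}>1$ ensures a bin receiving a ball stays strictly below $z_j$ after the allocation---but this is just a careful spelling-out of the same idea rather than a different route.
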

\begin{proof}
For all substeps of a round, we allocate to a bin that does not contribute to the potential. Hence, the potential can only decrease (because of the change of the average load).
\end{proof}

\subsubsection{Auxiliary Probabilistic Lemmas on the Potential Functions (Recovery) }\label{sec:probabilistic_drop}
The first lemma proves that $\tilde{\Psi}_{j}^{s}=\Psi_j^{s} \cdot \mathbf{1}_{\cap_{r\in [\beta_{j-1}, s]} \mathcal{K}_{j-1}^{r}}$ is small in expectation for \textit{all} rounds $s \geq \beta_{j-1}+n\log^6 n$.

\begin{lem} \label{lem:newexists_s_st_ex_psi_linear}
For any $1 \leq j \leq j_{\max} - 1$, we have that for any step $s \geq \beta_{j-1} + n \log^6 n$, $\ex{\tilde{\Psi}_j^{s}} \leq Cn$.
\end{lem}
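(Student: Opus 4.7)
The plan is to iterate the per-round drop inequality for the partial potential $\ddot{\Psi}_j$ from \cref{lem:rec_inequality_phi_psi}, starting from a polylogarithmic initial value bequeathed by the induction hypothesis. The decay rate per round is $e^{v^{j+1}}/n$, and when combined with the maximum round length $k_j \cdot v/\alpha_2 = e^{v^{j+1}} \log^3 n \cdot v/\alpha_2$, the total number of steps required for the potential to drop from $\exp(\tfrac{1}{2}\log^3 n)$ down to $O(n)$ is $O(n\log^6 n)$, matching the interval in the statement.

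First, I would pass to the killed partial potential $\widetilde{\ddot{\Psi}}_j^r := \ddot{\Psi}_j^r \cdot \mathbf{1}_{\bigcap_{\rho \in [\beta_{j-1}, T(r)]} \mathcal{K}_{j-1}^\rho}$, where $T(r)$ denotes the starting step of round $r$. Since the indicator is monotone non-increasing in $r$, the conditional drop inequality of \cref{lem:rec_inequality_phi_psi} becomes an unconditional recurrence
\[
\ex{\widetilde{\ddot{\Psi}}_j^{r+1} \,\big|\, \mathfrak{F}^{T(r)}} \leq \widetilde{\ddot{\Psi}}_j^{r} \cdot \Bigl(1 - \tfrac{e^{v^{j+1}}}{n}\Bigr) + e^{-v^j/2}
\]
(if the indicator at $r$ is zero both sides vanish; otherwise $\mathcal{K}_{j-1}^{T(r)}$ holds and the drop inequality applies). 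The initial value at the first round $r_0$ with $T(r_0) \geq \beta_{j-1}$ is controlled by \cref{clm:phi_small_implies_psi_plus_one_small}: whenever the indicator is one, $\Phi_{j-1}^{T(r_0)} \leq 2Cn$ forces $\Psi_j^{T(r_0)} \leq \exp(\tfrac{1}{2}\log^3 n)$, and hence $\widetilde{\ddot{\Psi}}_j^{r_0} \leq \exp(\tfrac{1}{2}\log^3 n)$.

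Iterating this recurrence via \cref{lem:recurrence_inequality_expectation} for $R := \lceil n\log^3 n / e^{v^{j+1}} \rceil$ rounds yields $\ex{\widetilde{\ddot{\Psi}}_j^{r_0+R}} \leq (C-1)n$: the geometric term decays to at most $n$ by the choice of $R$, and the additive term is bounded by $n e^{-v^j/2 - v^{j+1}} \leq n e^{-v}$, which is negligible since $v \geq 36$. To return from rounds to steps I would apply \cref{lem:potential_within_rounds}(iv), giving $\dot{\Psi}_j^{s} \leq \ddot{\Psi}_j^{r(s)}$ for any step $s$ in round $r(s)$; combined with the deterministic bound $\Psi_j^{s} \leq \dot{\Psi}_j^{s} + n$ this produces $\tilde{\Psi}_j^{s} \leq \widetilde{\ddot{\Psi}}_j^{r(s)} + n$. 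Since each round consumes at most $k_j \cdot v/\alpha_2$ steps, $R$ rounds fit inside $O(n\log^6 n)$ steps, so for any $s \geq \beta_{j-1} + n\log^6 n$ one has $r(s) \geq r_0 + R$, yielding $\ex{\tilde{\Psi}_j^{s}} \leq (C-1)n + n = Cn$.

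The main subtlety I expect to navigate is the round-to-step conversion, which rests on the fine-tuned balance between the decay rate $e^{v^{j+1}}/n$ and the maximum round length $e^{v^{j+1}}\log^3 n$ — precisely the balance that the definition of $k_j$ is engineered to guarantee, and which also justifies the $n\log^6 n$ buffer in the statement. No probabilistic concentration is needed here; the high-probability conversion of this expectation bound is carried out later in the induction via Markov's inequality plus a union bound and the stabilization martingale argument on $\Phi_j$.
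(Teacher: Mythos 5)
Your proof takes essentially the same route as the paper's: pass to the killed partial potential, use the monotonicity of the indicator to turn the conditional drop inequality from \cref{lem:rec_inequality_phi_psi} into an unconditional one, seed the recurrence via \cref{clm:phi_small_implies_psi_plus_one_small}, iterate for $\Theta(n e^{-v^{j+1}} \log^3 n)$ rounds, and then convert back from rounds to steps via \cref{lem:potential_within_rounds} and $\Psi_j \leq \dot{\Psi}_j + n$. The one slip is the citation: what you invoke is \cref{lem:geometric_arithmetic} (geometric decay plus stationary additive term), not \cref{lem:recurrence_inequality_expectation}, which requires the initial value to already satisfy $Z_0 \leq b/\alpha$ and hence cannot be applied directly from the $\exp(\tfrac{1}{2}\log^3 n)$ starting point; the calculation you describe (``the geometric term decays to at most $n$ by the choice of $R$, and the additive term is bounded by \dots'') is exactly the conclusion of \cref{lem:geometric_arithmetic}. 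The paper also invokes the smoothness bound \cref{clm:phi_j_does_not_drop_quickly} at the very end to cover the gap between the (random) step at which round $R$ begins and the fixed deadline $\beta_{j-1} + n\log^6 n$; your conversion via \cref{lem:potential_within_rounds} alone leaves that boundary case implicit, though it is a minor point.
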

\begin{proof}
By \cref{lem:rec_inequality_phi_psi} for any round $r \geq 0$ starting to count from step $\beta_{j-1}$, \begin{align} \label{eq:ddot_psi_rec_ineq}
\Ex{\left. \ddot{\Psi}_{j}^{r+1} \,\right|\, \mathfrak{F}^r, \mathcal{K}_{j-1}^r  } \leq \ddot{\Psi}_j^{r} \cdot \Big(1 - \frac{e^{v^{j+1}}}{n} \Big) + e^{-v^j/2}.
\end{align}
Now we define the potential function for any round $r$ after step $\beta_{j-1}$,\[
\widehat{\Psi}_j^r := \ddot{\Psi}_j^r \cdot \mathbf{1}_{\cap_{\rho \in [\beta_{j-1}, T_{\beta_{j-1}}(r)]} \mathcal{K}_{j-1}^{\rho}}.
\]
Note that although this potential has a hat is it not related to the \DWeakMemory memory process. Next observe that whenever $\mathbf{1}_{\cap_{\rho \in [\beta_{j-1}, T_{\beta_{j-1}}(r)]} \mathcal{K}_{j-1}^{\rho}} = 0$, it follows deterministically that $\widehat{\Psi}_j^{r+1} = 0$, and hence by \cref{eq:ddot_psi_rec_ineq}, \begin{align} \label{eq:tilde_psi_drop}
\Ex{\left. \widehat{\Psi}_{j}^{r+1} \,\right|\, \mathfrak{F}^r} \leq \widehat{\Psi}_j^{r} \cdot \Big(1 - \frac{e^{v^{j+1}}}{2n} \Big)  + e^{-v^j/2}.
\end{align}
To upper bound $\Ex{\widehat{\Psi}_j^r}$, it suffices to upper bound $\Ex{\left. \widehat{\Psi}_j^r \,\,\right|\, \mathcal{K}_{j-1}^{\beta_{j-1}}}$, since
\begin{align} \label{eq:ex_tilde_psi_conditioned_on_K}
\Ex{\widehat{\Psi}_j^r} = \Ex{\left. \widehat{\Psi}_j^r \,\,\right|\, \mathcal{K}_{j-1}^{\beta_{j-1}}} \cdot \Pro{\mathcal{K}_{j-1}^{\beta_{j-1}}} + 0 \cdot \Pro{\neg \mathcal{K}_{j-1}^{\beta_{j-1}}} \leq \Ex{\left. \widehat{\Psi}_j^r \,\,\right|\, \mathcal{K}_{j-1}^{\beta_{j-1}}}.
\end{align}
When $\mathcal{K}_{j-1}^{\beta_{j-1}}$ holds, by definition $\Phi_{j-1}^{\beta_{j-1}} \leq 2Cn$ holds, and by \cref{clm:phi_small_implies_psi_plus_one_small}, $\widehat{\Psi}_{j}^{\beta_{j-1}} \leq \Psi_{j}^{\beta_{j-1}} \leq e^{\frac{1}{2} \cdot \log^3 n}$. Applying \cref{lem:geometric_arithmetic} to \cref{eq:tilde_psi_drop} (with $a = 1 - \frac{1}{2n} \cdot e^{v^{j+1}}$ and $b = e^{-v^j}$), for any round $r \geq n \cdot e^{-v^{j+1}} \cdot \log^3 n$ where round $r=0$ starts at step $\beta_{j-1}$, we have 
\begin{align*}
\Ex{\left. \widehat{\Psi}_j^r \,\, \right\vert \,  \mathcal{K}_{j-1}^{\beta_{j-1}}}  & \leq 
\Ex{\left. \widehat{\Psi}_j^r \,\, \right\vert \, \mathfrak{F}^{\beta_{j-1}},  \tilde{\Psi}_j^{\beta_{j-1}} \leq e^{\frac{1}{2} \cdot \log^3 n}} \\
 & \leq \Big(1 - \frac{e^{v^{j+1}}}{2n} \Big)^{r} \cdot e^{\frac{1}{2} \cdot \log^3 n} + \frac{2n \cdot e^{-v^j}}{e^{v^{j+1}}} \\
 & \leq \Big(1 - \frac{e^{v^{j+1}}}{2n} \Big)^{n \cdot e^{-v^{j+1}} \cdot \log^3 n} \cdot e^{\frac{1}{2} \cdot \log^3 n} + 2n. \end{align*}Now, using the inequalities $e^x \geq 1 + x$ (for any $x$) and $C \geq 6$ (from \eqref{eq:fixingconsts}), we obtain
 \[
 \Ex{\left. \widehat{\Psi}_j^r \,\, \right\vert \,  \mathcal{K}_{j-1}^{\beta_{j-1}}}  \leq e^{-\frac{1}{2} \cdot \log^3 n} \cdot e^{\frac{1}{2} \cdot \log^3 n} + 2n \leq 1 + 2n \leq \frac{C}{2} n.\] Combining this with \cref{eq:ex_tilde_psi_conditioned_on_K}, for any round $r \geq n \cdot e^{-v^{j+1}} \cdot \log^3 n$,
\[
\Ex{\widehat{\Psi}_j^r} \leq \frac{C}{2} n,
\]
which implies 
\[
\Ex{\ddot{\Psi}_j^r \cdot \mathbf{1}_{\cap_{\rho \in [\beta_{j-1}, T_{\beta_{j-1}}(r)]} \mathcal{K}_{j-1}^{\rho}}} \leq \frac{C}{2} n.
\]
Therefore,
\begin{align*}
\Ex{\tilde{\Psi}_j^{T_{\beta_{j-1}}(r)}} 
 & = \Ex{\overline{\Psi}_j^{r} \cdot \mathbf{1}_{\cap_{\rho \in [\beta_{j-1}, T_{\beta_{j-1}}(r)]} \mathcal{K}_{j-1}^{\rho}}} \\
 & \stackrel{(a)}{\leq} \Ex{\dot{\Psi}_j^{r} \cdot \mathbf{1}_{\cap_{\rho \in [\beta_{j-1}, T_{\beta_{j-1}}(r)]} \mathcal{K}_{j-1}^{\rho}}} + \Ex{n \cdot \mathbf{1}_{\cap_{\rho \in [\beta_{j-1}, T_{\beta_{j-1}}(r)]} \mathcal{K}_{j-1}^{\rho}}} \\
 & \leq \frac{C}{2} n + n ,
\end{align*}
using in $(a)$ that $\overline{\Psi}_j^r \leq \dot{\Psi}_j^r + n$. Since in each round we allocate at most $k_j = e^{v^{j+1}} \cdot \log^3 n$ balls, in  $n \cdot e^{-v^{j+1}} \cdot \log^3 n$ rounds we allocate at most $n \log^6 n$ balls. Hence by \cref{clm:phi_j_does_not_drop_quickly} and \cref{lem:potential_within_rounds}, for any step $s \geq T_{\beta_{j-1}}(r) \geq \beta_{j-1} + n \log^6 n$,
\[
\Ex{\tilde{\Phi}_j^s} \leq \frac{C/2 + 1}{0.99} \cdot n \leq Cn,
\]
using that $C \geq 6$ by \eqref{eq:fixingconsts}.
\end{proof}

We now switch to the other potential function $\tilde{\Phi}_j^{s} = \Phi_j^{s} \cdot \mathbf{1}_{ \cap_{r\in [\beta_{j-1}, s]} \mathcal{K}_{j-1}^{r}}$, and prove that \Whp~it is linear in \emph{at least one} round in $[\beta_{j-1} + n \log^3 n, \beta_j]$.

\begin{lem} \label{lem:exists_s_st_phi_linear_whp}
For any $1 \leq j \leq j_{\max} - 1$ it holds that,
\[
\Pro{\bigcup_{r \in [\beta_{j-1}+ n \log^6 n, \beta_j]} \left\{ \tilde{\Phi}_j^r \leq Cn \right\}} \geq 1 - n^{-5}.
\]
\end{lem}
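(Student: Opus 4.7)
The approach is two-stage: first, use the expectation bound on $\tilde{\Psi}_j$ from \cref{lem:newexists_s_st_ex_psi_linear} to obtain a polynomial upper bound on $\tilde{\Phi}_j$ throughout the sub-interval $[\beta_{j-1}+n\log^6 n, \beta_j]$; second, apply an iterated killed-martingale drop argument to the partial potential $\ddot{\Phi}_j$, using the strong multiplicative drop of \cref{cor:large_rec_inequality_phi_psi} whenever the potential is above threshold, to conclude that the partial potential falls below $(C-1)n$ at some round whose step lies in the sub-interval.

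First, applying Markov's inequality and a union bound over the at most $n\log^6 n$ steps in $[\beta_{j-1}+n\log^6 n, \beta_j]$ to the bound $\Ex{\tilde{\Psi}_j^s} \leq Cn$ yields that, with probability at least $1 - n^{-10}$, the event $\mathcal{E}_1 := \bigcap_{s \in [\beta_{j-1}+n\log^6 n, \beta_j]} \{\tilde{\Psi}_j^s \leq Cn^{12}\}$ holds. On $\mathcal{E}_1$, \cref{clm:psi_potential_poly_implies_phi_small} gives $\tilde{\Phi}_j^s \leq n^{7/6}$ for every such $s$; in particular, $\ddot{\Phi}_j^{\rho_0} \leq n^{7/6}$, where $\rho_0$ denotes the first round with starting step $T(\rho_0) \geq \beta_{j-1}+n\log^6 n$.

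For the second stage, define the killed process $Y^r := \ddot{\Phi}_j^{\rho_0+r} \cdot \mathbf{1}_{\mathcal{A}^r}$, where $\mathcal{A}^r$ requires both that $\mathcal{K}_{j-1}$ holds throughout $[\beta_{j-1}, T(\rho_0+r)]$ and that $\ddot{\Phi}_j^{\rho_0+r'} > (C-1)n$ for every $r' \leq r$. Whenever $Y^r > 0$, the conditions $\mathcal{K}_{j-1}^{T(\rho_0+r)}$ and $\ddot{\Phi}_j^{\rho_0+r} \geq 2n$ (using $C \geq 6$) are both met, so \cref{cor:large_rec_inequality_phi_psi} yields the multiplicative drop $\Ex{Y^{r+1} \mid \mathfrak{F}^{T(\rho_0+r)}} \leq Y^r \cdot \bigl(1 - e^{v^{j+1}}/(2n)\bigr)$, while if $Y^r = 0$ the indicator $\mathbf{1}_{\mathcal{A}^r}$ is absorbing. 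Iterating for $R = \Theta(n \cdot e^{-v^{j+1}} \log n)$ rounds drives $\Ex{Y^R \cdot \mathbf{1}_{\mathcal{E}_1}} \leq n^{-6}$; since $Y^R \in \{0\} \cup ((C-1)n, \infty)$, Markov's inequality yields $\Pro{\{Y^R > 0\} \cap \mathcal{E}_1} \leq n^{-6}$. On the event $\{Y^R = 0\} \cap \mathcal{E}_1$, either $\mathcal{K}_{j-1}$ has failed at some step $s_0 \in [\beta_{j-1}, T(\rho_0+R)]$, in which case $\tilde{\Phi}_j^s = 0 \leq Cn$ at $s = \max\{s_0, \beta_{j-1}+n\log^6 n\} \in [\beta_{j-1}+n\log^6 n, \beta_j]$, or some round $\rho^* \in [\rho_0, \rho_0+R]$ has $\ddot{\Phi}_j^{\rho^*} \leq (C-1)n$, giving $\tilde{\Phi}_j^{T(\rho^*)} \leq \overline{\Phi}_j^{\rho^*} \leq \ddot{\Phi}_j^{\rho^*} + n \leq Cn$ at the step $T(\rho^*) \in [\beta_{j-1}+n\log^6 n, \beta_j]$. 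A union bound over the error probabilities $n^{-10}$ and $n^{-6}$ then gives the stated bound.

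The principal obstacle is the timing book-keeping: the iteration must begin at a round whose step lies in the late sub-interval and end before $\beta_j$. This works precisely because each round consists of at most $k_j = e^{v^{j+1}}\log^3 n$ steps, so the $R = \Theta(n \cdot e^{-v^{j+1}} \log n)$ iteration spans at most $\Theta(n\log^4 n) \ll n\log^6 n$ steps and comfortably fits within the sub-interval $[\beta_{j-1}+n\log^6 n, \beta_j]$ of length $n\log^6 n$. This is exactly why the two-stage approach is necessary: the first stage replaces the starting value $e^{\frac{1}{2}\log^3 n}$ (available from round $0$, as used in the proof of \cref{lem:newexists_s_st_ex_psi_linear}) by the polynomial bound $n^{7/6}$, so that only $O(n e^{-v^{j+1}} \log n)$ iterations suffice in the second stage to drive the expected potential below $n^{-5}$.
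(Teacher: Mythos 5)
Your proposal is correct and follows essentially the same two-stage strategy as the paper: first use the expectation bound on $\tilde{\Psi}_j$ together with \cref{clm:psi_potential_poly_implies_phi_small} to obtain a $n^{7/6}$ starting point for $\tilde{\Phi}_j$, then iterate the multiplicative-drop inequality from \cref{cor:large_rec_inequality_phi_psi} on a doubly-killed version of $\ddot{\Phi}_j$ and finish with Markov. The only deviations are cosmetic: the paper applies Markov's inequality at the single step $t_0 = \beta_{j-1} + n\log^6 n$ rather than union-bounding over the whole sub-interval, uses the threshold $\frac{C}{2}n$ rather than $(C-1)n$ in the second killing indicator, and iterates for $n\cdot e^{-v^{j+1}}\log^2 n$ rounds rather than $\Theta(n\cdot e^{-v^{j+1}}\log n)$ — all interchangeable choices under the same constraints.
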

\begin{proof}
Let $t_0 := \beta_{j-1} + n \log^6 n$. Using \cref{lem:newexists_s_st_ex_psi_linear} and Markov's inequality, we obtain
\[
\Pro{\tilde{\Psi}_j^{t_0} \leq Cn^{12}} \geq 1 - n^{11}.
\]
Using \cref{clm:psi_potential_poly_implies_phi_small} and the fact that $\{ \tilde{\Phi}_j^{t_0} = 0 \}$ holds iff $\{ \tilde{\Psi}_j^{t_0} = 0 \}$, we get that $\{ \tilde{\Psi}_j^{t_0} \leq C n^{12} \}$ implies $\{ \tilde{\Phi}_j^{t_0} \leq n^{7/6} \}$, so
\begin{align} \label{eq:tilde_phi_t0_bound_43}
\Pro{\tilde{\Phi}_j^{t_0} \leq n^{7/6}} \geq 1 - n^{11}.
\end{align}
Assume now that $\{ \tilde{\Phi}_j^{t_0} \leq n^{7/6} \}$ holds. For any round $r \geq  0$ after step $t_0$, we define
\[
\widehat{\Phi}_{j}^r := 
\ddot{\Phi}_j^{r} \cdot \mathbf{1}_{\cap_{\rho \in [\beta_{j-1}, T_{t_0}(r)]} \mathcal{K}_{j-1}^{\rho}} \cdot \mathbf{1}_{\cap_{\rho \in [0, r]} \ddot{\Phi}_j^{\rho} > \frac{C}{2} n}.
\]
By \cref{cor:large_rec_inequality_phi_psi} since $C \geq 6$, for any round $r$ after step $t_0$,
\[ 
\Ex{\left. \ddot{\Phi}_{j}^{r+1} \,\right|\, \mathfrak{F}^{r}, \ddot{\Phi}_{j}^{r} > Cn } 
 \leq \ddot{\Phi}_j^{r} \cdot \Big(1 - \frac{e^{v^{j+1}}}{2n} \Big).
\]
Since whenever $\{ \ddot{\Phi}_{j}^{r} = 0 \}$, it follows deterministically that $\{ \ddot{\Phi}_{j}^{r+1} = 0 \}$, we also have that for any round $r$ after step $t_0$,
\begin{align} \label{eq:phi_i_plus_1_large_phi}
\Ex{\left. \widehat{\Phi}_{j}^{r+1} \,\right|\, \mathfrak{F}^{r}} \leq \widehat{\Phi}_{j}^{r} \cdot \Big(1 - \frac{e^{v^{j+1}}}{2n} \Big).
\end{align}
By inductively applying~\cref{eq:phi_i_plus_1_large_phi} for $\Delta := n \cdot e^{-v^{j+1}} \cdot \log^2 n$ rounds starting at step $t_0$, we have
\begin{align*}
\Ex{\left. \widehat{\Phi}_{j}^{\Delta} \,\,\right|\,\, \mathfrak{F}^{t_0}, \tilde{\Phi}_j^{t_0} \leq n^{7/6} } 
 & \leq \tilde{\Phi}_{j}^{t_0} \cdot \Big(1 - \frac{e^{v^{j+1}}}{2n} \Big)^{T} \\
 & \leq \tilde{\Phi}_{j}^{t_0} \cdot \Big(1 - \frac{e^{v^{j+1}}}{2n} \Big)^{n \cdot e^{-v^{j+1}} \cdot \log^2 n} \\
 & \leq n^{7/6} \cdot e^{- \frac{1}{2} \log^2 n} \leq n^{-6},
\end{align*}
for sufficiently large $n$, using that $e^x \geq 1 + x$ (for any $x$). Hence, by Markov's inequality,
\[
\Pro{\left. \widehat{\Phi}_{j}^{\Delta} \leq 1 \,\right|\, \mathfrak{F}^{t_0}, \tilde{\Phi}_j^{t_0} \leq n^{7/6}} \geq 1 - n^{-6}.
\]
When $\{ \widehat{\Phi}_{j}^{T} \leq 1\}$ holds, either $\ddot{\Phi}_j^{T} \leq 1 \leq Cn$ or for one of the indicators it holds that either
\begin{center}
$\mathbf{1}_{ \cap_{\rho\in [\beta_{j-1}, T_{t_0}(\Delta)]} \mathcal{K}_{j-1}^{\rho} } = 0\quad $ or $\quad \mathbf{1}_{\cap_{\rho \in [0, \Delta]} \ddot{\Phi}_j^{\rho} > \frac{C}{2}n} = 0$.
\end{center}
In either case, these imply that there exists a round $r \in [0, \Delta]$ such that $\tilde{\Phi}_j^{T_{t_0}(r)} \leq Cn$, since when $\ddot{\Phi}_j^{r} \leq \frac{C}{2} n$ then $\tilde{\Phi}_j^{T_{t_0}(r)} \leq \frac{C}{2} n + n \leq Cn$. 

Since in any round we allocate at most $k_j = e^{v^{j+1}} \cdot \log^3 n$ balls, in these $\Delta$ rounds we can allocate at most $n \log^5 n$ balls. So this implies that there exists a step $s \in [\beta_{j-1}+ n \log^6 n, \beta_j]$, such that $\tilde{\Phi}_j^s \leq Cn$. Hence, since $\beta_{j-1} + n \log^6 n + n \log^5 n \leq \beta_j$, we have that
\[
\Pro{\left.\bigcup_{s \in [\beta_{j-1}+ n \log^6 n, \beta_j]} \left\{ \tilde{\Phi}_j^{s} \leq Cn \right\} ~\right|~ \mathfrak{F}^{t_0}, \tilde{\Phi}_j^{t_0} \leq n^{7/6}} \geq 1 - n^{-6}.
\]
Finally, combining with \cref{eq:tilde_phi_t0_bound_43}, we get
\[
\Pro{\bigcup_{s \in [\beta_{j-1}+ n \log^6 n, \beta_j]} \left\{ \tilde{\Phi}_j^{s} \leq Cn \right\}} \geq \left(1-n^{-6} \right) \cdot \left(1 - n^{-11} \right) \geq  1 - n^{-5}. \qedhere
\]
\end{proof}

\subsection{Completing the Proof of Key Lemma \texorpdfstring{(\cref{lem:new_inductive_step})}{}}\label{sec:proof_of_key}

The proof of \cref{lem:new_inductive_step} shares some of the ideas from the proof of \cref{thm:gamma_concentration}. However, there we could more generously take a union bound over the entire time-interval (consisting of $n \cdot \polylog(n)$ steps) to ensure that the potential is indeed small everywhere with high probability. Here we cannot afford to lose a polynomial factor in the error probability, as the induction step has to be applied $j_{\max} = \Theta(\log \log n)$ times. To overcome this, we will partition the time-interval into consecutive intervals of length $n/\log^2 n$. Then, we will prove that at the end of each such interval the potential is small \Whp, and finally use a simple smoothness argument to argue that the potential is small \Whp~in \emph{all} steps.
 
\begin{lem}[\textbf{Induction Step}]\label{lem:new_inductive_step}
\InductionStep 
\end{lem}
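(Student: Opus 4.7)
The plan is to follow the outline in \cref{fig:proof_outline}: first establish a recovery round $r_0$ in which $\tilde{\Phi}_j^{r_0} \leq Cn$, then propagate this bound to every step in $[\beta_j, t + n \log^8 n]$ via a supermartingale concentration argument. Throughout I would work with the killed potentials $\tilde{\Phi}_j$ and $\tilde{\Psi}_j$, so that the induction hypothesis enters only through the extra failure probability $(\log n)^{11(j-1)}/n^4$, and all subsequent arguments may assume $\Phi_{j-1}^{\rho} \leq 2Cn$ deterministically.

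\emph{Recovery.} By \cref{lem:exists_s_st_phi_linear_whp} there exists a round $r_0 \in [\beta_{j-1} + n\log^6 n, \beta_j]$ with $\tilde{\Phi}_j^{r_0} \leq Cn$, except with probability $n^{-5}$. In parallel, \cref{lem:newexists_s_st_ex_psi_linear} together with Markov's inequality and a union bound over the at most $n \log^8 n$ rounds in $[\beta_{j-1} + n\log^6 n, t + n \log^8 n]$ yields that $\tilde{\Psi}_j^s \leq Cn^{12}$ holds throughout this interval, except with probability $n\log^8 n \cdot n^{-11}$. By \cref{clm:psi_potential_poly_implies_bounded_diff}, this latter event puts the bounded-difference bound $|\overline{\Phi}_j^{r+1} - \overline{\Phi}_j^r| \leq n^{1/3}$ in force at every round of the interval.

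\emph{Stabilization.} Condition on all the good events above and on $r_0$. I would partition $[r_0, t + n\log^8 n]$ into consecutive sub-intervals $I_0, I_1, \ldots, I_M$ of at most $n/\log^2 n$ steps each (so $M = O(\log^{10} n)$), and show inductively that $\overline{\Phi}_j \leq Cn$ at the left endpoint of each sub-interval and $\Phi_j \leq 2Cn$ throughout. Fix $I_\ell = [u_\ell, u_{\ell+1}]$ with $\overline{\Phi}_j^{u_\ell} \leq Cn$, and define the stopping time $\tau_\ell := \inf\{ s \in I_\ell : \overline{\Phi}_j^s > 2Cn \}$ and the killed process $X^s := \overline{\Phi}_j^{s \wedge \tau_\ell}$. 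The second part of \cref{lem:rec_inequality_partial_phi_psi} makes $X$ a supermartingale whenever $\overline{\Phi}_j^s \geq 4n$, and the first part absorbs the additive correction near zero. Applying \cref{thm:simplified_chung_lu_theorem_8_5} with bounded differences of size $n^{1/3}$ over at most $n/\log^2 n$ rounds, at deviation threshold $\lambda = \Theta(n)$, gives both $\tau_\ell > u_{\ell+1}$ and $X^{u_{\ell+1}} \leq Cn$ except with probability $\exp(-\Omega(n^{1/3} \log^2 n))$. Finally, \cref{lem:potential_within_rounds} lifts the round-level bound on $\overline{\Phi}_j$ to a bound on $\Phi_j$ within a round, and \cref{clm:phi_j_does_not_drop_quickly} applied contrapositively (with $s$ any interior point and $u = u_{\ell+1}$) turns ``$\overline{\Phi}_j^{u_{\ell+1}} \leq Cn$'' into ``$\Phi_j^s \leq Cn/0.99 \leq 2Cn$'' for every step $s \in I_\ell$.

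\emph{Probability accounting and obstacle.} Summing the failure probabilities gives
\[
\tfrac{(\log n)^{11(j-1)}}{n^4} + n^{-5} + \log^8 n \cdot n^{-10} + O(\log^{10} n)\cdot e^{-\Omega(n^{1/3})} \;\leq\; \tfrac{(\log n)^{11j}}{n^4}
\]
for $n$ large enough, closing the induction. The main obstacle is the careful supermartingale setup: the bounded-difference input of \cref{clm:psi_potential_poly_invalid}\nobreak\hspace{-0.5em}\textemdash{}rather, \cref{clm:psi_potential_poly_implies_bounded_diff}\textemdash{}holds only while $\tilde{\Psi}_j$ is polynomial, so the stopping time $\tau_\ell$, the $\tilde{\Psi}_j$-good event, and the $\mathcal{K}_{j-1}$ conditioning must all be woven together cleanly; and the concentration must be tight enough that the $O(\log^{10} n)$ union bound over checkpoints does not erode the $n^{-4}$ slack, which is precisely why the sub-interval length $n/\log^2 n$ is chosen to balance the number of checkpoints against the per-checkpoint Azuma deviation budget.
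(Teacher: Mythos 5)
Your overall strategy---recovery to a round where $\tilde{\Phi}_j \leq Cn$, then stabilization via a supermartingale argument over sub-intervals of length $n/\log^2 n$---is the one the paper uses, but the supermartingale you construct does not actually satisfy the hypotheses of \cref{thm:simplified_chung_lu_theorem_8_5}, and the way you close the checkpoint induction is wrong. First, $X^s := \overline{\Phi}_j^{s \wedge \tau_\ell}$ is \emph{not} a supermartingale when $\overline{\Phi}_j^s < 2n$: the only unconditional drop available from \cref{lem:rec_inequality_partial_phi_psi} is $\Ex{\left.\overline{\Phi}_j^{r+1}\,\right|\,\mathfrak{F}^r,\mathcal{K}_{j-1}^r} \leq \overline{\Phi}_j^r \big(1-\tfrac{1}{n}\big) + 2$, and for $\overline{\Phi}_j^r \in [n,2n)$ (which is entirely possible since the potential is always $\geq n$) the right-hand side exceeds $\overline{\Phi}_j^r$. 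Your stopping time $\tau_\ell$ truncates from above but does nothing in this low regime, and the sentence ``the first part absorbs the additive correction near zero'' is a hand-wave, not an argument. The paper's $X_i^r$ is engineered precisely to fix this: it is held at the constant $5n + n^{1/3}$ until $\overline{\Phi}_j$ first crosses $5n$, and only then starts tracking $\overline{\Phi}_j^r$. The key point (Case 2(b) of \cref{clm:x_i_preconditions}) is that the bounded-difference bound \cref{clm:psi_potential_poly_implies_bounded_diff} ensures the first crossing value is $\leq 5n + n^{1/3}$, so the transition can only decrease the process, and from then on $\overline{\Phi}_j^{r} \geq 0.99 \cdot 5n \geq 4n$ by \cref{clm:phi_j_does_not_drop_quickly}, which puts the genuine supermartingale drop of \cref{lem:rec_inequality_partial_phi_psi} in force. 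Without some such device your $X$ simply fails the precondition of the concentration inequality.

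Second, Azuma yields $X^{u_{\ell+1}} \leq X^{u_\ell} + \lambda$ w.h.p., \emph{not} $X^{u_{\ell+1}} \leq X^{u_\ell}$; with $\lambda = \Theta(n)$ and $X^{u_\ell} \leq Cn$, the inequality certifies only $X^{u_{\ell+1}} \leq Cn + \Theta(n)$, so your claimed checkpoint invariant ``$\overline{\Phi}_j^{u_\ell} \leq Cn$'' does not propagate, and a naive chain would degrade by $\Theta(n)$ per checkpoint to $\Theta(n\log^{10} n)$. The paper avoids chaining entirely: because the plateau forces $X_i^{r_i} = 5n + n^{1/3} \leq Cn$ \emph{deterministically} at the start of every epoch, and the deviation parameter is taken to be the much smaller $\lambda = n/\log^{10} n$, the union bound over $\log^{10} n$ epochs degrades the endpoint value only to $Cn + n$; one then reads off $\overline{\Phi}_j^{r_{i+1}} \leq \max\{X_i^{r_{i+1}}, 5n+n^{1/3}\} \leq Cn + n$ and lifts this to $\Phi_j^s \leq 2Cn$ for every step $s$ by \cref{clm:phi_j_does_not_drop_quickly}. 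You would need both of these devices to make your stabilization step go through.
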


\begin{proof}
Consider an arbitrary step $t_0 \in[ \beta_{j-1} + n \log^6 n, \beta_j]$ and recall from Page \pageref{def:k_j} that $k_j$ is the number of phases of the folded process in the $j$-th layer of the induction. Our goal is to prove that $\Phi_j^u \leq 2Cn$ for all $u \in [t_0, t + n \log^8 n]$. We proceed by grouping the steps into at most $q \leq \log^{10} n$ \textit{epochs}, which are ordered counting from step $t_0$. For each $i\in [q]$ the $i$-th epoch lasts for a random number $N_i$ of steps with $N_i \in [n/\log^2 n - k_j \cdot \frac{v}{\alpha_2}, n/\log^2 n]$, except for possibly the last epoch which is not subject to the same lower bound on its length but contains less than $ n/\log^2 n$ steps. In particular, let $r_i$ denote the starting round of the $i$-th epoch, then for any round $r \in [r_i, r_i + \Delta]$ for $\Delta = n/\log^2 n$, we define
 
\[
X_i^{r} := \begin{cases}
X_i^{r-1} & \text{if we have allocated more than $\frac{n}{\log^2 n} - k_j$ balls in rounds $[r_i, r)$,} \\
\overline{\Phi}_j^{r} & \text{else if }\exists \rho \in [r_i, r)$ such that $\overline{\Phi}_j^{\rho} \geq 5n, \\
5n+n^{1/3} & \text{otherwise}.
\end{cases}
\]
This means that the random variable $X_i^{r} $ is stopped before allocating $n/\log^2 n$ balls and so when $\overline{\Phi}^{r_i + \Delta} = \Oh(n)$, by the smoothness argument \cref{clm:phi_j_does_not_drop_quickly}, we have that $\Phi_j^s = \Oh(n)$ for all steps $s$ in the epoch, i.e., $s \in [T_{t_0}(r_i),T_{t_0}(r_{i+1}))$. Note that the third branch condition in the definition of $X_i^r$ can only be satisfied for some rounds in the beginning of the epoch, this is since once the second condition has been activated only the second or first conditions can be satisfied.  

Following the notation of \cref{thm:simplified_chung_lu_theorem_8_5}, we define the bad event $B_i^r$ for any $r \in [r_i, r_i + \Delta]$, as the complement of 
\[
\left( \bigcap_{u \in [T_{t_0}(r_i),T_{t_0}(r))} \left\{ \tilde{\Psi}_j^{u} \leq Cn^{12} \right\} \right)
\bigcap \left( \bigcap_{u \in [T_{t_0}(r_i),T_{t_0}(r))} \mathcal{K}_{j-1}^{u}  \right) \subseteq \bigcap_{u \in [T_{t_0}(r_i),T_{t_0}(r))} \Big\{ \Psi_j^{u} \leq Cn^{12} \Big\}.
\]
We will now bound the probability of this bad event occurring. By \cref{lem:newexists_s_st_ex_psi_linear}, $\ex{\tilde{\Psi}_j^{u}} \leq Cn$ for any step $u \in [\beta_{j-1} +  n \log^6 n, t + n \log^8 n]$. Using Markov's inequality and the union bound over steps $u \in [\beta_{j-1} +  n \log^6 n, t + n \log^8 n]$ it follows that
\begin{align}
 \Pro{\bigcap_{u \in [\beta_{j-1} +  n \log^6 n,t + n \log^8 n]} \left\lbrace \tilde{\Psi}_j^{u} \leq Cn^{12} \right\rbrace } &\geq 1 - n^{-11} \cdot \left(n \log^8 n + (2j - 1) n \log^6 n \right)\notag \\ &\geq 1 - n^{-9}. \label{eq:badevent}
\end{align}
By the hypothesis of this lemma for $j-1$, it holds that
\begin{align} \label{eq:precondition_of_induction_step}
 \Pro{\bigcap_{u \in [\beta_{j-1},t+n \log^8 n]} \mathcal{K}_{j-1}^{u}} \geq 1 - \frac{(\log n)^{11(j-1)}}{n^4}.
\end{align}
Hence, by the union bound over \cref{eq:precondition_of_induction_step} and \cref{eq:badevent} (since $[T_{t_0}(r_i),T_{t_0}(r)) \subseteq [\beta_{j-1} +  n \log^6 n,t + n \log^8 n]$), 
\[
\Pro{ \neg B_i^{r}} \geq 1 - n^{-9} - \frac{(\log n)^{11(j-1)}}{n^4} \geq 1 - \frac{2 (\log n)^{11(j-1)}}{n^4}.
\]

We will use the following claim, to establish the preconditions of \cref{thm:simplified_chung_lu_theorem_8_5} for $X_i^{r}$.

\begin{clm}\label{clm:x_i_preconditions}
Consider the $i$-th epoch for $i \in [q]$. Then for the random variables $X_i^{r}$, for any round $r \in [r_i, r_i + \Delta]$ and any filtration $\mathfrak{F}^{r-1}$, it follows that
\begin{align*}
 (i)  & \quad \Ex{ X_{i}^{r} \, \mid \, \mathfrak{F}^{r-1}, \neg B_i^{r-1}} \leq X_{i}^{r-1}, \\
 (ii) & \quad \left( \left. \left| X_i^{r} - X_i^{r-1} \, \right| \,~ \right| \, \mathfrak{F}^{r-1}, \neg B_i^{r-1}\right) \leq 2 n^{1/3}.
\end{align*}
\end{clm}
\begin{pocd}{clm:x_i_preconditions}
Let $\tau := \inf \{ r \geq r_i \colon T_{t_0}(r) - T_{t_0}(r_i) \geq \frac{n}{\log^2 n} - k_j \}$. Then for any round $r - 1 \geq \tau$, the conditions are trivially satisfied as $X_i^r = X_i^{r-1}$.

For $r-1 < \tau$, recall that by \cref{lem:rec_inequality_partial_phi_psi}, for any round $r \in (r_i, r_i + \Delta]$ with $\overline{\Phi}_j^{r-1} \geq 4n$, \begin{align}
\Ex{\overline{\Phi}_j^{r} \,\left| \, \mathfrak{F}^{r-1}, \mathcal{K}_{j-1}^{r-1}, \overline{\Phi}_j^{r-1} \geq 4n \right.} 
 &\leq \overline{\Phi}_j^{r-1} \cdot \Big(1 - \frac{1}{2n}\Big) - \overline{\Phi}_j^{r-1} \cdot \frac{1}{2n} + 2\notag\\
 &\leq \overline{\Phi}_j^{r-1} \cdot \Big(1 - \frac{1}{2n}\Big), \label{eq:drop}
\end{align}
and consider the following cases:

\noindent \textbf{Case 1 [$\overline{\Phi}_j^{r_i} \geq 5n + n^{1/3}$]:} By \cref{clm:phi_j_does_not_drop_quickly} for all $r \in (r_i, r_i + \Delta]$, we have $\Phi_j^{r-1} \geq 0.99 \cdot (5n + n^{1/3}) \geq 4n$ (as we allocate at most $\Delta \leq n/\log^2 n$ balls in these rounds) and so the first statement follows from \cref{eq:drop} and the second statement since conditional on $\neg B_i^{r-1}$, the precondition of~\cref{clm:psi_potential_poly_implies_bounded_diff} holds.

\noindent \textbf{Case 2:} Otherwise, let $\sigma := \inf \{ r \geq r_i \colon \overline{\Phi}_j^{r} \geq 5n\}$. We consider the following three subcases (see \cref{fig:x_i_preconditions}):
\begin{itemize} \itemsep0pt
    \item \textbf{Case 2(a) [$r-1 < \sigma$]:} Here $X_i^{r} = X_i^{r-1}$ (using that $\overline{\Phi}_j^{r_i} < 5n + n^{1/3}$), so the two statements hold trivially.
    \item \textbf{Case 2(b) [$r-1 = \sigma$]:} We have that $\overline{\Phi}_j^{r-1} \geq 5n$ and $\overline{\Phi}_j^{r-2} < 5n$. 
    By~\cref{clm:psi_potential_poly_implies_bounded_diff} since $\neg B_i^{r-1}$ holds, we obtain that  $\Phi_{j}^{r-1} \leq \overline{\Phi}_{j}^{r-2} + n^{1/3} < 5n + n^{1/3}$. Further, by definition, 
    $X_i^{r-1} = 5n + n^{1/3}$ and $X_i^r = \overline{\Phi}_j^r$, so by \cref{eq:drop},\[\Ex{\left. X_i^{r} \, \right| \, \mathfrak{F}^{r-1}, \neg B_i^{r-1}} = \Ex{\left. \overline{\Phi}_j^{r}  \, \right| \, \mathfrak{F}^{r-1}, \neg B_i^{r-1}}\leq \overline{\Phi}_j^{r-1} < X_i^{r-1},
    \] 
    which establishes the first statement. For the second statement, we have \[
    \left|X_i^{r} - X_i^{r-1}\right| = \left|\overline{\Phi}_j^{r} - 5n - n^{1/3} \right| \leq \left|\overline{\Phi}_j^{r-1} - 5n - n^{1/3} \right| + \left|\overline{\Phi}_j^{r} - \overline{\Phi}_j^{r-1} \right| \leq 2n^{1/3},
    \]
    where in the second inequality we used \cref{clm:psi_potential_poly_implies_bounded_diff}.
    \item \textbf{Case 2(c) [$r-1 > \sigma$]:} Here, $X_i^{r-1} = \overline{\Phi}_j^{r-1}$ and $X_i^r = \overline{\Phi}_j^r$. Since $\overline{\Phi}_j^{\sigma} \geq 5n$, by~\cref{clm:phi_j_does_not_drop_quickly} (as $r-\sigma \leq n/\log^2 n$), we also have that
    \[
     \overline{\Phi}_{j}^{r-1} \geq 0.99 \cdot \overline{\Phi}_{j}^{\sigma} \geq 0.99 \cdot 5n \geq 4n,
    \]
    and thus by \cref{eq:drop}, the first statement follows. The second statement follows, since conditional on $\neg B_i^{r}$, the precondition of~\cref{clm:psi_potential_poly_implies_bounded_diff} holds.\qedhere 
\end{itemize}
\end{pocd}

\begin{figure}
    \centering
    \includegraphics[scale=0.65]{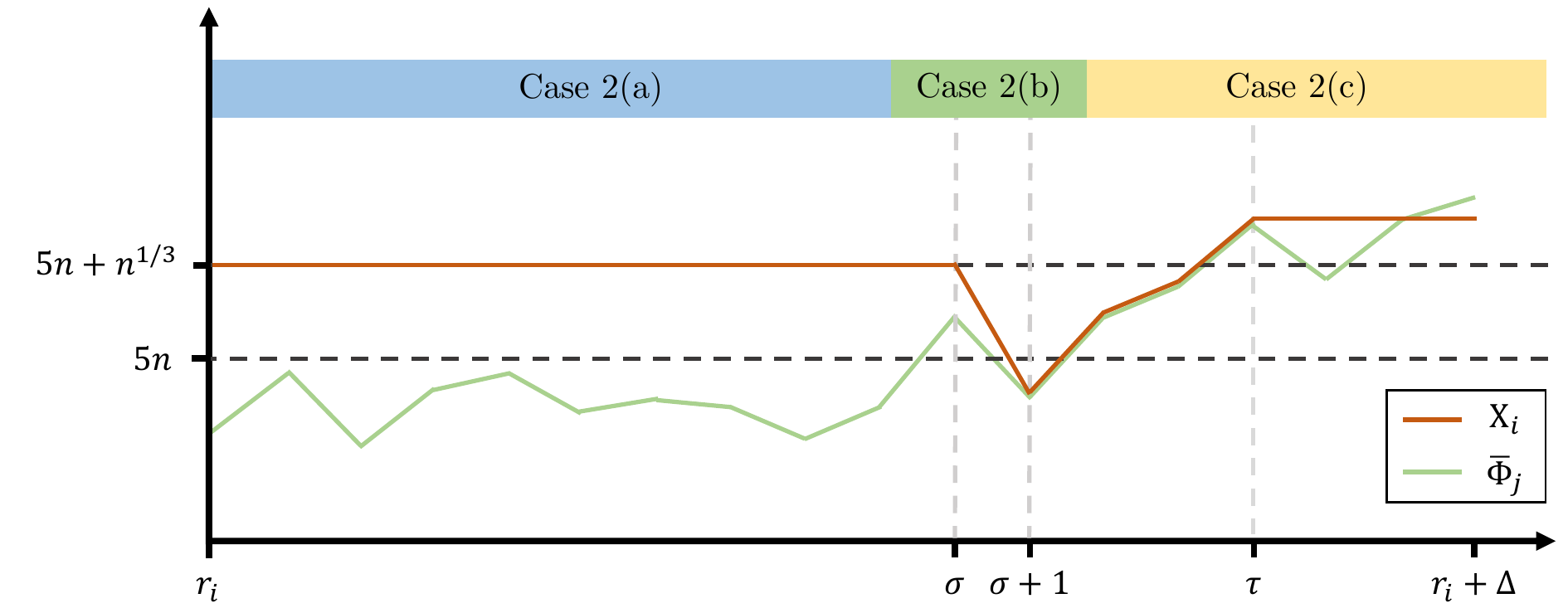}
    \caption{Visualization of the three subcases of Case 2 in the proof of \cref{clm:x_i_preconditions}.}
    \label{fig:x_i_preconditions}
\end{figure}
 
By~\cref{clm:x_i_preconditions}, we have now verified that $X_{i}^{r}$ satisfies the preconditions of \cref{thm:simplified_chung_lu_theorem_8_5} for any filtration $\mathfrak{F}^{r-1}$ where the associated bad event $B_i^{r-1}$ does not hold. Hence, for $N = \Delta$, $\lambda = \frac{n}{\log^{10} n}$ and $D = 2n^{1/3}$, we get that 
\begin{align*}
\Pro{X_{i}^{r_i} \geq X_{i}^{r_{i+1}} + \lambda} \leq \exp\left(-\frac{n^2/\log^{20}n}{10 \cdot \Delta \cdot (4 n^{2/3})} \right) + \frac{2 (\log n)^{11(j - 1)}}{n^4} \leq \frac{3 (\log n)^{11(j - 1)}}{n^4}.
\end{align*}
Taking the union bound over the at most $\log^{10} n$ intervals $i \in [q]$, it follows that
\begin{align*}
\Pro{ \left. \bigcup_{i \in [q]} \left\{ X_{i}^{r_{i+1}} \geq X_{1}^{r_1} + i \cdot \frac{n}{\log^{10} n} \right\} \,\right|\, \mathfrak{F}^{t_0}} \leq \log^{10} n \cdot \frac{3 (\log n)^{11(j - 1)}}{n^4} \leq \frac{1}{2} \cdot \frac{(\log n)^{11j}}{n^4}.
\end{align*}
Conditioning on $\Phi_j^{t_0} \leq Cn$ (which implies $X_1^{r_1} \leq Cn$ since $C \geq  6$) and $q \leq \log^{10} n$, we get
\begin{align*}
\Pro{ \left. \bigcap_{i \in [q]} \left\{ X_{i}^{r_{i+1}} \leq Cn + n \right\} ~\right|~ \mathfrak{F}^{t_0}, \Phi_j^{t_0} \leq Cn } \geq 1 -  \frac{1}{2} \cdot \frac{(\log n)^{11j}}{n^4}. 
\end{align*}
Since $\overline{\Phi}_{j}^{r_{i+1}} \leq \max \big\{ X_{i}^{r_{i+1}}, 5n + n^{1/3} \big\} \leq Cn + n$, we have
\begin{align*}
\Pro{ \left. \bigcap_{i \in [q]} \left\{ \overline{\Phi}_j^{r_{i+1}} \leq Cn + n \right\} ~\right|~ \mathfrak{F}^{t_0}, \Phi_j^{t_0} \leq Cn } \geq 1 -  \frac{1}{2} \cdot \frac{(\log n)^{11j}}{n^4}. 
\end{align*}
Given that the $i$-th epoch contains at most $n/\log^2 n$ allocations, we have that\begin{align*}
\Pro{ \left. \bigcap_{s \in [t_0, t + n \log^8 n]} \bigcup_{u \in [s, s + n/\log^2 n]} \left\{ \Phi_j^u \leq (C+1)n \right\} ~\right|~ \mathfrak{F}^{t_0}, \Phi_j^{t_0} \leq Cn } \geq 1 -  \frac{1}{2} \cdot \frac{(\log n)^{11j}}{n^4}. 
\end{align*}
Applying the smoothness argument of \cref{clm:phi_j_does_not_drop_quickly} for each step $s$, we get
\begin{align*}
\Pro{\left. \bigcap_{s \in [t_0, t + n \log^8 n]} \left\{ \Phi_j^{s} \leq \frac{C+ 1}{0.99} \cdot n \right\} ~\right|~ \mathfrak{F}^{t_0}, \Phi_j^{t_0} \leq Cn } \geq 1 -  \frac{1}{2} \cdot \frac{(\log n)^{11j}}{n^4}.
\end{align*}
Since $2C \geq \frac{C+1}{0.99}$ (as $C\geq 6$) and $t_0 \leq \beta_j$,
\begin{align} \label{eq:phi_t0_good_whp_conclusion}
\Pro{\left. \bigcap_{s \in [\beta_j, t + n \log^8 n]} \left\{ \Phi_j^{s} \leq 2Cn \right\} ~\right|~ \mathfrak{F}^{t_0}, \Phi_j^{t_0} \leq Cn } \geq 1 -  \frac{1}{2} \cdot \frac{(\log n)^{11j}}{n^4}.
\end{align} 
Now, the next step is to obtain a bound without the conditioning. To this end, we define the stopping time $\tau := \inf\{ t_0 \geq \beta_{j-1} + n \log^5 n \colon \Phi_j^{t_0} \leq Cn \}$. Then by a union bound, 
\begin{align*}
\Pro{\tau \leq \beta_j} 
 & \geq \Pro{ \left( \bigcup_{s \in [\beta_{j-1}+ n \log^5 n, \beta_j]} \left\{ \tilde{\Phi}_j^{s} \leq Cn \right\} \right) \cap \bigcap_{s \in [\beta_{j-1}+ n \log^5 n, \beta_j]} \left\{ \Phi_{j-1}^{s} \leq 2Cn \right\}}  \\
 &\geq \Pro{ \bigcup_{s \in [\beta_{j-1}+ n \log^5 n, \beta_j]} \left\{ \tilde{\Phi}_j^{s} \leq Cn \right\}
 } - \Pro{ \neg \bigcap_{s \in [\beta_{j-1}+ n \log^5 n, \beta_j]} \left\{ \Phi_{j-1}^{s} \leq 2Cn \right\} } 
 \\ 
 & \geq 1 - n^{-5} -  \frac{(\log n)^{11(j-1)}}{n^4} \geq 1 - \frac{2(\log n)^{11(j-1)}}{n^4},
\end{align*}
where in the third inequality we used \cref{lem:exists_s_st_phi_linear_whp} and \cref{eq:precondition_of_induction_step} (since $\beta_j \leq t + n \log^5 n$).

Finally, we get the conclusion by combining this with \cref{eq:phi_t0_good_whp_conclusion},
\begin{align*}
\lefteqn{\Pro{\bigcap_{s \in [\beta_j, t + n \log^8 n]} \left\{ \Phi_j^{s} \leq 2Cn \right\}}} \\ 
 & \geq \sum_{t_0 = \beta_{j-1} + n \log^5 n}^{\beta_j} \Pro{\left. \bigcap_{s \in [\beta_j, t + n \log^8 n]} \left\{ \Phi_j^{s} \leq 2Cn \right\} ~\right|~ \mathfrak{F}^{t_0}, \Phi_j^{t_0} \leq Cn } \cdot \Pro{\tau = t_0} \\
 & \geq \left( 1 - \frac{1}{2} \cdot \frac{(\log n)^{11j}}{n^4} \right) \cdot \left( 1 - \frac{2(\log n)^{11(j-1)}}{n^4} \right) \\
 & \geq 1 - \frac{(\log n)^{11j}}{n^4}. \qedhere
\end{align*}
\end{proof}

\subsection{Proof of Main Theorem~\texorpdfstring{(\cref{thm:caching_log_log_n}) using~\cref{lem:new_inductive_step}}{}} \label{sec:proof_of_caching_log_logn}

{\renewcommand{\thethm}{\ref{thm:caching_log_log_n}}
	\begin{thm}[Restated]
\CachingLogLogN
	\end{thm} }
	\addtocounter{thm}{-1}

\begin{proof}
Consider first the case where $m \geq n \log^8 n$ and let $t = m - n \log^8 n$. We will proceed by induction on the potential functions $\Phi_j$ for $j = 1, \ldots , j_{\max} - 1$. The base case follows by \cref{thm:gamma_concentration}, since $\Phi_0^s \leq \Gamma^s$ for all steps $s \in [t, t + n \log^8 n]$. Hence,
\[
\Pro{\bigcap_{s \in [t, m]} \left\{ \Phi_0^{s} \leq 2Cn \right\} } \geq 1 - n^{-4}.
\]
For the induction step, we use \cref{lem:new_inductive_step}. After $j_{\max} = \Theta(\log \log n)$ applications, we get
\[
\Pro{ \bigcap_{s \in [t+ \beta_{j_{\max}-1},m]} \left\{ \Phi_{j_{\max}-1}^{s} \leq 2Cn \right\} } \geq 1-\frac{(\log n)^{11 \cdot (j_{\max} - 1)}}{n^4} \geq 1 - n^{-3}.
\]
When this event occurs, the gap at round $m$ cannot be more than $z_{j_{\max} - 1} + \frac{4v^2}{\alpha_2^2}$ for some constant $\kappa := \kappa(\alpha_2, v) > 0$, since otherwise
\begin{align*}
2Cn \geq \Phi_{j_{\max}-1}^{m} & \geq \exp \Big(\alpha_2 \cdot v^{j_{\max} - 1} \cdot \frac{4v^2}{\alpha_2^2} \Big) \\
 & = 
 \exp\Big(4 \cdot v^{j_{\max}} \cdot \frac{v}{\alpha_2} \Big) = \exp\Big(4 \cdot \frac{\alpha_2}{2v} \log n \cdot \frac{v}{\alpha_2} \Big) = \exp(2 \cdot \log n) = n^2,
\end{align*}
which leads to a contradiction. Hence, $\Gap(m) \leq z_{j_{\max} - 1} + \frac{2v^2}{\alpha_2^2} = \frac{5v}{\alpha_2} \cdot (\log_v \left(\frac{\alpha_2}{2v} \log n\right) - 1) + \frac{2v^2}{\alpha_2^2} \leq \kappa \cdot \log \log n$, for some constant $\kappa := \kappa(\alpha_2, v), \kappa(a, b) > 0$, so
\[
\Pro{\Gap(m) \leq \kappa \cdot \log \log n} \geq 1 - n^{-3}.
\]

The other case is $m < n \log^8 n$, when some of the $\beta_j$'s of the analysis above will be negative. To fix this, consider a modified process. The modified process starts at round $n \log^8 n - m$ with an empty load vector. For any time $t \in [n \log^8 n,-m]$, it allocates a ball of fractional weight $\frac{1}{n}$ to each of the $n$ bins. For $t \geq 1$, it works exactly as the original \Memory process. Since the load vector is perfectly balanced at each step $t < 0$, it follows that $\dot{\Psi}_j^{t} = \dot{\Phi}_j^{t} = 0$ (and $\Phi_j^t = \Psi_j^t = n$) deterministically. Since our proof relies only on upper bounds on the potential functions, these are trivially satisfied and hence the above analysis applies for the modified process. Further, as the relative loads of the modified process and the original process behave identically for $t \geq 1$, the statement follows.
\end{proof}

\section{Proofs for \Memory and \texorpdfstring{$d$}{d}-\WeakMemory in Relaxed Settings} \label{sec:relaxed_settings}

In this section we extend our bounded gap result for \Memory  to arbitrary sampling distributions with full support and prove an $\mathcal{O}(\log n)$ gap bound for \DWeakMemory with weighted balls. 

\subsection{General Biased Sampling}
Recall that \cref{thm:caching_log_log_n} proves that, for any constants $a,b\geq 1$, the \Memory process on any $(a,b)$-biased distribution has an $\mathcal{O}(\log\log n)$ gap \Whp.  The following result shows that the gap remains bounded independently of $m$ \Whp~for \textit{any} sampling distributions with full support.  

We start with the following simple lemma that allows us to apply \cref{thm:ls22_thm_3_1} as we did in the proof of \cref{lem:expectation_bound}.

\begin{lem} \label{lem;arbitrary_to_const_quantile}
Consider any probability vector $p$ satisfying condition $\mathcal{C}_1$ for $(\delta, \eps)$ for some $\delta \leq 1/2$ and $\eps \in (0, 1)$. Then, $p$ also satisfies $\mathcal{C}_1$ for $(1/2, \eps')$ with $\eps' = \eps/n$.
\end{lem}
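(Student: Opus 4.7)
The plan is to verify the two parts of condition $\mathcal{C}_1$ for $(\delta',\eps')=(1/2,\eps/n)$ directly from the corresponding parts of $\mathcal{C}_1$ for $(\delta,\eps)$, using the elementary identity $\sum_{i=1}^k p_i = 1-\sum_{i=k+1}^n p_i$ to pass between prefix-sum and suffix-sum statements.

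First I would handle the new prefix inequality $\sum_{i=1}^k p_i \leq (1-\eps/n)\cdot k/n$ for $1\leq k\leq n/2$ by splitting on whether $k\leq \delta n$. When $k\leq \delta n$ the original prefix bound already gives $\sum_{i=1}^k p_i \leq (1-\eps)\cdot k/n$, which is stronger than what we need since $1-\eps \leq 1-\eps/n$ for all $n\geq 1$. When $\delta n< k\leq n/2$, I would apply the original suffix bound at index $k+1\geq \delta n+1$ to obtain $\sum_{i=k+1}^n p_i \geq \bigl(1+\eps\,\tfrac{\delta}{1-\delta}\bigr)\cdot\tfrac{n-k}{n}$; subtracting from $1$ yields $\sum_{i=1}^k p_i \leq \tfrac{k}{n} - \eps\,\tfrac{\delta}{1-\delta}\cdot\tfrac{n-k}{n}$, and the desired bound reduces to the algebraic inequality $\tfrac{\delta}{1-\delta}\cdot n(n-k)\geq k$. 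Using $k\leq n/2$ so $n-k\geq n/2$, this in turn reduces to the very mild inequality $\delta n\geq 1-\delta$, i.e., $\delta\geq 1/(n+1)$, which holds whenever $\mathcal{C}_1$ carries non-trivial information (in particular whenever $\delta$ is a positive constant, as in the intended application).

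Next I would handle the new suffix inequality $\sum_{i=k}^n p_i \geq (1+\eps/n)\cdot(n-k+1)/n$ for $n/2<k\leq n$. Here the hypothesis $\delta\leq 1/2$ immediately forces $k>\delta n$, so the original suffix bound applies directly and supplies the stronger multiplicative factor $1+\eps\,\tfrac{\delta}{1-\delta}$. It then suffices to check $\tfrac{\delta}{1-\delta}\geq 1/n$, which is once again $\delta n\geq 1-\delta$.

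Overall the argument is essentially routine algebra; the main ``choices'' are the splitting point ($k\leq \delta n$ versus $k>\delta n$) and the prefix/suffix identity. The only mildly subtle observation is that the weakening factor $\eps/n$ is precisely tuned to absorb the worst case of the reduction, namely $\delta$ as small as $\Theta(1/n)$, which is the borderline regime where $\mathcal{C}_1$ still carries information; when $\delta$ is a constant bounded away from $0$ one in fact gets the much stronger $\eps'=\Theta(\eps)$, but the stated form suffices for the sequel. I do not anticipate any substantial obstacle.
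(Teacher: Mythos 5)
Your proposal is correct and follows essentially the same three‑case decomposition as the paper: both handle $k\leq\delta n$ via the original prefix bound, $\delta n<k\leq n/2$ via the complementary identity and the suffix bound at index $k+1$, and $k>n/2$ via the suffix bound directly, reducing the middle case to the same elementary condition $\delta/(1-\delta)\geq 1/n$. The only cosmetic difference is that the paper justifies this via the sufficient condition $\delta\geq 1/n$ (implicit in $\mathcal{C}_1$ quantifying over $1\leq k\leq\delta n$) rather than your exactly equivalent $\delta\geq 1/(n+1)$.
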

\begin{proof}
We will prove that $p$ satisfies the $\mathcal{C}_1$ condition for $(1/2, \eps')$ with $\eps' = \eps/n$. We consider the following cases for the index $k \in [n]$:
\begin{itemize}
  \item \textbf{Case A [$1 \leq k \leq \delta n$]:} 
  \[
  \sum_{i = 1}^k p_i 
    \leq (1 - \eps) \cdot \frac{k}{n} 
    \leq \left( 1 - \frac{\eps}{n} \right) \cdot \frac{k}{n}.
  \]
  \item \textbf{Case B [$\delta n < k \leq n/2$]:}
  \begin{align*}
  \sum_{i = 1}^k p_i 
    & \leq 1 - \left( 1 + \eps \cdot \frac{\delta}{1 - \delta} \right) \cdot \left( 1 - \frac{k}{n} \right) \\
    & = \frac{k}{n} + \frac{\eps\delta}{1 - \delta} \cdot \left( \frac{k}{n} - 1 \right) \\ 
    & \stackrel{(a)}{\leq} \frac{k}{n} - \frac{\eps\delta}{1 - \delta} \cdot \frac{k}{n} \\
    & = \frac{k}{n} \cdot \left( 1 - \frac{\eps\delta}{1 - \delta} \right) \\
    & \stackrel{(b)}{\leq} \frac{k}{n} \cdot \left( 1 - \frac{\eps}{n} \right),
  \end{align*}
  using in $(a)$ that $k \leq n/2$ and in $(b)$ that $\delta/(1-\delta) \geq 1/n$ since $\delta \geq 1/n$.
  \item \textbf{Case C [$n/2 < k$]:}
  \begin{align*}
  \sum_{i = k}^n p_i 
   & \geq \left( 1 + \eps \cdot \frac{\delta}{1 - \delta} \right) \cdot \frac{n - k + 1}{n} \geq \left( 1 + \frac{\eps}{n} \right) \cdot \frac{n - k + 1}{n}. \qedhere
  \end{align*}
\end{itemize}
\end{proof}

{\renewcommand{\thethm}{\ref{thm:arbdist}}
	\begin{thm}[Restated]
\arbdist
	\end{thm}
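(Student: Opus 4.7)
The plan is to adapt the base case analysis from \cref{sec:base_case} to sampling distributions with arbitrary (possibly non-constant) bias. Observe first that any distribution with $s_{\min} > 0$ is $(a,b)$-biased with $a = 1/(n \cdot s_{\min})$ and $b = n$, since $s_i \leq 1 = n/n$ trivially. The conditions $a,b \geq 1$ are satisfied (using that $s_{\min} \leq 1/n$), but now $a$ and $b$ may be polynomial in $n$ and $1/s_{\min}$ rather than bounded constants. So the key task is to re-run the argument of \cref{lem:expectation_bound} and track the dependence of all resulting quantities on $a,b$.

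The proof of \cref{lem:expectation_bound} never uses constancy of $a,b$: the reduction to the $(a,b)$-step distribution (\cref{lem:stepmaj}) and the explicit allocation probabilities of \DWeakMemory on a step distribution (\cref{lem:pijbdds}) carry over verbatim, as does the verification that the proxy-allocation vector satisfies condition $\mathcal{C}_1$ for $\delta = (a-1)/(ab-1)$ and any $\eps < 1$, provided $d$ is sufficiently large in terms of $a,b$. If $\delta$ turns out to be small, we invoke \cref{lem;arbitrary_to_const_quantile} to renormalise to $\delta = 1/2$ at the cost of shrinking $\eps$ to $\Theta(1/n)$. Applying \cref{thm:ls22_thm_3_1} then yields a drop inequality
\[
\Ex{\left. \Gamma^{t+d} \,\right|\, \mathfrak{F}^t} \leq \Gamma^t \cdot \left(1 - \frac{\alpha}{c \cdot n}\right) + c \cdot \alpha,
\]
where $d$, $c$ and $1/\alpha$ are all explicit polynomials in $n$ and $1/s_{\min}$ (with modest exponents determined by $d \geq 2b(ab-1)^2/((b-1)^2(1-\eps))$ and $\alpha \leq \eps\delta/(8Kd)$ for $K = 2bd^4$).

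Iterating this recursion using the fixed-point argument (as in \cref{cor:gamma_1_2_expected_change}(iii)), starting from $\Gamma^0 = 2n$, yields $\Ex{\Gamma^m} \leq c^2 n$ for every step $m \geq 0$. Unlike in the base case we do not need \cref{thm:gamma_concentration}, since the failure probability in the target bound is only $n^{-1}$: a plain Markov inequality gives $\Pro{\Gamma^m \leq c^2 n^2} \geq 1 - n^{-1}$. Since $e^{\alpha \cdot \Gap(m)} \leq \Gamma^m$ deterministically, this translates to $\Gap(m) \leq \log(c^2 n^2)/\alpha$ with probability at least $1 - n^{-1}$, giving a bound of the form $\mathrm{poly}(n, 1/s_{\min})$.

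The main obstacle is the bookkeeping required to pin down the specific exponents $n^8/s_{\min}^{10}$. Substituting $a = 1/(n s_{\min})$ and $b = n$ into the formulas above gives $d = O(1/s_{\min}^2)$ (after taking the $\max$ with $2$), $K = O(n/s_{\min}^8)$, $\alpha = \Omega(s_{\min}^8/n^2)$, and $c = O(n^2/s_{\min}^8)$; taking $\log(c^2 n^2)/\alpha$ then yields a gap bound of $O((n^2/s_{\min}^8) \cdot \log(n/s_{\min}))$, which is comfortably within the claimed $n^8/s_{\min}^{10}$ for all large enough $n$. For the small-$m$ regime where $m$ is below $d$ or any other threshold from the analysis, the trivial bound $\Gap(m) \leq m$ already lies within the stated range, so no further case analysis is needed.
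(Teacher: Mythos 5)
Your proposal matches the paper's proof almost line for line: embed $s$ into the $(a,b)$-biased framework with $a=1/(ns_{\min})$, $b\leq n$; re-run the $\mathcal{C}_1$ verification from \cref{clm:c1satisfied}; invoke \cref{lem;arbitrary_to_const_quantile} to push $\delta$ up to $1/2$ at the cost of $\eps\to\Theta(1/n)$; then apply \cref{thm:ls22_thm_3_1} and finish with Markov rather than \cref{thm:gamma_concentration}, exactly as the paper does. Your intermediate arithmetic is slightly inconsistent with your own stated formula (with $\alpha\leq\eps\delta/(8Kd)$, $K=2bd^4$, $d=O(1/s_{\min}^2)$ and $\eps=\Theta(1/n)$ you should get $\alpha=\Omega(s_{\min}^{10}/n^2)$ rather than $\Omega(s_{\min}^{8}/n^2)$), and you only address $m<d$ when the general case $d\nmid m$ also needs the remark that the gap can move by at most $d$ over $d$ steps, but both are harmless since the final polynomial target absorbs them.
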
 }
	\addtocounter{thm}{-1} 
\begin{proof}
To begin, we observe the every distribution $s$ satisfying $s_{\min}>0$ is an $(a,b)$-biased distribution for $a = 1/(s_{\min}n)$ and $b\leq n(1-(n-1)s_{\min})$. Also observe that since every $(a,b)$-biased distribution is an  $(a',b')$-biased distribution for any $a'\geq a$ and $b'\geq b$. Thus is suffices to consider  $(a,b)$-biased distributions where $a = 1/(s_{\min}n)$ and $b\leq n(1-(n-1)s_{\min})$ and we can assume both $a$ and $b$ are sufficiently large.

We proceed the same way as in the proof of \cref{lem:expectation_bound}. In \cref{clm:c1satisfied}, we showed that $(\widehat{p}_i)_{i \in [n]}$ satisfies condition \COne for $\delta=M/n=\frac{a-1}{ab-1}$, $\eps = 1/2$ when we pick $d = \left\lceil \frac{2b(ab-1)^2}{(b-1)^2(1-\eps)} \right\rceil$. Since $(a, b)$ are not-necessarily constants, we use \cref{lem;arbitrary_to_const_quantile} to obtain that $(\widehat{p}_i)_{i \in [n]}$ satisfies \COne for (constant) $\delta' = 1/2$ and $\eps' = 1/n$. This will allow us to apply \cref{thm:ls22_thm_3_1} which requires the quantile $\delta$ to be constant.  Observe that by \eqref{eq:ParametersforExpDrop} we can also take $K=2bd^3$.

As in the proof of \cref{lem:expectation_bound}, applying \cref{thm:ls22_thm_3_1}, we get that there exists a constant $c > 0$ such that the hyperbolic cosine potential $\Gamma$ with smoothing parameter $\alpha \leq  \min\{\frac{1}{d}, \frac{\eps'\delta}{d\cdot 8K}   \} $, for any step $s$ such that $d$ divides $s$, we have that\[
\Ex{\Gamma^s} \leq cn.
\]
By Markov's inequality, we have that\[
\Pro{\Gamma^s \leq cn^3} \geq 1 - n^{-2}.
\]
Hence, since the gap can increase by at most $d$ in an interval of $d$ steps, we have that for every step $t$,
\[
\Pro{\Gap(t) \leq d + \frac{\log(cn^3)}{\alpha}} \geq 1 - n^{-2}.
\] Recall that we can assume that $a,b>100$ by the opening paragraph. Finally as $\eps'=1/n$ and  $\delta =  \frac{a-1}{ab-1}\geq 1/n $, we can choose  $\alpha =   \Omega(\frac{1}{n^2 bd^4 })$. Thus, as $d = \Theta\left(  \frac{b(ab-1)^2}{(b-1)^2 }\right)$, we have  
 \[
d + \frac{\log(cn^3)}{\alpha}  = \mathcal{O}\left(bd^4 \cdot n^2\log n\right)  =  \mathcal{O}\left(\frac{ b^5(ab-1)^{10}}{(b-1)^{10}}\cdot n^3\right)   =  \Oh\left(a^{10} b^5 \cdot n^3\right) =  \Oh\left(\frac{n^8}{s_{\min}^{10}}\right),
\]which concludes the claim.
\end{proof}

\subsection{Weighted 2-\WeakMemory}

We consider the weighted setting for $2$-\WeakMemory. Following~\cite{PTW15}, the weight of each ball will be drawn independently from a fixed distribution $W$ over $[0,\infty)$, satisfying:
\begin{align} 
  &\ex{W} = 1.\label{eq:weight1}\\
  &\ex{e^{\lambda W} } < \infty \text{ for some constant }\lambda > 0\label{eq:weight2}.
\end{align}
It is clear that when $\ex{W} = \Theta(1)$, by scaling $W$, we can always achieve $\ex{W}=1$.
Specific examples of distributions satisfying above conditions (after scaling) are the geometric, exponential, binomial and Poisson distributions.

We will make use of the following lemma:
\begin{lem}[Lemma 2.4 in \cite{LS22Batched}]\label{lem:lss22_lem_2_4} For any random variable $W$ satisfying \eqref{eq:weight1} and \eqref{eq:weight2}, there exists $S := S(\lambda) \geq \max(1, 1/\lambda)$, such that for any  $\alpha \in (0, \min(\lambda/2, 1))$ and any $\kappa \in [-1,1]$,
\[
\Ex{e^{\alpha \cdot \kappa \cdot W}} \leq 1 + \alpha \cdot \kappa + S \alpha^2 \cdot \kappa^2.
\]
\end{lem}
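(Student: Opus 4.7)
The plan is a Taylor-expansion argument with the remainder controlled by a case-split on the sign of $\kappa$, using the MGF hypothesis~\eqref{eq:weight2} to absorb the worst-case growth on the $\kappa > 0$ side. First I would write $e^{y} = 1 + y + y^{2} g(y)$ with $g(y) := \sum_{k \geq 0} y^{k}/(k+2)!$, and record the two elementary estimates $g(y) \leq 1/2$ for $y \leq 0$ and $g(y) \leq e^{y}/2$ for $y \geq 0$ (the latter by comparing term-by-term with $\sum_{k \geq 0} y^k/(2\cdot k!)$). Plugging in $y = \alpha\kappa W$, taking expectations, and using $\Ex{W} = 1$ from \eqref{eq:weight1} gives
\[
\Ex{e^{\alpha\kappa W}} = 1 + \alpha\kappa + \alpha^{2}\kappa^{2}\cdot \Ex{W^{2}\, g(\alpha\kappa W)},
\]
so the task reduces to bounding $\Ex{W^{2}\, g(\alpha\kappa W)}$ by some constant $S$ depending only on $\lambda$ and the law of $W$ (and in particular independent of $\alpha$ and $\kappa$).

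Treating the two signs of $\kappa$ separately, since weights are assumed non-negative (as is the case for geometric, exponential, binomial and Poisson distributions listed in the excerpt), for $\kappa \in [-1,0]$ the argument $\alpha\kappa W$ is non-positive, so $g \leq 1/2$ and $\Ex{W^{2}\, g(\alpha\kappa W)} \leq \tfrac{1}{2}\Ex{W^{2}}$. Finiteness of $\Ex{W^{2}}$ follows directly from \eqref{eq:weight2}. For $\kappa \in (0,1]$ the constraint $\alpha < \lambda/2$ gives $0 \leq \alpha\kappa W \leq (\lambda/2)W$, hence $g(\alpha\kappa W) \leq \tfrac{1}{2} e^{(\lambda/2)W}$, and the remainder is at most $\tfrac{1}{2}\Ex{W^{2} e^{(\lambda/2)W}}$. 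To convert this to a quantity involving the assumed MGF, I would use the elementary pointwise bound $W^{2} \leq C_\lambda \cdot e^{(\lambda/2)W}$ valid for all $W \geq 0$, where $C_\lambda := (4/\lambda)^{2} e^{-2}$ is the maximum of the scalar function $w \mapsto w^{2} e^{-\lambda w/2}$ on $[0,\infty)$. This yields
\[
\Ex{W^{2} e^{(\lambda/2)W}} \leq C_\lambda \cdot \Ex{e^{\lambda W}},
\]
which is finite by \eqref{eq:weight2}.

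Combining both cases, I would finally set
\[
S := \max\!\Big(1,\ \tfrac{1}{\lambda},\ \tfrac{1}{2}\Ex{W^{2}},\ \tfrac{1}{2} C_\lambda \Ex{e^{\lambda W}}\Big),
\]
which depends only on $\lambda$ and on the distribution of $W$, and satisfies the required $S \geq \max(1,1/\lambda)$. Substituting back gives $\Ex{e^{\alpha\kappa W}} \leq 1 + \alpha\kappa + S\alpha^{2}\kappa^{2}$ uniformly over $\alpha \in (0, \min(\lambda/2,1))$ and $\kappa \in [-1,1]$. The only non-trivial step is the $\kappa > 0$ case, where the naive bound $g(y) \leq 1/2$ fails; the fix is to let $g$ grow like $e^{y}/2$ and then exploit $\alpha \leq \lambda/2$ together with the crude polynomial-vs-exponential estimate $W^{2} \leq C_\lambda e^{\lambda W/2}$ so that the growth is absorbed into the finite MGF $\Ex{e^{\lambda W}}$.
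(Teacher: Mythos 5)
The paper itself does not prove this lemma---it is quoted from \cite{LS22Batched}---so there is no internal proof here to compare against. Your argument is correct and self-contained. The Taylor remainder $g(y)=\sum_{k\geq 0}y^k/(k+2)!$ indeed satisfies $g(y)\le 1/2$ for $y\le 0$ (equivalently $e^y-1-y\le y^2/2$ on the negative half-line, a convexity fact) and $g(y)\le e^y/2$ for $y\ge 0$ (since $(k+1)(k+2)\ge 2$). Because the weighted model stipulates $W\ge 0$, the sign of $\alpha\kappa W$ is the sign of $\kappa$, so the case split is legitimate. For $\kappa>0$ the constraints $\alpha<\lambda/2$ and $\kappa\le 1$ give $\alpha\kappa W\le \lambda W/2$, and your pointwise estimate $W^2\le C_\lambda\,e^{\lambda W/2}$ with $C_\lambda=(4/\lambda)^2 e^{-2}$ (the maximum of $w\mapsto w^2 e^{-\lambda w/2}$) correctly lets you absorb $W^2 g(\alpha\kappa W)$ into the finite quantity $\mathbf{E}\bigl[e^{\lambda W}\bigr]$. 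Finiteness of $\mathbf{E}[W^2]$ likewise follows from the MGF assumption. Taking $S$ as the maximum of $1$, $1/\lambda$, $\tfrac12\mathbf{E}[W^2]$ and $\tfrac12 C_\lambda\,\mathbf{E}[e^{\lambda W}]$ covers both cases while meeting the required constraint $S\ge\max(1,1/\lambda)$. This is the standard one-step MGF expansion used throughout the weighted balanced-allocations literature, and your reconstruction of it is sound.
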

We can now state and prove the result of this subsection.

{\renewcommand{\thethm}{\ref{thm:weighted}}
	\begin{thm}[Restated]
\weighted
	\end{thm}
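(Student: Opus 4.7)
My plan is to follow the template of the expected-drop analysis of \DWeakMemory in \cref{sec:base_case}, adapting it so that ball weights are absorbed through \cref{lem:lss22_lem_2_4}. First I would fix a constant $\alpha \in (0, \min(\lambda/2,1))$ (to be chosen small enough later) and work with the hyperbolic cosine potential $\Gamma^t = \Phi^t + \Psi^t$ defined on the normalised (weighted) loads $y_i^t$. My target is the expected drop
\[
\Ex{\left. \Gamma^{2t+2} \,\right|\, \mathfrak{F}^{2t}} \leq \Gamma^{2t}\cdot(1 - c_1 \alpha / n) + c_2 \alpha,
\]
over each pair of steps; combining this with Markov's inequality and a single-ball smoothness argument then yields the desired $\mathcal{O}(\log n)$ gap bound.

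For the expected drop, fix a step $2t$ and condition on the two independent uniform samples $i_1, i_2$ of the run. The first ball, of weight $w_1$, is placed in $i_1$ and the second, of weight $w_2$, is placed in $J:=\arg\min_{j\in\{i_1,i_2\}} y_j^{2t}$. Writing $B_i^{(1)}:=\mathbf{1}_{i_1=i}-1/n$ and $B_i^{(2)}:=\mathbf{1}_{J=i}-1/n$, and using that $w_1,w_2$ are mutually independent and independent of $(i_1,i_2)$, one factors
\[
\Ex{\left. e^{\alpha(y_i^{2t+2}-y_i^{2t})} \,\right|\, i_1,i_2}=\Ex{e^{\alpha w_1 B_i^{(1)}}}\cdot\Ex{e^{\alpha w_2 B_i^{(2)}}}.
\]
I would then apply \cref{lem:lss22_lem_2_4} to each factor (legitimate since $B_i^{(k)}\in[-1,1]$), expand to second order in $\alpha$, and average over $(i_1,i_2)$. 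The linear contributions are $\Ex{B_i^{(1)}}=0$ and $\Ex{B_i^{(2)}}=(2i-1)/n^2-1/n$, exactly matching the \TwoChoice allocation probability in the sorted order. The quadratic terms satisfy $\Ex{(B_i^{(k)})^2}=\mathcal{O}(1/n)$, and the covariance $|\Ex{B_i^{(1)}B_i^{(2)}}|=\mathcal{O}(1/n)$ by Cauchy--Schwarz, so that when summed against $\Phi_i$, the full $\alpha^2$-error contributes only $\mathcal{O}(\alpha^2/n)\cdot\Phi^{2t}$.

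This places the bound into the form required by \cref{thm:ls22_thm_3_1}, with averaged per-ball probability vector $p_i:=(n+2i-1)/(2n^2)$ and a constant $K:=K(W)$ absorbing $S(\lambda)$. A direct computation gives $\sum_{i=1}^{k}p_i=k(n+k)/(2n^2)\leq\tfrac{3}{4}\cdot k/n$ for all $k\leq n/2$, so $p$ satisfies condition $\mathcal{C}_1$ with $\delta=1/2$ and $\varepsilon=1/4$; a symmetric argument handles $\Psi$. Applying \cref{thm:ls22_thm_3_1} (with $\kappa=2$ to account for two balls per run) yields the desired two-step drop, and iterating it via \cref{lem:recurrence_inequality_expectation} gives $\Ex{\Gamma^{2t}}\leq cn$ for every $t\geq 0$. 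Markov's inequality then produces $\Pro{\Gamma^{2t}\leq n^3}\geq 1-cn^{-2}$, and since $\max_{i} y_i^{2t} \leq \log(\Gamma^{2t})/\alpha$, this gives $\Gap(2t)\leq(3/\alpha)\log n=\mathcal{O}(\log n)$ with the required probability. For odd $m=2t+1$, a standard Chernoff bound on the MGF of $W$ gives $\Pro{w_m\leq(3/\lambda)\log n}\geq 1-n^{-2}$, and since $\Gap(m)\leq\Gap(m-1)+w_m$, the $\mathcal{O}(\log n)$ bound extends to all steps $m$.

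The main obstacle will be the careful bookkeeping of the quadratic terms in the MGF expansion: although the two allocations within a run are correlated (bin $i$ may receive both balls, or receive only the second ball depending on whether $i_2$ is heavier than $i$), we must verify that the covariance $\Ex{B_i^{(1)}B_i^{(2)}}$ remains $\mathcal{O}(1/n)$ rather than $\mathcal{O}(1)$. This is precisely what allows us to treat the run as a single ``super-step'' with effective probability vector $p_i$, reducing the weighted analysis to the standard framework of \cref{thm:ls22_thm_3_1}.
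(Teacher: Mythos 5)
Your proposal is correct and follows essentially the same route as the paper: bound the two-step expected change of the hyperbolic cosine potential via the MGF lemma (\cref{lem:lss22_lem_2_4}), recognize the induced allocation bias as the \TwoChoice vector, invoke \cref{thm:ls22_thm_3_1} to get the drop, and finish with Markov plus a single-step smoothness/Chernoff argument for odd $m$. The only (immaterial) difference is bookkeeping — you factor the conditional MGF over the two independent weights and control the cross term $\Ex{B_i^{(1)}B_i^{(2)}}$ by Cauchy--Schwarz, whereas the paper enumerates the four allocation events $q_{00},q_{01},q_{10},q_{11}$ directly; likewise you use $\kappa=2$ with the averaged per-ball vector $(n+2i-1)/(2n^2)$ where the paper uses $\kappa=1$ with $(2i-1)/n^2$, yielding the same linear drift.
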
 }
	\addtocounter{thm}{-1}

\begin{proof}
Consider any even step $2t \geq 0$ for the $2$-\WeakMemory process and let $W_1$ and $W_2$ be the weights of the $(2t + 1)$-th and $(2t + 2)$-th balls. Then, with probability:
\begin{itemize}
  \item $q_{11} = \frac{1}{n} \cdot \frac{i}{n}$: we allocate both balls to the $i$-th most loaded bin.
  \item $q_{01} = \frac{i-1}{n} \cdot \frac{1}{n}$: we allocate the second ball to the $i$-th most loaded bin.
  \item $q_{10} = \frac{1}{n} \cdot \left(1 - \frac{i}{n}\right)$: we allocate the first ball to the $i$-th most loaded bin.
  \item $q_{00} = 1 - q_{11} - q_{01} - q_{10}$: we allocate none of the two balls to the $i$-th most loaded bin.
\end{itemize}
We now consider the expected change of the $\Phi$ potential over two steps  
\begin{align*}
\Ex{\left. \Phi^{2t+2} \,\right|\, \mathfrak{F}^{2t} }
 & = \Phi^{2t} \cdot \Big( q_{00} \cdot \Ex{e^{-\alpha (W_1 + W_2)/n}}  + q_{10} \cdot \Ex{e^{\alpha W_1 (1 - 1/n)}} \cdot \Ex{e^{-\alpha W_2/n}} \\
 & \qquad \qquad + q_{01} \cdot \Ex{e^{-\alpha W_1/n}} \cdot \Ex{e^{\alpha W_2 (1 - 1/n)}} + q_{11} \cdot \Ex{e^{\alpha (W_1 + W_2) (1 - 1/n)}} \Big) \\
 & \stackrel{(a)}{\leq} \Phi^{2t} \cdot \Big( 1 + \alpha \cdot \Big( - q_{00} \cdot \frac{2}{n} + q_{10} \cdot \Big( 1 - \frac{2}{n} \Big) \\
 & \qquad \qquad + q_{01} \cdot \Big( 1 - \frac{2}{n} \Big) + q_{11} \cdot \Big( 2 - \frac{2}{n} \Big) \Big) + 6S \cdot \frac{\alpha^2}{n} \Big) \\
 & = \Phi^{2t} \cdot \Big( 1 + \alpha \cdot \Big( \frac{i-1 + n-i + 2i}{n^2} - \frac{2}{n} \Big) + 6S \cdot \frac{\alpha^2}{n} \Big) \\
 & = \Phi^{2t} \cdot \Big( 1 + \alpha \cdot \Big( \frac{2i - 1}{n^2} - \frac{1}{n}\Big) + 6S \cdot \frac{\alpha^2}{n} \Big),
\end{align*}
using \cref{lem:lss22_lem_2_4} in $(a)$, for some constant  $S := S(\lambda) \geq \max(1, 1/\lambda)$.

Similarly, we have \[
\Ex{\left. \Psi^{2t+2} \right| \mathfrak{F}^{2t}} 
 \leq \Psi^{2t} \cdot \Big( 1 + \alpha \cdot \Big( \frac{2i - 1}{n^2} - \frac{1}{n}\Big) + 6S \cdot \frac{\alpha^2}{n} \Big).
\]
By noticing that $p_i = \frac{2i - 1}{n^2}$ is the \TwoChoice probability allocation vector, by \cite[Proposition 2.3]{LS22Batched} it satisfies condition $\mathcal{C}_1$ with $\delta=\frac{1}{4}$ and $\epsilon=\frac{1}{2}$. So, applying \cref{thm:ls22_thm_3_1} at even indices, there exists a constant $c > 0$ such that for $\alpha = \frac{1}{384 S}$ \[
\Ex{\Gamma^{2t+2}} \leq \Gamma^{2t} \cdot \Big(1 - \frac{\alpha}{384n} \Big) + c \cdot \alpha,
\]
implying that for any step $2t \geq 0$,\[
\Ex{\Gamma^{2t}} \leq 384c \cdot n.
\]
By using Markov's inequality,
\[
\Pro{\Gap(2t) \leq \frac{1}{\alpha} \cdot \left( 3 \log n + \log(384c) \right)} \geq 1 - n^{-2}.
\]
Let $t := \lceil m/2 \rceil$. If $m = 2t$, then we are done, otherwise the $(2t)$-th ball is \Whp~$\Oh(\log n)$, because the MGF is constant. Hence, in that last step the gap cannot change by more than $\Oh(\log n/n)$ and hence, we deduce the claim for $\kappa := 4/\alpha$. 
\end{proof}

\section{Proofs of the Lower Bounds}\label{sec:lower}
Our first result gives a lower bound on the gap of the \Memory process.

{\renewcommand{\thethm}{\ref{thm:caching_lower}}
	\begin{thm}[Restated]
\cachelower 
	\end{thm}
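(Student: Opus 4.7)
The plan is to reduce the lower bound to a statement about $n$ consecutive steps of \Memory starting from an arbitrary zero-sum normalized configuration. For any $m \geq n$, consider the interval $[m-n, m]$ and let $y^{s_0}$ denote the normalized load vector at $s_0 = m - n$. The evolution $(y^s)_{s \in [s_0, m]}$ is exactly a copy of \Memory run for $n$ steps starting from $y^{s_0}$, and it suffices to show that regardless of the starting configuration, with probability at least $1 - n^{-1}$ some bin's normalized load reaches $\kappa \log \log n$ by step $m$.

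First I would dispatch the easy case where the starting configuration is already ``bad''. If $\max_i y_i^{s_0} \geq \kappa \log \log n$, a smoothness argument finishes the job: in $n$ steps the average load grows by exactly $1$ and a fixed bin receives $O(1)$ balls in expectation with exponentially decaying tails (the per-step sampling probability is $1/n$ and the cache can only contribute one additional ball per step), so with high probability the maximum normalized load does not drop by more than a constant. Otherwise, we know that at step $s_0$ the configuration is essentially balanced, and we must show the gap \emph{grows} to $\Omega(\log \log n)$ during the interval.

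Second, for the balanced starting configuration I would adapt the classical witness-tree argument used for \TwoChoice in the lightly loaded regime of \cite{ABKU99} (and used in \cite{MPS02} for \Memory at $m = n$) to the heavily loaded setting. The observation is that at each step \Memory places a ball into the lesser loaded of the fresh sample and the cached bin. For a bin $b$ to accumulate $k$ extra balls within the window, we need to locate $k$ ``witness'' steps in which $b$ is either sampled or cached, each accompanied by an appropriate load-comparison outcome. A second moment computation should show that the expected number of bins reaching a height of $k$ extra balls in the window is of order $n \cdot p^k$ for some constant $p \in (0,1)$, yielding that some bin reaches height $\asymp \log_{1/p} \log n$ with constant probability; a boosting step via many disjoint sub-windows or a Chebyshev argument then lifts this to probability $1 - n^{-1}$.

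The main obstacle is handling the dependencies induced by the cache, which is not a fresh uniform draw and carries correlated information across steps (this is precisely the difficulty that made the upper bound in \cref{thm:caching_log_log_n} challenging). I would work around this by conditioning on the trajectory of cached bins along the candidate witness tree, revealing only the fresh samples as randomness; the upper bound on the number of ``light'' bins at any step (at most $n/2$ bins lie below the median, by Markov applied to the hyperbolic potential established in \cref{simp}) provides a uniform lower bound on the probability that each edge of the witness tree corresponds to an allocation landing on the intended bin. This separation of randomness is what makes the counting argument go through despite the long-range correlations that distinguish \Memory from \TwoChoice.
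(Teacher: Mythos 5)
Your framing of the problem (restrict attention to the window $[m-n, m]$, use the base case \cref{simp} to argue the starting configuration is ``typical'', then boost a constant-probability growth event over disjoint sub-windows) matches the paper's, but the central counting step contains a scaling error that would derail the proof.

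You claim a second-moment/witness-tree computation should give that the expected number of bins reaching height $k$ above the average within the window is of order $n\cdot p^{k}$ for a constant $p\in(0,1)$, and that this yields a bin of height $\asymp \log_{1/p}\log n$ with constant probability. These two statements are inconsistent: if the expected count at height $k$ is $n p^{k}$, then the threshold where the count drops to $\Theta(1)$ is $k=\log_{1/p} n$, giving a $\Theta(\log n)$ gap, not $\Theta(\log\log n)$. To get a doubly-logarithmic gap the counts must decay \emph{doubly} exponentially in $k$, on the order of $n/(\log n)^{C^{k}}$ for a constant $C$. This is precisely what the paper proves: it partitions the window into $j_{\max}=\Theta(\log\log n)$ phases of $n/j_{\max}$ steps each, defines $B_j$ as the set of bins at height at least $L+j$ after phase $j$, and establishes $|B_j|\geq n/(\log n)^{C^{j}}$ via a layered induction \`a la \cite{ABKU99}. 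Your ``$n p^k$ for constant $p$'' step needs to be replaced by the engine that produces the doubly-exponential decay.

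That engine is not naturally a witness tree. In \TwoChoice both choices at every step are fresh, so a ball reaching height $k$ spawns a genuine binary tree of $\approx 2^k$ independent witnesses, and that is where doubly-exponential decay comes from. In \Memory the ``second choice'' is the cache, which references a \emph{single} earlier step; unrolling that reference gives a chain, not a branching structure, so the witness-tree combinatorics do not transfer. The paper circumvents this by tracking the load of the cached bin over time: starting from a light cache, with probability about $(\log n)^{-C^{j+\Theta(1)}}$ a run of $\Theta(\log n)$ consecutive ``high'' samples drives the cached bin's load up to the current target level $L+(j-1)$, after which one more sample in $B_{j-1}$ raises a bin to level $L+j$. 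Repeating this over $\Theta\bigl(n/(j_{\max}\log n)\bigr)$ disjoint sub-windows of the phase and applying a Chernoff bound yields $|B_j| \geq n/(\log n)^{C^{j+1}}$ with probability $1-n^{-\omega(1)}$; the exponent $C^{j+1}$, growing geometrically in $j$, is exactly the doubly-exponential decay you need. Your proposed device of ``conditioning on the trajectory of cached bins and revealing only the fresh samples'' is in the right spirit, but without this quantitative pumping argument it does not yield the decay rate, and the median-count remark you invoke is vacuous (half the bins are below the median by definition; what the base case actually provides is the threshold $L$ and a linear-size set $B_0$ of bins at that level together with exponentially decaying tails for the underloaded bins).
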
 }
	\addtocounter{thm}{-1}

On a high level, the proof of this theorem follows the layered induction argument used by \cite{ABKU99} to lower bound the gap of \TwoChoice in the lightly loaded case. However, for the \Memory process in the heavily loaded case, we require some additional arguments to bootstrap the induction and also deal with correlations introduced by the cache. 

\subsection{Proof of Theorem \ref{thm:caching_lower}}

We first provide some general definitions and notations, used in the proof. Recall the definition of the hyperbolic cosine potential from \cref{sec:prelim}:
\[
 \Gamma^{t} = \sum_{i=1}^n e^{\alpha y_i^t} + e^{-\alpha y_i^t}.
\]
The starting point of the proof is round $t_0:=m-n$.
For convenience we may assume that $m$ is a multiple of $n$. This holds since if $m$ is not a multiple of $n$, we may simply apply the theorem to the largest $\tilde{m} \leq m$ which is a multiple of $n$, and deduce the same gap bound as the (normalized) load of any bin can only decrease by at most $1$ within $n$ steps.

We will now divide the interval $[t_0,m]$ into consecutive phases, labeled $1$ to $j_{\max}:= \epsilon \cdot \log \log n$ for a sufficiently small constant $\epsilon := 1/(2 \log C)$, where $C:=100$. Each phase lasts for $\ell:=n/ j_{\max}$ steps. Further, associated to each phase $j \geq 1$, we define the set of bins 
\[
  B_{j} := \left\{ i \in [n] \colon x_i^{t_0
 +j \cdot \ell} \geq \frac{t_0}{n} + L + j \right\},
\]
for some constant $L$ to be defined below and we define the event
\[
 \mathcal{E}_j := \left\{ \left| B_j \right | \geq \frac{n}{(\log n)^{C^{j}}} \right\}.
\] 
  We also define the events $\mathcal{F} := \{ \Gamma^{t_0} \leq 6cn \}$ and $\mathcal{L} = \{ y_{\min}^{t_0} \geq -\tilde{c} \cdot \log n\}$ for some constants $c, \tilde{c}>0$. The key step in the proof of Theorem \ref{thm:caching_lower} is to establish the following two claims about the events $\mathcal{E}_j$, where $0 \leq j \leq j_{\max}$.
\begin{clm}\label{lem:start_event}
We have that $\mathcal{L} \cap \mathcal{E}_0 \subseteq \mathcal{F}$ and $\Pro{\mathcal{F}} \geq 1 - n^{-2}$.
\end{clm}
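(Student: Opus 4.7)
The plan is to handle the two assertions of the claim separately. For the probability bound $\Pro{\mathcal{F}} \geq 1 - n^{-2}$, the strategy is to invoke Theorem~\ref{simp} directly. For the uniform sampling distribution (which is $(1,1)$-biased), that theorem guarantees constants $\alpha, c$ such that $\Gamma^u \leq 6cn$ holds for every $u$ in a window of length $n\log^8 n$ starting from any fixed step $t\geq 1$, with probability at least $1 - n^{-4}$. Choosing $t$ so that $t_0 = m - n$ lies within this window, and using the fractional-load padding trick from the proof of Theorem~\ref{thm:caching_log_log_n} to cover the case when $m$ is smaller than $n\log^8 n$, we obtain $\Pro{\Gamma^{t_0}\leq 6cn} \geq 1 - n^{-4} \geq 1 - n^{-2}$.

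For the set inclusion $\mathcal{L}\cap\mathcal{E}_0 \subseteq \mathcal{F}$, the approach is to argue that the two structural hypotheses on $x^{t_0}$ constrain both tails of the normalised load distribution tightly enough to force the hyperbolic cosine potential to be linear. The event $\mathcal{L}$ bounds $y_i^{t_0}\geq -\tilde c \log n$ for every $i$, so choosing $\tilde c$ small compared to $\alpha$ (which itself is a small constant inherited from Theorem~\ref{simp}) makes $e^{-\alpha y_i^{t_0}}$ a small constant on the negative side of the average, giving $\sum_i e^{-\alpha y_i^{t_0}}\leq 3cn$. For the overload half, the plan is to combine $\mathcal{E}_0$---which asserts that at least $n/\log n$ bins sit at height $\geq L$---with the zero-sum identity $\sum_i y_i^{t_0} = 0$ and the lower bound from $\mathcal{L}$. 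The zero-sum constraint forces the total positive excess to equal the total negative excess, which $\mathcal{L}$ caps at $\tilde c n\log n$. This, together with $\mathcal{E}_0$, should be used to derive a tight ceiling on the maximum normalised load so that $\sum_i e^{\alpha y_i^{t_0}}\leq 3cn$ as well, after which adding the two halves yields $\Gamma^{t_0}\leq 6cn$.

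The main obstacle is exactly this last step: while $\mathcal{E}_0$ controls how \emph{many} bins lie above $L$ and $\mathcal{L}$ caps the depth of the underload, neither hypothesis on its own prevents a small number of extreme outlier bins from carrying very large positive excess balanced by many mildly underloaded bins, which would blow the overload sum above $O(n)$. Closing this will require the constants $\alpha, L, \tilde c$ to be fixed compatibly at the outset (rather than chosen piecewise), and will likely exploit the specific choice $t_0 = m - n$---so that only $n$ balls have been allocated since the process was last in a state where a potential bound was available---together with a conservation/counting argument converting the zero-sum identity into a quantitative upper bound on $\max_i y_i^{t_0}$. I expect this is where most of the technical work of the claim will sit; the probability bound, by contrast, is essentially a citation of the base case.
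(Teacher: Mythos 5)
Your plan attempts the inclusion $\mathcal{L}\cap\mathcal{E}_0\subseteq\mathcal{F}$ exactly as it is printed, and you correctly identify that this direction is problematic: $\mathcal{L}$ and $\mathcal{E}_0$ say nothing about the overload tail, so nothing stops one bin from carrying a huge positive excess balanced by a slight underload spread over many bins, which would make $\sum_i e^{\alpha y_i^{t_0}}$ superlinear while leaving both $\mathcal{L}$ and $\mathcal{E}_0$ intact. This is not merely an obstacle to close with cleverer constants; it is a counterexample. The inclusion as printed is false, and the correct reading is that it is a typo for the \emph{reverse} inclusion $\mathcal{F}\subseteq\mathcal{L}\cap\mathcal{E}_0$. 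That is what the paper actually proves, and it is also what the rest of the lower-bound argument uses: in Claim~\ref{lem:chain_events} one conditions on $\mathcal{F}$ and then needs to be able to invoke $\mathcal{L}$ (for the cache-load lower bound) and $\mathcal{E}_0$ (to seed the induction), which requires exactly $\mathcal{F}\Rightarrow\mathcal{L}\cap\mathcal{E}_0$.

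The reverse inclusion is short. From $\Gamma^{t_0}\leq 6cn$ each summand $e^{-\alpha y_i^{t_0}}$ is at most $6cn$, so $y_i^{t_0}\geq -\alpha^{-1}\log(6cn)$, giving $\mathcal{L}$ for a suitable $\tilde c$. For $\mathcal{E}_0$, a first-moment argument shows that if $\Gamma^{t_0}\leq 6cn$ then at least a constant fraction of bins have $|y_i^{t_0}|\leq c_2$ for some constant $c_2$; by pigeonhole some level $L\in[-c_2,c_2]$ carries $\Omega(n)$ bins, which is far more than the $n/\log n$ required for $\mathcal{E}_0$. Your treatment of the probability bound $\Pro{\mathcal{F}}\geq 1-n^{-2}$ via Theorem~\ref{simp}, including the note about padding for small $m$, is fine and matches the paper.

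So the concrete gap is directional: you should have recognized, from your own observation that the stated containment cannot hold deterministically, that the statement is misprinted and that the right (and much easier) thing to prove is $\mathcal{F}\subseteq\mathcal{L}\cap\mathcal{E}_0$.
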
  

\begin{clm}
\label{lem:chain_events}
For any $j \in [1,j_{\max}]$ we have $\Pro{ \mathcal{E}_j \, \left| \, \mathfrak{F}^{t_0+\ell \cdot (j-1)},\; \cap_{k=0}^{j-1} \mathcal{E}_k , \; \mathcal{F} \right.} \geq 1 - n^{-\omega(1)}$.
\end{clm}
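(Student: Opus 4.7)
The plan is a standard ABKU-style cascade: given $|B_{j-1}|\ge n/(\log n)^{C^{j-1}}$ at the beginning of phase $j$, propagate it to $|B_j|\ge n/(\log n)^{C^{j}}$ by showing that enough of the bins in $B_{j-1}$ receive at least one additional ball during the $\ell=n/j_{\max}$ steps of phase $j$.

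As a preliminary, I would first extend $\mathcal{F}=\{\Gamma^{t_0}\le 6cn\}$ to the uniform event $\mathcal{F}^{\star}:=\cap_{s\in [t_0,m]}\{\Gamma^{s}\le 6cn\}$. Since $m-t_0=n\le n\log^{8}n$, this follows directly from \cref{simp} applied at step $t_0$, at an additional cost of $n^{-4}$ in failure probability. Under $\mathcal{F}^{\star}$ the upper tail of the load distribution is controlled at every step of phase $j$: the number of bins whose normalized load exceeds $v$ is at most $6cn\cdot e^{-\alpha v}$ throughout, so the set of ``super-heavy'' bins with load above $t_0/n+L+j_{\max}$ is negligible compared with $|B_{j-1}|$ once the constant $L$ is large enough.

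The main quantitative step is a per-bin allocation probability. In a single step of \Memory, a ball lands in a bin $k$ either (i)~when $k$ is the sampled bin and the cache has load $\ge x_k$, or (ii)~when $k$ is the cached bin and the sampled bin has strictly larger load. Summing these contributions over $k\in B_{j-1}$ and over the $\ell$ steps of phase $j$, and using $\mathcal{F}^{\star}$ to control the population above each load level, I would establish
\[
\Ex{\left.\,\big|\{k\in B_{j-1}:\,k\text{ receives }{\ge}\,1\text{ ball in phase }j\}\big|\,\right|\,\mathfrak{F}^{t_0+(j-1)\ell},\,\cap_{i<j}\mathcal{E}_i,\,\mathcal{F}^{\star}}\,\ge\,\frac{2n}{(\log n)^{C^{j}}}.
\]
The choice $C=100$ leaves generous slack against the na\"\i ve \TwoChoice-style rate $(|B_{j-1}|/n)^{2}$, absorbing constant-factor losses from cache asymmetries. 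Because the samples across the $\ell$ steps are independent, a Chernoff/Azuma inequality applied to the Doob martingale of the sum then gives the required $1-n^{-\omega(1)}$ concentration, since the mean is polynomial in $n$ for every $j\le j_{\max}=\Theta(\log\log n)$.

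The hard part is the per-bin allocation probability itself. Unlike \TwoChoice, whose per-step decisions are time-invariant, \Memory's cache has long-range correlations and is biased toward lightly loaded bins, so the event ``cache $\in B_{j-1}$'' (which fuels case (ii)) is not immediate. The key observation to overcome this is that whenever the cache happens to be one of the rare super-heavy bins guaranteed by $\mathcal{F}^{\star}$, the next sample of any bin in $B_{j-1}$ captures the cache into $B_{j-1}$, and subsequent samples of lighter bins then deposit balls into $B_{j-1}$. Turning this intuition into a clean quantitative per-step bound, while properly handling the conditioning on $\cap_{i<j}\mathcal{E}_i$, is the crux of the argument.
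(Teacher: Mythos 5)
You correctly identify the crux — lower-bounding the probability that bins in $B_{j-1}$ receive new balls under \Memory — but you leave it unproved, and the heuristic you offer for it has the allocation rule backwards. Once the cache is captured into $B_{j-1}$, a subsequent sample of a \emph{lighter} bin places the ball in that lighter bin and moves the cache down with it, so no ball enters $B_{j-1}$; you need \emph{heavier} subsequent samples to push balls into the cached $B_{j-1}$ bin. The ``super-heavy cache'' scenario is also a red herring: the paper never conditions on it, and the cache leaves the super-heavy range after a single lighter sample. Moreover, the target per-step rate $(|B_{j-1}|/n)^2$ is not what the argument ultimately delivers, and the losses relative to it are not constant factors — they grow super-exponentially in $j$. (The detour through $\mathcal{F}^\star$ is also unneeded: the lower-tail bound implied by $\mathcal{F}$ propagates forward automatically since bin loads never decrease.)

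The paper's route is substantially different. It partitions phase $j$ into $z = n/(j_{\max}(2c\log n+1))$ blocks of $\Theta(\log n)$ steps. Within a block starting at step $s$, it fixes the cache bin $b^s$, defines the stopping time $\rho$ at which $x_{b^s}$ reaches $\varphi(j)=\frac{t_0}{n}+L+(j-1)$, and observes that if the sampled bin is at least as loaded as the cache at every step, the cache load is forced upward; the probability of the required chain of samples telescopes into a product $\prod_{\lambda} p(\lambda)$, where $p(\lambda)$ is the probability of sampling a bin of load at least $t_0/n+\lambda$. This product is lower bounded in three regimes — very light $\lambda$ via the tail bound $|\{i : y_i^{t_0} < \lambda\}| \leq 6cne^{\alpha\lambda}$ from $\mathcal{F}$; moderate $\lambda \leq L$ via $|B_0| \geq c_3 n$; and $\lambda \in [1, j-1]$ via the induction hypothesis $|B_\lambda| \geq n/(\log n)^{C^\lambda}$ — yielding a per-block success rate of about $(\log n)^{-C^{j+3/4}}$. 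Once the cache sits at $\varphi(j)$, a single further sample hitting one of the $\geq \frac{n}{2(\log n)^{C^j}}$ bins currently at load $\varphi(j)$ (guaranteed by $\mathcal{E}_{j-1}$ together with the assumption that $\mathcal{E}_j$ has not yet occurred) places one new bin into $B_j$; the expected number of successes across the $z$ blocks is $\Omega\bigl(n/(\log n)^{C^{j+1}}\bigr)$, still polynomial in $n$ for all $j \leq j_{\max}$, and a Chernoff bound finishes. None of this machinery — the fixed cache bin, the stopping time, the telescoping product over load levels, the three regimes of $p(\lambda)$ — appears in your proposal.
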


Before establishing these two claims, we complete the proof of \cref{thm:caching_lower}.

\begin{proof}[Proof of \cref{thm:caching_lower} (assuming \cref{lem:start_event} and \cref{lem:chain_events} hold)]
First observe that for any $1 \leq j \leq j_{\max}$,  if the event $\mathcal{E}_j$ holds then  
\begin{align}
 |B_j| \geq \frac{n}{(\log n)^{C^{j}}} \geq \frac{n}{(\log n)^{C^{j_{\max}}}} \geq \frac{n}{(\log n)^{\sqrt{\log n}}} \geq \sqrt{n}. \label{eq:is_poly}
\end{align}
Thus if $\mathcal{E}_{j_{\max}}$ holds, there is a bin $i \in [n]$ with $x_i^{m} \geq \frac{t_0}{n} + L + j_{\max}$, hence
\[
 y_i^m = x_i^m - \frac{m}{n} 
 \geq \frac{t_0}{n} + L + j_{\max} - \frac{m}{n} = L + j_{\max} - 1   = \Omega(\log \log n),
\]
as $|L|=\mathcal{O}(1)$. It remains to lower bound $\Pro{\mathcal{E}_{j_{\max}}}$. Using \cref{lem:chain_events}, 
\begin{align*}
 \Pro{ \mathcal{E}_{j_{\max}} \, \mid \, \mathfrak{F}^{t_0}, \mathcal{F} }
 & \geq \Pro{ \left. \bigcap_{k = 0}^{j_{\max}} \mathcal{E}_{k} \, \right| \, \mathfrak{F}^{t_0}, \mathcal{F} } \\
 & = \Pro{ \mathcal{E}_{j_{\max}} \, \left| \, \mathfrak{F}^{t_0}, \mathcal{F}, \bigcap_{k = 0}^{j_{\max} - 1} \mathcal{E}_{k} \right. } \cdot \Pro{ \left. \bigcap_{k = 0}^{j_{\max} - 1} \mathcal{E}_{k} \, \right| \, \mathfrak{F}^{t_0}, \mathcal{F}} \\
 & \geq (1 - n^{-\omega(1)})^{j_{\max}} \\
 & =1 - n^{-\omega(1)}.
 \end{align*}

Further, as $\mathcal{F}$ is measurable with respect to  $\mathfrak{F}^{t_0}$ and $\Pro { \mathcal{F}  } \geq 1-n^{-2}$ by \cref{lem:start_event},  
\begin{align*}
  \Pro{ \mathcal{E}_{j_{\max}} } &\geq \Pro{ \mathcal{E}_{j_{\max}} \cap \mathcal{F} } \\
  &\geq \Pro{\mathcal{F}} \cdot \Ex{\Pro{ \mathcal{E}_{j_{\max}} \, \mid \, \mathfrak{F}^{t_0}, \; \mathcal{F} }} \\
  &\geq (1-n^{-2}) \cdot \left( 1 - n^{-\omega(1)} \right) \\
  &\geq 1-n^{-1},
\end{align*}
which completes the proof.
 \end{proof}

It remains to prove \cref{lem:start_event} and \cref{lem:chain_events}. First we establish \cref{lem:start_event}, which concerns the event $\mathcal{E}_0$.
\begin{proof}[Proof of \cref{lem:start_event}]
By \cref{simp}, there exist some constants $c,\alpha > 0$ such that for $\Gamma := \Gamma(\alpha)$ and $\mathcal{F}:=\left\{ \Gamma^{t_0} \leq 6cn \right\}$ we have
\begin{equation*}
 \Pro { \mathcal{F} } \geq 1-n^{-2}.  
\end{equation*}
Furthermore, for convenience, in the following we can assume that $c$ in the definition of $\mathcal{F}$ satisfies $c \geq 1$.
By the definition of $\Gamma$, the event $\mathcal{F}$ implies that $y_{\min}^{t_0} \geq -\tilde{c} \cdot \log n$ for some constant $\tilde{c}>0$, that is $\mathcal{L} \supseteq \mathcal{F}$. By a first moment argument,  $\mathcal{F}$ also implies that there exist constants $c_1 >0 $ and $c_2 > 0$ such that at time $t_0$, there are at least $c_1 \cdot n$ bins with a load in $[-c_2,+c_2]$. Hence by the pigeonhole principle, there is a load threshold $L \in [-c_2,c_2]$ such that the number of bins $i$ with $y_i^{t_0} = L$ (equivalently, $x_i^{t_0} = \frac{t_0}{n} + L$) is at least $c_1/(2c_2+1) \cdot n$. Note that $L$ may be positive or negative (or zero), all we need is that it is in the interval $[-c_2,c_2]$. Let us define $B_0 :=\left\{ i \in [n]  \colon x_i^{t_0} = \frac{t_0}{n} + L \right\}$; so $|B_0| \geq c_3 \cdot n \geq \frac{n}{(\log n)^{C}}$ for the constant $c_3 :=c_1/(2c_2+1)$.  
\end{proof}

Next we prove the more involved induction step from $j-1$ to $j$: 
\begin{proof}[Proof of \cref{lem:chain_events}]
In order to establish this key inequality, we will start the analysis from step $t_0+\ell \cdot (j-1)$ onwards (the first step of phase $j$), and assume for this step an arbitrary load (and cache) configuration such that $\mathcal{L} \cap \bigcap_{k=0}^{j-1} \mathcal{E}_{k}$ holds. 

Consider any step $s$ in phase $j$. Since $\mathcal{E}_{j-1}$ holds, we have
\[
 \left| \left\{ i \in [n] \colon x_i^{s} \geq \frac{t_0}{n} + L + (j-1)  \right\} \right| \geq \frac{n}{(\log n)^{C^{j-1}}}.
\]
Further, we may assume that 
\begin{align}
 \left| \left\{ i \in [n] \colon x_i^{s} \geq \frac{t_0}{n} + L + j \right\} \right| < \frac{n}{(\log n)^{C^{j}}}, \label{eq:assumption}
\end{align}
since otherwise $\mathcal{E}_{j}$ holds, and we are done. Combining the last two inequalities yields,
\begin{align}
 \left| \left\{ i \in [n] \colon x_i^{s} = \frac{t_0}{n} + L + (j-1)  \right\} \right| 
 &\geq \frac{n}{(\log n)^{C^{j-1}}} - \frac{n}{(\log n)^{C^{j}}}  \geq \frac{n}{2 (\log n)^{C^{j-1}}}. \label{eq:equal_lower}
\end{align}

Regarding the bin $b^s$ in the cache at step $s$, we can deduce that $x_{b^s}^{s} \geq x_{\min}^s \geq x_{\min}^{t_0}  \geq \frac{t_0}{n} -\tilde{c} \cdot \log n$, since we are assuming that the event $\mathcal{L}$ holds. 

Recall that whenever in a round $r$ the load of the sampled bin $i^r$ is strictly greater than the load of the cached bin $b^r$, we allocate the ball to the cached bin (and do not update the cache). Further, if the sampled load is at least the load of the cache, then we necessarily increment one bin with that load.
With this in mind, define the stopping time
\[
\rho:= \min \left\{ r \geq s\; \colon \;x_{b^s}^r \geq \varphi(j) \right\},
\]
where $\varphi(j) := \frac{t_0}{n} + L + (j-1)$. Note that
\begin{align}
 &\Pro{ \rho \leq \varphi(j) - x_{b^s}^s  \, \left| \, \mathfrak{F}^{s}, \;\cap_{k=0}^{j-1} \mathcal{E}_k , \; \mathcal{F}\right.} \notag \\&\qquad \geq \Pro{  \bigcap_{r=s}^{\varphi(j) - x_{b^s}^s} \left\{ x_{i^{r}}^r \geq x_{b^{s}}^s + (r-s)   \right\} ~\bigg|~ \mathfrak{F}^{s}, \;\cap_{k=0}^{j-1} \mathcal{E}_k , \; \mathcal{F}} \notag \\
 &\qquad \geq \prod_{r=s}^{\varphi(j) - x_{b^s}^s}\Pro{    x_{i^{r}}^r \geq x_{b^{s}}^s + (r-s)   ~\bigg|~ \mathfrak{F}^{r-1},\; \cap_{k=0}^{j-1} \mathcal{E}_k, \; \mathcal{F} } \notag \\
 & \qquad \geq \prod_{\lambda=x_{b_s}^s-\frac{t_0}{n}}^{L+j-1} p(\lambda).
 \label{eq:cache_bound}
\end{align}
where
\[
 p(\lambda):= \Pro{    x_{i^{\frac{t_0}{n} + \lambda-x_{b_s}^s}}^{\frac{t_0}{n} + \lambda-x_{b_s}^s} \geq \frac{t_0}{n} + \lambda  ~\bigg|~ \mathfrak{F}^{r-1},\; \cap_{k=0}^{j-1} \mathcal{E}_k , \; \mathcal{F}};
\]
the important thing to remember is lower bounding $p(\lambda)$ means that we are lower bounding the probability of sampling a bin whose load is at least $\frac{t_0}{n}+\lambda$ (conditioned on the previous $j-1$ layers of ``good'' events).
Recall that since we assume that $\mathcal{F}:=\left\{ \Gamma^{t_0} \leq 6 cn \right\}$ holds, we infer that for any $\lambda \in \mathbb{R}$,
\[
 \left| \left\{ i \in [n] \colon y_{i}^{t_0} < \lambda  \right\} \right| \leq 6cn \cdot e^{\alpha \lambda},
\]
and this implies that for any future step $r \geq t_0$,
\[
 \left| \left\{ i \in [n] \colon x_{i}^{r} < \frac{t_0}{n} + \lambda  \right\} \right| 
 \leq \left| \left\{ i \in [n] \colon x_{i}^{t_0} < \frac{t_0}{n} + \lambda  \right\} \right|
 \leq 6cn \cdot e^{\alpha \lambda}.
\]
Hence for any $r \geq s$,
\begin{align}
 \Pro{ \left. x_{i^{r}}^r \geq \frac{t_0}{n} + \lambda \,\,\right|\,\, \mathfrak{F}^{r-1},\;  \cap_{k=0}^{j-1} \mathcal{E}_k , \mathcal{F}} \geq 1 - 6c \cdot e^{\alpha \lambda}. \label{eq:estimate_one}
\end{align} 
Secondly, we have an alternative estimate based on the fact that $\mathcal{F}$ implies $|B_0| \geq c_3 \cdot n$.  For any $\lambda \leq L$,
\begin{align}
 \Pro{ \left. x_{i^{r}}^r \geq \frac{t_0}{n} + \lambda \,\,\right|\,\, \mathfrak{F}^{r-1}, \; \cap_{k=0}^{j-1} \mathcal{E}_k, \; \mathcal{F}  } &\geq
 \Pro { i^{r} \in B_{0} \, \left| \, \mathfrak{F}^{r-1},\; \cap_{k=0}^{j-1} \mathcal{E}_k, \; \mathcal{F}   \right.} \notag \\ &\geq \frac{|B_{0}|}{n} \geq c_3 > 0. \label{eq:estimate_two}
\end{align} 
Finally, for any $1 \leq \lambda < j$, we have
\begin{align}
 \Pro{ \left. x_{i^{r}}^r \geq \frac{t_0}{n} + L +\lambda \,\,\right|\,\, \mathfrak{F}^{r-1}, \cap_{k=0}^{j-1} \mathcal{E}_k , \mathcal{F} }  
 &\geq \frac{|B_{\lambda}|}{n} \geq (\log n)^{-C^{\lambda}}. \label{eq:estimate_three}
\end{align}
We will now apply  \cref{eq:estimate_one}, \cref{eq:estimate_two} and \cref{eq:estimate_three} in order to lower bound \cref{eq:cache_bound}. For simplicity, let us assume that $L$ is positive (the case where $L$ is negative is similar, and we obtain an even stronger lower bound on the probability). Returning to the product in \cref{eq:cache_bound} where the load threshold ranges from $x_{b_s}^{s}$ to $L+j$, we group the load values into three parts: $(i)$ from $x_{b_s}^{s}$ to $-c_4$; $(ii)$ from $-c_4+1$ to $L$ and $(iii)$ from $L+1$ to $L+j$, where the constant $c_4 > 0$ as the smallest integer such that:
\[
 6c \cdot \sum_{z=c_4}^{\infty}  e^{-\alpha(z-1)} \leq 1/2.
\]
Hence, 
\begin{align*}
\Pro{ \rho \leq \varphi(j) - x_{b^s}^s  \, \left| \, \mathfrak{F}^{s}, \;\cap_{k=0}^{j-1} \mathcal{E}_k , \; \mathcal{F}\right.}
&= \prod_{\lambda=x_{b_s}^r-\frac{t_0}{n}}^{-c_4} p(\lambda) \cdot \prod_{\lambda=-c_4+1}^{L} p(\lambda) \cdot  \prod_{\lambda=L+1}^{L+j-1} p(\lambda)   \\
 &\geq  \prod_{\lambda=x_{b_s}^r-\frac{t_0}{n}}^{-c_4} \left(1 - 6c \cdot e^{\alpha \lambda} \right) \cdot c_3^{L+c_4} \cdot \prod_{\lambda=1}^{j-1} (\log n)^{-C^{\lambda}}, \\
  \intertext{where we have lower bounded the first product by \cref{eq:estimate_one}, the second by \cref{eq:estimate_two} and the third by \cref{eq:estimate_three}. Further estimating this lower bound yields,}
 \Pro{ \rho \leq \varphi(j) - x_{b^s}^s  \, \left| \, \mathfrak{F}^{s}, \;\cap_{k=0}^{j-1} \mathcal{E}_k , \; \mathcal{F}\right.}  &\geq  \left( 1 - 6c \cdot \sum_{z=c_4}^{\infty}   e^{-\alpha (z-1)} \right) \cdot c_3^{L+c_4} \cdot (\log n)^{-\sum_{\lambda=1}^{j-1} C^{\lambda} }, \\
 &\geq  \left( 1 - 6c \cdot \sum_{z=c_4}^{\infty}   e^{-\alpha (z-1)} \right) \cdot c_3^{L+c_4} \cdot (\log n)^{-C^{j+1/3}}, \\ 
       &\geq  4 \cdot (\log n)^{-C^{j+2/3}},
 \end{align*}
 where we have used the fact that $C=100$ and the first and second factors in the penultimate line are both constants that are strictly greater than $0$.

Note that $\varphi(j) - x_{b_s}^s \leq c \log n + L + j -1  \leq 2 c \log n$, and therefore,
\begin{equation}\label{eq:rhobdd}
 \Pro{ \rho \leq 2 c \log n \, \left| \, \mathfrak{F}^t, \mathcal{E}_0 \right.} \geq 4 \cdot (\log n)^{-C^{j+2/3}}  .
 \end{equation}
 Hence from any step $s$ in phase $j$, with probability at least $4 \cdot (\log n)^{-C^{j+2/3}}$ after at most $2 c \log n$ additional steps, we reach the situation where the bin in the cache has load at least $x_{b_r}^{r} \geq \varphi(j) = \frac{t_0}{n} + L + j-1$. 

Recall that by \cref{eq:assumption} we assumed that  we have at least $\frac{n}{2 (\log n)^{C^{j}}}$ bins with load $\frac{t_0}{n} + L + (j-1)$ (see \cref{eq:equal_lower}), this means that with probability at least $\frac{1}{2(\log n)^{C^{j}}}$ the next sampled bin has load  $\frac{t_0}{n} + L + (j-1)$, and thus one bin load reaches
 $\frac{t_0}{n} + L + j$.
 
 Let $X_1,X,\ldots,X_{z}$ with $z:=n/(j_{\max} \cdot (2c \log n+1)) $ be independent Bernoulli random variables with success probability 
 \[
  \Pro{ \rho \leq 2 c \log n \, \left| \, \mathfrak{F}^t, \mathcal{E}_0 \right.}\cdot \frac{1}{2(\log n)^{ C^{j}}} \geq  4 \cdot (\log n)^{-C^{j+2/3}} \cdot \frac{1}{2 (\log n)^{C^{j}}} \geq 2 \cdot (\log n)^{-C^{j+3/4}},
 \]
 where the first inequality holds by \eqref{eq:rhobdd} and the last holds for large $n$ since $C=100$. Let $X=\sum_{i=1}^z X_i$. Then, again since we fixed $C=100$ and can assume that $n$ is large, we have 
 \[
  \Ex{X} := \frac{n}{j_{\max} (2c \log n) } \cdot \frac{2}{(\log n)^{C^{j+3/4}}} \geq 2 \cdot \frac{n}{(\log n)^{C^{j+1}}}.
 \] 
 Since $j \leq j_{\max}$, and $j_{\max} \leq \epsilon \cdot \log \log n$, $\epsilon=1/(2 \log C)$, we have $\Ex{X} = \Omega( \poly(n))$ (due to \cref{eq:is_poly}), and thus by a Chernoff bound,
 \[
  \Pro{X \geq \frac{n}{(\log n)^{C^{j+1}}}} \geq \Pro{ X \geq 1/2 \cdot \Ex{X} } \geq 1 - n^{-\omega(1)}.
 \]
 This means, we will have at least $\frac{n}{(\log n)^{C^{j+1}}}$ bins with load at least $\frac{t_0}{n} + L + j$ during one step in phase $j$, which concludes the induction step. Thus we have established the claim.
\end{proof}

\subsection{Lower Bound for \DWeakMemory}

We also give a simple argument for a lower bound on the gap of \DWeakMemory with runs of a constant length $d$. 
 
\begin{lem}\label{lem:d_weak_memory_lower_bound}
For the \DWeakMemory process with constant $d > 0$ and $m = \frac{1}{400d} n \log n$, we have that
\[
\Pro{\Gap(m) \geq \frac{1}{400d} \cdot \log n} \geq 1 - n^{-2}.
\]
\end{lem}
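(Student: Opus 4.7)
The plan is to couple $d$-\WeakMemory with a pure \OneChoice process applied only to the first ball of every run. Specifically, at step $t = dk$ of $d$-\WeakMemory, the process samples a bin $I_k$ uniformly at random and unconditionally places one ball there (the cache is reset to $I_k$). Setting $r := \lfloor m/d \rfloor$, the bins $I_0, I_1, \ldots, I_{r-1}$ are therefore i.i.d.\ uniform on $[n]$, and because no later substep of run $k$ can remove balls from $I_k$, we obtain the deterministic pointwise bound $x_i^m \geq Y_i$ for every bin $i \in [n]$, where $Y_i := |\{k < r : I_k = i\}|$.

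The vector $(Y_i)_{i \in [n]}$ is exactly the multinomial load vector produced by standard \OneChoice with $r$ balls into $n$ bins, so the problem reduces to an anti-concentration statement for $\max_i Y_i$. With $r = \lfloor n\log n/(400 d^2)\rfloor$ and $d$ constant, each coordinate has mean $\mu := r/n = \log n/(400 d^2) \to \infty$, and is well-approximated by $\operatorname{Poisson}(\mu)$. The standard anti-concentration estimate for the maximum of $n$ such (weakly correlated) coordinates yields, for some absolute constant $c_0 > 0$,
\[
\Pro{\max_{i \in [n]} Y_i \;\geq\; \tfrac{r}{n} + c_0\sqrt{\tfrac{r}{n}\log n}} \;\geq\; 1 - n^{-2}.
\]
Plugging in the parameters, $\sqrt{(r/n)\log n} = \log n/(20 d)$, so whp some bin $i^\ast$ satisfies $Y_{i^\ast} \geq r/n + c_0 \log n/(20 d)$. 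Combining this with the coupling and subtracting the true average load $m/n = \log n/(400 d)$ yields, with probability at least $1 - n^{-2}$,
\[
\Gap(m) \;\geq\; \frac{c_0 \log n}{20 d} \;-\; \frac{(d-1)\log n}{400\, d^2} \;\geq\; \frac{\log n}{400 d},
\]
comfortably matching the stated bound for any sufficiently small $c_0 > 0$.

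The main obstacle is establishing the anti-concentration bound in Step~2. A per-bin estimate of the form $\Pro{Y_{i^\ast} \geq r/n + c_0\sqrt{(r/n)\log n}} \geq n^{-1+\epsilon}$ follows routinely from the binomial formula with Stirling, but lifting this to a high-probability statement about the maximum requires handling the mild negative correlation coming from the constraint $\sum_i Y_i = r$. I would use the second-moment method on the count $Z := |\{i : Y_i \geq r/n + c_0\sqrt{(r/n)\log n}\}|$: in the regime $r/n = \Omega(\log n)$ we are in, the correlation between distinct $Y_i, Y_j$ is small enough that $\ex{Z^2} \leq (1 + o(1))\ex{Z}^2$, and since $\ex{Z} = n^{\Omega(1)}$, Paley--Zygmund gives $\Pro{Z \geq 1} \geq 1 - n^{-2}$. (Equivalently, one may compare directly to $n$ independent $\operatorname{Poisson}(\mu)$ variables via a standard Chen--Stein / dePoissonization step.) Everything else in the argument is routine, so this is where the only real work is.
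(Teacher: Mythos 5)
Your reduction is exactly the paper's: observe that the first step of every run of $d$-\WeakMemory is a pure \OneChoice allocation, so $x_i^m$ pointwise dominates the \OneChoice load vector built from $\approx m/d$ balls, and then invoke a lower tail bound for the maximum \OneChoice load. The only structural difference is that the paper simply cites the \OneChoice anti-concentration (from \cite[Section 4]{PTW15}, see also \cite[Lemma A.9]{LS22Noise}: with $cn\log n$ \OneChoice balls the max load is at least $(c + \sqrt{c}/10)\log n$ with probability $\geq 1 - n^{-2}$), whereas you propose to re-derive it via the second-moment method / dePoissonization. Both routes are fine; the part you flag as ``where the only real work is'' is precisely the part the paper outsources to a citation.

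There is one slip in the write-up that, as stated, makes the argument read as wrong even though the underlying idea is sound: you conclude ``comfortably matching the stated bound for any sufficiently small $c_0 > 0$.'' The direction is reversed. As $c_0$ decreases, the probability bound is easier but the resulting gap bound is weaker; for the final inequality $\frac{c_0\log n}{20d} - \frac{(d-1)\log n}{400d^2} \geq \frac{\log n}{400d}$ you need $c_0 \geq \frac{2d-1}{20d}$, so $c_0$ must be at least a specific threshold that approaches $1/10$ as $d$ grows. (Indeed, with $c_0 = 0.01$ and $d = 2$ the displayed lower bound is already negative.) Taking $c_0 = 1/10$ works uniformly over all constant $d \geq 1$ and is exactly what the cited \OneChoice estimate supplies; just replace ``sufficiently small'' with that explicit constant and the argument is complete.
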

\begin{proof}
In \DWeakMemory, every $d$ steps the cache is reset and so the ball is allocated using \OneChoice. Hence, in $m$ steps, there are $m/d$ balls allocated using \OneChoice. By e.g.~\cite[Section 4]{PTW15} (see also \cite[Lemma A.9]{LS22Noise}), when $cn\log n$ balls are allocated using \OneChoice, for any constant $c>0$, then with probability at least $1 - n^{-2}$, the max load is at least $(c+\sqrt{c}/10)\log n$. Hence, we get\[
\Pro{\max_{i \in [n]} y_i^m \geq \left( \frac{1}{400d^2} + \frac{1}{200d} \right) \cdot \log n} \geq 1 - n^{-2}.
\]
Therefore, since at step $m$ the average is $(\log n)/(400d)$,
\[
\Pro{\Gap(m) \geq \frac{1}{400d} \cdot \log n} \geq 1 - n^{-2}. \qedhere
\]
\end{proof}

\section{Conclusions}\label{sec:conclusions}
	In this work, we presented an asymptotically tight analysis of \Memory in the heavily loaded case $m \geq n$. We proved that the gap of \Memory is $\Theta(\log \log n)$, matching the performance of \TwoChoice up to constants. In contrast to \TwoChoice, we showed \Memory still works well in a heterogeneous setting where the sampling distribution may be distorted by some arbitrarily large constant factor. 
	We also analyzed other relaxed settings, including one where balls are weighted. In these settings, the cache is reset every constant number of steps~(\DWeakMemory). In those cases, we proved that the gap is still $\mathcal{O}(\log n)$ and remains independent of $m$.
	
	There are several interesting directions. One of them is to consider even more skewed sampling distributions, e.g., heavy-tailed distributions such as Power-Law (similar to \cite{BCM04}, where $m=n$ was studied). One might suspect that \Memory is still able to outperform \DChoice. In particular, we have established that Memory gives a bounded gap on even the most unruly sampling distributions (\cref{thm:arbdist}). It would be interesting to get more precise gap bounds for \Memory on specific sampling distributions of interest such as Power-law distributions.
	
	A second direction is to determine the leading constant in the gap bound. The results by \cite{MPS02} for $m=n$ suggest \Memory might be slightly superior to \TwoChoice also in the heavily loaded case.
	
	Another avenue is to study the impact of $d$ in \DWeakMemory (or $d$-\ResetMemory), when $d$ may be a function of $n$. For instance, what is the smallest value of $d$ that still achieves a gap of $\mathcal{O}(\log \log n)$? So far, we only know that $d$ cannot be a constant, but we do not know whether $d$ needs to be, say logarithmic or polynomial.
	
	Finally, to the best of our knowledge, all implementations of \Memory (including ours) make use of the greedy-rule, i.e., always allocate the ball in the least loaded option among the cache and the sampled bin, and update the cache in the same vein. Note that in the presence of a larger cache size and weighted balls, this may not be optimal, as there are more sophisticated strategies that preemptively try to maintain at least one significantly underloaded bin in the cache.


\appendix

\clearpage

\section{Elementary Tools and Inequalities}

In this section, we state several auxiliary lemmas that we use throughout the paper.

\subsection{Convexity}
For completeness, we define Schur-convexity (see \cite{MRBook}) and state two basic results:
\begin{defi}[{cf.~\cite[Definition A.1]{MRBook}}]
A function $f : \mathbb{R}^n \to \mathbb{R}$ is Schur-convex if for any non-decreasing $x, y \in \mathbb{R}^n$, if $x$ majorizes $y$ then $f(x) \geq f(y)$.
\end{defi}
 
\begin{lem}[{cf.~\cite[Proposition C.1]{MRBook}}] \label{lem:sum_of_convex_is_schur_convex}
Let $g : \mathbb{R} \to \mathbb{R}$ be a convex function. Then $g(x_1, \ldots, x_n) := \sum_{i = 1}^n g(x_i)$ is Schur-convex. 
\end{lem}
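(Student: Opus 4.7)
My plan is to prove this via the classical characterization of majorization through a finite sequence of elementary two-coordinate ``Robin Hood'' transfers (T-transforms), and then verify that each such transfer weakly decreases $\sum_i g(x_i)$ when $g$ is convex.

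First I would invoke the classical Hardy--Littlewood--P\'olya / Muirhead theorem: if $x$ majorizes $y$ (in the sorted sense), then there exists a finite sequence of vectors $x = z^{(0)}, z^{(1)}, \dots, z^{(K)} = y$ such that each $z^{(k+1)}$ is obtained from $z^{(k)}$ by a single T-transform, namely selecting two indices $i, j$ with $z^{(k)}_i > z^{(k)}_j$ and replacing the pair $(z^{(k)}_i, z^{(k)}_j)$ by $(z^{(k)}_i - \epsilon, z^{(k)}_j + \epsilon)$ for some $0 < \epsilon \leq (z^{(k)}_i - z^{(k)}_j)/2$. (For the paper's weak-majorization variant, one can also append a coordinate-wise shift to account for unequal sums; this adjusts the vectors without affecting the argument on $g$ being convex and non-decreasing as needed.)

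The heart of the proof is a single-step inequality: for any $a > b$ and any $\epsilon \in [0, (a-b)/2]$,
\[
g(a - \epsilon) + g(b + \epsilon) \;\leq\; g(a) + g(b).
\]
Setting $\lambda = \epsilon/(a-b) \in [0, 1/2]$, I write
\[
a - \epsilon = (1-\lambda) a + \lambda b, \qquad b + \epsilon = \lambda a + (1-\lambda) b,
\]
and apply convexity of $g$ to each expression:
\[
g(a-\epsilon) \leq (1-\lambda) g(a) + \lambda g(b), \qquad g(b+\epsilon) \leq \lambda g(a) + (1-\lambda) g(b).
\]
Adding the two inequalities gives the claim, since the coefficients on $g(a)$ and $g(b)$ on the right-hand side sum to $1$ each.

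Combining these two ingredients, I iterate the single-step inequality along the sequence $z^{(0)}, z^{(1)}, \dots, z^{(K)}$ to conclude
\[
f(x) = \sum_i g(z^{(0)}_i) \geq \sum_i g(z^{(1)}_i) \geq \cdots \geq \sum_i g(z^{(K)}_i) = f(y),
\]
as desired. The main technical obstacle is the characterization of majorization via T-transforms; however, this is a well-known classical theorem (see, e.g., Marshall--Olkin--Arnold) and the remainder of the argument is elementary two-variable convexity. In particular, no property of $g$ beyond convexity is needed for the per-step inequality.
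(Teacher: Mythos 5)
Your proof is correct and is the standard textbook argument. The paper itself does not prove this lemma---it is stated with a direct citation to Proposition~C.1 of Marshall--Olkin--Arnold~\cite{MRBook}---so there is no in-paper argument to compare against. Your two-coordinate ``Robin Hood'' step and the convexity computation $g(a-\epsilon)+g(b+\epsilon)\le g(a)+g(b)$ are exactly how one establishes Schur-convexity of separable sums, and the reduction to a finite sequence of T-transforms is the classical Hardy--Littlewood--P\'olya characterization of majorization (for vectors with equal coordinate sums).

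One caveat is worth flagging more carefully than your parenthetical does. The paper's definition of ``$v$ majorizes $u$'' only demands $\sum_{i\le k} v_i \ge \sum_{i\le k} u_i$ for all $k$, including $k=n$, but does not require equality of total sums---so it is really weak (sub)majorization. For that relation, the lemma as stated is actually \emph{false} for arbitrary convex $g$: take $g(x)=-x$, $v=(2,0)$, $u=(1,0)$; then $v$ weakly majorizes $u$ but $\sum_i g(v_i) = -2 < -1 = \sum_i g(u_i)$. The correct HLP-type statement for weak submajorization requires $g$ to be convex \emph{and nondecreasing} (then one reduces to T-transforms followed by coordinatewise decreases, each of which is harmless for a nondecreasing $g$). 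Your remark about ``appending a coordinate-wise shift'' gestures at this but does not quite say that monotonicity of $g$ becomes essential. In the paper's only application (in the proof of \cref{lem:expectation_bound}, via \cref{lem:stepmaj}) the two proxy-allocation vectors both sum to the same constant, since each process places exactly $d$ balls per run, so equal-sum majorization holds and there is no issue. But if you wanted your proof to literally cover the paper's stated definition of majorization, you would need to either assume $g$ nondecreasing or restrict to equal sums, and you should say which.
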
 

\subsection{Inequalities}

We proceed with two simple inequalities, which we include for the sake of completeness.

\begin{lem} \label{lem:small_z_frac_ineq}
For any $0 < z \leq 1$, we have
\[
\frac{1}{1 + z} \leq 1 - \frac{z}{2}.
\]
\end{lem}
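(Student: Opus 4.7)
The plan is to reduce the claim to a trivially nonnegative quantity by cross-multiplication. Since $1+z>0$ for $0<z\le 1$, the inequality $\frac{1}{1+z}\le 1-\frac{z}{2}$ is equivalent to
\[
1 \;\le\; \Bigl(1-\tfrac{z}{2}\Bigr)(1+z).
\]
Expanding the right-hand side gives $1+\tfrac{z}{2}-\tfrac{z^2}{2}$, so the inequality reduces to
\[
0 \;\le\; \tfrac{z}{2} - \tfrac{z^2}{2} \;=\; \tfrac{z}{2}(1-z).
\]

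Both factors on the right are nonnegative for $z\in(0,1]$: the factor $z/2$ is positive, and $1-z\ge 0$ by hypothesis. Hence the quantity is nonnegative, the reduced inequality holds, and reversing the cross-multiplication step (which is an equivalence because $1+z>0$) yields the claim. There is no subtle obstacle here; the only thing to be mindful of is to note explicitly that the cross-multiplication is reversible, which uses $1+z>0$.
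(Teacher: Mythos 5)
Your proof is correct and takes essentially the same route as the paper: both cross-multiply by $1+z>0$ and observe that the resulting quadratic inequality reduces to $z\le 1$. The only cosmetic difference is that you factor the remaining quantity as $\tfrac{z}{2}(1-z)$, whereas the paper writes the equivalence chain directly.
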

\begin{proof}
For $z \in (0, 1]$, the following chain of implications holds
\[
\frac{1}{1 + z} \leq 1 - \frac{z}{2} \Leftrightarrow 1 \leq 1 + z - \frac{z}{2} - \frac{z^2}{2} \Leftrightarrow z \leq 1. \qedhere
\]
\end{proof}

\begin{lem} \label{lem:small_z_frac_ineq_2}
For any $0 < z \leq 1/2$, we have
\[
\frac{1}{1 - z} \leq 1 + 2z.
\]
\end{lem}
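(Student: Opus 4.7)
The plan is to reduce this to a polynomial inequality in $z$ by clearing denominators. Since $0 < z \leq 1/2$ ensures $1 - z > 0$, the inequality $\frac{1}{1-z} \leq 1 + 2z$ is equivalent to $1 \leq (1+2z)(1-z)$, which expands to $1 \leq 1 + z - 2z^2$, i.e.\ $0 \leq z(1 - 2z)$.

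At this point the conclusion is immediate: both factors $z$ and $1 - 2z$ are nonnegative on the interval $(0, 1/2]$ (with $1-2z \geq 0$ being exactly the endpoint condition), so their product is nonnegative. There is no real obstacle to speak of; the only care needed is to note that multiplying through by $1-z$ preserves the inequality because $1-z > 0$ for $z \leq 1/2$, and that the hypothesis $z \leq 1/2$ is used in precisely the step where we need $1 - 2z \geq 0$. The lemma is tight at $z = 1/2$, where both sides equal $2$, which also confirms that $1/2$ is the right threshold in the hypothesis.
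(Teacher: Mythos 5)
Your proof is correct and follows essentially the same route as the paper: clear the denominator (valid since $1-z>0$), expand, and reduce to $z\leq 1/2$, which the paper writes as a chain of equivalences and you write as $0\leq z(1-2z)$. The added remark about tightness at $z=1/2$ is a nice sanity check but not needed.
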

\begin{proof}
For $z \in (0, 1]$, the following chain of implications holds
\[
\frac{1}{1 - z} \leq 1 + 2z \Leftrightarrow 1 \leq 1 + 2z - z - 2z^2 \Leftrightarrow z \leq \frac{1}{2}. \qedhere
\]
\end{proof}

\begin{lem}\label{lem:arthgeo}For any $r<1$ we have $\sum_{i=0}^\infty (i+1)\cdot r^i =\frac{1}{1-r} + \frac{r}{(1-r)^2}$. 
\end{lem}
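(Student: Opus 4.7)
The plan is to observe first that the two displayed expressions for the sum are actually the same closed form written differently: combining the fractions,
\[
\frac{1}{1-r} + \frac{r}{(1-r)^2} = \frac{(1-r)+r}{(1-r)^2} = \frac{1}{(1-r)^2},
\]
so it suffices to prove $\sum_{i=0}^\infty (i+1)r^i = \frac{1}{(1-r)^2}$ for $|r|<1$ (the statement implicitly requires absolute convergence, i.e.\ $|r|<1$; for $r\le -1$ the series diverges, so one should read the hypothesis as $|r|<1$).

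For the main identity I would give the standard derivation by term-by-term differentiation of the geometric series. Starting from the geometric series $\sum_{i=0}^\infty r^{i+1} = r/(1-r)$, which converges absolutely for $|r|<1$, differentiating both sides with respect to $r$ (valid inside the radius of convergence of a power series) yields
\[
\sum_{i=0}^\infty (i+1) r^{i} = \frac{d}{dr}\left(\frac{r}{1-r}\right) = \frac{(1-r)+r}{(1-r)^2} = \frac{1}{(1-r)^2},
\]
which matches the right-hand side after the algebraic rearrangement above.

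An equally short alternative, which avoids invoking differentiation of power series, is to split $\sum_{i=0}^\infty (i+1)r^i = \sum_{i=0}^\infty r^i + \sum_{i=0}^\infty i\, r^i$, recognize the first sum as the geometric series $1/(1-r)$, and evaluate the second by writing $\sum_{i=1}^\infty i\, r^i = r \sum_{i=1}^\infty i\, r^{i-1}$ and noting the inner sum telescopes after multiplying by $(1-r)$, giving $r/(1-r)^2$. There is no real obstacle here; the only point worth being careful about is the convergence hypothesis, so I would state the lemma under $|r|<1$ (which is how it is used in the derivation following \eqref{eq:nonexplicit}, where the corresponding ratio $\gamma(i) = bi/n < 1$).
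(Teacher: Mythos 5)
Your proof is correct, and it takes a different route from the paper. The paper proves the identity by a purely algebraic telescoping argument on partial sums: it multiplies the partial sum $\sum_{i=0}^d (i+1) r^i$ by $(1-r)$, cancels terms to obtain $1 + \sum_{i=1}^d r^i + (d+1)r^{d+1}$, lets $d\to\infty$, and then uses the closed form of the geometric series. This avoids any appeal to differentiation of power series and stays entirely at the level of finite sums plus a limit. Your primary argument instead differentiates the geometric series $r/(1-r) = \sum_{i\geq 0} r^{i+1}$ term by term inside the radius of convergence, which is a clean, standard, and somewhat more ``analytic'' route; it buys brevity at the cost of invoking the theorem that a power series may be differentiated term by term. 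Your alternative argument (splitting $\sum (i+1)r^i = \sum r^i + \sum i r^i$ and telescoping the second sum) is closer in spirit to the paper's, though it uses a different decomposition. Both of your approaches are valid, and your side remark about the hypothesis is fair: as stated, $r<1$ allows $r\le -1$, for which the series diverges; $|r|<1$ is the correct hypothesis, and the paper's own proof (which also appeals to $\sum_{i\ge 1} r^i = r/(1-r)$) implicitly needs it too. In the only place the lemma is applied, $r = \gamma(i) = bi/n \in [0,1)$, so nothing downstream is affected.
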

\begin{proof}For the first sum observe that for any $d$, 
	\[(1-r)\sum_{i=0}^d (i+1)\cdot r^i = \sum_{i=0}^d (i+1)\cdot r^i - \sum_{i=1}^{d+1} i\cdot r^i = 1 + \sum_{i=1}^d r^i + (d+1)r^{d+1}.\] Thus letting $d\rightarrow \infty$ and using the sum for the geometric series $\sum_{i=1}^\infty  r^i = \frac{r}{1-r}$ gives 
	\[(1-r)\sum_{i=0}^\infty (i+1)\cdot r^i = 1 +\frac{r}{1-r}, \] as claimed. 
\end{proof}

\subsection{Probabilistic Inequalities}

For convenience, we state and prove the following well-known result.

\begin{lem} \label{lem:geometric_arithmetic}
Consider a sequence of random variables $(Z_i)_{i \in \mathbb{N}}$ such that there are $0 < a < 1$ and $b > 0$ such that every $i \geq 1$,
\[
\Ex{Z_i \mid Z_{i-1}} \leq Z_{i-1} \cdot a + b.
\]
Then for every $i \geq 1$, 
\[
\Ex{Z_i \mid Z_0}
\leq Z_0 \cdot a^i + \frac{b}{1 - a}.
\]
\end{lem}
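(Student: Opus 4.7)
The plan is to prove the inequality by induction on $i$, using the tower property of conditional expectation to unroll the one-step recursion supplied by the hypothesis. This is a standard geometric-series argument, and I do not anticipate any real obstacle; the only care required is to make sure the conditioning on $Z_0$ is handled cleanly.

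First I would verify the base case $i=1$: by hypothesis,
\[
\Ex{Z_1 \mid Z_0} \leq a \cdot Z_0 + b \leq a \cdot Z_0 + \frac{b}{1-a},
\]
where the last step uses $0 < a < 1$, which gives $\frac{b}{1-a} \geq b$.

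For the inductive step, suppose the bound holds at index $i$. Then by the tower property and the hypothesis applied at step $i+1$,
\[
\Ex{Z_{i+1} \mid Z_0} = \Ex{\Ex{Z_{i+1} \mid Z_i} \mid Z_0} \leq \Ex{a \cdot Z_i + b \mid Z_0} = a \cdot \Ex{Z_i \mid Z_0} + b.
\]
Plugging the inductive hypothesis into the right-hand side gives
\[
\Ex{Z_{i+1} \mid Z_0} \leq a \cdot \Big(Z_0 \cdot a^i + \tfrac{b}{1-a}\Big) + b = Z_0 \cdot a^{i+1} + \tfrac{ab + b(1-a)}{1-a} = Z_0 \cdot a^{i+1} + \tfrac{b}{1-a},
\]
which closes the induction.

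Equivalently, one can skip the induction and just unroll the recurrence directly, obtaining $\Ex{Z_i \mid Z_0} \leq a^i Z_0 + b \sum_{k=0}^{i-1} a^k$ and then bounding the geometric sum by $\tfrac{1}{1-a}$; I would probably present the inductive version since it is cleaner notationally. No step here is hard, and the whole argument fits in a few lines.
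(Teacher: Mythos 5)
Your proof is correct and takes essentially the same approach as the paper: both use the tower property to unroll the one-step recursion inductively. The only cosmetic difference is that the paper proves the partial-sum bound $Z_0 a^i + b\sum_{j=0}^{i-1}a^j$ by induction and bounds the geometric sum at the end, whereas you induct directly on the closed form $Z_0 a^i + \frac{b}{1-a}$; these are interchangeable.
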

\begin{proof}
We will prove by induction that for every $i \in \mathbb{N}$, 
\[
\Ex{Z_i \mid Z_0} \leq Z_0 \cdot a^i + b \cdot \sum_{j = 0}^{i-1} a^j.
\]
For $i = 0$, $\Ex{Z_0 \mid Z_0} \leq Z_0$. Assuming the induction hypothesis holds for some $i \geq 0$, then since $a > 0$,
\begin{align*}
\Ex{Z_{i+1} \mid Z_0} & = \Ex{\Ex{Z_{i+1} \mid Z_i}\mid Z_0} \leq \Ex{Z_{i}\mid Z_0} \cdot a + b \\
 & \leq \Big(Z_0 \cdot a^i + b \cdot \sum_{j = 0}^{i-1} a^j \Big) \cdot a + b \\
 & = Z_0 \cdot a^{i+1} +b \cdot \sum_{j = 0}^i a^j.
\end{align*}
The claims follows using that for $a \in (0,1)$, $\sum_{j=0}^{\infty} a^j = \frac{1}{1-a}$.
\end{proof}

\begin{lem} \label{lem:recurrence_inequality_expectation}
Consider any sequence of random variables $(Z_i)_{i \in \N}$ such that there exists $0 < \alpha < 1$ and $b > 0$
\[
\Ex{Z_i \mid Z_{i-1}} \leq Z_{i - 1} \cdot (1 - \alpha) + b.
\]
Then, assuming that $Z_0 \leq \frac{b}{\alpha}$, then for any $i \geq 0$, 
\[
\Ex{Z_i} \leq \frac{b}{\alpha}.
\]
\end{lem}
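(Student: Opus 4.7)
The plan is to prove this by straightforward induction on $i$, using the tower property of expectation together with the recurrence hypothesis. There is no real obstacle here; the fixed point $b/\alpha$ of the deterministic recurrence $u \mapsto (1-\alpha)u + b$ is exactly the threshold that needs to be preserved, and the induction simply checks this.

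For the base case $i = 0$, the bound $\Ex{Z_0} \leq b/\alpha$ follows immediately from the hypothesis $Z_0 \leq b/\alpha$ (which I read as a deterministic bound; if it is only in expectation the argument is identical).

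For the inductive step, assuming $\Ex{Z_{i-1}} \leq b/\alpha$, I would apply the tower property, the given recurrence, linearity, and the inductive hypothesis in turn:
\[
\Ex{Z_i} = \Ex{\Ex{Z_i \mid Z_{i-1}}} \leq \Ex{Z_{i-1}}\cdot(1-\alpha) + b \leq \frac{b}{\alpha}\cdot(1-\alpha) + b = \frac{b}{\alpha}.
\]
This closes the induction and establishes the claim for all $i \geq 0$. The only thing to be mildly careful about is that the recurrence is stated conditionally on $Z_{i-1}$, so one must take outer expectations before invoking the inductive bound; this is handled by the first equality above.
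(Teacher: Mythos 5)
Your proof is correct and uses essentially the same inductive argument as the paper: tower property, then linearity, then the inductive bound on $\Ex{Z_{i-1}}$. In fact your final step is stated more cleanly than the paper's, whose last written inequality ($\Ex{Z_i}-\alpha\Ex{Z_i}+b \le \Ex{Z_i}-\alpha\cdot\tfrac{b}{\alpha}+b$) has the direction reversed given the inductive hypothesis $\Ex{Z_i}\le b/\alpha$; your version $\Ex{Z_i}\le\tfrac{b}{\alpha}(1-\alpha)+b=\tfrac{b}{\alpha}$ is the intended and correct chain.
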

\begin{proof}
We will prove the claim by induction. The base case follows by the assumption $Z_0 \leq \frac{b}{\alpha}$. Then, assuming that it holds for $i \geq 0$, then for $i+1$ we have, \begin{align*}
\Ex{Z_{i+1}}
  & = \Ex{\Ex{Z_{i+1} \mid Z_{i}}} \\
  & \leq \Ex{Z_{i} \cdot ( 1 - \alpha ) + b} \\
  & = \Ex{Z_{i}} - \alpha \cdot \Ex{Z_{i}} + b \\
  & \leq \Ex{Z_{i}} - \alpha \cdot \frac{b}{\alpha} + b = \Ex{Z_{i}}. \qedhere 
\end{align*}
\end{proof}

\begin{lem}[Azuma's Inequality for Super-Martingales {\cite[Problem 6.5]{DP09}}] \label{lem:azuma}
Let $X_0, \ldots, X_n$ be a super-martingale satisfying $|X_{i} - X_{i-1}| \leq c_i$ for any $i \in [n]$, then for any $\lambda > 0$,
\[
\Pro{X_n \geq X_0 + \lambda} \leq \exp\left(- \frac{\lambda^2}{2 \cdot \sum_{i=1}^n c_i^2} \right).
\]
\end{lem}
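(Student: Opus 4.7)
The plan is to adapt the classical Azuma--Hoeffding argument (for ordinary martingales) to the super-martingale setting by centering the increments and exploiting that the extra drift only helps. Write $D_i := X_i - X_{i-1}$ and let $\mu_i := \ex{D_i \mid \mathcal{F}_{i-1}}$, so that the super-martingale hypothesis gives $\mu_i \leq 0$. Set $\tilde D_i := D_i - \mu_i$; this is a martingale difference sequence. Since $\mu_i \leq 0$, we immediately obtain
\[
X_n - X_0 \;=\; \sum_{i=1}^n D_i \;=\; \sum_{i=1}^n \tilde D_i + \sum_{i=1}^n \mu_i \;\leq\; \sum_{i=1}^n \tilde D_i,
\]
so it suffices to bound the upper tail of $\sum_{i=1}^n \tilde D_i$.

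Next I would apply Chernoff's method: for any $t \geq 0$,
\[
\Pro{\sum_{i=1}^n \tilde D_i \geq \lambda} \;\leq\; e^{-t\lambda}\cdot \ex{\exp\!\Big(t\sum_{i=1}^n \tilde D_i\Big)}.
\]
To control the moment generating function, I would use Hoeffding's lemma applied conditionally: given $\mathcal{F}_{i-1}$, the centered increment $\tilde D_i$ has mean zero and lies in an interval of length at most $2c_i$ (because $D_i \in [-c_i,c_i]$ implies $\tilde D_i \in [-c_i-\mu_i,\,c_i-\mu_i]$, an interval of length $2c_i$). Hoeffding's lemma then yields
\[
\ex{e^{t\tilde D_i} \,\big|\, \mathcal{F}_{i-1}} \;\leq\; \exp\!\big(t^2 c_i^2/2\big).
\]
Iterating this bound via the tower property — peeling off one factor at a time starting from the innermost conditional expectation — gives
\[
\ex{\exp\!\Big(t\sum_{i=1}^n \tilde D_i\Big)} \;\leq\; \exp\!\Big(\tfrac{t^2}{2}\sum_{i=1}^n c_i^2\Big).
\]

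Finally, combining the two bounds produces
\[
\Pro{X_n - X_0 \geq \lambda} \;\leq\; \exp\!\Big(-t\lambda + \tfrac{t^2}{2}\sum_{i=1}^n c_i^2\Big),
\]
and optimizing by setting $t = \lambda / \sum_{i=1}^n c_i^2$ gives the claimed exponent $-\lambda^2 / (2\sum_{i=1}^n c_i^2)$. The only mildly subtle step is justifying Hoeffding's lemma in the conditional form — one has to observe that the length of the support interval of $\tilde D_i$ is $2c_i$ independently of the (random) shift $\mu_i$, so the uniform MGF bound $e^{t^2 c_i^2/2}$ holds $\mathcal{F}_{i-1}$-almost surely. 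Everything else is routine.
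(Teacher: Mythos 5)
The paper states this lemma without proof, citing it as an exercise in Dubhashi--Panconesi, so there is no in-paper argument to compare against. Your proof is correct and is the standard one: centering the increments $D_i := X_i - X_{i-1}$ by their (nonpositive, $\mathcal{F}_{i-1}$-measurable) conditional means $\mu_i$ reduces the claim to the martingale case, and the subtlety you correctly flag --- that the support interval of $\tilde D_i = D_i - \mu_i$ has deterministic length $2c_i$ even though its endpoints are random --- is precisely what lets the conditional Hoeffding bound $\ex{e^{t\tilde D_i}\mid\mathcal{F}_{i-1}} \leq e^{t^2 c_i^2/2}$ hold uniformly and peel off via the tower rule before you optimize $t = \lambda/\sum_i c_i^2$.
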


\subsection{Concentration Inequality with a Bad Event}

A central tool in our analysis will be the use of a concentration inequality by Chung and Lu~\cite{CL06} for super-martingales which will be conditional on a bad event not occurring. In this case, the bad event will be $\Phi_1^s = \poly(n)$ for all $s \in [t - T_r, t]$.

We start with the following definitions from~\cite{CL06}. Consider any r.v.~$X$ (in our case it will be the potential function $\Gamma_2$) that can be evaluated by a sequence of decisions $Y_1, Y_2, \ldots ,Y_N$ of finitely many outputs (the chosen bins of the allocated balls). We can describe the process by a \textit{decision tree} $T$, a complete rooted tree with depth $N$ with vertex set $V(T)$. Each edge $(u, v)$ of $T$ is associated with a probability $p_{u, v}$ depending on the decision made from $u$ to $v$. 

We say $f : V (T) \to \mathbb{R}$ satisfies an \textit{admissible condition} $P$ if $P = \{P_v\}$ holds for every vertex $v \in V(T)$. For an admissible condition $P$, the associated bad set $B^i$ over the $X^i$ is defined to be
\[
B^i = \{ v \mid \text{the depth of $v$ is $i$, and $P_u$ does not hold for some ancestor $u$ of $v$} \}.
\]

\begin{thm}[Theorem 8.5 in~\cite{CL06}] \label{thm:orig_chung_lu_theorem_8_5}
For a filter $\mathfrak{F}$, $\{\emptyset, \Omega \} = \mathfrak{F}^{0} \subset \mathfrak{F}^{1} \subset \ldots \subset \mathfrak{F}^{N} = \mathfrak{F}$,
suppose that a random variable $X^{s}$ is $\mathfrak{F}^{s}$-measurable, for $0 \leq s \leq N$. Let $B = B^N$ be the
bad set associated with the following admissible conditions:
\begin{align*}
\Ex{X^{s} \mid \mathfrak{F}^{s-1}} & \leq X^{s - 1}, \\
\Var{X^{s} \mid \mathfrak{F}^{s-1}} & \leq \sigma_s^2, \\
X^{s} - \Ex{X^{s} \mid \mathfrak{F}^{s-1}} & \leq a_s + M,
\end{align*}
for some $\sigma_s > 0$ and $a_s > 0$. Then, we have for any $\lambda >0$,
\[
\Pro{X^{N} \geq X^{0} + \lambda} \leq \exp\left( - \frac{\lambda^2}{2(\sum_{s = 1}^N (\sigma_s^2 + a_s^2)+M \lambda /3) } \right) + \Pro{B}.
\]
\end{thm}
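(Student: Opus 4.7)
The plan is to reduce the theorem, via a stopping-time modification, to a standard Bernstein/Freedman-type concentration inequality for supermartingales whose increments have a uniform upper bound and bounded conditional variance. The ``bad set'' $B$ will be absorbed into an additive term at the very end via a simple union bound, which is the reason we can afford to work, throughout the main argument, as if the admissible conditions held everywhere.

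\textbf{Step 1 (Freezing the process on the bad set).} Define the stopping time $\tau := \inf\{ s \geq 1 : P_s \text{ fails}\}$, i.e., the first depth at which the admissible condition is violated. Then define the modified process
\[
\tilde{X}^{s} := \begin{cases} X^{s} & \text{if } s < \tau, \\ \tilde{X}^{s-1} & \text{if } s \geq \tau. \end{cases}
\]
Because the admissible conditions hold for every vertex on the path up to $\tau-1$, the modified process $\tilde{X}^{s}$ satisfies, \emph{unconditionally} on $\mathfrak{F}^{s-1}$, both $\Ex{\tilde{X}^{s} \mid \mathfrak{F}^{s-1}} \leq \tilde{X}^{s-1}$, $\Var{\tilde{X}^{s} \mid \mathfrak{F}^{s-1}} \leq \sigma_s^2$, and $\tilde{X}^{s} - \Ex{\tilde{X}^{s} \mid \mathfrak{F}^{s-1}} \leq a_s + M$. (On $\{s \geq \tau\}$ the increment is zero and all three bounds are trivial.) Moreover, on the complement of $B$ we have $\tilde{X}^{s} = X^{s}$ for every $s \leq N$, so
\[
\Pro{X^{N} \geq X^{0} + \lambda} \leq \Pro{\tilde{X}^{N} \geq \tilde{X}^{0} + \lambda} + \Pro{B}.
\]

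\textbf{Step 2 (MGF bound per step).} Let $Z_s := \tilde{X}^{s} - \Ex{\tilde{X}^{s} \mid \mathfrak{F}^{s-1}}$, so $\Ex{Z_s \mid \mathfrak{F}^{s-1}} = 0$, $\Var{Z_s \mid \mathfrak{F}^{s-1}} \leq \sigma_s^2$, and $Z_s \leq a_s + M$. Bennett's inequality at the level of moment generating functions, combined with the elementary estimate $(e^{x} - 1 - x)/x^{2} \leq 1/(2(1 - x/3))$ valid for $0 \leq x < 3$, yields for every $t > 0$ with $t(a_s + M) < 3$,
\[
\Ex{e^{t Z_s} \mid \mathfrak{F}^{s-1}} \leq \exp\!\left( \frac{t^{2} \sigma_s^{2}}{2(1 - t(a_s + M)/3)} \right).
\]
Using the crude bound $\sigma_s^2 \leq \sigma_s^2 + a_s^2$ (so as to eventually match the exponent in the target inequality), and absorbing $a_s$ into an effective bound, gives a uniform exponent that is summable in $s$.

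\textbf{Step 3 (Chaining along the supermartingale).} By the tower property and the supermartingale property, for the standard Markov-exponential trick we consider
\[
\Pro{\tilde{X}^{N} - \tilde{X}^{0} \geq \lambda} \leq e^{-t\lambda} \, \Ex{ e^{t(\tilde{X}^{N} - \tilde{X}^{0})} }.
\]
Writing $\tilde{X}^{N} - \tilde{X}^{0} = \sum_{s=1}^{N} (\tilde{X}^{s} - \tilde{X}^{s-1})$ and using $\tilde{X}^{s} - \tilde{X}^{s-1} \leq Z_s$ (here the supermartingale condition is used), iterated conditioning yields
\[
\Ex{ e^{t(\tilde{X}^{N} - \tilde{X}^{0})} } \leq \exp\!\left( \frac{t^{2} \sum_{s=1}^{N}(\sigma_s^{2} + a_s^{2})}{2(1 - tM/3)} \right),
\]
provided $tM < 3$ (the per-step $a_s$-factor is absorbed in the $a_s^{2}$ slack; this is the standard Freedman/Bennett accounting).

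\textbf{Step 4 (Optimising $t$).} Set $V := \sum_{s=1}^{N}(\sigma_s^{2} + a_s^{2})$ and choose $t = \lambda / (V + M\lambda/3)$, which is admissible because it satisfies $tM \leq 3$. Substituting gives the Bernstein-type bound
\[
\Pro{\tilde{X}^{N} - \tilde{X}^{0} \geq \lambda} \leq \exp\!\left( - \frac{\lambda^{2}}{2(V + M\lambda/3)} \right).
\]
Combining with Step 1 yields the claimed inequality.

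\textbf{Main obstacle.} The genuine technical work is Step 2 together with the bookkeeping in Step 3: one needs the Bennett MGF bound for a mean-zero, one-sidedly bounded, variance-constrained random variable, and one must keep the $a_s$ terms (which come from the one-sided increment bound being $a_s + M$ rather than $M$) separated from $M$ so that only $M$ appears inside the $1 - tM/3$ factor after chaining. This is exactly the form $2(\sum(\sigma_s^2+a_s^2) + M\lambda/3)$ that appears in the denominator of the final exponent, and handling it correctly is what distinguishes the Chung--Lu statement from a plain Azuma bound. Once Steps 2--3 are done cleanly, the rest is a routine Markov--Chernoff optimisation and a one-line union bound against $B$.
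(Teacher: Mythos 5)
First, note that the paper does not prove this statement at all: it is imported verbatim as Theorem~8.5 of Chung and Lu~\cite{CL06} and used as a black box, so there is no internal proof for your argument to be compared against; the only question is whether your reconstruction is correct on its own terms. Your Step~1 (freeze the process at the stopping time $\tau$, observe that the frozen process satisfies the admissible conditions unconditionally, and union-bound against the bad set at the end) is sound and is the standard way such ``concentration with a bad set'' results are obtained, and the overall Chernoff--Bennett framework in Steps~2--4 is the right one.

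The genuine gap is exactly where you flag ``the main obstacle'' but then wave at it. Naive Bennett with the one-sided range $b_s := a_s + M$ gives the per-step estimate $\Ex{e^{tZ_s}\mid\mathfrak{F}^{s-1}} \leq \exp\bigl(\tfrac{t^2\sigma_s^2/2}{1-t(a_s+M)/3}\bigr)$, whereas the Chung--Lu exponent requires the strictly stronger per-step bound $\exp\bigl(\tfrac{t^2(\sigma_s^2+a_s^2)/2}{1-tM/3}\bigr)$, in which $a_s$ has been moved out of the denominator and re-emerges squared in the numerator. Your assertion that this happens by ``absorbing $a_s$ into the $a_s^2$ slack'' via ``standard Freedman/Bennett accounting'' is not correct: comparing the two expressions reduces, after cross-multiplying (both denominators are positive for your chosen $t$), to the constraint $t\,(\sigma_s^2 + a_s^2 + a_s M) \leq 3a_s$. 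With your choice $t = \lambda/(V + M\lambda/3)$, $V = \sum_s(\sigma_s^2+a_s^2)$, the value of $t$ approaches $3/M$ as $\lambda$ grows, and then the constraint would force $\sigma_s^2 + a_s^2 \leq 0$; concretely, in the uniform case $a_s \equiv a$, $\sigma_s \equiv \sigma$ the constraint is $\lambda \leq 3aN$, which fails precisely in the large-deviation regime the theorem is meant to cover. So the chaining in Step~3 does not produce the stated denominator. To actually get Chung--Lu's form you must prove the sharper per-step MGF inequality directly, exploiting the split of the upper bound into $a_s$ plus a \emph{uniform} $M$ at the moment-generating-function level, rather than plugging $b_s = a_s + M$ into off-the-shelf Bennett; that derivation is the missing content, and nothing in your Steps~2--3 supplies it.
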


In particular, we always make use of the following simplified version,
\begin{thm}[Corollary of \cref{thm:orig_chung_lu_theorem_8_5}] \label{thm:simplified_chung_lu_theorem_8_5}
For a filter $\mathfrak{F}$, $\{\emptyset, \Omega \} = \mathfrak{F}^{0} \subset \mathfrak{F}^{1} \subset \ldots \subset \mathfrak{F}^{N} = \mathfrak{F}$,
suppose that a random variable $X^{s}$ is $\mathfrak{F}^{s}$-measurable, for $0 \leq s \leq N$. Let $B = B^N$ be the bad set associated with the following admissible conditions:
\begin{align*}
\Ex{X^{s} \mid \mathfrak{F}^{s-1}} & \leq X^{s - 1}, \\
|X^{s} - X^{s-1}| & \leq D,
\end{align*}
for some $D > 0$. Then, we have for any $\lambda >0$,
\[
\Pro{X^{N} \geq X^{0} + \lambda} \leq \exp\left( - \frac{\lambda^2}{10 N D^2 } \right) + \Pro{B}.
\]
\end{thm}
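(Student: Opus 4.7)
The plan is to derive the simplified statement directly from \cref{thm:orig_chung_lu_theorem_8_5} by using the bounded-differences assumption to supply concrete values for $\sigma_s^2$, $a_s$ and $M$, and then simplifying the resulting exponent.

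First I would observe that conditional on $\mathfrak{F}^{s-1}$, the random variable $X^s$ takes values in the interval $[X^{s-1}-D,\,X^{s-1}+D]$. Combining this with the super-martingale hypothesis $\Ex{X^s\mid\mathfrak{F}^{s-1}}\leq X^{s-1}$ gives two clean deterministic estimates: since $\Ex{X^s\mid\mathfrak{F}^{s-1}}\geq X^{s-1}-D$, we obtain
\[
X^s - \Ex{X^s\mid\mathfrak{F}^{s-1}} \;\leq\; (X^{s-1}+D) - (X^{s-1}-D) \;=\; 2D,
\]
and by Popoviciu's variance inequality (a random variable supported in an interval of length $2D$ has variance at most $D^2$, see \cite{P35}),
\[
\Var{X^s\mid\mathfrak{F}^{s-1}} \;\leq\; D^2.
\]
Hence the hypotheses of \cref{thm:orig_chung_lu_theorem_8_5} are met with $\sigma_s^2 = D^2$, $a_s = 0$ and $M = 2D$ for every $s\in[N]$.

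Next, I would dispose of the easy range first: since $|X^s-X^{s-1}|\leq D$ telescopes to $X^N\leq X^0+ND$ deterministically, the event $\{X^N\geq X^0+\lambda\}$ is empty whenever $\lambda>ND$, and the claimed bound holds trivially. So we may assume $\lambda\leq ND$. Plugging the chosen parameters into \cref{thm:orig_chung_lu_theorem_8_5} gives
\[
\Pro{X^N \geq X^0 + \lambda} \;\leq\; \exp\!\left(-\frac{\lambda^2}{2\big(\sum_{s=1}^N D^2 + 2D\lambda/3\big)}\right) + \Pro{B} \;=\; \exp\!\left(-\frac{\lambda^2}{2ND^2 + 4D\lambda/3}\right) + \Pro{B}.
\]
Using $\lambda\leq ND$ to bound $4D\lambda/3 \leq 4ND^2/3$, the denominator is at most $2ND^2 + 4ND^2/3 = 10ND^2/3 \leq 10ND^2$, yielding the claimed inequality
\[
\Pro{X^N \geq X^0 + \lambda} \;\leq\; \exp\!\left(-\frac{\lambda^2}{10ND^2}\right) + \Pro{B}.
\]

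There is no real obstacle here: the entire argument is a reduction, and the only points deserving a line of justification are the variance bound (Popoviciu) and the observation that the two-sided bounded-differences condition, together with the super-martingale property, pins $\Ex{X^s\mid\mathfrak{F}^{s-1}}$ within $D$ of $X^{s-1}$ on both sides. The split into the trivial range $\lambda>ND$ and the nontrivial range $\lambda\leq ND$ is what allows the $2D\lambda/3$ term in the denominator to be absorbed into a clean constant, producing the $10ND^2$ appearing in the statement.
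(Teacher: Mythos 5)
Your proof is correct, but it takes a genuinely different route from the paper's, and the paper's route is a bit cleaner. Both of you feed the same two estimates into \cref{thm:orig_chung_lu_theorem_8_5}: Popoviciu gives $\Var{X^s\mid\mathfrak{F}^{s-1}}\leq D^2$, and the bounded-difference plus super-martingale hypotheses give $X^s-\Ex{X^s\mid\mathfrak{F}^{s-1}}\leq 2D$. Where you diverge is how to distribute the quantity $2D$ between $a_s$ and $M$ in the third admissible condition. You set $a_s=0$, $M=2D$, which leaves a $\lambda$-dependent term $M\lambda/3$ in the denominator, and you then need a case split on whether $\lambda\leq ND$ to absorb it into a constant multiple of $ND^2$. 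The paper instead sets $a_s=2D$, $M=0$, so the denominator is $2\sum_{s=1}^N(\sigma_s^2+a_s^2)=2\cdot N\cdot(D^2+4D^2)=10ND^2$ on the nose, with no case analysis needed.

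One small imprecision in your argument: when $\lambda>ND$, you say the event $\{X^N\geq X^0+\lambda\}$ is ``empty,'' but the two-sided bound $|X^s-X^{s-1}|\leq D$ is one of the \emph{admissible conditions} defining the bad set $B$, not an unconditional property of the process. So the telescoping argument only shows that $\{X^N\geq X^0+\lambda\}\subseteq B$ when $\lambda>ND$; the event need not be empty. This still yields $\Pro{X^N\geq X^0+\lambda}\leq\Pro{B}$, which is enough, but the phrasing should reflect that you are using the containment in $B$ rather than literal emptiness. The paper's choice of parameters sidesteps the need for this case entirely.
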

\begin{proof}
We will show that the bounded difference condition $|X^{s} - X^{s-1}| \leq D$, implies the second and third conditions in \cref{thm:simplified_chung_lu_theorem_8_5}. 

For the second condition, we will use Popovicius' inequality for variances (\cref{lem:popovicius_inequality}), which states that for any random variable $Y$ such that $a \leq Y \leq b$,
\begin{align*}  
\Var{Y} \leq \frac{1}{4} \cdot (b - a)^2.
\end{align*}
The bounded difference condition implies that
\[
X^{s-1} - D \leq X^s \leq X^{s-1} + D.
\]
Combining the two, we get that
\begin{align*}
\var{X^{s} \mid \mathfrak{F}^{s-1}}
  \leq \frac{1}{4} \cdot \left( \left. X^{s-1} + D - \bigl(X^{s-1} - D \bigr) \, \right| \, \mathfrak{F}^{s - 1} \right)^2 = D^2.
\end{align*}
For the third condition, 
\[
X^s - \Ex{X^{s} \mid \mathfrak{F}^{s-1}} \leq X^{s-1} + D - (X^{s-1} - D) \leq 2D.
\]
Finally, we get the conclusion by \cref{thm:simplified_chung_lu_theorem_8_5} with $a_s = 2D$, $\sigma_s = D$, and $M = 0$.
\end{proof}

\begin{lem}[Popovicius' Inequality~\cite{P35}] \label{lem:popovicius_inequality}
For any random variable $Y$ satisfying $a \leq Y \leq b$, we have that
\begin{align*}  
\Var{Y} \leq \frac{1}{4} \cdot (b - a)^2.
\end{align*}
\end{lem}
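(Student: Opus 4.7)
The plan is to exploit the variational characterization of the variance, namely that $\Var{Y} = \min_{c \in \mathbb{R}} \Ex{(Y-c)^2}$, with the minimum attained at the mean $c = \Ex{Y}$. This follows from the elementary identity $\Ex{(Y-c)^2} = \Var{Y} + (c - \Ex{Y})^2$, and it reduces the problem of upper bounding $\Var{Y}$ to the much easier task of exhibiting one convenient constant $c$ for which $\Ex{(Y-c)^2}$ is small.

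The natural choice is the midpoint $c^{\star} = (a+b)/2$ of the interval containing $Y$. Because $a \leq Y \leq b$ holds almost surely, we deterministically have $|Y - c^{\star}| \leq (b-a)/2$, and hence $(Y - c^{\star})^2 \leq (b-a)^2/4$ pointwise. Taking expectations preserves this bound, so $\Ex{(Y - c^{\star})^2} \leq (b-a)^2/4$, and combining with the variational inequality yields $\Var{Y} \leq \Ex{(Y - c^{\star})^2} \leq (b-a)^2/4$, which is the desired conclusion.

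There is essentially no obstacle in this argument; the only point that deserves a line of justification is the variational identity above, which is a one-line calculation expanding the square. Everything else is a pointwise bound followed by monotonicity of expectation.
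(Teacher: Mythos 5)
Your proof is correct and is the standard short argument for Popoviciu's inequality: bound $\Var{Y}$ by $\Ex{(Y-c)^2}$ for the midpoint $c=(a+b)/2$, using the pointwise bound $|Y-c|\leq (b-a)/2$. The paper itself states this lemma with a citation to Popoviciu and gives no proof, so there is nothing to compare against; your argument is a clean self-contained justification.
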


\end{document}